\documentclass[twoside,11pt]{entics}
\usepackage{graphicx}
\usepackage{microtype}
\usepackage{amsmath}
\usepackage{stmaryrd}

\usepackage{ottalt}
\usepackage{pgffor}

\newenvironment{gyarudefnblock}[3][]{ \framebox{\mbox{#2}} \quad #3 \\[0pt]}{}

\newcommand{\gyarunt}[1]{\mathit{#1}}
\newcommand{\gyarumv}[1]{\mathit{#1}}

\newcommand{\gyarusym}[1]{#1}

  \renewottcommands[gyaru]

\newenvironment{gyaruSdefnblock}[3][]{ \framebox{\mbox{#2}} \quad #3 \\[0pt]}{}

\newcommand{\gyaruSnt}[1]{\mathit{#1}}
\newcommand{\gyaruSmv}[1]{\mathit{#1}}

\newcommand{\gyaruSsym}[1]{#1}

  \renewottcommands[gyaruS]

\makeatletter
\newcommand*{\namedrule}[3]{%
  \def\ou@t{\csname #1drule#2XX#3\endcsname}%
  \mprset{flushleft}%
  \ou@t{}%
}

\newcommand*{\namedrules}[3]{%
  \newif\ifprevious%
  \global\previousfalse%
    \foreach \@iter in {#3}{%
      \ifprevious \and\fi\global\previoustrue%
      \namedrule{#1}{#2}{\@iter}%
    }%
}

\newcommand*{\rulename}[3]{%
  \def\ou@t{\csname #1RenameRule#2XX#3\endcsname}%
  \text{\ou@t{}}}

\newcommand*{\rulenames}[3]{%
  \newif\ifprevious%
  \global\previousfalse%
  \foreach \@iter in {#3}{%
    \ifprevious, \fi\global\previoustrue%
    \rulename{#1}{#2}{\@iter}%
  }}

\newcommand*{\renamerule}[4]{%
    \expandafter\renewcommand\csname #1RenameRule#2XX#3\endcsname{#4}}
\makeatother

\renamerule{gyaruS}{term}{var}{\textsc{S.Var}}
\renamerule{gyaruS}{term}{unitI}{\textsc{S.Unit.i}}
\renamerule{gyaruS}{term}{unitE}{\textsc{S.Unit.e}}
\renamerule{gyaruS}{term}{funI}{\textsc{S.Fun.i}}
\renamerule{gyaruS}{term}{funE}{\textsc{S.Fun.e}}
\renamerule{gyaruS}{term}{pairI}{\textsc{S.Pair.i}}
\renamerule{gyaruS}{term}{pairE}{\textsc{S.Pair.e}}
\renamerule{gyaruS}{term}{sumIL}{\textsc{S.Sum.il}}
\renamerule{gyaruS}{term}{sumIR}{\textsc{S.Sum.ir}}
\renamerule{gyaruS}{term}{sumE}{\textsc{S.Sum.e}}
\renamerule{gyaruS}{term}{weakE}{\textsc{Weak}}
\renamerule{gyaruS}{term}{contE}{\textsc{Cont}}
\renamerule{gyaruS}{term}{weak}{\textsc{S.Weak}}
\renamerule{gyaruS}{term}{cont}{\textsc{S.Cont}}
\renamerule{gyaruS}{term}{sub}{\textsc{S.Sub}}
\renamerule{gyaruS}{term}{boxI}{\textsc{S.Box.i}}
\renamerule{gyaruS}{term}{boxE}{\textsc{S.Box.e}}

\renamerule{gyaruS}{beta}{unit}{\textsc{S.Unit-$\beta$}}
\renamerule{gyaruS}{beta}{pair}{\textsc{S.Pair-$\beta$}}
\renamerule{gyaruS}{beta}{arrow}{\textsc{S.Fun-$\beta$}}
\renamerule{gyaruS}{beta}{box}{\textsc{S.Box-$\beta$}}

\renamerule{gyaruS}{eta}{unit}{\textsc{S.Unit-$\eta$}}
\renamerule{gyaruS}{eta}{pair}{\textsc{S.Pair-$\eta$}}
\renamerule{gyaruS}{eta}{arrow}{\textsc{S.Fun-$\eta$}}
\renamerule{gyaruS}{eta}{box}{\textsc{S.Box-$\eta$}}

\renamerule{gyaru}{term}{var}{\textsc{Var}}
\renamerule{gyaru}{term}{unitI}{\textsc{Unit.i}}
\renamerule{gyaru}{term}{unitE}{\textsc{Unit.e}}
\renamerule{gyaru}{term}{arrowI}{\textsc{Fun.i}}
\renamerule{gyaru}{term}{arrowE}{\textsc{Fun.e}}
\renamerule{gyaru}{term}{pairI}{\textsc{Pair.i}}
\renamerule{gyaru}{term}{pairE}{\textsc{Pair.e}}
\renamerule{gyaru}{term}{sumIL}{\textsc{Sum.il}}
\renamerule{gyaru}{term}{sumIR}{\textsc{Sum.ir}}
\renamerule{gyaru}{term}{sumE}{\textsc{Sum.e}}
\renamerule{gyaru}{term}{weak}{\textsc{Weak}}
\renamerule{gyaru}{term}{cont}{\textsc{Cont}}
\renamerule{gyaru}{term}{sub}{\textsc{Sub}}

\renamerule{gyaru}{term}{dropI}{\textsc{Drop.i}}
\renamerule{gyaru}{term}{dropE}{\textsc{Drop.e}}
\renamerule{gyaru}{term}{raiseI}{\textsc{Raise.i}}
\renamerule{gyaru}{term}{raiseE}{\textsc{Raise.e}}

\renamerule{gyaru}{type}{unit}{\textsc{Ty.Unit}}
\renamerule{gyaru}{type}{pair}{\textsc{Ty.Pair}}
\renamerule{gyaru}{type}{arrow}{\textsc{Ty.Fun}}
\renamerule{gyaru}{type}{drop}{\textsc{Ty.Drop}}
\renamerule{gyaru}{type}{raise}{\textsc{Ty.Raise}}
\renamerule{gyaru}{type}{sum}{\textsc{Ty.Sum}}

\renamerule{gyaru}{beta}{unit}{\textsc{Unit.$\beta$}}
\renamerule{gyaru}{beta}{pair}{\textsc{Pair.$\beta$}}
\renamerule{gyaru}{beta}{arrow}{\textsc{Fun.$\beta$}}
\renamerule{gyaru}{beta}{raise}{\textsc{Raise.$\beta$}}
\renamerule{gyaru}{beta}{drop}{\textsc{Drop.$\beta$}}
\renamerule{gyaru}{beta}{sumL}{\textsc{Sum.$\beta$.L}}
\renamerule{gyaru}{beta}{sumR}{\textsc{Sum.$\beta$.R}}

\renamerule{gyaru}{eta}{unit}{\textsc{Unit.$\eta$}}
\renamerule{gyaru}{eta}{pair}{\textsc{Pair.$\eta$}}
\renamerule{gyaru}{eta}{arrow}{\textsc{Fun.$\eta$}}
\renamerule{gyaru}{eta}{raise}{\textsc{Raise.$\eta$}}
\renamerule{gyaru}{eta}{drop}{\textsc{Drop.$\eta$}}
\renamerule{gyaru}{eta}{sum}{\textsc{Sum.$\eta$}}

\newcommand*{\mb}{\mathbf}
\newcommand*{\mc}{\mathcal}

\newcommand*{\ms}{\mathsf}
\newcommand*{\mbb}{\mathbb}
\newcommand*{\Nat}{\mbb N}

\newcommand*{\from}{\colon}
\newcommand*{\at}{\odot}
\newcommand*{\tensor}{\otimes}
\newcommand*{\ox}{\otimes}
\newcommand*{\op}{{\mathrm{op}}}
\newcommand*{\iso}{\cong}

\newcommand*{\proves}{\vdash}
\DeclareMathOperator{\id}{id}

\newcommand*{\interp}[1]{\llbracket #1 \rrbracket}
\newcommand*{\interpCtx}[4]{\llbracket #1 \mid #2 \at #3 \rrbracket_{#4}}
\newcommand*{\lto}{\multimap}
\makeatletter
\renewcommand*{\Vec}{\old@vec}
\makeatother
\DeclareMathOperator{\SMC}{\mb{SMC}}
\newcommand*{\hox}{\mathbin{\hat\ox}}

\newcommand*{\mode}{\ms}
\renewcommand{\phi}{\varphi}
\newcommand*{\gyaru}{\textsc{Grass}}

\DeclareMathOperator{\Cont}{\ms{Cont}}
\DeclareMathOperator{\Weak}{\ms{Weak}}
\newcommand*{\Type}{\ms{Type}}
\newcommand*{\Mode}{\bf{Mode}}

\newcommand*{\ExpAct}{\bf{ExpAct}}

\usepackage{tikz}
\usetikzlibrary{cd}
\tikzcdset
  { arrow style = tikz
  , diagrams = {> = {Straight Barb[scale = 0.8]}}
  }

\newcommand{\mlnode}[1]{\begin{tabular}{c}#1\end{tabular}}

\usepackage{enticsmacro}
\def\conf{MFPS 2026}

\volume{NN}

\def\lastname{Hanukaev, Eades III}

\begin{document}

\begin{frontmatter}

  \title{A unification of graded and substructural logics}%
  \thanks[ALL]{This work was supported by NSF grant \#2104535
    ``SHF: SMALL: Semantically and Practically Generalizing Graded Modal Types''}

  \author{Peter Hanukaev\thanksref{phanukaev}}
  \author{Harley Eades III\thanksref{heades}}
  \address{%
    School of Computer and Cyber Sciences\\
    Augusta University\\
    Augusta, Georgia, USA%
  }
  \thanks[phanukaev]{Email: \href{mailto:phanukaev@augusta.edu}
    {\texttt{\normalshape phanukaev@augusta.edu}}}
  \thanks[heades]{Email: \href{mailto:harley.eades@pm.me}
    {\texttt{\normalshape harley.eades@pm.me}}}
  \begin{abstract}
    Type systems which account for resource sensitive computations
    can generally be split into two styles:
    First, substructural logics such as Linear Logic
    which seek to restrict weakening and contraction
    and reintroduce them in a controlled manner;
    And second, graded systems which allow weakening and contraction by default,
    but track the use of variables quantitatively in some algebraic structure --
    usually a semiring.
    We present \gyaru{} (\textbf{Gra}ded and \textbf{s}ub\textbf{s}tructural),
    a type system which incorporates mechanisms from both of these approaches,
    thus allowing maximally flexible control over variable usage.
    Furthermore, \gyaru{} allows grades from an arbitrary collection of grade
    algebras to coexist in the same system,
    thus allowing different variables to be controlled with respect
    to different notions of resource within the same program.
    We develop the categorical semantics of \gyaru{}, and find that,
    on the level of categorical semantics,
    it subsumes multiple established systems such as
    LNL~\cite{benton_mixed_1995},
    Adjoint Logic~\cite{pruiksma_adjointlogic_2018},
    and \textsf{mGL}~\cite{vollmer_mixed_2025}.
  \end{abstract}
  \begin{keyword}
    Programming Languages,
    Substructural Logic,
    Graded Modal Types,
    Categorical Semantics
  \end{keyword}
\end{frontmatter}

\section{Introduction}
\label{sec:introduction}
The treatment of program variables as resources has seen a rise over the last few
decades.
While classically variables were treated like propositions
which could be reused or discarded at will,
the resourceful treatment acknowledges that this is not true in
many common computing use cases.
Treating variables as resources allows programmers to verify interesting
correctness properties,
like the absence of memory leaks or use-after-free errors.

Two distinct approaches to the resourceful analysis of programs
exist in the literature.
\emph{Substructural logics}
restrict weakening and contraction,%
\footnote{Exchange can
also be restricted to create non-commutative logics~\cite{Lambek:1958}, but we do
not consider this here.}
prohibiting variables from being duplicated or discarded by default,
then selectively re-introducing these capabilities where needed.
In Linear Logic \cite{girard_linear_1987}, this is done via the of-course modality.
Benton generalized Linear Logic to Linear-non-Linear Logic \cite{benton_mixed_1995},
a system which features two sub-logics, one where assumptions are intuitionistic
and one where assumptions can be either intuitionistic or linear.
Pruiksma et al.~\cite{pruiksma_adjointlogic_2018} further generalized this system to
Adjoint Logic, allowing any number of logics, each with its own set of
permitted structural rules, to coexist in the same system.
This allows programmers to specify exactly which structural rules are applicable
to which variable in a program, independently of the other variables.

The second approach is to allow structural rules by default, but to
annotate assumptions with so called \emph{grades}, usually drawn from a semiring.
In this approach, the control over the usage of
assumptions stems from the algebraic structure of the grades with graded contraction being
modeled by addition and weakening being modeled by an explicit zero element.
Graded systems are commonly parametrized by their grade structure and can
therefore model a variety of resource notions.
Examples of applications are:
Garbage collection~\cite{choudhury_graded_2021},
liveness analysis~\cite{petricek_coeffects_2014},
security~\cite{%
  abel_unified_2020,%
  gaboardi_combining_2016,%
  liepelt_graded_2026,%
  moon_graded_2021,%
  orchard_quantitative_2019%
  },
dataflow analysis
\cite{petricek_coeffects_2014,uustalu_comonadic_2008,uustalu_signals_2005},
numerical sensitivity tracking~\cite{%
  azevedo_de_amorim_semantic_2017,de_amorim_really_2014%
},
and reasoning about probabilistic programs \cite{rajani_modal_2024}.

In this work we present \gyaru{}, a type system which unifies these two approaches.
Our system allows grades drawn from any number of grade algebras
to be used simultaneously.
For example, we can type a function
$
  \mathtt{read\_decrypt}
  \from (k : ^{\ms{Hi}} \mathtt{Key})
  \to (f :^1 \mathtt{FileHandle})
  \to \mathtt{Text}
$,
which reads the contents of an encrypted file $ f $ using the decryption key $ k $.
The grade annotation $ \ms{Hi} $ ensures that the key is used in a secure context,
while the grade annotation $ 1 $ ensures that the file handle is properly
accessed and closed.
Our type system further allows control over graded contraction and weakening.
For example two file handles used linearly should not be contractible
into one which can be used twice.

We develop a categorical semantics for our type system,
based on graded linear exponential comonads~\cite{katsumata_double_2018}.
Our type system is assembled from a collection of graded logics connected by morphisms,
and this design is mirrored in the categorical semantics.
We develop a novel notion of morphisms between graded comonads
based on morphisms of actegories [sic] \cite{capucci_actegories_2024}.
We show that known categorical models for substructural logics
give rise to models of \gyaru{} in a straightforward way.
Thus, on the level of categorical semantics,
\gyaru{} is capable of expressing both graded and substructural logics,
as well as combinations thereof.

\section{Single-mode system}
\label{sec:control}
  \gyaru{} is paratemeterized by a collection of so called \emph{modes},
each of which admits different graded structural rules.
In this section we explain our novel weakening and contraction rules
in the situation when this collection is instantiated to consist of only one mode.
In this case, \gyaru{} simplifies to be an ordinary graded type theory,
nearly identical to the purely graded fragment of
\textsf{mGL} \cite{vollmer_mixed_2025},
except with modified weakening and contraction rules.
We will focus on the structural rules here;
the full set of typing rules is given in~\S\ref{asec:single-mode-spec}
of the appendix.
Before we begin with our treatment of the structural rules,
we review weakening and contraction in existing graded systems,
then explain how we control them further using modes.
For contraction, the idea is to restrict it to a set of mutually contractible grades,
subject to some conditions.
For weakening the situation is simpler: It can just be toggled on or off.
We end this section with a discussion of morphisms of modes,
which capture the notion that one mode has more structure than another.

\subsection{Weakening and contraction in existing systems}

We review the behavior of graded weakening and contraction in existing systems.
These structural rules are present in most graded type systems,
either as explicit rules~\cite{%
  ghica_bounded_2014,%
  petricek_coeffects_2013,%
  petricek_coeffects_2014,%
  vollmer_mixed_2025},
as admissible rules
\cite{choudhury_graded_2021,%
  gaboardi_combining_2016,%
  moon_graded_2021},
or even a mix of the two with an explicit weakening rule,
but admissible graded contraction
\cite{orchard_quantitative_2019}.
In this section we will consider typing judgments of the form $  \rho  \odot  \Gamma  \vdash  \gyaruSnt{t}  :  \gyaruSnt{T}  $
where $ t $ is a term, $ T $ is a type and~$ \Gamma $ a typing context.
Furthermore, $ \rho $ is a list of grades of the same length as $ \Gamma $.
Each entry of $ \rho $ dictates the capabilties with which
the variable in the corresponding position in $ \Gamma $ may be used by $ t $.
The graded contraction and weakening rules are now
\begin{mathpar}
  \namedrules{gyaruS}{term}{contE,weakE}.
\end{mathpar}
The contraction rule allows sharing a variable
across two sub-expressions in a term,
but also requires its capabilities to be split between the two expressions.
The weakening rule allows the introduction of additional variables into the context,
marked by the grade $ 0 $.
The graded structural rules can be seen as generalizations of the structural
rules available using the of-course modality $ !A $ in Linear Logic,
where they correspond to maps
$ !A \lto !A \ox !A $ and $ !A \lto I $ respectively.
The graded structural rules generalize these to
$ \square_{q_1 + q_2} A \lto \square_{q_1} A \ox \square_{q_2} A $ and
$ \square_0 A \lto I $,
where the type $ \square_q A $ indicates use at grade $ q $.
This allows fine grained tracking of the usage of each hypothesis.
Depending on the grade algebra being used,
the graded structural rules may simply correspond to the ones of linear logic,
but by choosing different kinds of semirings we can enforce different styles
of capability tracking.
For example,
if we use the natural numbers to grade our types we can track the precise number
of times variables are being used rather than simply one or zero.

\subsection{Modes}
In this section we formalize a novel structure used to control weakening and
contraction in the graded setting, which we call \emph{modes}.
The main insight is that contraction can be controlled by specifying
a set of mutually contractible grades called an \emph{ideal}.
The notion of ideals is borrowed from ring theory,
where ideals are a well-known concept (see e.g.\@ Lang~\cite{Lang_2002}).

\begin{definition}
  A \emph{semiring} is a set $ R $ together with binary operations
  $ (+), (\cdot) \from R \times R \to R $
  and elements $ 0, 1 \in R $ such that
  $ (R, +, 0) $ is a commutative monoid,
  $ (R, \cdot, 1) $ is a monoid
  and such that
  for all $ x, y, z \in R $
  we have
  $ x \cdot (y + z) = (x \cdot y) + (x \cdot z) $ and
  $ (x + y) \cdot z = (x \cdot z) + (y \cdot z) $ and
  $ 0 \cdot x = 0 = x \cdot 0 $.
  A \emph{preordered semiring} is a semiring equipped with a
  preorder $ (\le) $ such that $ (+) $ and $ (\cdot) $
  are monotone with respect to $ (\le) $, i.e.\@
  whenever $ x \le x' $ and $ y \le y' $, then $ x + y \le x' + y' $
  and likewise for $ (\cdot) $.
  Preordered semirings are also called \emph{grade algebras}
  and their elements are called \emph{grades}.
\end{definition}

\begin{example}
  \label{exmp:grade-algebras}
  \begin{enumerate}
  \item
    The set of natural numbers $ \Nat $ with the usual arithmetic operations
    forms a semiring.
    There are several natural choices of preorder on $ \Nat $:
    The usual ordering $ 0 \le 1 \le \mathellipsis $,
    its opposite $ 0 \ge 1 \ge \mathellipsis $,
    and the \emph{discrete ordering} $ n \le m \iff n = m $.

  \item
    The set $ \{0, 1\} $ can be equipped with two distinct semiring structures,
    by either defining $ 1 + 1 = 0 $ or $ 1 + 1 = 1 $.
    Only the latter structure will be of interest to us.
    There are three interesting preorders,
    defined by either $ 0 \le 1 $ or $ 0 \ge 1 $,
    or the discrete ordering.

  \item
    The \emph{none-one-tons} semiring is the set $ \{ 0, 1, \omega \} $.
    Addition and multiplication are fixed by the semiring axioms and the equations
    $ 1 + 1 = 1 + \omega = \omega + \omega = \omega \cdot \omega = \omega $.
    Again, there are multiple possible preorders,
    for example, the preorder generated by $ 0 \le \omega $
    or the one generated by $ 0 \le \omega \ge 1 $.

  \item
    There is a unique semiring with $ 0 = 1 $,
    which we denote $ \top $.
  \end{enumerate}
\end{example}

\begin{definition}
  Let $ R $ be a semiring.
  A \emph{(two sided) ideal} in $ R $ is a subset $ I \subseteq R $ such that
  $ 0 \in I $,
  and whenever $ x, y \in I $, then $ x + y \in I $,
  and for any $ x \in I $ and $ r \in R $,
  we have $ r \cdot x \in I $ and $ x \cdot r \in I $.
\end{definition}

\begin{example}
  \label{exmp:ideals}
  \begin{enumerate}
  \item
    Any semiring $ R $ admits the trivial ideals $ \{ 0 \} \subseteq R $
    and $ R \subseteq R $.

  \item
    The sets $ \{0, \omega \} \subseteq \{0, 1, \omega \} $
    and $ \Nat \setminus \{ 1 \} \subseteq \Nat $
    are ideals.

  \item
    \label{point:generating-ideals}
    Given a family of ideals $ I_j $,
    the intersection $ \bigcap_j I_j $ is again an ideal.
    Hence, given any subset $ X \subseteq R $ there exists a smallest ideal
    containing $ X $:
    There exists an ideal containing $ X $
    since $ R \subseteq R $ is an ideal itself.
    Now consider the set of all such ideals and take their intersection.
  \end{enumerate}
\end{example}

\begin{definition}
  A mode $ \mode m $
  is a triple $ (R_{\mode m},  \mathsf{Cont}( \mathsf{m} ) ,  \mathsf{Weak} (  \mathsf{m}  ) ) $
  where $ R_{\mode m} $ is a grade algebra,
  $  \mathsf{Cont}( \mathsf{m} )  \subseteq R_{\mode m} $ is an ideal
  and $  \mathsf{Weak} (  \mathsf{m}  )  \in \{\ms{true}, \ms{false} \} $.
\end{definition}

\begin{remark}
  \label{rem:mode-control-details}
  Our notion of modes is a generalization of the notion of modes given in
  Adjoint Logic~\cite{pruiksma_adjointlogic_2018}.
  There, a mode is equipped with two boolean values,
  indicating whether weakening and contraction are permitted.
  In the graded setting,
  we can control weakening and contraction more granularly
  by enabling them only for a subset of grades.
  Contraction is modeled by a set of mutually contractible grades.
  We will see below how the conditions that this set is an ideal arise
  naturally in the syntax.
  While it may appear that weakening is still controlled by a boolean,
  the situation is a bit more subtle:
  Most graded systems, including \gyaru{}, have a subsumption rule,
  which states that grades need only be upper bounds on the actual usage
  of variables.
  Thus, when weakening is enabled,
  unused variables can be introduced at any grade $ q \ge 0 $,
  by applying weakening followed by subsumption.
  We do not require $ 0 $ to be the bottom element of a grade algebra,
  so weakening is controlled by the predicate $  \mathsf{Weak} (  \mathsf{m}  )  $
  and the choice of preorder.
\end{remark}

\begin{example}
  The following modes will serve as running examples:
  $ \ms U = (\top, \top, \ms{true}) $,
  $ \ms R = (\top, \top, \ms{false}) $,
  $ \ms A = (\{0, 1\}, \{ 0 \}, \ms{true}) $
  and $ \ms L = (\Nat, \{ 0 \}, \ms{false}) $.
  We will see in Example \ref{exmp:non-graded-logics}
  how these modes can recover intuitionistic, relevant,
  affine and linear logic respectively.
\end{example}

\subsection{Restricted structural rules}

We can now describe the controlled structural of \gyaru{}.
Fix a mode $ \mode m = (R_{\mode m},  \mathsf{Cont}( \mathsf{m} ) ,  \mathsf{Weak} (  \mathsf{m}  ) ) $.
The new rules for contraction and weakening are as follows:
\begin{mathpar}
  \namedrules{gyaruS}{term}{cont,weak}.
\end{mathpar}
The weakening rule is straightforwardly modified:
It can be toggled on or off by the mode.
Contraction is only allowed when both grades are elements of $  \mathsf{Cont}( \mathsf{m} )  $.
The requirement that $  \mathsf{Cont}( \mathsf{m} )  $ is an ideal arises naturally as follows.
First, consider a judgment
$
   \rho  \gyaruSsym{,}  \gyaruSnt{q_{{\mathrm{1}}}}  \gyaruSsym{,}  \gyaruSnt{q_{{\mathrm{2}}}}  \gyaruSsym{,}  \gyaruSnt{q_{{\mathrm{3}}}}  \odot  \Gamma  \gyaruSsym{,}  \gyaruSmv{x_{{\mathrm{1}}}}  \gyaruSsym{:}  \gyaruSnt{A}  \gyaruSsym{,}  \gyaruSmv{x_{{\mathrm{2}}}}  \gyaruSsym{:}  \gyaruSnt{A}  \gyaruSsym{,}  \gyaruSmv{x_{{\mathrm{3}}}}  \gyaruSsym{:}  \gyaruSnt{A}  \vdash  \gyaruSnt{t}  :  \gyaruSnt{B} 
$
where $  \gyaruSnt{q_{{\mathrm{1}}}}  \gyaruSsym{,}  \gyaruSnt{q_{{\mathrm{2}}}}  \gyaruSsym{,}  \gyaruSnt{q_{{\mathrm{3}}}}  \in \mathsf{Cont}( \mathsf{m} )  $.
We want $  \mathsf{Cont}( \mathsf{m} )  $ to act as a set of mutually contractible grades,
so it should be possible to contract this into a term
$
   \rho  \gyaruSsym{,}  \gyaruSnt{q_{{\mathrm{1}}}}  \gyaruSsym{+}  \gyaruSnt{q_{{\mathrm{2}}}}  \gyaruSsym{+}  \gyaruSnt{q_{{\mathrm{3}}}}  \odot  \Gamma  \gyaruSsym{,}  \gyaruSmv{z}  \gyaruSsym{:}  \gyaruSnt{A}  \vdash  \gyaruSsym{[}  \gyaruSmv{z}  \gyaruSsym{/}  \gyaruSmv{x_{{\mathrm{1}}}}  \gyaruSsym{,}  \gyaruSmv{z}  \gyaruSsym{/}  \gyaruSmv{x_{{\mathrm{2}}}}  \gyaruSsym{,}  \gyaruSmv{z}  \gyaruSsym{/}  \gyaruSmv{x_{{\mathrm{3}}}}  \gyaruSsym{]}  \gyaruSnt{t}  :  \gyaruSnt{B} 
$.
Since the contraction rule only allows contracting two variables at a time,
we must pick an order:
We either contract $ x_1 $ with $ x_2 $ and then the result with $ x_3 $,
or $ x_2 $ and $ x_3 $ first and then the result with $ x_1 $.
The condition that $  \mathsf{Cont}( \mathsf{m} )  $ is closed under addition means that
both orderings are always possible,
and in fact the resulting derivations should be considered equal.
Similarly,
the condition that $   0   \in \mathsf{Cont}( \mathsf{m} )  $ ensures that
contraction interacts well with weakening in the sense that
for $  \gyaruSnt{q}  \in \mathsf{Cont}( \mathsf{m} )  $ we have the derivation
\[
  \inferrule*
  [Right = \rulename{gyaruS}{term}{cont}]
  {
    \inferrule*
    [Right = \rulename{gyaruS}{term}{weak}]
    {
       \rho  \gyaruSsym{,}  \gyaruSnt{q}  \odot  \Gamma  \gyaruSsym{,}  \gyaruSmv{x}  \gyaruSsym{:}  \gyaruSnt{A}  \vdash  \gyaruSnt{e}  :  \gyaruSnt{B} 
    }
    {
       \rho  \gyaruSsym{,}  \gyaruSnt{q}  \gyaruSsym{,}   0   \odot  \Gamma  \gyaruSsym{,}  \gyaruSmv{x}  \gyaruSsym{:}  \gyaruSnt{A}  \gyaruSsym{,}  \gyaruSmv{y}  \gyaruSsym{:}  \gyaruSnt{A}  \vdash  \gyaruSnt{e}  :  \gyaruSnt{B} 
    }
  }
  {
     \rho  \gyaruSsym{,}  \gyaruSnt{q}  \gyaruSsym{+}   0   \odot  \Gamma  \gyaruSsym{,}  \gyaruSmv{z}  \gyaruSsym{:}  \gyaruSnt{A}  \vdash  \gyaruSsym{[}  \gyaruSmv{z}  \gyaruSsym{/}  \gyaruSmv{x}  \gyaruSsym{]}  \gyaruSnt{e}  :  \gyaruSnt{B} 
  }
\]
and the resulting term should be considered equal to $ e $.
Finally, contraction should also interact well with substitution.
In graded type systems, substitution takes the form
\[
  \inferrule*
  {
     \rho  \gyaruSsym{,}  \gyaruSnt{r}  \odot  \Gamma  \gyaruSsym{,}  \gyaruSmv{x}  \gyaruSsym{:}  \gyaruSnt{A}  \vdash  \gyaruSnt{e}  :  \gyaruSnt{B} 
    \and
     \sigma  \odot  \Delta  \vdash  \gyaruSnt{t}  :  \gyaruSnt{A} 
  }
  {
     \rho  \gyaruSsym{,}   \gyaruSnt{r}  \cdot  \sigma   \odot  \Gamma  \gyaruSsym{,}  \Delta  \vdash  \gyaruSsym{[}  \gyaruSnt{t}  \gyaruSsym{/}  \gyaruSmv{x}  \gyaruSsym{]}  \gyaruSnt{e}  :  \gyaruSnt{B} 
  }.
\]
Given terms
$  \rho  \gyaruSsym{,}  \gyaruSnt{r}  \odot  \Gamma  \gyaruSsym{,}  \gyaruSmv{x}  \gyaruSsym{:}  \gyaruSnt{A}  \vdash  \gyaruSnt{e}  :  \gyaruSnt{B}  $
and
$  \sigma  \gyaruSsym{,}  \gyaruSnt{q_{{\mathrm{1}}}}  \gyaruSsym{,}  \gyaruSnt{q_{{\mathrm{2}}}}  \odot  \Delta  \gyaruSsym{,}  \gyaruSmv{z_{{\mathrm{1}}}}  \gyaruSsym{:}  \gyaruSnt{T}  \gyaruSsym{,}  \gyaruSmv{z_{{\mathrm{2}}}}  \gyaruSsym{:}  \gyaruSnt{T}  \vdash  \gyaruSnt{t}  :  \gyaruSnt{A}  $
with $  \gyaruSnt{q_{{\mathrm{1}}}}  \gyaruSsym{,}  \gyaruSnt{q_{{\mathrm{2}}}}  \in \mathsf{Cont}( \mathsf{m} )  $
we can apply contraction, obtaining
$
   \sigma  \gyaruSsym{,}  \gyaruSnt{q_{{\mathrm{1}}}}  \gyaruSsym{+}  \gyaruSnt{q_{{\mathrm{2}}}}  \odot  \Delta  \gyaruSsym{,}  \gyaruSmv{z}  \gyaruSsym{:}  \gyaruSnt{T}  \vdash  \gyaruSsym{[}  \gyaruSmv{z}  \gyaruSsym{/}  \gyaruSmv{z_{{\mathrm{1}}}}  \gyaruSsym{,}  \gyaruSmv{z}  \gyaruSsym{/}  \gyaruSmv{z_{{\mathrm{2}}}}  \gyaruSsym{]}  \gyaruSnt{t}  :  \gyaruSnt{A} 
$
then perform substitution.
The condition that $  \mathsf{Cont}( \mathsf{m} )  $ is an ideal ensures that the other ordering
is possible too:
If we substitute first, obtaining
$
   \rho  \gyaruSsym{,}   \gyaruSnt{r}  \cdot  \sigma   \gyaruSsym{,}   \gyaruSnt{r}  \cdot  \gyaruSnt{q_{{\mathrm{1}}}}   \gyaruSsym{,}   \gyaruSnt{r}  \cdot  \gyaruSnt{q_{{\mathrm{2}}}}   \odot  \Gamma  \gyaruSsym{,}  \Delta  \gyaruSsym{,}  \gyaruSmv{z_{{\mathrm{1}}}}  \gyaruSsym{:}  \gyaruSnt{T}  \gyaruSsym{,}  \gyaruSmv{z_{{\mathrm{2}}}}  \gyaruSsym{:}  \gyaruSnt{T}  \vdash  \gyaruSsym{[}  \gyaruSnt{t}  \gyaruSsym{/}  \gyaruSmv{x}  \gyaruSsym{]}  \gyaruSnt{e}  :  \gyaruSnt{B} 
$
then we may still apply contraction.
Note that both orders of these operations result in the same term.

\begin{example}
  \label{exmp:non-graded-logics}
  We can recover non-graded logics by using appropriate modes.
  \begin{enumerate}
  \item
    Using the mode $ \ms U = (\top, \top, \ms{true}) $
    we recover intuitionistic logic:
    Contraction and weakening are permitted.

  \item
    Using mode $ \ms R = (\top, \top, \ms{false}) $
    recovers \emph{relevant logic}:
    There is only one grade that may be contracted with itself,
    so variables may be reused arbitrarily.
    However, weakening is disabled, so unused variables are not allowed.

  \item
    Using the mode $ \ms A = (\{ 0, 1 \}, \{ 0 \}, \ms{true}) $
    with ordering $ 0 \le 1 $, we recover \emph{affine logic}.
    Variables graded $ 1 $ may be unused by applying a combination of the weakening
    and subsumption rules.
    Contraction is only allowed for unused variables graded $ 0 $,
    so reuse is impossible.

  \item
    Using the mode $ (\{0, 1 \}, \{ 0 \}, \ms{false}) $ we almost recover linear logic.
    Variables graded $ 1 $ are guaranteed to have linear usage.
    The system we obtain here is not exactly linear logic (hence ``almost''),
    since variables graded $ 0 $ can still be introduced explicitly using
    the graded modality $ \square_0 A $,
    or by applying the elimination rule for the unit type.
    (The unit type has no computational content so its terms may be eliminated
    at any grade.)

  \item
    Similarly, grading by the mode $ \ms L = (\Nat, \{ 0 \}, \ms{false}) $
    with discrete ordering
    tracks the number of free occurences of variables in a term exactly.
    Since contraction and weakening are disabled,
    variables graded by $ n \neq 1 $ must be introduced explicitly
    via the graded modality $ \square_n A $,
    or the unit type's elimination rule.
    Thus, we recover another variant on linear logic.
  \end{enumerate}
\end{example}

\begin{example}
  \label{exmp:file-handle-mode}
  Using the mode $ (R = \{0, 1, \omega\}, \{ 0, \omega \}, \ms{true}) $,
  the grades mean unused $ (0) $, linear $ (1) $, and unrestricted~$ (\omega) $.
  Furthermore, two variables graded $ 1 $ may not be contracted into one variable
  graded $ 1 + 1 = \omega $.
  This is useful, for example, when dealing with file handles:
  Two file handles used linearly cannot be replaced by one file handle with
  unrestricted use.
  Similar considerations apply to the mode
  $ (\Nat, \Nat \setminus \{ 1 \}, \ms{true}) $:
  Contraction in only allowed for variables which are known to be used non-linearly.
\end{example}

\subsection{Comparing modes}
\label{sec:mode-morphisms}

\begin{definition}
  A morphism of semirings $ R \to S $ is a function
  $ f \from R \to S $ such that
  $ f(0) = 0 $, $ f(1) = 1 $, and for all $ x, y \in R $,
  $ f(x + y) = f(x) + f(y) $ and $ f(x \cdot y) = f(x) \cdot f(y) $.
  A morphism of grade algebras $ f \from R \to S $ is a morphism of
  semirings which is also monotone, i.e.\@~$ x \le y $ implies $ f(x) \le f(y) $.
  A morphism of modes
  $ \mode m \to \mode n $ is a morphism
  $ f \from R_{\mode m} \to R_{\mode n} $ of the underlying grade algebras
  such that for each $ x \in \Cont(\mode m) $, we have $ f(x) \in \Cont(\mode n) $
  and such that the boolean formula $ \Weak(\mode m) \to \Weak(\mode n) $ is true.
  This definition yields a category of modes, denoted $ \Mode $.
\end{definition}

A morphism of modes $ \phi \from \mode m \to \mode n $ implies
that $ \mode n $ has ``more structure'' than $ \mode m $ in an appropriate way.
We give some intuition for this:
Given a derivable judgment
$  \rho  \odot  \Gamma  \vdash  \gyaruSnt{t}  :  \gyaruSnt{A}  $ under mode $ \mode m $
there exists a corresponding judgment
$ \phi(\rho) \at \Gamma \proves t : A $
which is derivable under mode $ \mode n $.
Here, $ \phi(\rho) $ is the result of applying $ \phi $ to each element of $ \rho $.
In other words, there is a translation function.
This translation is not quite the identity on types and terms,
since some types and terms feature grade annotations,
which also need to be transported along $ \phi $.
The proof that this translation is sound is straightforward by induction,
since morphisms of modes preserve all structure carried by a mode.
In particular, each application of the weakening or contraction rules
at mode $ \mode m $ can be mirrored at $ \mode n $,
which we take to mean that $ \mode n $ has more structure than $ \mode m $.
This intuition has two interesing special cases:
The category $ \Mode $ has a terminal object $ \ms U $
and an initial object $ \ms L $.
We already saw in Example \ref{exmp:non-graded-logics}
that these modes recover intuitionistic and linear logic respectively.
Therefore it makes sense to think of the terminal and initial modes as
having maximal and minimal structure respectively.

\section{Syntax of \gyaru{}}
\label{sec:syntax}
  In this section we explain the syntax and typing rules of \gyaru{}.
Our system is parametrized by a family of modes,
with each type belonging to one mode.
Like with Adjoint Logic~\cite{pruiksma_adjointlogic_2018},
the modes are arranged in a preorder,
with modes higher in the order admitting more structural rules than those
lower in the order, as discussed in \S\ref{sec:mode-morphisms}.
First, we present several definitions used throughout the paper.

\begin{definition}
  \label{defn:parametrizing-data}
  \gyaru{} is parametrized by a functor
  $ S \to \Mode $, where $ S $ is some preordered set.
  We usually elide mention of this functor, instead thinking of its image
  as a preordered set of modes $ \mathsf{m}  \gyarusym{,}  \mathsf{n}  \gyarusym{,}  \mathsf{l}  \gyarusym{,}  \mathellipsis $ together with a coherent
  system of morphisms $ \phi_{\mode m, \mode n} \from \mode m \to \mode n $,
  whenever $ \mathsf{m}  \le  \mathsf{n} $.
  We fix such data for the remainder of this text.
\end{definition}

\begin{definition}
  Grades are denoted by the letters $ \gyarunt{q}  \gyarusym{,}  \gyarunt{r}  \gyarusym{,}  \gyarunt{s} $.
  They may be drawn from any of the grade algebras $ R_{\mode m} $.
  \emph{Grade vectors} are finite lists of grades, denoted by the letters
  $ \rho $, $ \sigma $, $ \tau $.
  Grade vectors are \emph{heterogenous}, i.e.\@ entries of one grade vector
  may be drawn from more than one grade algebra.
\end{definition}

\begin{definition}
  \label{defn:algebra-multiplication}
  Let $ \mode m \le \mode n $ be modes
  and let $ q \in R_{\mode m} $ and $ r \in R_{\mode n} $.
  We define
  $  \gyarunt{q}   \gyarunt{r}  = \phi_{\mode m, \mode n} (q) \cdot r \in R_{\mode n} $.
  We extend this notation to scalar multiplication:
  If $ q \in R_{\mode m} $ and $ \rho = \gyarusym{(}  \gyarunt{r_{{\mathrm{1}}}}  \gyarusym{,}  \mathellipsis  \gyarusym{,}  \gyarunt{r_{\gyarumv{k}}}  \gyarusym{)} $
  with $ r_i \in R_{\mode m_i} $ and $ \mathsf{m}_{\gyarumv{i}}  \ge  \mathsf{m} $ for $ i \in \{1, \mathellipsis, k\} $,
  then we write
  $  \gyarunt{q}   \rho  = \gyarusym{(}   \gyarunt{q}   \gyarunt{r_{{\mathrm{1}}}}   \gyarusym{,}  \mathellipsis  \gyarusym{,}   \gyarunt{q}   \gyarunt{r_{\gyarumv{k}}}   \gyarusym{)}
    = (
    \phi_{\mode m,\mode m_1}(q) \cdot r_1,
    \mathellipsis,
    \phi_{\mode m,\mode m_k}(q) \cdot r_k) $.
\end{definition}

\begin{definition}
  Types are denoted $ A, B, C, T $.
  Each type in \gyaru{} belongs to a unique mode and
  we write $  \gyarunt{A}  \in \mathsf{Type}( \mathsf{m} )  $
  to indicate that the type $ A $ is well-formed and belongs to mode $ \mode m $.
  In this case, variables of type $ A $ are graded by grades drawn from $ R_{\mode m} $,
  and structural rules may be applied to them as prescribed by $ \Weak(\mode m) $
  and $ \Cont(\mode m) $.
  The rules for well-formed types are given in Figure \ref{fig:types-well-formed}.
\end{definition}

\begin{figure}
  \begin{mathpar}
    \small
    \namedrules{gyaru}{type}{unit,pair,arrow,sum,drop,raise}
  \end{mathpar}
  \caption{\gyaru{} rules for well-formed types}
  \label{fig:types-well-formed}
\end{figure}

\begin{figure*}
  \begin{mathpar}
    \small
    \namedrules{gyaru}{term}{var,weak,cont,sub,%
      unitI,unitE,arrowI,arrowE,pairI,pairE,sumIL,sumIR,sumE,%
      dropI,dropE,raiseI,raiseE}
  \end{mathpar}
  \caption{\gyaru{} typing rules}
  \label{fig:full-typing-rules}
\end{figure*}
\begin{figure*}
  \small
  \begin{mathpar}
    \namedrules{gyaru}{beta}{unit,pair,arrow,sumL,sumR,raise,drop}
  \end{mathpar}
  \begin{mathpar}
    \namedrules{gyaru}{eta}{unit,pair,arrow,sum,raise,drop}
  \end{mathpar}
  \caption{\gyaru{} $ \beta\eta $-conversions}
  \label{fig:gyaru-reductions}
\end{figure*}

\begin{definition}
  Typing judgments in \gyaru{} have the form
  $
     \rho  \mid  \mathsf{M}   \odot   \Gamma  \vdash_{ \mathsf{m} }  \gyarunt{t}  \colon  \gyarunt{T} 
  $.
  The roles of $ \Gamma $, $ \gyarunt{t} $, and $ \gyarunt{T} $ are as before.
  There are two new components in the judgment:
  The annotation $ \mathsf{m} $ on the turnstile indicates that $  \gyarunt{T}  \in \mathsf{Type}( \mathsf{m} )  $,
  and we say that this typing judgment is \emph{made at mode $ \mode m $}.
  The other new component is $ \mathsf{M} $, which is a list of modes,
  also called a \emph{mode vector}.
  If $ \rho = \gyarusym{(}  \gyarunt{r_{{\mathrm{1}}}}  \gyarusym{,}  \mathellipsis  \gyarusym{,}  \gyarunt{r_{\gyarumv{k}}}  \gyarusym{)} $
  and $ \mathsf{M} = \gyarusym{(}  \mathsf{m}_{{\mathrm{1}}}  \gyarusym{,}  \mathellipsis  \gyarusym{,}  \mathsf{m}_{\gyarumv{k}}  \gyarusym{)} $
  and $ \Gamma = \gyarumv{x_{{\mathrm{1}}}}  \gyarusym{:}  \gyarunt{A_{{\mathrm{1}}}}  \gyarusym{,}  \mathellipsis  \gyarusym{,}  \gyarumv{x_{\gyarumv{k}}}  \gyarusym{:}  \gyarunt{A_{\gyarumv{k}}} $,
  then the above judgment indicates that $  \gyarunt{A_{\gyarumv{i}}}  \in \mathsf{Type}( \mathsf{m}_{\gyarumv{i}} )  $
  and that the variable $ x_i $ is used with capabilities indicated by $ r_i $.
\end{definition}

\begin{definition}
  Typing contexts are denoted by $ \Gamma $, $ \Delta $.
  When $ \Gamma $ and $ \Delta $ are contexts, their concatenation is written
  $ \Gamma  \gyarusym{,}  \Delta $ and is only well-defined when the contexts $ \Gamma $
  and $ \Delta $ have no variables in common.
  Whenever we use context concatenation, we implicitly assume that this condition
  is satisfied.
\end{definition}

In any derivable typing judgment $  \rho  \mid  \mathsf{M}   \odot   \Gamma  \vdash_{ \mathsf{m} }  \gyarunt{t}  \colon  \gyarunt{T}  $,
each mode $ \mathsf{n} $ occuring in $ \mathsf{M} $ must satisfy $ \mathsf{n}  \ge  \mathsf{m} $,
which we write concisely as $ \mathsf{M}  \ge  \mathsf{m} $.
A similar requirement is present in Linear-non-Linear Logic~\cite{benton_mixed_1995},
where linear terms may depend on non-linear variables, but not vice-versa,
and was also observed by Pruiksma et al.~\cite{pruiksma_adjointlogic_2018},
where it is called \emph{independence}.
The typing rules of \gyaru{} are set up to ensure that independence
holds in all derivable judgments.
In \gyaru{}, independence also ensures that all scalar multiplications
occuring in the typing rules are well-defined
(cf.\@ Definition \ref{defn:algebra-multiplication}).
We discuss the typing rules in detail in the following paragraphs.

We begin with the structural rules.
The grade $  1  $ corresponds to linear usage,
and is thus used in the variable rule (\rulename{gyaru}{term}{var}).
The rules for weakening and contraction are as before,
only modified to account for the presence of different modes.
Contraction (rule \rulename{gyaru}{term}{cont})
is only allowed between grades of the same mode.
In the weakening rule (\rulename{gyaru}{term}{weak})
the condition $ \mathsf{n}  \le  \mathsf{m} $ ensures that independence holds in the conclusion,
if it holds in the assumption.
We assume an implicit exchange rule,
which allows reordering of contexts.
When this occurs, the grade and mode vectors must be reordered accordingly.

The type $  \operatorname{\downarrow} ^{ \gyarunt{q} }_{ \mathsf{n}  \le  \mathsf{m} }  \gyarunt{A}  $ is defined for $  \gyarunt{A}  \in \mathsf{Type}( \mathsf{m} )  $,
$ q \in R_{\mode m} $ and $ \mathsf{n}  \le  \mathsf{m} $ (rule \rulename{gyaru}{type}{drop}).
Terms of this type behave like terms of type $ A $,
with the added constraint that they must be used at grade $ q $.
Furthermore, the type $  \operatorname{\downarrow} ^{ \gyarunt{q} }_{ \mathsf{n}  \le  \mathsf{m} }  \gyarunt{A}  $ belongs to mode $ \mode n $ and is therefore
subject to the structural rules imposed by $ \mode n $.
The introduction rule (\rulename{gyaru}{term}{dropI})
promotes a term $ t $ of type $ A $ to a term $  \operatorname{\downarrow} ^{ \gyarunt{q} }_{ \mathsf{n}  \le  \mathsf{m} }  \gyarunt{t}  $ of
type $  \operatorname{\downarrow} ^{ \gyarunt{q} }_{ \mathsf{n}  \le  \mathsf{m} }  \gyarunt{A}  $.
The resources used to produce~$ t $ are multiplied by $ q $ during this promotion,
reflecting the $ q $-fold reuse later.
The elimination rule (\rulename{gyaru}{term}{dropE})
uses a let binding $  \mathbin{\mathsf{let} } _{@  \gyarunt{q} }   \operatorname{\downarrow} _{ \mathsf{n}  \le  \mathsf{m} }  \gyarumv{x}   =  \gyarunt{e}   \mathbin{\mathsf{in} }   \gyarunt{t}  $
to extract a term of type $ A $ from $ e $,
and make it available to the body~$ t $,
which then must use that term at grade $ q $.
This type combines two distinct feature sets from existing literature:
First, some graded type theories
(e.g.~\cite{%
  abel_unified_2020,%
  brunel_core_2014,%
  choudhury_graded_2021,%
  ghica_bounded_2014,%
  orchard_quantitative_2019%
})
include a graded modality, commonly written $ \square_q A $.
We can recover such a type in \gyaru{} by setting
$ \square_q A :=  \operatorname{\downarrow} ^{ \gyarunt{q} }_{ \mathsf{m}  \le  \mathsf{m} }  \gyarunt{A}  $.
Second, $  \operatorname{\downarrow} ^{ \gyarunt{q} }_{ \mathsf{n}  \le  \mathsf{m} }  \gyarunt{A}  $ is
an analogue of $ \ms F A $ from LNL \cite{benton_mixed_1995},
and $ \ms{Grd}_q A $ from \textsf{mGL} \cite{vollmer_mixed_2025},
since these constructs move types to a mode with fewer structural rules.
Our notation mirrors that of Pruiksma et al.~\cite{pruiksma_adjointlogic_2018}.

The type $  \operatorname{\uparrow} _{ \mathsf{n}  \le  \mathsf{m} }  \gyarunt{A}  $ is defined for $  \gyarunt{A}  \in \mathsf{Type}( \mathsf{n} )  $
and $ \mathsf{n}  \le  \mathsf{m} $ (rule \rulename{gyaru}{type}{raise}).
The introduction rule (\rulename{gyaru}{term}{raiseI}) constructs
a term $  \operatorname{\uparrow} _{ \mathsf{n}  \le  \mathsf{m} }  \gyarunt{e}  :  \operatorname{\uparrow} _{ \mathsf{n}  \le  \mathsf{m} }  \gyarunt{A}  $ from a term $ e : A $.
The produced term and its type belong to mode $ \mode m $ and thus have fewer
structural restrictions imposed upon them.
The elimination rule (\rulename{gyaru}{term}{raiseE}) allows us to extract
a term of type $ A $ from a term of type $  \operatorname{\uparrow} _{ \mathsf{n}  \le  \mathsf{m} }  \gyarunt{A}  $ directly.
The type $  \operatorname{\uparrow} _{ \mathsf{n}  \le  \mathsf{m} }  \gyarunt{A}  $ is the analogue of $ \ms G A $ from LNL
and $ \ms{Lin} $ from \textsf{mGL}.

\gyaru{} includes a subsumption rule \rulename{gyaru}{term}{sub}.
This rule means that a grade vector $ \rho $
in a judgment $  \rho  \mid  \mathsf{M}   \odot   \Gamma  \vdash_{ \mathsf{m} }  \gyarunt{t}  \colon  \gyarunt{A}  $ need only be an upper bound on the
actual usage of the variables in $ \Gamma $ within the term $ \gyarunt{e} $.
The relation $  \rho  \le  \sigma  $ is defined inductively by
\begin{mathpar}
  \inferrule*{
  }{
    \emptyset  \le  \emptyset
  }
  \and
  \inferrule*{
     \rho  \le  \sigma 
    \and
     \gyarunt{r}  \in R_{ \mathsf{m} } 
    \and
     \gyarunt{q}  \in R_{ \mathsf{m} } 
    \and
     \gyarunt{r}  \le  \gyarunt{q} 
  }
  {
     \gyarusym{(}  \rho  \gyarusym{,}  \gyarunt{r}  \gyarusym{)}  \le  \gyarusym{(}  \sigma  \gyarusym{,}  \gyarunt{q}  \gyarusym{)} 
  }.
\end{mathpar}
When the preorder is discrete, this rule is effectively disabled.

For each mode $ \mode m $, we have a unit type $  \mathbf{I}_ \mathsf{m}  $
(rule \rulename{gyaru}{type}{unit})
which admits a closed term $  \star_ \mathsf{m}  $
that can be produced without using any resources
(rule \rulename{gyaru}{term}{unitI}).
Terms of type $  \mathbf{I}_ \mathsf{m}  $ hold no computational information,
and may thus be eliminated (rule \rulename{gyaru}{term}{unitE}) at any grade $ q $
by the pattern matching construct $  \mathbin{\mathsf{let} } _{@  \gyarunt{q} }   \star_ \mathsf{m}   =  \gyarunt{e}   \mathbin{\mathsf{in} }   \gyarunt{t}  $.

The tensor product type $  \gyarunt{A_{{\mathrm{1}}}}  \otimes  \gyarunt{A_{{\mathrm{2}}}}  $ is defined only when the types $ A_1 $
and $ A_2 $ belong to the same mode.
Its introduction rule (\rulename{gyaru}{term}{pairI}) is standard,
allowing the construction of a pair $ \gyarusym{(}  \gyarunt{t_{{\mathrm{1}}}}  \gyarusym{,}  \gyarunt{t_{{\mathrm{2}}}}  \gyarusym{)} $ from terms $ \gyarunt{t_{{\mathrm{1}}}} $
and $ \gyarunt{t_{{\mathrm{2}}}} $.
The pair elimination rule (\rulename{gyaru}{term}{pairE}) functions by pattern
matching.
It asserts that using a pair at grade $ q $ is the same thing as using each of its
components separately at grade $ q $.
Since term reuse is modeled by multiplication,
the grades used to produce the pair are multiplied by $ q $
as part of the elimination rule.

The function type $  \gyarunt{A} ^{ \gyarunt{q} : \mathsf{m} } \multimap  \gyarunt{B}  $ is defined when $  \gyarunt{A}  \in \mathsf{Type}( \mathsf{m} )  $
and $  \gyarunt{B}  \in \mathsf{Type}( \mathsf{n} )  $ with $ \mathsf{m}  \ge  \mathsf{n} $.
The requirement that $ \mathsf{m}  \ge  \mathsf{n} $ is an instance of independence.
The grade annotation $ q $ indicates how functions of this type use their parameters.
Functions are introduced (rule \rulename{gyaru}{term}{arrowI})
using $ \lambda $-abstraction.
The grade annotation $ q $ corresponds to the
the use of the abstacted variable in the function body.
Elimination (\rulename{gyaru}{term}{arrowE}) is done by function application.
Since grade multiplication corresponds to reuse,
the grades used to produce the parameter $ e $ are multiplied by $ q $
during application.

Like the product, the sum type $  \gyarunt{A_{{\mathrm{1}}}}  \oplus  \gyarunt{A_{{\mathrm{2}}}}  $ is defined between types
of the same mode. Its introduction forms (\rulenames{gyaru}{term}{sumIL,sumIR})
are standard, injecting a term $ e $ of type $ A_1 $ (resp. $ A_2 $) into the sum.
These injections do not affect the grades used to produce $ e $.
Elimination occurs by cases: to use a term of type $  \gyarunt{A_{{\mathrm{1}}}}  \oplus  \gyarunt{A_{{\mathrm{2}}}}  $ at grade $ q $,
we must specify two branches,
which use a term of $ A_1 $ (resp. $ A_2 $) at grade $ q $.

\begin{example}
  Let $ \mathsf{m}_{{\mathrm{1}}}  \gyarusym{,}  \mathsf{m}_{{\mathrm{2}}} $ be two modes, and let $  \mathsf{L}   $ be the initial mode
  as discussed in \S\ref{sec:mode-morphisms}.
  Instantiate \gyaru{} with modes $ \mode m_1 >  \mathsf{L}  < \mode m_2 $.
  Write $ \square^{q}_{\mode m} A =  \operatorname{\downarrow} ^{ \gyarunt{q} }_{  \mathsf{L}   \le  \mathsf{m} }  \gyarusym{(}   \operatorname{\uparrow} _{  \mathsf{L}   \le  \mathsf{m} }  \gyarunt{A}   \gyarusym{)}  $.
  Then $  \mathsf{L}  $ behaves like linear logic,
  with two graded modalities $ \square^q_{\mode m_1} A $ and $ \square^q_{\mode m_2} A $
  which can account for two distinct notions of resource usage.
  This is a generalization of \textsf{mGL} allowing grades from two different
  grade algebras.
\end{example}

\begin{example}
  Using the modes $ \ms L < \ms U $
  of Example \ref{exmp:non-graded-logics},
  we obtain LNL \cite{benton_mixed_1995}.
  If we also use the modes~$ \ms A  $ and~$ \ms R $,
  we obtain a combined system in the style of Adjoint Logic
  \cite{pruiksma_adjointlogic_2018}.
\end{example}

\begin{theorem}[Substitution]
  If $  \rho  \gyarusym{,}  \gyarunt{r}  \mid  \mathsf{M}  \gyarusym{,}  \mathsf{m}   \odot   \Gamma  \gyarusym{,}  \gyarumv{x}  \gyarusym{:}  \gyarunt{A}  \vdash_{ \mathsf{n} }  \gyarunt{e}  \colon  \gyarunt{B}  $
  and
  $  \sigma  \mid  \mathsf{N}   \odot   \Delta  \vdash_{ \mathsf{m} }  \gyarunt{t}  \colon  \gyarunt{A}  $,
  then
  \[
     \rho  \gyarusym{,}   \gyarunt{r}   \sigma   \mid  \mathsf{M}  \gyarusym{,}  \mathsf{N}   \odot   \Gamma  \gyarusym{,}  \Delta  \vdash_{ \mathsf{n} }  \gyarusym{[}  \gyarunt{t}  \gyarusym{/}  \gyarumv{x}  \gyarusym{]}  \gyarunt{e}  \colon  \gyarunt{B} 
  \]
\end{theorem}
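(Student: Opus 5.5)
We prove the statement by induction on the derivation of $\rho, r \mid \mathsf{M}, \mathsf{m} \odot \Gamma, x : A \vdash_{\mathsf{n}} e \colon B$, keeping the derivation of $t$ fixed. The induction hypothesis is stated for arbitrary $\rho$, $\mathsf{M}$, $\Gamma$, $r$ and $\mathsf{n}$. By implicit exchange we may assume that the substituted variable sits in whatever position a given rule requires. Before the induction we record some algebraic facts that follow from $\phi$ being a coherent system of grade algebra morphisms:
\begin{itemize}
\item associativity of heterogeneous scalar multiplication, $q(r\sigma) = (qr)\sigma$;
\item distributivity, $(q_1 + q_2)\sigma = q_1\sigma + q_2\sigma$ for $q_1, q_2 \in R_{\mathsf{m}}$;
\item $0\sigma = 0$ and $1\sigma = \sigma$;
\item monotonicity, so $r \le r'$ implies $r\sigma \le r'\sigma$.
\end{itemize}
Each of these reduces to the corresponding semiring law after applying $\phi_{\mathsf{m},\mathsf{m}_i}$ entrywise.

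The base and structural cases come first.

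\emph{Var.} Here $e = x$ and $r = 1$, so the goal is exactly the derivation of $t$, using $1\sigma = \sigma$.

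\emph{Unit.i.} This case cannot occur, since $x$ is in the context.

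\emph{Weak, where $x$ is the weakened variable.} We have $r = 0$ and $\mathsf{Weak}(\mathsf{m})$, and the goal is $\rho, 0\sigma \mid \mathsf{M},\mathsf{N} \odot \Gamma,\Delta \vdash [t/x]e$ with $x$ not free in $e$. This is the first genuinely problematic case: $0\sigma$ is a vector of zeros in modes $\mathsf{N} \ge \mathsf{m}$, and weakening in those modes requires $\mathsf{Weak}(\mathsf{n}')$ for each $\mathsf{n}' \in \mathsf{N}$. That holds because the mode morphisms $\phi_{\mathsf{m},\mathsf{n}'}$ give $\mathsf{Weak}(\mathsf{m}) \to \mathsf{Weak}(\mathsf{n}')$. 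Independence, $\mathsf{N} \ge \mathsf{m} \ge \mathsf{n}$, supplies the side condition of \textsc{Weak}. So we weaken by each variable of $\Delta$ in turn.

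\emph{Weak, where another variable is weakened.} This follows directly from the induction hypothesis.

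\emph{Cont, where $x$ is the result of contracting $x_1, x_2$ with $q_1, q_2 \in \mathsf{Cont}(\mathsf{m})$.} This is the second problematic case. The substitution must duplicate $t$:
\begin{itemize}
\item take two renamed copies $t_1, t_2$ of $t$, over $\Delta_1$ and $\Delta_2$;
\item apply the induction hypothesis twice to substitute them for $x_1$ and $x_2$, obtaining grades $q_1\sigma$ and $q_2\sigma$;
\item contract each pair of corresponding variables of $\Delta_1$ and $\Delta_2$.
\end{itemize}
The contraction is legal because $\phi_{\mathsf{m},\mathsf{n}'}(q_i)\cdot s \in \mathsf{Cont}(\mathsf{n}')$. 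This uses two facts: mode morphisms preserve $\mathsf{Cont}$, and $\mathsf{Cont}(\mathsf{n}')$ is an ideal, hence closed under multiplication by $s$. This is exactly the ideal condition anticipated in \S\ref{sec:control}. Distributivity then gives $(q_1+q_2)\sigma$.

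\emph{Cont, where other variables are contracted.} This follows directly from the induction hypothesis.

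\emph{Sub.} We use the induction hypothesis and then subsumption, relying on monotonicity of scalar multiplication.

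The logical rules are routine but require grade bookkeeping. The context is split between premises, so $x$ occurs in exactly one premise's context (contexts concatenate disjointly). Suppose $x$ lies in the premise whose grades are scaled by some $q$ in the conclusion: the argument of \textsc{Fun.e}, the scrutinee of \textsc{Pair.e}, \textsc{Sum.e}, \textsc{Unit.e} or \textsc{Drop.e}, or the premise of \textsc{Drop.i}. Then the conclusion grade for $x$ is $q r$, and after applying the induction hypothesis we need $q(r\sigma) = (qr)\sigma$, which is the associativity law above. We must also check that the independence conditions $\mathsf{N} \ge \mathsf{m}$ hold where the rules demand them; for instance, \textsc{Drop.i} moves from mode $\mathsf{m}'$ to $\mathsf{n}' \le \mathsf{m}'$. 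These follow by transitivity of the preorder.

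For \textsc{Sum.e}, the two branches share their context. If $x$ lies in that shared part, we apply the induction hypothesis to both branches with the same $t$, and the results agree syntactically. In the \textsc{Raise.i} and \textsc{Raise.e} cases the premise is simply the induction hypothesis, with the mode annotation changed.

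We expect the contraction case to be the main obstacle. It requires renaming $t$ apart, a generalized many-variable contraction, and a substitution whose term is identified with $[t/x]e$ only after the contracted variables are renamed back. We would first prove an auxiliary lemma: contraction of two equal-length contexts $\Delta_1, \Delta_2$ pointwise, valid whenever the grade vectors lie in the respective $\mathsf{Cont}$ ideals. With that lemma in hand, the case goes through.
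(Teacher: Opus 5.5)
Most of your cases are sound, including where you use mode morphisms to transport $\mathsf{Weak}$ and $\mathsf{Cont}$ and where you use the ideal property. However, the contraction case breaks the induction you set up.

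When $x$ is the contracted variable, the premise is a derivation $D'$ of $\rho, q_1, q_2 \mid \mathsf{M}, \mathsf{m}, \mathsf{m} \odot \Gamma, x_1 : A, x_2 : A \vdash_{\mathsf{n}} e' : B$. Your first appeal to the induction hypothesis, on $D'$, is legitimate. It produces a derivation $D''$ of $[t_1/x_1]e'$ over $\Gamma, \Delta_1, x_2 : A$. The second appeal, substituting $t_2$ for $x_2$, has to be made on $D''$. That is a freshly constructed derivation, not a subderivation of the one you are inducting on, so structural induction gives you no hypothesis about it.

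Switching to induction on size or height does not rescue this. $D''$ contains the entire derivation of $t_1$ grafted in at the $x_1$ leaf, plus the weakenings and contractions inserted by the other cases, so it can be larger or taller than the original. Relatedly, keeping the derivation of $t$ fixed is incompatible with feeding in the renamed copies $t_1, t_2$. The obstacles you flag, renaming apart and a pointwise contraction lemma, are routine; the missing induction hypothesis is the real one.

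This is exactly why the paper does not prove the single-variable statement directly. It strengthens it to simultaneous substitution (Theorem~\ref{athm:multi-subst}): from $\rho \mid \mathsf{M} \odot \Gamma \vdash_{\mathsf{m}} t : T$ and a term $e_i$ for every $x_i$ in $\Gamma$, one obtains a derivation at grades $r_1\sigma_1, \ldots, r_k\sigma_k$.

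In the contraction case, a single application of this hypothesis to the premise, with $e_{k+1} := e_k$, substitutes into both contracted variables at once. Exchanges and contractions of the duplicated copy of $\Delta_k$ then finish the case, justified by the $\mathsf{Cont}$ ideals exactly as you argue. That part of your plan survives.

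For the binders (\textsc{Fun.i}, \textsc{Pair.e}, \textsc{Sum.e}, \textsc{Drop.e}), one substitutes the variable derivation $1 \mid \mathsf{n}' \odot y : C \vdash_{\mathsf{n}'} y : C$ for each bound variable $y : C$. The stated theorem is then the instance where every other $e_i$ is the variable $x_i$ itself, since $r_i \cdot 1 = r_i$. With this strengthening your remaining cases go through essentially unchanged.

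A small correction: \textsc{Drop.e} does not scale the scrutinee's grades by $q$, so no associativity is needed in that case.
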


\begin{proof}
  See \S \ref{asec:multi-mode-subst} in the appendix.
\end{proof}

We include a set of $ \beta $- and $ \eta $-conversion rules
which are given in Figure \ref{fig:gyaru-reductions}.
The $ \beta $-rules are mostly standard:
Pattern matching eliminators reduce when the scrutinee of the match
is a constructor for the respective type.
In the pattern matching form for the unit type,
the grade annotation $ q $ gets ignored since the term $  \star_ \mathsf{m}  $
carries no computational information.
Function applications reduce by substitution into the function body,
when the function in question is a lambda.
The type $  \operatorname{\uparrow} _{ \mathsf{m}  \le  \mathsf{n} }  \gyarunt{A}  $ is a negative type with an explicit destructor
$  \operatorname{\uparrow}^{-1} _{ \mathsf{m}  \le  \mathsf{n} }  \gyarunt{e}  $, rather than a pattern matching one,
which acts as a left inverse to the constructor $  \operatorname{\uparrow} _{ \mathsf{m}  \le  \mathsf{n} }  \gyarunt{e}  $.
The $ \eta $-rules are similarly standard: eliminating then reconstructing a term
recovers that term.
These conversions preserve typing:

\begin{theorem}[Type preservation]
  The $ \beta $- and $ \eta $-conversions preserve typing and grading, i.e.
  \begin{enumerate}
  \item
    Whenever $  \rho  \mid  \mathsf{M}  \odot  \Gamma  \vdash_{ \mathsf{m} }  \gyarunt{t_{{\mathrm{1}}}}  \equiv_\beta  \gyarunt{t_{{\mathrm{2}}}}  \colon  \gyarunt{A}  $,
    then $  \rho  \mid  \mathsf{M}   \odot   \Gamma  \vdash_{ \mathsf{m} }  \gyarunt{t_{{\mathrm{1}}}}  \colon  \gyarunt{A}  $,
    and $  \rho  \mid  \mathsf{M}   \odot   \Gamma  \vdash_{ \mathsf{m} }  \gyarunt{t_{{\mathrm{2}}}}  \colon  \gyarunt{A}  $.
  \item
    Whenever $  \rho  \mid  \mathsf{M}  \odot  \Gamma  \vdash_{ \mathsf{m} }  \gyarunt{t_{{\mathrm{1}}}}  \equiv_\eta  \gyarunt{t_{{\mathrm{2}}}}  \colon  \gyarunt{A}  $,
    then $  \rho  \mid  \mathsf{M}   \odot   \Gamma  \vdash_{ \mathsf{m} }  \gyarunt{t_{{\mathrm{1}}}}  \colon  \gyarunt{A}  $,
    and $  \rho  \mid  \mathsf{M}   \odot   \Gamma  \vdash_{ \mathsf{m} }  \gyarunt{t_{{\mathrm{2}}}}  \colon  \gyarunt{A}  $.
  \end{enumerate}
\end{theorem}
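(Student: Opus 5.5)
The plan is to read the conversion judgment $\rho \mid \mathsf{M} \odot \Gamma \vdash_{\mathsf{m}} t_1 \equiv t_2 \colon A$ as carrying a typing derivation for one side of each axiom in Figure~\ref{fig:gyaru-reductions}. For a $\beta$-rule that is the redex $t_1$; for an $\eta$-rule it is $t_1$, together with the side conditions fixing the type of the scrutinee. From that derivation I will construct a derivation for the other side at the same grade vector, mode vector and context. The proof is a case analysis over the axioms, followed by closure under the congruence, symmetry and transitivity rules, which is immediate by induction on the derivation of $\equiv$.

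The central tool is an \emph{inversion lemma} for each term former, proved first. Because \rulename{gyaru}{term}{weak}, \rulename{gyaru}{term}{cont}, \rulename{gyaru}{term}{sub} and exchange are not syntax-directed, inversion cannot be an exact reading of the last rule. I will state it up to structural rules. For example: if $\rho \mid \mathsf{M} \odot \Gamma \vdash_{\mathsf{n}} \lambda x.e \colon A^{q:\mathsf{m}} \multimap B$, then there is a derivation of $\rho', q \mid \mathsf{M}', \mathsf{m} \odot \Gamma', x{:}A \vdash_{\mathsf{n}} e \colon B$, and the original judgment is obtained from the $\lambda$-body judgment by \rulename{gyaru}{term}{arrowI} followed by a sequence of weakenings, contractions and subsumptions. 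The proof is by induction on the derivation, pushing structural rules below the introduction rule. Analogous lemmas cover pairs, injections, $\star_{\mathsf{m}}$, $\operatorname{\downarrow}^{q}_{\mathsf{n}\le\mathsf{m}}$ and $\operatorname{\uparrow}_{\mathsf{n}\le\mathsf{m}}$.

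\textbf{$\beta$-cases.} Each $\beta$-case then reduces to the Substitution theorem plus the structural rules. For \textsc{Fun.$\beta$}, inversion of the application and of the $\lambda$ gives the body at grade $q$ and the argument at $\sigma$. Substitution yields the body with $x$ replaced by the argument, at grade $\rho, q\sigma$, which is exactly the grade produced by \rulename{gyaru}{term}{arrowE}. The leftover structural steps are replayed on top. Here independence guarantees that $q\sigma$ is well defined. The pair, sum and drop cases use iterated substitution. The unit case uses \rulename{gyaru}{term}{unitI}'s empty grade vector: the factor $q\cdot\emptyset$ vanishes, so any weakenings recorded by inversion must be re-applied. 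The raise case is direct.

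\textbf{$\eta$-cases.} For these I build the expanded term explicitly. For example, for \textsc{Fun.$\eta$} I weaken? No — apply to a fresh variable via \rulename{gyaru}{term}{var} (grade $1$), since $q\cdot 1 = q$. For the positive types I use the eliminator at grade $1$ with the constructor in the branches, then simplify using $1\cdot\rho = \rho$ and the fact that $\phi$ preserves $1$. The converse direction, from the expanded term back to $t$, uses inversion on the eliminator.

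\textbf{Main obstacle.} I expect the hard part to be the inversion lemmas in the presence of \rulename{gyaru}{term}{cont} and \rulename{gyaru}{term}{weak}. The difficulty is commuting a contraction or weakening applied after, say, \rulename{gyaru}{term}{pairE} back up into the premises while respecting mode restrictions. A contraction merging a variable from the $e$ part, whose grade was scaled by $q$, with one from the body does not directly commute. My fallback is to avoid commuting altogether: I will prove each $\beta$/$\eta$ case by induction on the typing derivation of $t_1$. Whenever the last rule is structural, I apply the induction hypothesis to the premise and re-apply the same structural rule to the converted term. This is sound because contraction and weakening act on context variables uniformly in the term, and the $\beta$/$\eta$ axioms are stable under renaming of free variables. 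Only when the last rule is syntax-directed do I perform the real case analysis above, which proceeds as in that paragraph.
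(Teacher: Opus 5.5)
Your overall route matches the paper's. The typing of $t_1$ comes for free, and the introduction-form premise is inverted. The paper does this by induction on that derivation, with subcases for the introduction rule and for \textsc{Weak}, \textsc{Sub}, \textsc{Cont}; that is your inversion lemma unfolded. Substitution is then applied and the structural steps are replayed, and for $\eta$ the expanded derivation is built by hand.

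There is a gap in the replay for \textsc{Unit.$\beta$}, \textsc{Pair.$\beta$} and \textsc{Sum.$\beta$}. There, the structural steps recovered by inverting $\star_{\mathsf m}$, $(t_1,t_2)$ or $\mathsf{inl}\,t$ act on the scrutinee's grades $\sigma$. These grades enter the target only after multiplication by the eliminator's grade $q$. Substitution gives you the reduct at $\rho$ followed by $q$ times the grades of the inverted premise (e.g.\ $\rho, q\sigma_1, q\sigma_2$ for pairs, just $\rho$ for the unit), and you must reach $\rho, q\sigma$. So every step has to be replayed in scaled form:
weakening uses $q\cdot 0 = 0$;
subsumption uses monotonicity of multiplication;
a contraction of $r_1, r_2 \in \mathsf{Cont}(\mathsf{n}')$ becomes a contraction of $qr_1, qr_2$, which is legal only because $\mathsf{Cont}(\mathsf{n}')$ is an ideal, and it produces $q(r_1+r_2)$ by distributivity.

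Your proposal never invokes the ideal hypothesis. ``Replayed on top'' is justified only in \textsc{Fun.$\beta$}, where the replayed steps act on the unscaled $\rho$, and for the unit you mention only weakenings. This is the one non-routine point of the proof, and it is not a formality. Suppose $\mathsf{Cont}$ were merely an additive submonoid, e.g.\ $\mathsf{Cont} = \mathbb{N} \subseteq \mathbb{N}[x]$ with discrete order and weakening off. Then $\mathsf{let}_{@x}\,(a,b) = (z,z)\ \mathsf{in}\ (\operatorname{\downarrow}^x a, \operatorname{\downarrow}^x b)$ is typable with $z$ at grade $2x$: contract two copies of grade $1$ inside the pair, then scale. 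But its reduct $(\operatorname{\downarrow}^x z, \operatorname{\downarrow}^x z)$ would require contracting two copies of grade $x \notin \mathsf{Cont}$.

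Relatedly, the obstacle you single out is not where the difficulty lies. The $\beta$-rules take the introduction form and the continuation as separate premises, so $t_1$ is typed by a single application of the eliminator. Your outer induction over structural rules below the eliminator is therefore harmless but unnecessary; the contraction that matters is the one inside the scrutinee's derivation, discussed above.

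Likewise, the $\eta$-rules take the derivation of $t$ as a premise, so you should drop the ``converse direction''. It is not needed, and for \textsc{Fun.$\eta$} it is false. Since $\lambda x.\,(t\,x)$ carries no grade, subsumption on $x$ lets it inhabit $A^{2:\mathsf m}\multimap B$ when $t$ only inhabits $A^{1:\mathsf m}\multimap B$ (grades in $\mathbb{N}$ with the usual order).
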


\begin{proof}
  See \S\S \ref{asec:multi-mode-beta}, \ref{asec:multi-mode-eta} in the appendix.
\end{proof}

\section{Categorical semantics}
\label{sec:cat-semantics}
  We present our categorical semantics.
The structure of the categorical models mirrors that of the syntax:
\gyaru{} is composed out of several graded logics, one for each mode,
tied together by a system of modalities $  \operatorname{\downarrow} ^{ \gyarunt{q} }_{ \mathsf{n}  \le  \mathsf{m} }  \gyarunt{A}  $
and $  \operatorname{\uparrow} _{ \mathsf{m}  \le  \mathsf{n} }  \gyarunt{A}  $ mediating between them.
Similiarly, a model for \gyaru{} consists of several models for ordinary
graded logic, tied together by an appropriate system of morphisms,
here realized as adjunctions with some additional structure.
To appropriately capture the modified weakening and contraction rules,
we must make some changes relative to the existing literature.
Because of this, we set up all necessary terminology from scratch.
The categorical semantics presented here do not include the sum type;
Adding the necessary structure to model it is straightforward.
The following definitions are adapted from Katsumata~\cite{katsumata_double_2018}.

\begin{definition}
  Let $ \mathsf{m} = (R_ \mathsf{m},  \mathsf{Cont}( \mathsf{m} ) ,  \mathsf{Weak} (  \mathsf{m}  ) ) $ be a mode.
  The preorder on $ R_{\mode m} $ equips the sets $ R_{\mode m} $
  and $  \mathsf{Cont}( \mathsf{m} )  $ with categorical structures.
  We write $ (R_{\mode m}, \cdot, 1) $ for the symmetric monoidal structure
  on $ R_{\mode m} $ whose tensor product is multiplication
  and tensor unit is $ 1 \in R_{\mode m} $.
  Similarly, we write $ (  \mathsf{Cont}( \mathsf{m} ) , +, 0) $ for the symmetric monoidal
  structure on $  \mathsf{Cont}( \mathsf{m} )  $ whose tensor product is addition
  and tensor unit is $ 0 \in  \mathsf{Cont}( \mathsf{m} )  $.
\end{definition}

\begin{definition}
  Let $ (\mc C, \ox, I) $ be a symmetric monoidal category.
  Write $ \SMC(\mc C, \mc C) $ for the category of strong symmetric monoidal
  endofunctors on $ \mc C $ and monoidal natural transformations between them.
  This category can be equipped with a monoidal structure in two ways:
  \begin{enumerate}
  \item
    Pointwise:
    For $ F, G \in \SMC(\mc C, \mc C) $,
    their tensor product is
    $ (F \hox G)(X) = F(X) \ox G(X) $.
    The tensor unit is the constant functor sending everything to $ I \in \mc C $.
    We denote this structure $ (\SMC(\mc C, \mc C), \hox, \hat I) $.

  \item
    By composition:
    The tensor product is composition of functors,
    and the tensor unit is the identity functor.
    We denote this structure
    $ (\SMC(\mc C, \mc C), \circ, \id) $.
  \end{enumerate}
\end{definition}

\begin{definition}
  Let $ \mc C $ be a symmetric monoidal category,
  and let $ \mode m = (R_{\mode m},  \mathsf{Cont}( \mathsf{m} ) ,  \mathsf{Weak} (  \mathsf{m}  ) ) $ be a mode.
  An \emph{exponential action} (of $ \mode m $ on $ \mc C $) is a functor
  $ D \from R^{\op}_{\mode m} \to \SMC(\mc C, \mc C) $
  equipped with the following additional structure.
  \begin{enumerate}
  \item
    The structure of a colax monoidal functor with signature
    $ (R_{\mode m}^\op, \cdot, 1) \to (\SMC(\mc C, \mc C), \circ, \id) $
    on $ D $, i.e. natural transformations
    $\delta_{r, q} \from D(r \cdot q) \to D(r) \circ D(q)$
    and $\epsilon \from D(1) \to \id$.

  \item
    For $ r, q \in  \mathsf{Cont}( \mathsf{m} )  $, a natural transformation
    $ c_{r, q} \from D(r + q) \to D(r) \hox D(q) $.
    Making the following diagram commute.
    \[
      \begin{tikzcd}
        D((r_1 + r_2) + r_3)
        \arrow[r, "c_{r_1 + r_2 , r_3}"]
        \arrow[d, equals]
        &
        D(r_1 + r_2) \hox D(r_3)
        \arrow[r, "c_{r_1, r_2} \hox \id"]
        &
        (D(r_1) \hox D(r_2)) \hox D(r_3)
        \arrow[d, "\sim"]
        \\
        D(r_1 + (r_2 + r_3))
        \arrow[r, "c_{r_1 , r_2 + r_3}"]
        &
        D(r_1) \hox D(r_2 + r_3)
        \arrow[r, "\id \hox c_{r_2, r_3}"]
        &
        D(r_1) \hox (D(r_2) \hox D(r_3))
      \end{tikzcd}
    \]
    Furthermore, if $  \mathsf{Weak} (  \mathsf{m}  )  $ is true, a morphism $ w \from D(0) \to \hat I $.
    In this case, the transformations $ c_{r, q} $ and~$ w $ are required
    to assemble into a colax monoidal structure on the functor
    \[
      r \mapsto D(r) \from (  \mathsf{Cont}( \mathsf{m} ) ^{\op}, +, 0) \to \SMC(\mc C, \mc C).
    \]
    Note that the requirement that this functor is colax monoidal
    already includes the above diagram.
    We include that diagram explicitly, since we cannot require that this functor
    be colax monoidal in the absence of the morphism $ w $.
  \end{enumerate}
  These structures are related by the diagrams of Figure
  \ref{fig:graded-comonad-coherence},
  which are required to commute whenever all morphisms in them are defined.
  For the first two diagrams this means that $  \gyarunt{r_{{\mathrm{1}}}}  \gyarusym{,}  \gyarunt{r_{{\mathrm{2}}}}  \in \mathsf{Cont}( \mathsf{m} )  $
  and for the last two diagrams this means that $  \mathsf{Weak} (  \mathsf{m}  )  $ is true.
  We write $\tau_{r, X, Y} \from D(r)(X) \ox D(r)(Y) \to D(r)(X \ox Y)$ and $\iota_{r} \from I \to D(r)(I)$
  for the strong monoidal structure on $ D(r) $.
  Furthermore, the diagrams in Figure \ref{fig:graded-comonad-coherence}
  use the infix notation $ D(r)(X) =: r \at X $,
  which we will use for the remainder of this text.
\end{definition}

\begin{example}
  \label{exmp:single-mode-models}
  We consider exponential actions of the modes given in
  Example \ref{exmp:non-graded-logics}.
  Let $ \mc C $ be an SMCC.

  \begin{enumerate}
  \item
    Any SMCC admits an exponential action of the initial mode $ \ms L $
    by setting $ n \at X = X^{\ox n} $.
    Similarly, any SMCC admits an exponential action of the mode
    $ (\{ 0, 1 \}, \{ 0 \}, \ms{false}) $, by setting $ 0 \at X = I $
    and $ 1 \at X = X $.

  \item
    An exponential action of the terminal mode $ \ms U $
    is equivalent to specifying the structure of a Linear Category
    \cite{bierman_intuitionistic_2000} on $ \mc C $.
    In other words, an SMCC with an exponential action of the terminal mode
    is precisely a model of Intuitionistic Linear Logic, with
    the exponential action modelling the of-course modality.
    If we choose $ 1 \at (-) = \id $,
    the action equips $ \mc C $ with ``projections and diagonals''
    and hence $ \mc C $ is a cartesian closed category.%
    \footnote{%
      This result is folklore;
      A proof is given in Melliès \cite{mellies_categorical_2009}.
    }

  \item
    An exponential action of the mode $ \ms R $ on $ \mc C $
    is a functor $ D $ together with diagonal maps $ DX \to DX \ox DX $.
    If we fix $ D = \id $,
    then $ \mc C $ becomes equipped with the structure of a
    \emph{relevant category}~\cite{petric_coherence_2002}.
    Conversely, any relevant category admits an exponential action of $ \ms R $
    by setting $ 1 \at (-) = \id $.

  \item
    If the tensor unit $ I $ is the terminal object of $ \mc C $,
    then $ \mc C $ admits an exponential action of the mode $ \ms A $
    by setting $ 1 \at (-) = \id $ and $ 0 \at (-) = \hat I $.
    Conversely, an action of $ \ms A $ on $ \mc C $ with $ 1 \at (-) = \id $
    and $ 0 \at (-) = \hat I $ exhibits the tensor unit as the terminal object
    of $ \mc C $.
  \end{enumerate}
  All in all, we see that known models of intuitionistic, linear, relevant and affine
  logic admit canonical exponential actions of the modes $ \ms U, \ms L, \ms R $
  and $ \ms A $ respectively.
\end{example}

\begin{figure*}
  \begin{mathpar}
    \small
    \begin{tikzcd}
      (q \cdot r_1 + q \cdot r_2) \at X
      \arrow[d, equals]
      \arrow[r, "c_{qr_1, qr_2, X}"]
      &
      ((q \cdot r_1) \at X) \ox ((q \cdot r_2) \at X)
      \arrow[d, "\delta_{q, r_1, X} \ox \delta_{q, r_2, X}"]
      \\
      (q \cdot (r_1 + r_2)) \at X
      \arrow[d, "\delta_{q, r_1 + r_2, X}"']
      &
      (q \at (r_1 \at X)) \ox (q \at (r_2 \at X))
      \arrow[d, "\tau_{q, r_1 \at X, r_2 \at X}"]
      \\
      q \at ((r_1 + r_2) \at X)
      \arrow[r, "q \at c_{r_1, r_2, X}"']
      &
      q \at ((r_1 \at X) \ox (r_2 \at X))
    \end{tikzcd}
    \quad
    \begin{tikzcd}[column sep=small]
      (r_1 \cdot q + r_2 \cdot q) \at X
      \arrow[r, equals]
      \arrow[d, "c_{r_1 q, r_2 q, X}"']
      &
      ((r_1 + r_2) \cdot q) \at X
      \arrow[d, "\delta_{(r_1 + r_2), q, X}"]
      \\
      ((r_1 \cdot q) \at X) \ox ((r_2 \cdot q) \at X)
      \arrow[d, "\delta_{r_1, q, X} \ox \delta_{r_2, q, X}"']
      &
      (r_1 + r_2) \at (q \at X)
      \arrow[dl, "c_{r_1, r_2, q\at X}"]
      \\
      (r_1 \at (q \at X)) \ox (r_2 \at (q \at X))
    \end{tikzcd}
    \\
    \begin{tikzcd}
      0 \at X
      \arrow[r, equals]
      \arrow[d, "w_X"']
      &
      (0 \cdot r) \at X
      \arrow[d, "\delta_{0, r, X}"]
      \\
      I
      &
      0 \at (r \at X)
      \arrow[l, "w_{r \at X}"]
    \end{tikzcd}
    \and
    \begin{tikzcd}
      0 \at X
      \arrow[r, equals]
      \arrow[d, "w_X"']
      &
      (r \cdot 0) \at X
      \arrow[r, "\delta_{r, 0, X}"]
      &
      r \at (0 \at X)
      \arrow[d, "r \at w_X"]
      \\
      I
      \arrow[rr, "\iota_r"']
      &
      &
      r \at I
    \end{tikzcd}
  \end{mathpar}
  \caption{Graded comonad coherence conditions}
  \label{fig:graded-comonad-coherence}
\end{figure*}

\begin{definition}
  By a \emph{symmetric lax monoidal adjunction}
  we mean a symmetric lax monoidal functor $ F \from \mc C \to \mc D $ between SMCs
  together with a lax monoidal right adjoint $ G $,
  such that the unit and counit of the adjunction are monoidal natural transformations.
  We concisely denote this by $ F \dashv G \from \mc C \to \mc D $.
  Recall that $ F $ is strong monoidal in this case \cite{kelly_doctrinal_1974}.
\end{definition}

\begin{definition}
  \label{defn:action-morphism}
  Let $ \phi \from \mode m \to \mode n $ be a morphism of modes,
  and let $ \at_{\mode m} $
  and $ \at_{\mode n} $
  be exponential actions of $ \mode m $, $ \mode n $
  on the SMCCs $ \mc C_{\mode m} $ and $ \mc C_{\mode n} $
  respectively.
  A morphism of actions (over $ \phi $)
  consists of a symmetric lax monoidal adjunction
  $ F \dashv G \from \mc C_{\mode n} \to \mc C_{\mode m} $
  together with a monoidal natural transformation
  \[
    \ell_{r, A} \from \phi(r) \at_{\mode n} G(A) \to G(r \at_{\mode m} A)
  \]
  satisfying the coherence conditions given in Figure \ref{fig:action-morphism-coherence}.
  We define another natural transformation
  $ \mu_{r, A} \from F(\phi (r) \at A) \to r \at F(A) $
  as the transpose of the composite
  \[
    \phi(r) \at A
    \xrightarrow{\phi(r) \at \eta} \phi(r) \at GFA
    \xrightarrow{\ell} G (r \at FA).
  \]
  Morphisms between exponential actions compose in a straightforward way,
  giving rise to a category $ \ExpAct $ of exponential actions,
  whose objects are triples $ (\mode m, \mc C, \at) $ consisting of a mode $ \mode m $,
  an SMCC $ \mc C $ and an exponential action of $ \mode m $ on $ \mc C $,
  and whose morphisms are quadruples $ (\phi, F, G, \ell) $, where
  $ \phi $ is a morphism of modes and $ (F, G, \ell) $
  is a morphism of actions over $ \phi $.
  We endow $ \ExpAct $ with the structure of a 2-category:
  A $ 2 $-cell between $ (\phi, F, G, \ell) $ and $ (\phi, F', G', \ell') $
  (note that the underlying morphisms of modes are required to be identical)
  is a natural isomorphism $ \alpha \from G \to G' $ which respects the lineators
  in the obvious way:
  \[
    \begin{tikzcd}
      \phi(r) \at GA
      \arrow[r, "\ell"]
      \arrow[d, "\phi(r) \at \alpha"']
      &
      G(r \at A)
      \arrow[d, "\alpha"]
      \\
      \phi(r) \at G'A
      \arrow[r, "\ell'"']
      &
      G'(r \at A)
    \end{tikzcd}
  \]
\end{definition}

\begin{figure}
  \begin{mathpar}
    \small
    \begin{tikzcd}
      \phi(1) \at G(A)
      \arrow[r, "\ell"]
      \arrow[d, equals]
      &
      G(1 \at A)
      \arrow[d, "G(\epsilon)"]
      \\
      1 \at G(A)
      \arrow[r, "\epsilon"']
      &
      G(A)
    \end{tikzcd}
    \and
    \begin{tikzcd}
      0 \at G(A)
      \arrow[r, "\ell"]
      \arrow[d, "w_{G(A)}"']
      &
      G(0 \at A)
      \arrow[d, "G(w_A)"]
      \\
      I
      \arrow[r]
      &
      G(I)
    \end{tikzcd}
    \\
    \begin{tikzcd}
      (\phi(q)\cdot\phi(r)) \at G(A)
      \arrow[r, equals]
      \arrow[d, "\delta"']
      &
      \phi(qr) \at G(A)
      \arrow[d, "\ell"]
      \\
      \phi(q) \at \phi(r) \at G(A)
      \arrow[d, "\phi(q) \at \ell"']
      &
      G(qr \at A)
      \arrow[d, "G(\delta)"]
      \\
      \phi(q) \at G(r \at A)
      \arrow[r, "\ell"]
      &
      G(q \at r \at A)
    \end{tikzcd}
    \and
    \begin{tikzcd}
      (\phi(q) + \phi(r)) \at G(A)
      \arrow[r, equals]
      \arrow[d, "c"']
      &
      \phi(q + r) \at G(A)
      \arrow[d, "\ell"]
      \\
      \phi(q) \at G(A) \ox \phi(r) \at G(A)
      \arrow[d, "\ell \ox \ell"']
      &
      G((q + r) \at A)
      \arrow[d, "G(c)"]
      \\
      G(q \at A) \ox G(r \at A)
      \arrow[r]
      &
      G(q \at A \ox r \at A)
    \end{tikzcd}
  \end{mathpar}
  \caption{Coherence laws on morphisms of exponential actions.
    Unlabelled arrows are the lax monoidal structure on $ G $.}
  \label{fig:action-morphism-coherence}
\end{figure}

\begin{remark}
  The terminology \emph{lineator} is borrowed from the theory of actegories [sic].
  In fact, our notion of morphisms between actions is adapted from the notion
  of morphisms between actegories.
  See e.g.\@ Cappucci and Gavranović \cite{capucci_actegories_2024}.
\end{remark}

\begin{remark}
  Let
  $ (\phi, F, G, \ell)
  \from
  (\mode m, \mc C_{\mode m}, \odot_{\mode m})
  \to
  (\mode n, \mc C_{\mode n}, \odot_{\mode n})
  $
  be a morphism in $ \ExpAct $.
  Transposing $ \ell $ along the adjunction $ F \dashv G $ yields a
  natural transformation
  $ F(\phi(-) \at_{\mode n} G(=)) \to (-) \at_{\mode m} (=) $.
  Katsumata proved that the domain of this
  transformation satisfies the coherence conditions for exponential actions
  \cite{katsumata_double_2018}.
  In Katsumata's setting, this functor is in fact a graded linear exponential comonad.
  Here, it may fail to be an exponential action
  since $ G $ is not required to be strong monoidal.
  Thus,~$ \ell $ is equivalent to a comparison morphism of the two actions
  on $ \mc C_{\mode m} $ that naturally arise from the adjunction $ F \dashv G $.
\end{remark}

\begin{lemma}
  \label{lem:mu-coherence}
  The coherence conditions on $ \ell $ given in
  Figure \ref{fig:action-morphism-coherence}
  are equivalent to the analogous coherence conditions on~$ \mu $ given in
  Figure \ref{fig:coherence-mu}.
\end{lemma}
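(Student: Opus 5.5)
The plan is to use the adjunction $F \dashv G$ and the bijection it induces on hom-sets, $\mc C_{\mode m}(FX, Y) \cong \mc C_{\mode n}(X, GY)$, natural in both variables. By definition $\mu$ is the transpose of $\ell \circ (\phi(r) \at \eta)$. Conversely, $\ell$ is recovered from $\mu$ as the composite $\phi(r) \at GA \xrightarrow{\eta} GF(\phi(r) \at GA) \xrightarrow{G\mu} G(r \at FGA) \xrightarrow{G(r \at \varepsilon)} G(r \at A)$; here we write $\varepsilon$ for the counit of the adjunction, to distinguish it from the counit $\epsilon$ of the action. The first step is to check that these two assignments are mutually inverse. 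This is a routine triangle-identity argument using naturality of $\ell$ in $A$. It lets us pass between $\ell$ and $\mu$ freely. Each of the four diagrams of Figure \ref{fig:action-morphism-coherence} is an equation between two morphisms with codomain of the form $G(\ldots)$. We therefore precompose with $\phi(r)\at\eta$ (or the analogous unit components) and transpose across the adjunction. This yields an equation between morphisms out of $F(\ldots)$, which should be exactly the corresponding diagram for $\mu$ in Figure \ref{fig:coherence-mu}. Because transposition is a bijection, the two equations are equivalent, provided the reduction to components along $\eta$ is reversible. That reversibility is exactly what the mutual-inverse statement of the first step supplies.

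Next we treat the four diagrams one at a time, using the following transposition dictionary. A morphism $G(f)\circ g$ transposes to $f \circ \bar g$. A morphism $g \circ h$ transposes to $\bar g \circ F h$. Whiskering interacts with transposition through naturality of the action, $\phi(q) \at (-)$, and of $\ell$. The counit diagram transposes directly: $G(\epsilon)\circ\ell$ becomes $\epsilon \circ \mu$ after we use naturality of $\epsilon$ against $\eta$. The comultiplication diagram for $\delta$ uses naturality of $\delta$ to move $\eta$ inward. It then uses the identity $\phi(q)\at\ell \circ \phi(q) \at \phi(r) \at \eta$ together with the triangle identities, so that the transpose becomes $\mu \circ F(\phi(q)\at ?) \ldots$, that is, the expected composite $(q\at\mu)\circ\mu\circ F\delta = \delta \circ \mu$.

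The weakening and contraction diagrams are the ones that involve the monoidal structure, and I expect them to be the main obstacle. Their bottom edges use the lax structure maps of $G$, namely $I \to GI$ and $GX\ox GY \to G(X\ox Y)$. To transpose these we need the standard fact about symmetric lax monoidal adjunctions (Kelly \cite{kelly_doctrinal_1974}): $F$ is strong, and the lax structure of $G$ is the mate of the inverse of the strong structure of $F$. That is, $GX \ox GY \to G(X \ox Y)$ transposes to $F(GX\ox GY)\cong FGX\ox FGY \xrightarrow{\varepsilon\ox\varepsilon} X\ox Y$. With this, the transpose of $G(c)\circ m\circ(\ell\ox\ell)$ becomes $(\mu\ox\mu)$ composed with the strong structure of $F$, followed by $c$ and whiskered counits. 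Here we also use that $\eta$ and $\varepsilon$ are monoidal transformations, which lets $\eta$ commute past $\ox$. The result is the $\mu$-version of the contraction square. The weakening square transposes in the same way, using $F I \cong I$. Throughout we track that each diagram is only required when its morphisms are defined, that is, when the grades lie in $\Cont$ and $\Weak$ holds. This caveat is harmless, since $\phi$ preserves $\Cont$ and $\Weak$.
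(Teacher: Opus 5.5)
Your proposal is correct and follows essentially the paper's route. Both arguments are the mate correspondence for $F \dashv G$, built on the same formula $\ell = G(r \at \varepsilon)\circ G(\mu)\circ \eta$ that the paper cites. Your use of monoidality of $\eta$ (equivalently, Kelly's description of the lax structure of $G$) fills in the weakening and contraction cases, which the paper dismisses as ``straightforward manipulation of diagrams.''

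The one loose spot is the garbled sentence in the $\delta$ case. There you need naturality of $\ell$ in its object argument to move $G(q \at \mu)$ past $\ell$. Then use $G(\mu)\circ\eta = \ell \circ (\phi(r)\at\eta)$, and finally naturality of $\delta$.
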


\begin{figure}
  \begin{mathpar}
    \small
    \begin{tikzcd}
      F(\phi(1) \at A)
      \arrow[r, "\mu"]
      \arrow[d, equals]
      &
      1 \at FA
      \arrow[d, "\epsilon"]
      \\
      F(1 \at A)
      \arrow[r, "F\epsilon"']
      &
      F(A)
    \end{tikzcd}
    \and
    \begin{tikzcd}
      F(\phi(0) \at A)
      \arrow[r, "\mu"]
      \arrow[d, "Fw"']
      &
      0 \at FA
      \arrow[d, "w"]
      \\
      F(I)
      \arrow[r, "u^{-1}"']
      &
      I
    \end{tikzcd}
    \\
    \begin{tikzcd}
      F(\phi(q)\phi(r) \at A)
      \arrow[r, equals]
      \arrow[d, "F(\delta)"']
      &
      F(\phi(qr) \at A)
      \arrow[d]
      \\
      F(\phi(q) \at \phi(r) \at A)
      \arrow[d, "\mu"']
      &
      qr \at F(A)
      \arrow[d, "\delta"]
      \\
      q \at F(\phi(r) \at A)
      \arrow[r, "q \at \mu"']
      &
      q \at r \at FA
    \end{tikzcd}
    \and
    \begin{tikzcd}
      F((\phi(q) + \phi(r)) \at A)
      \arrow[r, equals]
      \arrow[d, "F(c)"']
      &
      F(\phi(q + r) \at A)
      \arrow[d, "\mu"]
      \\
      F(\phi(q) \at A \ox \phi(r) \at A)
      \arrow[d, "m^{-1}"']
      &
      (q + r) \at FA
      \arrow[d, "c"]
      \\
      F(\phi(q) \at A) \ox F(\phi(r) \at A)
      \arrow[r, "\mu \ox \mu"']
      &
      q \at FA \ox r \at FA
    \end{tikzcd}
  \end{mathpar}
  \caption{Coherence laws on $ \mu $.}
  \label{fig:coherence-mu}
\end{figure}

\begin{proof}
  Straightforward manipulation of diagrams,
  using the fact that $ \ell $ can be written as the composite
  $
  \phi(r) \at G(A)
  \xrightarrow{\eta}
  GF(\phi(r) \at G(A))
  \xrightarrow{G(\mu)}
  G(r \at FG(A))
  \xrightarrow{G(r \at \epsilon)}
  G(r \at A)
  $.
  Here $ \epsilon $ is the counit of the adjunction $ F \dashv G $.
\end{proof}

\begin{lemma}
  \label{lem:ell-mu-induced-coherence}
  Let $ (G, F, \ell) $ and $ (G', F', \ell') $ be morphisms of actions over $ \phi $,
  and $ \alpha \from G \to G' $ a 2-cell in $ \ExpAct $.
  Then $ \alpha $ induces a natural isomorphism $ \beta \from F' \to F $
  which respects $ \mu $ in the obvious way:
  \[
	\begin{tikzcd}
      F'(\phi(q) \at A)
      \arrow[r, "\mu"]
      \arrow[d, "\beta"']
      &
      q \at F'(A)
      \arrow[d, "q \at \beta"]
      \\
      F(\phi(q) \at A)
      \arrow[r, "\mu'"']
      &
      q \at F(A)
    \end{tikzcd}
  \]
\end{lemma}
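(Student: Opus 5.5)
The natural isomorphism $\beta$ will be the mate of $\alpha$ (in the sense of Kelly--Street), and its compatibility with $\mu$ follows from the naturality of mates together with the square in Definition~\ref{defn:action-morphism} expressing that $\alpha$ respects the lineators.

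\emph{Construction of $\beta$.} Write $\eta,\varepsilon$ for the unit and counit of $F \dashv G$, and $\eta',\varepsilon'$ for those of $F' \dashv G'$. Define $\beta_A \from F'A \to FA$ as the composite
\[
  F'A \xrightarrow{F'\eta_A} F'GFA \xrightarrow{F'\alpha_{FA}} F'G'FA \xrightarrow{\varepsilon'_{FA}} FA .
\]
Equivalently, $\beta_A$ is the transpose along $F' \dashv G'$ of $\alpha_{FA}\circ\eta_A \from A \to G'FA$. The inverse is the analogous mate of $\alpha^{-1}$, namely $\varepsilon_{F'A}\circ F\alpha^{-1}_{F'A}\circ F\eta'_A$. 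That the two composites are identities is the standard fact that taking mates is functorial: the mate of $\alpha^{-1}\alpha=\id$ is the identity. This is routine triangle-identity manipulation. Naturality of $\beta$ is immediate from the naturality of every constituent.

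\emph{Compatibility with $\mu$.} Both legs of the required square are maps out of $F'(\phi(q)\at A)$, so by the adjunction $F' \dashv G'$ it suffices to compare their transposes $\phi(q)\at A \to G'(q\at FA)$. To compute these transposes, unfold $\mu'$ as the transpose of $\ell'\circ(\phi(q)\at\eta')$ and $\mu$ as the transpose of $\ell\circ(\phi(q)\at\eta)$, as in Definition~\ref{defn:action-morphism}.
\begin{enumerate}
\item The transpose of the leg through $\mu'$ followed by $q \at \beta$ is $G'(q\at\beta)\circ\ell'\circ(\phi(q)\at\eta')$.
\item By naturality of $\ell'$, this equals $\ell'\circ(\phi(q)\at G'\beta)\circ(\phi(q)\at\eta')$.
\item By the definition of the mate, $G'\beta_A\circ\eta'_A = \alpha_{FA}\circ\eta_A$. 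The expression therefore becomes $\ell'\circ(\phi(q)\at\alpha_{FA})\circ(\phi(q)\at\eta_A)$.
\item The lineator square for $\alpha$ turns this into $\alpha_{q\at FA}\circ\ell\circ(\phi(q)\at\eta_A)$.
\end{enumerate}
The expression in the last step is exactly the transpose of the leg $\mu$ composed with $\beta$. Computing that transpose uses the identity $G'\beta\circ\eta' = \alpha F\circ\eta$, naturality of $\alpha$, and the fact that $\mu$ transposes (along $F \dashv G$) to $\ell\circ(\phi(q)\at\eta)$.

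\emph{Main obstacle.} The real work is the transpose bookkeeping in this last point: one transposes across two different adjunctions, and $\beta$ mixes them. I would isolate the single identity $G'\beta\circ\eta' = \alpha F\circ\eta$ (and its dual form with counits) as a preliminary lemma. After that, each step is a one-line naturality argument. I would also verify the orientation and labelling of the square in the statement: $\beta$ goes from $F'$ to $F$, so the relevant lineators may need to be matched with the primes swapped relative to the displayed diagram.
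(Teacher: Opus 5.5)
Your argument is correct and complete. Both legs of the square transpose along $F' \dashv G'$ to the same map $\alpha_{q \at FA} \circ \ell_{q,FA} \circ (\phi(q) \at \eta_A)$. The steps are valid: the mate identity $G'\beta_A \circ \eta'_A = \alpha_{FA} \circ \eta_A$, naturality of $\ell'$ and of $\alpha$, and the lineator square of the $2$-cell. Invertibility of $\beta$ via functoriality of mates is also fine.

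The paper states this lemma without giving any proof, so there is no argument of the authors' to compare against. The mate construction you use is the evident intended one, and it is consistent with how the lemma is later invoked for $G^{\mathsf m}_{\mathsf m} \cong \id$.

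Your hedge about the labelling can be settled. The transformation $\mu$ attached to $(F,G,\ell)$ has domain $F(\phi(q) \at A)$. So in the displayed square the top arrow must be the one built from $(F',G',\ell')$, and the bottom arrow the one built from $(F,G,\ell)$. The labels $\mu$ and $\mu'$ in the statement are therefore swapped, and your computation already uses the correct assignment.
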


\begin{definition}
  Recall that \gyaru{} is parametrized by a functor $ d \from S \to \Mode $
  where $ S $ is some preordered set.
  We have a projection functor
  $ p \from \ExpAct \to \Mode $ given by
  $ (\mode m, \mc C, \at) \mapsto \mode m $.
  A categorial model for \gyaru{} consists of a pseudofunctor
  $ D \from S \to \ExpAct $ such that $ p \circ D = d $.
  Spelling this out, this amounts to specifying the following data:
  For each mode $ \mode m $, an SMCC $ \mc C_{\mode m} $ together with an exponential
  action $ \at_{\mode m} $ of $ \mode m $ on $ \mc C_{\mode m} $.
  And whenever $ \mathsf{m}  \le  \mathsf{n} $, a morphism of exponential actions
  $ ( \phi_{\mode m, \mode n},
      F^{\mode n}_{\mode m},
      G^{\mode n}_{\mode m},
      \ell^{\mode n}_{\mode m}
      )
  \from (\mode m, \mc C_{\mode m}, \at_{\mode m}) \to
  (\mode n, \mc C_{\mode n}, \at_{\mode n}) $
  where $ \phi_{\mode m, \mode n} $ is the morphism of grade algebras specified
  in Definition \ref{defn:parametrizing-data}
  along with a coherent system of natural isomorphisms
  $ G^{\mode m_3}_{\mode m_2 }\circ G^{\mode m_2}_{\mode m_1}
  \iso G^{\mode m_3}_{\mode m_1} $
  whenever $ \mathsf{m}_{{\mathrm{1}}}  \le  \mathsf{m}_{{\mathrm{2}}}  \le  \mathsf{m}_{{\mathrm{3}}} $
  which are compatible with the lineators.

  We write
  $ u \from I \to F^{\mode m}_{\mode n}(I) $ and
  $ m
  \from F^{\mode m}_{\mode n}(A) \ox F^{\mode m}_{\mode n}(B)
  \to F^{\mode m}_{\mode n}(A \ox B)
  $
  for the strong monoidal structure on the functors $ F^{\mode m}_{\mode n} $.
  We denote internal hom-objects by $ X \lto Y $.
  We write $ \textit{ev} \from (X \lto Y) \ox X \to Y $ for the counit of the
  of the adjunction $ (-) \ox X \dashv X \lto (-) $.
  Given $ e \from X \ox Y \to Z $ we write $ \lambda e \from X \to (Y \lto Z) $
  for the transpose of $ e $ along this adjunction.
\end{definition}

\begin{example}
  Let $ F \dashv G \from \mc C \to \mc L $ be a symmetric lax monoidal adjunction.
  Suppose that $ \ms L $ acts on $ \mc L $ as described in
  Example \ref{exmp:single-mode-models},
  and that $ \mc C $ admits an exponential action by a mode $ \ms m = (R, R, \ms{true}) $
  for some grade algebra $ R $.
  We promote this adjunction to a morphism of exponential actions with lineator
  \[
	\phi(n) \at G(A) = (\phi(1) + \mathellipsis + \phi(1)) \at G(A)
    \xrightarrow{l} G(A)^{\ox n} \xrightarrow{z} G(A^{\ox n})
  \]
  where the first arrow, $l$, is $ c $ (resp.\@ $ w $ if $ n = 0 $)
  and the second arrow, $z$, is the monoidal structure on $ G $.
  Therefore, any model of
  \textsf{mGL}~\cite{vollmer_mixed_2025}
  is a model of \gyaru{} instantiated
  with the modes $ \ms L < \mode m $.
  When $ \mode m = \ms U $,
  $ \mc C $ is cartesian closed, and $ \ms U $ acts on $ \mc C $ by $ 1 \at (-) = \id $,
  as described in Example \ref{exmp:single-mode-models},
  we recover the categorical situation of LNL \cite{benton_mixed_1995}.
\end{example}

\begin{definition}
  Recall the rules for well-formed types from Figure \ref{fig:types-well-formed}.
  We assign each type $  \gyarunt{A}  \in \mathsf{Type}( \mathsf{m} )  $ an object of $ \interp A \in \mc C_{\mode m} $
  by induction on the proof of $  \gyarunt{A}  \in \mathsf{Type}( \mathsf{m} )  $.
  \begin{mathpar}
    \inferrule[unit]
    { }
    {
      \interp{ \mathbf{I}_ \mathsf{m} } = I \in \mc C_{\mode m}
    }
    \and
    \inferrule[pair]
    {
       \gyarunt{A}  \in \mathsf{Type}( \mathsf{m} )  \and  \gyarunt{B}  \in \mathsf{Type}( \mathsf{m} ) 
    }
    {
      \interp {A \ox B} = \interp A \ox \interp B \in \mc C_{\mode m}
    }
    \and
    \inferrule[fun]
    {
      \mathsf{n}  \le  \mathsf{m} \and  \gyarunt{q}  \in R_{ \mathsf{m} } 
      \\\\
       \gyarunt{A}  \in \mathsf{Type}( \mathsf{m} ) 
      \and
       \gyarunt{B}  \in \mathsf{Type}( \mathsf{n} ) 
    }
    {
      \interp{  \gyarunt{A} ^{ \gyarunt{q} : \mathsf{m} } \multimap  \gyarunt{B} } =
      F^{\mode m}_{\mode n}(q \at \interp A) \lto \interp B \in \mc C_{\mode n}
    }
    \and
    \inferrule[drop]
    {
      \mathsf{n}  \le  \mathsf{m} \and  \gyarunt{q}  \in R_{ \mathsf{m} } 
      \and
       \gyarunt{A}  \in \mathsf{Type}( \mathsf{m} ) 
    }
    {
      \interp{  \operatorname{\downarrow} ^{ \gyarunt{q} }_{ \mathsf{n}  \le  \mathsf{m} }  \gyarunt{A} } =
      F^{\mode m}_{\mode n}(q \at \interp A) \in \mc C_{\mode n}
    }
    \and
    \inferrule[drop]
    {
      \mathsf{m}  \le  \mathsf{n} \and  \gyarunt{A}  \in \mathsf{Type}( \mathsf{m} ) 
    }
    {
      \interp{  \operatorname{\uparrow} _{ \mathsf{m}  \le  \mathsf{n} }  \gyarunt{A} } = G^{\mode n}_{\mode m} \interp A \in \mc C_{\mode n}
    }
  \end{mathpar}
  We will often omit the brackets $ \interp - $ writing $ A $ instead of $ \interp A $.
\end{definition}

\begin{definition}
  \label{defn:ctx-interp}
  Let $ \rho = \gyarunt{r_{{\mathrm{1}}}}  \gyarusym{,}  \mathellipsis  \gyarusym{,}  \gyarunt{r_{\gyarumv{k}}} $
  and $ \mathsf{M} = \mathsf{m}_{{\mathrm{1}}}  \gyarusym{,}  \mathellipsis  \gyarusym{,}  \mathsf{m}_{\gyarumv{k}} $
  and $ \Gamma = \gyarumv{x_{{\mathrm{1}}}}  \gyarusym{:}  \gyarunt{A_{{\mathrm{1}}}}  \gyarusym{,}  \mathellipsis  \gyarusym{,}  \gyarumv{x_{\gyarumv{k}}}  \gyarusym{:}  \gyarunt{A_{\gyarumv{k}}} $,
  and $ \mathsf{M}  \ge  \mathsf{m} $.
  Define
  \[
    \interp{ \rho  \mid  \mathsf{M}  \odot  \Gamma }_\mathsf{m}
    :=
    \bigotimes_{i = 1}^{k} \interp { \operatorname{\downarrow} ^{ \gyarunt{r_{\gyarumv{i}}} }_{ \mathsf{m}  \le  \mathsf{m}_{\gyarumv{i}} }  \gyarunt{A_{\gyarumv{i}}} }
    =
    \bigotimes_{i = 1}^{k} F^{\mathsf{m}_{\gyarumv{i}}}_\mathsf{m} (r_i \at A_i)
  \]
  Note that there is a canonical isomorphism
  $ F^\mathsf{m}_\mathsf{n}(\interp{  \rho  \mid  \mathsf{M}  \odot  \Gamma }_\mathsf{m})
  \iso \interp{  \rho  \mid  \mathsf{M}  \odot  \Gamma }_\mathsf{n} $
  whenever $ \mathsf{n}  \le  \mathsf{m} $.
\end{definition}

\begin{definition}
  In the situation of Definition \ref{defn:ctx-interp}, let $  \gyarunt{q}  \in R_{ \mathsf{m} }  $
  and let $ t \from \interp{ \rho  \mid  \mathsf{M}  \odot  \Gamma }_{\mathsf{m}} \to A $ be a morphism.
  Define $ q \cdot t \from \interp{   \gyarunt{q}   \rho   \mid  \mathsf{M}  \odot  \Gamma }_\mathsf{m} \to q \at A $ as
  \[
    \bigotimes_{i = 1}^k
    F^{\mathsf{m}_{\gyarumv{i}}}_\mathsf{m} (q \cdot r_i) \at A_i
    \xrightarrow{\tau \circ \bigotimes_i (\mu \circ F^{\mathsf{m}_{\gyarumv{i}}}_{\mathsf{m}}\delta)}
    q \at \bigotimes_{i = 1}^k F^{\mathsf{m}_{\gyarumv{i}}}_\mathsf{m} (r_i \at A_i)
    \xrightarrow {q \at t}
    q \at A
  \]
  for $ k > 0 $.
  When $ k = 0 $, define
  $ q \cdot t = (q \at t) \circ \iota \from
  I_{\mode m} \to q \at I_{\mode m} \to q \at A $.
\end{definition}

\begin{figure}
  \small
  \centering
  \renewcommand{\arraystretch}{1.1}%
  \begin{tabular}{|c | l | c |}
    \hline
    Name & Type & Isomorphism?
    \\ \hline
    $ \delta $ & $ (rq) \at X \to r \at (q \at X) $ & no
    \\
    $ \epsilon $ & $ 1 \at X \to X $ & no
    \\
    $ \tau $ & $ (r \at X) \tensor (r \at Y) \to r \at (X \tensor Y) $ & yes
    \\
    $ \iota $ & $ I \to r \at I $ & yes
    \\
    $ c $ & $ (r + q) \at X \to (r \at X) \tensor (q \at X) $ & no
    \\
    $ w $ & $ 0 \at X \to I $ & no
    \\
    $ m $ & $ F(X) \tensor F(Y) \to F (X \tensor Y) $ & yes
    \\
    $ u $ & $ I \to F(I) $ & yes
    \\
    $ \mu $ & $ F(\phi_{\mathsf{n}, \mathsf{m}}(q) \at X) \to q \at F^\mathsf{m}_\mathsf{n} (X) $ & no
    \\ \hline
  \end{tabular}
  \caption{Summary of named natural transformations in the categorical model for
    \gyaru{}}
  \label{fig:natural-trafo-summary}
\end{figure}

\begin{definition}
  We assign each judgment $  \rho  \mid  \mathsf{M}   \odot   \Gamma  \vdash_{ \mathsf{m} }  \gyarunt{t}  \colon  \gyarunt{A}  $ a morphism
  $ \interp t \from \interp{  \rho  \mid  \mathsf{M}  \odot  \Gamma } \to \interp A $ in $ \mc C_{\mode m} $.
  This assignment is specified in Figure \ref{fig:gyaru-interp-model}
  by induction on the derivation of $ t $.
\end{definition}

\begin{figure}
  \small
  \begin{mathpar}
    \inferrule
    [\rulename{gyaru}{term}{var}]
    {
    }
    {
      F^\mathsf{m}_\mathsf{m} (1 \at A) \iso 1 \at A \xrightarrow{\epsilon} A
    }
    \and
    \inferrule
    [\rulename{gyaru}{term}{weak}]
    {
       \mathsf{Weak} (  \mathsf{l}  ) 
      \and
      \mathsf{m}  \le  \mathsf{l}
      \and
       \gyarunt{B}  \in \mathsf{Type}( \mathsf{l} ) 
      \and
      \interp{ \rho  \mid  \mathsf{M}  \odot  \Gamma }_\mathsf{m} \xrightarrow{t} A
    }
    {
      \interp{ \rho  \mid  \mathsf{M}  \odot  \Gamma }_\mathsf{m} \ox F^ \mathsf{l}_ \mathsf{m}(0 \at B)
      \xrightarrow{t \ox (u^{-1} \circ F(w))}
      A \ox I \iso A
    }
    \and
    \inferrule
    [\rulename{gyaru}{term}{sub}]
    {
      (r_i \le q_i)_i
      \and
      \bigotimes\nolimits_i F^{\mathsf{m}_{\gyarumv{i}}}_{\mathsf{m}} (r_i \at X_i) \xrightarrow{t} A
    }
    {
      \bigotimes\nolimits_i F^{\mathsf{m}_{\gyarumv{i}}}_{\mathsf{m}} (q_i \at X_i)
      \xrightarrow{F^{\mathsf{m}_{\gyarumv{i}}}_{\mathsf{m}}((r_i \le q_i) \at X_i)}
      \bigotimes\nolimits_i F^ {\mathsf{m}_{\gyarumv{i}}}_ {\mathsf{m}} (r_i \at X_i)
      \xrightarrow{t} A
    }
    \and
    \inferrule
    [\rulename{gyaru}{term}{cont}]
    {
       \gyarunt{r_{{\mathrm{1}}}}  \gyarusym{,}  \gyarunt{r_{{\mathrm{2}}}}  \in \mathsf{Cont}( \mathsf{m} ) 
      \and
      \interp{ \rho  \mid  \mathsf{M}  \odot  \Gamma }_{ \mathsf{n}}
      \ox F^\mathsf{m}_\mathsf{n}(r_1 \at A) \ox F^\mathsf{m}_\mathsf{n}(r_2 \at A)
      \xrightarrow{t}
      B
    }
    {
      \interp{ \rho  \mid  \mathsf{M}  \odot  \Gamma }_{ \mathsf{n}}
      \ox F^\mathsf{m}_\mathsf{n}((r_1 + r_2) \at A)
      \xrightarrow{\id \ox (m^{-1} \circ F^{ \mathsf{m}}_{ \mathsf{n}}(c)) }
      \interp{ \rho  \mid  \mathsf{M}  \odot  \Gamma }_{ \mathsf{n}}
      \ox F^\mathsf{m}_\mathsf{n}(r_1 \at A) \ox F^\mathsf{m}_\mathsf{n}(r_2 \at A)
      \xrightarrow{t}
      B
    }
    \and
    \inferrule
    [\rulename{gyaru}{term}{unitI}]
    {
    }
    {
      I_{ \mathsf{m}} \xrightarrow{\id} I_{ \mathsf{m}}
    }
    \and
    \inferrule
    [\rulename{gyaru}{term}{unitE}]
    {
      \interp{  \sigma  \mid  \mathsf{N}  \odot  \Delta }_{ \mathsf{m}} \xrightarrow{e} I_{\mathsf{m}}
      \and
      \interp{  \rho  \mid  \mathsf{M}  \odot  \Gamma }_{ \mathsf{m}} \xrightarrow{t} A
    }
    {
      \interp{  \rho  \mid  \mathsf{M}  \odot  \Gamma }_{ \mathsf{m}} \ox \interp{   \gyarunt{q}   \sigma   \mid  \mathsf{N}  \odot  \Delta }_{ \mathsf{m}}
      \xrightarrow{t \ox (\iota^{-1} \circ (q \cdot e))}
      A \ox I_{\mode m} \iso A
    }
    \and
    \inferrule
    [\rulename{gyaru}{term}{arrowI}]
    {
      \interp{  \rho  \mid  \mathsf{M}  \odot  \Gamma }_{\mathsf{n}} \ox F^{\mathsf{m}}_{\mathsf{n}}(r \at A)
      \xrightarrow{e} B
    }
    {
      \interp{  \rho  \mid  \mathsf{M}  \odot  \Gamma }_{\mathsf{m}}
      \xrightarrow{\lambda e} F^{\mathsf{m}}_{\mathsf{n}} (r \at A) \lto B
    }
    \and
    \inferrule
    [\rulename{gyaru}{term}{arrowE}]
    {
      \interp{  \rho  \mid  \mathsf{M}  \odot  \Gamma }_{\mathsf{n}} \xrightarrow{t} F^{\mathsf{m}}_{\mathsf{n}} (q \at A) \lto B
      \and
      \interp{  \sigma  \mid  \mathsf{N}  \odot  \Delta }_{\mathsf{m}} \xrightarrow{e} A
    }
    {
      \interp{  \rho  \mid  \mathsf{M}  \odot  \Gamma }_{\mathsf{n}}
      \ox
      F^{\mode m}_{\mode n} \interp{  \sigma  \mid  \mathsf{N}  \odot  \Delta }_{\mathsf{m}}
      \xrightarrow{t \ox F^{\mathsf{m}}_{\mathsf{n}}(q \cdot e)}
      (F^{\mathsf{m}}_{\mathsf{n}} (q \at A) \lto B) \ox F^{\mathsf{m}}_{\mathsf{n}}(q \at A)
      \xrightarrow{\textit{ev}} B
    }
    \and
    \inferrule[\rulename{gyaru}{term}{pairI}]
    {
      \interp{ \rho_{{\mathrm{1}}}  \mid  \mathsf{M}_{{\mathrm{1}}}  \odot  \Gamma_{{\mathrm{1}}} }_{\mathsf{m}} \xrightarrow{a_1} A_1
      \and
      \interp{ \rho_{{\mathrm{2}}}  \mid  \mathsf{M}_{{\mathrm{2}}}  \odot  \Gamma_{{\mathrm{2}}} }_{\mathsf{m}} \xrightarrow{a_2} A_2
    }
    {
      \interp{ \rho_{{\mathrm{1}}}  \mid  \mathsf{M}_{{\mathrm{1}}}  \odot  \Gamma_{{\mathrm{1}}} }_{\mathsf{m}} \ox \interp{ \rho_{{\mathrm{2}}}  \mid  \mathsf{M}_{{\mathrm{2}}}  \odot  \Gamma_{{\mathrm{2}}} }_{\mathsf{m}}
      \xrightarrow{a_1 \ox a_2} A_1 \ox A_2
    }
    \and
    \inferrule[\rulename{gyaru}{term}{pairE}]
    {
      \interp{  \sigma  \mid  \mathsf{N}  \odot  \Delta }_{\mathsf{m}} \xrightarrow{e} A_1 \ox A_2
      \and
      \interp{  \rho  \mid  \mathsf{M}  \odot  \Gamma }_{\mathsf{n}}
      \ox F^{\mathsf{m}}_{\mathsf{n}}(q \at A_1) \ox F^{\mathsf{m}}_{\mathsf{n}}(q \at A_2)
      \xrightarrow{t} B
    }
    {
      \interp{  \rho  \mid  \mathsf{M}  \odot  \Gamma }_{\mathsf{n}}
      \ox F^{\mode m}_{\mode n} \interp{   \gyarunt{q}   \sigma   \mid  \mathsf{N}  \odot  \Delta }_{\mathsf{m}}
      \xrightarrow{\id \ox (m^{-1}\circ F^{\mathsf{m}}_{\mathsf{n}}(\tau^{-1} \circ(q \cdot e)))}
      \interp{  \rho  \mid  \mathsf{M}  \odot  \Gamma }_{\mathsf{n}}
      \ox F^{\mathsf{m}}_{\mathsf{n}}(q \at A_1) \ox F^{\mathsf{m}}_{\mathsf{n}}(q \at A_2)
      \xrightarrow{t} B
    }
    \and
    \inferrule[\rulename{gyaru}{term}{raiseI}]
    {
      \mathsf{m}  \le  \mathsf{n}  \le  \mathsf{M}
      \and
      \interp{ \rho  \mid  \mathsf{M}  \odot  \Gamma }_{\mathsf{m}} \xrightarrow{e} A
    }
    {
      \interp{ \rho  \mid  \mathsf{M}  \odot  \Gamma }_{\mathsf{n}}
      \xrightarrow{\text{unit}}
      G^{\mathsf{n}}_{\mathsf{m}}F^{\mathsf{n}}_{\mathsf{m}}\interp{ \rho  \mid  \mathsf{M}  \odot  \Gamma }_{\mathsf{n}}
      \iso G^{\mathsf{n}}_{\mathsf{m}}\interp{ \rho  \mid  \mathsf{M}  \odot  \Gamma }_{\mathsf{m}}
      \xrightarrow{G^{\mathsf{n}}_{\mathsf{m}}e}
      G^{\mathsf{n}}_{\mathsf{m}} A
    }
    \and
    \inferrule[\rulename{gyaru}{term}{raiseE}]
    {
      \mathsf{m}  \le  \mathsf{n}
      \and
      \interp{ \rho  \mid  \mathsf{M}  \odot  \Gamma }_{\mathsf{n}} \xrightarrow{e} G^\mathsf{n}_\mathsf{m} A
    }
    {
      F^\mathsf{m}_\mathsf{n} \interp{ \rho  \mid  \mathsf{M}  \odot  \Gamma }_{\mathsf{n}}
      \xrightarrow{F^\mathsf{m}_\mathsf{n} e}
      F^\mathsf{m}_\mathsf{n} G^\mathsf{n}_\mathsf{m} A
      \xrightarrow{\text{counit}}
      A
    }
    \and
    \inferrule[\rulename{gyaru}{term}{dropI}]
    {
      \mathsf{n}  \le  \mathsf{m}
      \and
       \gyarunt{q}  \in R_{ \mathsf{m} } 
      \and
      \interp{  \rho  \mid  \mathsf{M}  \odot  \Gamma }_\mathsf{m} \xrightarrow{e} A
    }
    {
      F^{\mode m}_{\mode n}
      \interp{   \gyarunt{q}   \rho   \mid  \mathsf{M}  \odot  \Gamma }_\mathsf{m}
      \xrightarrow{F^\mathsf{m}_\mathsf{n}(q \cdot e)}
      F^\mathsf{m}_\mathsf{n} (q \at A)
    }
    \and
    \inferrule[\rulename{gyaru}{term}{dropE}]
    {
      \mathsf{l}  \le  \mathsf{n}  \le  \mathsf{m}
      \and
      \interp{ \sigma  \mid  \mathsf{N}  \odot  \Delta }_\mathsf{n} \xrightarrow{e} F^\mathsf{m}_\mathsf{n} (q \at A)
      \and
      \interp{ \rho  \mid  \mathsf{M}  \odot  \Gamma }_\mathsf{l}
      \ox F^\mathsf{m}_\mathsf{l} (q \at A) \xrightarrow{t} B
    }
    {
      \interp{ \rho  \mid  \mathsf{M}  \odot  \Gamma }_\mathsf{l}
      \ox
      F^{\mode n}_{\mode l}\interp{ \sigma  \mid  \mathsf{N}  \odot  \Delta }_\mathsf{n}
      \xrightarrow{\id \ox F^\mathsf{n}_\mathsf{l} e}
      \interp{ \rho  \mid  \mathsf{M}  \odot  \Gamma }_\mathsf{l}
      \ox
      F^\mathsf{n}_\mathsf{l} F^\mathsf{m}_\mathsf{n} (q \at A)
      \iso
      \interp{ \rho  \mid  \mathsf{M}  \odot  \Gamma }_\mathsf{l}
      \ox
      F^\mathsf{m}_\mathsf{l} (q \at A)
      \xrightarrow{t}
      B
    }
  \end{mathpar}
  \caption{\gyaru{} interpretation into a categorical model}
  \label{fig:gyaru-interp-model}
\end{figure}

\begin{theorem}[Soundness]
  $ \beta $- and $ \eta $-conversions are sound with respect to the categorical
  semantics:
  \begin{enumerate}
  \item
    If $  \rho  \mid  \mathsf{M}  \odot  \Gamma  \vdash_{ \mathsf{m} }  \gyarunt{t_{{\mathrm{1}}}}  \equiv_\beta  \gyarunt{t_{{\mathrm{2}}}}  \colon  \gyarunt{A}  $
    then $ \interp {t_1} = \interp {t_2} $ in $ \mc C_{\mode m} $.

  \item
    If $  \rho  \mid  \mathsf{M}  \odot  \Gamma  \vdash_{ \mathsf{m} }  \gyarunt{t_{{\mathrm{1}}}}  \equiv_\eta  \gyarunt{t_{{\mathrm{2}}}}  \colon  \gyarunt{A}  $
    then $ \interp {t_1} = \interp {t_2} $ in $ \mc C_{\mode m} $.
  \end{enumerate}
\end{theorem}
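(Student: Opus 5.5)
The proof goes case by case on the conversion rules of Figure~\ref{fig:gyaru-reductions}. It needs two preliminary lemmas: one about the semantic scalar multiplication $q \cdot t$, and a semantic substitution lemma for the Substitution theorem.

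\textbf{First lemma: compositionality of scalar multiplication.} Three properties are needed.
\begin{itemize}
\item[(i)] $q \cdot (t \circ s) = (q \at t) \circ (q \cdot s)$ up to the canonical isomorphisms of context interpretations.
\item[(ii)] $q \cdot (r \cdot t) = \delta \circ ((qr)\cdot t)$. This uses the third diagram of Figure~\ref{fig:coherence-mu} together with the comonad-style coherence of $\delta$ and $\tau$.
\item[(iii)] $q \cdot (a_1 \ox a_2) = \tau \circ ((q\cdot a_1) \ox (q\cdot a_2))$. This uses that $\tau$ is a monoidal structure on $D(q)$, and that $\mu$ is monoidal, as required of the lineator.
\end{itemize}
Transporting along $F^{\mode m}_{\mode n}$ is handled by Lemma~\ref{lem:mu-coherence}, which lets us reason with $\mu$ rather than $\ell$. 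Where several $F$'s compose, we use the pseudofunctor coherence isomorphisms.

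\textbf{Second lemma: semantic substitution.} We show $\interp{[t/x]e} = \interp{e} \circ (\id \ox F(r\cdot \interp t))$, up to the canonical isomorphism $\interp{\rho, r\sigma \mid \mathsf{M},\mathsf{N} \odot \Gamma,\Delta} \cong \interp{\rho\mid\mathsf{M}\odot\Gamma}\ox F(\interp{r\sigma\mid \mathsf N\odot\Delta})$. The proof is by induction on the derivation of $e$, mirroring the syntactic proof in the appendix.
\begin{itemize}
\item \textbf{Variable case.} Here $r=1$, and the claim reduces to the counit law $\epsilon \circ (1\cdot t) = t$. This follows from the first diagram of Figure~\ref{fig:coherence-mu} and the counit axiom of the colax structure.
\item \textbf{Weakening and contraction.} When $x$ is the weakened or contracted variable, we need
\[
c\circ ((r_1+r_2)\cdot t) = ((r_1\cdot t)\ox(r_2\cdot t))\circ(\text{contraction of }\sigma),
\]
and the analogous statement for $w$. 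These come from the first and third diagrams of Figure~\ref{fig:graded-comonad-coherence}, naturality of $c$ and $w$, and the fact that $\Cont(\mode m)$ is an ideal, so every composite involved is defined.
\item \textbf{Subsumption.} This follows from functoriality of $D$ in the grade, i.e.\ naturality in $R^{\op}$.
\item \textbf{Elimination rules.} These carry a grade $q$ multiplying the substituted context, and reduce to property (ii) of the first lemma.
\end{itemize}

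\textbf{$\beta$-cases.}
\begin{itemize}
\item \textsc{Fun.$\beta$}: $\lambda$ followed by $\textit{ev}$ is the transpose identity, and the second lemma finishes the case.
\item \textsc{Pair.$\beta$}: we use property (iii) to cancel $\tau^{-1}$ against $\tau$, $m^{-1}$ against $m$, and then the second lemma applied twice.
\item \textsc{Unit.$\beta$}: this uses $\iota^{-1}\circ(q\cdot\id_I) = \id$, which is immediate from the definition of $q \cdot t$ for $k=0$.
\item \textsc{Raise.$\beta$}: the triangle identity gives counit $\circ\, F(\text{unit}) = \id$, transported across the isomorphism of Definition~\ref{defn:ctx-interp}.
\item \textsc{Drop.$\beta$}: after the pseudofunctor isomorphism $F^{\mode n}_{\mode l}F^{\mode m}_{\mode n}\cong F^{\mode m}_{\mode l}$, this is exactly the second lemma applied with grade $q$.
\end{itemize}

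\textbf{$\eta$-cases.}
\begin{itemize}
\item \textsc{Fun.$\eta$} and \textsc{Raise.$\eta$} follow from the universal properties of $\lto$ and of $F\dashv G$; the latter uses the other triangle identity.
\item \textsc{Pair.$\eta$}, \textsc{Unit.$\eta$} and \textsc{Drop.$\eta$} are the positive ones. Their typing puts the eliminator at grade $q=1$, up to the grades allowed by the rules. Here $1\cdot e$ composed with $\epsilon$ and $\mu$ must reduce to the identity, which uses the first diagrams of Figures~\ref{fig:graded-comonad-coherence} and~\ref{fig:coherence-mu} together with $\tau$ being an isomorphism.
\end{itemize}

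\textbf{Main obstacle.} I expect the hardest part to be the contraction and weakening cases of the second lemma in the multi-mode setting. There the substituted term lives at mode $\mode m$ while the contraction happens in $F^{\mode m}_{\mode n}$, so the coherence of $c$ with $\mu$ (last diagram of Figure~\ref{fig:coherence-mu}) has to be combined with the distributivity diagram of $c$ over $\delta$ (Figure~\ref{fig:graded-comonad-coherence}). My first attempt would be to factor everything through $\mu$ via Lemma~\ref{lem:mu-coherence} and prove a single auxiliary equation for general contexts by induction on the length of $\sigma$, using the coherence of $\tau$ with $c$ to split tensors.
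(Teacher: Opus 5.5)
Your architecture matches the paper's. You prove a semantic ``substitution is composition'' lemma along the syntactic substitution proof (Lemma~\ref{alem:subst-comp}), then a lemma moving scalars through such composites (your (i)--(iii) plus the $\mu$-coherence amount to Lemma~\ref{alem:scalar-mult-comp}), and then check the rules one by one. Two steps, however, fail as written.

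First, the substitution lemma cannot be proved for a single variable by induction on the derivation of $e$. In the \textsc{Cont} case where $x$ is the contracted variable, the premise has $y_1,y_2$, and you need the hypothesis for substituting $t$ into both at once. A single-variable statement about subderivations does not give this: iterating it applies it to a derivation that is not a subderivation, after renaming $\Delta$. That is why the paper states Lemma~\ref{alem:subst-comp} for the simultaneous substitution of Theorem~\ref{athm:multi-subst}; this also pins down which derivation of $[t/x]e$ is being interpreted. In that case the square you need is the second, right-distributive diagram of Figure~\ref{fig:graded-comonad-coherence}, $c_{q_1,q_2}\circ\delta_{q_1+q_2,s}=(\delta_{q_1,s}\ox\delta_{q_2,s})\circ c_{q_1s,q_2s}$, together with the last diagram of Figure~\ref{fig:coherence-mu}. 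The first diagram is only needed when a scalar multiplies a derivation ending in \textsc{Cont}, as in the $\beta$ cases where the introduction form was derived with a structural rule.

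Second, \textsc{Drop.$\eta$} and \textsc{Fun.$\eta$} need a fact your plan lacks. \textsc{Drop.$\eta$} eliminates at grade $q$, not $1$, since \textsc{Drop.i} turns the grade $1$ of $z$ into $q$. In \textsc{Fun.$\eta$}, the body $t\,x$ is interpreted as $\textit{ev}\circ(t\ox F^{\mode m}_{\mode n}(q\cdot v_A))$, up to $F^{\mode m}_{\mode n}\iso F^{\mode m}_{\mode n}F^{\mode m}_{\mode m}$, where $v_A\colon F^{\mode m}_{\mode m}(1\at A)\iso 1\at A\xrightarrow{\epsilon}A$ interprets $x$. So the universal property of $\lto$ does not close it by itself. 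Both cases need $q\at A\iso F^{\mode m}_{\mode m}(q\at A)\xrightarrow{q\cdot v_A}q\at A$ to be the identity for every $q$. This uses the other counit law $(q\at\epsilon)\circ\delta_{q,1}=\id$. Crucially, it also uses that the $\mu$ of $F^{\mode m}_{\mode m}$ agrees with the pseudofunctor isomorphism $F^{\mode m}_{\mode m}\iso\id$; the paper obtains this from Lemma~\ref{lem:ell-mu-induced-coherence} applied to the 2-cell $G^{\mode m}_{\mode m}\iso\id$. It does not follow from the diagrams of Figure~\ref{fig:coherence-mu}. For $\ms L$ acting by $n\at X=X^{\ox n}$ with $F=G=\id$, letting $\mu_n$ reverse the order of the $n$ tensor factors satisfies all of them, yet makes $2\cdot v_A$ the symmetry of $A\ox A$. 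The same fact is also what specialises the simultaneous lemma to your single-variable formula, since the untouched variables are substituted by themselves and contribute factors $F(r_i\cdot v_{A_i})$.
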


\begin{proof}
  Given in \S\S \ref{asec:beta-semantic-soundness},
  \ref{asec:eta-semantic-soundness} of the appendix.
\end{proof}

\section{Conclusion, related and future work}
\label{sec:conclusion}
  We defined a graded logic which allows grades
from multiple modes to exist in a single system.
Furthermore, we have generalized the graded weakening and contraction rules
to obtain more fine-grained control in the graded setting.
This allowed us to recover non-graded substructural logics
in an entirely graded setting.
We gave an account of the categorical semantics of our logic based on graded
comonads.
Our logic consists of multiple graded systems connected by morphisms of modes,
and this design is mirrored in the categorical semantics.
We introduced a novel notion of morphisms of graded comonads,
and showed how it connects to the existing literature.
We saw that categorical models of substructural logics
provide categorical models for the modes which recover such logics
at the syntactic level.

\subsection*{Combined systems}

Systems which combine logics permitting different sets of structural rules
originate in Benton's Linear-non-Linear Logic (LNL) \cite{benton_mixed_1995},
a system which combines linear and intuitionistic logics.
A similar result was achieved for graded and linear logics in \textsf{mGL}
of Vollmer et al.~\cite{vollmer_mixed_2025}.
In the non-graded world, LNL was further generalized by Pruiksma et al.\@
into Adjoint Logic \cite{pruiksma_adjointlogic_2018},
which allows an arbitrary collection of logics with different structural rules
(relevant, affine, linear, intuitionistic) to exist in one system.
The present work initially started out as a ``graded Adjoint Logic,''
and thus bears many similarities to Adjoint Logic.
For example, our mode system and modal operators
$  \operatorname{\uparrow} _{ \mathsf{n}  \le  \mathsf{m} }  \gyarunt{A}  $ and $  \operatorname{\downarrow} ^{ \gyarunt{q} }_{ \mathsf{n}  \le  \mathsf{m} }  \gyarunt{A}  $
are direct adaptations of the mode system and modal operators presented
in Adjoint Logic.
All of the systems mentioned here have one observation in common:
Terms from logics with fewer structural rules may depend on variables permitting
more structural rules, but not vice-versa.
This invariant is also present in our work,
and we leverage it to ensure that reuse of more structured variables
is well-behaved in less structured modes.

\subsection*{Heterogenously graded systems}

Systems where grades can be drawn from more than one grade algebra
have been studied for subsets of Java by
Bianchini et al.~\cite{bianchini_java-like_2023,bianchini_multi-graded_2023}
as well as Giannini and Duso~\cite{giannini_coeffects_2024}.
We refer to these works collectively as \emph{Graded Java}.
Modes in \gyaru{} correspond to \emph{kinds} of coeffects in Graded Java.
Both modes and kinds are arranged in a preorder,
and both systems use morphisms of grade algebras to combine (i.e. add or multiply)
grades from different algebras.

We highlight some differences:
In Graded Java, the order on kinds is required to admit joins,
and when grades from different kinds $ k_1, k_2 $ are combined,
the result belongs to kind $ k_1 \lor k_2 $.
This means that the kind of coeffect a variable is annotated with might change
during the typing derivation, which can lead to a loss of information
since morphisms of grade algebras need not be injective.
In contrast to this,
\gyaru{}'s typing rules ensure that additions only happen between grades from
the same mode,
and multiplications only happen when the modes are comparable in the preorder,
thus avoiding the need for joins.
Furthermore each variable in \gyaru{} belongs to a unique mode,
and its grades are always guaranteed to be drawn from that mode.
Hence there is no risk of the loss of information mentioned above.
We leave it to future work to investigate if/how our approach can be implemented
in a system like Graded Java.

\subsection*{Categorical semantics of graded modal types}

Many existing graded type systems have been given categorical semantics
\cite{brunel_core_2014,%
  gaboardi_combining_2016,%
  ghica_bounded_2014,%
  petricek_coeffects_2013,%
  petricek_coeffects_2014%
}.
These semantics were all somewhat similar,
employing a notion of \emph{graded linear exponential comonads}
(though terminology varied).
Later, Katsumata \cite{katsumata_double_2018} gave a comprehensive account
of graded linear exponential comonads.
Katsumata's work also serves as the basis for the exponential actions defined in
the present text.
Beyond the changes that were necessary to account for our modified weakening
and contraction rules,
there is one other difference between our presentation and that of Katsumata:
We require that $ r \at (-) $ be a strong monoidal functor,
while the categorical semantics in the work cited here all require this functor only to
be lax monoidal.
We attribute this difference to the fact that strong monoidality of $ r \at (-) $
is required for the interpretation of the tensor product type,
which was not included in the above works.
Cateogrical semantics were given to \textsf{mGL}
by Vollmer et al.~\cite{vollmer_mixed_2025}.
This type system does have a tensor product,
and they require the functors $ r \at (-) $
to be strict monoidal, i.e.\@ the morphism
$ (r \at A) \ox (r \at B) \to r \at (A \ox B) $ must be an identity.
We have shown here that the strictness is not necessary for the interpretation of
the graded tensor product.
Finally, Atkey \cite{atkey_syntax_2018} has given categorical semantics
to a dependently typed graded system using \emph{Quantitative Categories with Families},
an approach which is not based on graded comonads.

\subsection*{Future work}

One future direction is to use the lessons learned here about the categorical semantics
of graded types towards a categorical semantics of graded \emph{dependent} types
which is based on graded comonads.
One problem in this area is that uses of terms inside of types are considered
computationally irrelevant, and therefore exempt from grading.
In other words, types are treated as intuitionistic, but terms must still be graded.
The categorical semantics presented here are able to capture
graded and intuitionistic variables existing in the same system,
by instantiating \gyaru{} with any one mode for the graded system, together
with the terminal mode to recover intuitionistic logic.
We hope to use this as a starting point for the categorical semantics
of graded dependent types.
%

\bibliographystyle{entics}
\bibliography{refs.bib}

@misc{pruiksma_adjointlogic_2018,
	title = {Adjoint Logic},
	url = {https://www.cs.cmu.edu/~./fp/papers/adjoint18b.pdf},
	urldate = {2024-09-23},
	author = {Pruiksma, Klaas and Chargin, William and Pfenning, Frank and Reed, Jason},
	year = {2018},
}

@inproceedings{ghica_bounded_2014,
	address = {Berlin, Heidelberg},
	title = {Bounded {Linear} {Types} in a {Resource} {Semiring}},
	isbn = {978-3-642-54833-8},
	doi = {10.1007/978-3-642-54833-8_18},
	language = {en},
	booktitle = {Programming {Languages} and {Systems}},
	publisher = {Springer},
	author = {Ghica, Dan R. and Smith, Alex I.},
	editor = {Shao, Zhong},
	year = {2014},
	pages = {331--350},
}

@inproceedings{atkey_syntax_2018,
	address = {New York, NY, USA},
	series = {{LICS} '18},
	title = {Syntax and {Semantics} of {Quantitative} {Type} {Theory}},
	isbn = {978-1-4503-5583-4},
	url = {https://doi.org/10.1145/3209108.3209189},
	doi = {10.1145/3209108.3209189},
	urldate = {2024-11-13},
	booktitle = {Proceedings of the 33rd {Annual} {ACM}/{IEEE} {Symposium} on {Logic} in {Computer} {Science}},
	publisher = {Association for Computing Machinery},
	author = {Atkey, Robert},
	month = jul,
	year = {2018},
	pages = {56--65},
}

@inproceedings{benton_mixed_1995,
	address = {Berlin, Heidelberg},
	title = {A mixed linear and non-linear logic: {Proofs}, terms and models},
	isbn = {978-3-540-49404-1},
	shorttitle = {A mixed linear and non-linear logic},
	doi = {10.1007/BFb0022251},
	language = {en},
	booktitle = {Computer {Science} {Logic}},
	publisher = {Springer},
	author = {Benton, P. N.},
	editor = {Pacholski, Leszek and Tiuryn, Jerzy},
	year = {1995},
	pages = {121--135},
}

@inproceedings{brunel_core_2014,
	address = {Berlin, Heidelberg},
	title = {A {Core} {Quantitative} {Coeffect} {Calculus}},
	isbn = {978-3-642-54833-8},
	doi = {10.1007/978-3-642-54833-8_19},
	language = {en},
	booktitle = {Programming {Languages} and {Systems}},
	publisher = {Springer},
	author = {Brunel, Aloïs and Gaboardi, Marco and Mazza, Damiano and Zdancewic, Steve},
	editor = {Shao, Zhong},
	year = {2014},
	pages = {351--370},
}

@article{choudhury_graded_2021,
	title = {A graded dependent type system with a usage-aware semantics},
	volume = {5},
	issn = {2475-1421},
	url = {https://dl.acm.org/doi/10.1145/3434331},
	doi = {10.1145/3434331},
	language = {en},
	number = {POPL},
	urldate = {2024-11-13},
	journal = {Proceedings of the ACM on Programming Languages},
	author = {Choudhury, Pritam and Eades III, Harley and Eisenberg, Richard A. and Weirich, Stephanie},
	month = jan,
	year = {2021},
	pages = {1--32},
}

@inproceedings{gaboardi_combining_2016,
	address = {New York, NY, USA},
	series = {{ICFP} 2016},
	title = {Combining effects and coeffects via grading},
	isbn = {978-1-4503-4219-3},
	url = {https://doi.org/10.1145/2951913.2951939},
	doi = {10.1145/2951913.2951939},
	urldate = {2024-11-13},
	booktitle = {Proceedings of the 21st {ACM} {SIGPLAN} {International} {Conference} on {Functional} {Programming}},
	publisher = {Association for Computing Machinery},
	author = {Gaboardi, Marco and Katsumata, Shin-ya and Orchard, Dominic and Breuvart, Flavien and Uustalu, Tarmo},
	month = sep,
	year = {2016},
	pages = {476--489},
}

@article{bianchini_java-like_2023,
	title = {A {Java}-like calculus with heterogeneous coeffects},
	volume = {971},
	issn = {0304-3975},
	url = {https://www.sciencedirect.com/science/article/pii/S0304397523003766},
	doi = {10.1016/j.tcs.2023.114063},
	urldate = {2024-11-13},
	journal = {Theoretical Computer Science},
	author = {Bianchini, Riccardo and Dagnino, Francesco and Giannini, Paola and Zucca, Elena},
	month = sep,
	year = {2023},
	pages = {114063},
}

@inproceedings{katsumata_double_2018,
	address = {Cham},
	title = {A {Double} {Category} {Theoretic} {Analysis} of {Graded} {Linear} {Exponential} {Comonads}},
	isbn = {978-3-319-89366-2},
	booktitle = {Foundations of {Software} {Science} and {Computation} {Structures}},
	publisher = {Springer International Publishing},
	author = {Katsumata, Shin-ya},
	editor = {Baier, Christel and Dal Lago, Ugo},
	year = {2018},
	pages = {110--127},
	doi = {10.1007/978-3-319-89366-2_6},
}

@article{girard_linear_1987,
	title = {Linear logic},
	volume = {50},
	issn = {0304-3975},
	url = {https://www.sciencedirect.com/science/article/pii/0304397587900454},
	doi = {https://doi.org/10.1016/0304-3975(87)90045-4},
	number = {1},
	journal = {Theoretical Computer Science},
	author = {Girard, Jean-Yves},
	year = {1987},
	pages = {1--101},
}

@inproceedings{moon_graded_2021,
	address = {Cham},
	title = {Graded {Modal} {Dependent} {Type} {Theory}},
	isbn = {978-3-030-72019-3},
	doi = {10.1007/978-3-030-72019-3_17},
	language = {en},
	booktitle = {Programming {Languages} and {Systems}},
	publisher = {Springer International Publishing},
	author = {Moon, Benjamin and Eades III, Harley and Orchard, Dominic},
	editor = {Yoshida, Nobuko},
	year = {2021},
	pages = {462--490},
}

@inproceedings{bianchini_multi-graded_2023,
	title = {Multi-{Graded} {Featherweight} {Java}},
	copyright = {https://creativecommons.org/licenses/by/4.0/legalcode},
	url = {https://drops.dagstuhl.de/entities/document/10.4230/LIPIcs.ECOOP.2023.3},
	doi = {10.4230/LIPIcs.ECOOP.2023.3},
	language = {en},
	urldate = {2024-11-13},
	booktitle = {37th {European} {Conference} on {Object}-{Oriented} {Programming} ({ECOOP} 2023)},
	publisher = {Schloss Dagstuhl – Leibniz-Zentrum für Informatik},
	author = {Bianchini, Riccardo and Dagnino, Francesco and Giannini, Paola and Zucca, Elena},
	year = {2023},
	pages = {3:1--3:27},
}

@article{orchard_quantitative_2019,
	title = {Quantitative program reasoning with graded modal types},
	volume = {3},
	url = {https://dl.acm.org/doi/10.1145/3341714},
	doi = {10.1145/3341714},
	number = {ICFP},
	urldate = {2024-11-13},
	journal = {Language implementation for Quantitative Program Reasoning with Graded Modal Types},
	author = {Orchard, Dominic and Liepelt, Vilem-Benjamin and Eades III, Harley},
	month = jul,
	year = {2019},
	pages = {110:1--110:30},
}

@inproceedings{petricek_coeffects_2013,
	address = {Berlin, Heidelberg},
	title = {Coeffects: {Unified} {Static} {Analysis} of {Context}-{Dependence}},
	isbn = {978-3-642-39212-2},
	doi = {10.1007/978-3-642-39212-2_35},
	booktitle = {Automata, {Languages}, and {Programming}},
	publisher = {Springer Berlin Heidelberg},
	author = {Petricek, Tomas and Orchard, Dominic and Mycroft, Alan},
	editor = {Fomin, Fedor V. and Freivalds, Rūsiņš and Kwiatkowska, Marta and Peleg, David},
	year = {2013},
	pages = {385--397},
}

@article{petricek_coeffects_2014,
	title = {Coeffects: a calculus of context-dependent computation},
	volume = {49},
	issn = {0362-1340, 1558-1160},
	shorttitle = {Coeffects},
	url = {https://dl.acm.org/doi/10.1145/2692915.2628160},
	doi = {10.1145/2692915.2628160},
	language = {en},
	number = {9},
	urldate = {2024-11-13},
	journal = {ACM SIGPLAN Notices},
	author = {Petricek, Tomas and Orchard, Dominic and Mycroft, Alan},
	month = nov,
	year = {2014},
	pages = {123--135},
}

@incollection{mellies_categorical_2009,
	series = {Panoramas et {Synthèses}},
	title = {Categorical semantics of linear logic},
	number = {27},
	booktitle = {Interactive models of computation and program behaviour},
	publisher = {Société Mathématique de France},
	author = {Melliès, Paul-André},
	year = {2009},
	pages = {1 -- 196},
	url = {https://hal.science/hal-00154229v1},
}

@article{petric_coherence_2002,
	title = {Coherence in {Substructural} {Categories}},
	volume = {70},
	issn = {1572-8730},
	url = {https://doi.org/10.1023/A:1015186718090},
	doi = {10.1023/A:1015186718090},
	language = {en},
	number = {2},
	urldate = {2025-05-07},
	journal = {Studia Logica},
	author = {Petrić, Zoran},
	month = mar,
	year = {2002},
	pages = {271--296},
}

@article{abel_unified_2020,
	title = {A unified view of modalities in type systems},
	volume = {4},
	issn = {2475-1421},
	url = {https://dl.acm.org/doi/10.1145/3408972},
	doi = {10.1145/3408972},
	language = {en},
	number = {ICFP},
	urldate = {2025-05-13},
	journal = {Proceedings of the ACM on Programming Languages},
	author = {Abel, Andreas and Bernardy, Jean-Philippe},
	month = aug,
	year = {2020},
	pages = {1--28},
}

@inproceedings{vollmer_mixed_2025,
	address = {Dagstuhl, Germany},
	series = {Leibniz {International} {Proceedings} in {Informatics} ({LIPIcs})},
	title = {A {Mixed} {Linear} and {Graded} {Logic}: {Proofs}, {Terms}, and {Models}},
	volume = {326},
	isbn = {978-3-95977-362-1},
	issn = {1868-8969},
	shorttitle = {A {Mixed} {Linear} and {Graded} {Logic}},
	url = {https://drops.dagstuhl.de/entities/document/10.4230/LIPIcs.CSL.2025.32},
	doi = {10.4230/LIPIcs.CSL.2025.32},
	urldate = {2025-05-26},
	booktitle = {33rd {EACSL} {Annual} {Conference} on {Computer} {Science} {Logic} ({CSL} 2025)},
	publisher = {Schloss Dagstuhl – Leibniz-Zentrum für Informatik},
	author = {Vollmer, Victoria and Marshall, Danielle and Eades III, Harley and Orchard, Dominic},
	editor = {Endrullis, Jörg and Schmitz, Sylvain},
	year = {2025},
	pages = {32:1--32:21},
}

@misc{capucci_actegories_2024,
	title = {Actegories for the {Working} {Amthematician}},
	url = {http://arxiv.org/abs/2203.16351},
	doi = {10.48550/arXiv.2203.16351},
	language = {en},
	urldate = {2025-07-07},
	publisher = {arXiv},
	author = {Capucci, Matteo and Gavranović, Bruno},
	month = sep,
	year = {2024},
	note = {arXiv:2203.16351 [math]},
}

@inproceedings{kelly_doctrinal_1974,
	address = {Berlin, Heidelberg},
	title = {Doctrinal adjunction},
	isbn = {978-3-540-37270-7},
	doi = {10.1007/BFb0063105},
	language = {en},
	booktitle = {Category {Seminar}},
	publisher = {Springer},
	author = {Kelly, G. M.},
	editor = {Kelly, Gregory M.},
	year = {1974},
	pages = {257--280},
}

@article{azevedo_de_amorim_semantic_2017,
	title = {A semantic account of metric preservation},
	volume = {52},
	issn = {0362-1340},
	url = {https://dl.acm.org/doi/10.1145/3093333.3009890},
	doi = {10.1145/3093333.3009890},
	number = {1},
	urldate = {2025-07-11},
	journal = {SIGPLAN Not.},
	author = {Azevedo de Amorim, Arthur and Gaboardi, Marco and Hsu, Justin and Katsumata, Shin-ya and Cherigui, Ikram},
	month = jan,
	year = {2017},
	pages = {545--556},
}

@inproceedings{de_amorim_really_2014,
	address = {New York, NY, USA},
	series = {{IFL} '14},
	title = {Really {Natural} {Linear} {Indexed} {Type} {Checking}},
	isbn = {978-1-4503-3284-2},
	url = {https://doi.org/10.1145/2746325.2746335},
	doi = {10.1145/2746325.2746335},
	urldate = {2025-07-11},
	booktitle = {Proceedings of the 26nd 2014 {International} {Symposium} on {Implementation} and {Application} of {Functional} {Languages}},
	publisher = {Association for Computing Machinery},
	author = {de Amorim, Arthur Azevedo and Gaboardi, Marco and Gallego Arias, Emilio Jesús and Hsu, Justin},
	month = oct,
	year = {2014},
	pages = {1--12},
}

@article{uustalu_comonadic_2008,
	series = {Proceedings of the {Ninth} {Workshop} on {Coalgebraic} {Methods} in {Computer} {Science} ({CMCS} 2008)},
	title = {Comonadic {Notions} of {Computation}},
	volume = {203},
	issn = {1571-0661},
	url = {https://www.sciencedirect.com/science/article/pii/S1571066108003435},
	doi = {10.1016/j.entcs.2008.05.029},
	number = {5},
	urldate = {2025-07-11},
	journal = {Electronic Notes in Theoretical Computer Science},
	author = {Uustalu, Tarmo and Vene, Varmo},
	month = jun,
	year = {2008},
	pages = {263--284},
}

@article{uustalu_signals_2005,
	title = {Signals and {Comonads}},
	volume = {11},
	copyright = {2005 Tarmo Uustalu, Tarmo Vene},
	issn = {0948-6968, 0948-695X},
	url = {https://lib.jucs.org/article/28448/},
	doi = {10.3217/jucs-011-07-1311},
	language = {en},
	number = {7},
	urldate = {2025-07-11},
	journal = {JUCS - Journal of Universal Computer Science},
	publisher = {Journal of Universal Computer Science},
	author = {Uustalu, Tarmo and Vene, Tarmo},
	month = jul,
	year = {2005},
	note = {Number: 7},
	pages = {1310--1326},
}

@article{rajani_modal_2024,
	title = {A {Modal} {Type} {Theory} of {Expected} {Cost} in {Higher}-{Order} {Probabilistic} {Programs}},
	volume = {8},
	url = {https://dl.acm.org/doi/10.1145/3689725},
	doi = {10.1145/3689725},
	number = {OOPSLA2},
	urldate = {2025-07-11},
	journal = {Proc. ACM Program. Lang.},
	author = {Rajani, Vineet and Barthe, Gilles and Garg, Deepak},
	month = oct,
	year = {2024},
	pages = {285:389--285:414},
}

@article{bierman_intuitionistic_2000,
	title = {On an {Intuitionistic} {Modal} {Logic}},
	volume = {65},
	doi = {10.1023/A:1005291931660},
	number = {3},
	journal = {Stud Logica},
	author = {Bierman, Gavin M. and Paiva, Valeria de},
	year = {2000},
	pages = {383--416},
}

@incollection{liepelt_graded_2026,
	address = {Cham},
	title = {On {Graded} {Coeffect} {Types} for {Information}-{Flow} {Control}},
	isbn = {978-3-032-08187-2},
	url = {https://doi.org/10.1007/978-3-032-08187-2_7},
	doi = {10.1007/978-3-032-08187-2_7},
	language = {en},
	urldate = {2026-02-26},
	booktitle = {Languages, {Compilers}, {Analysis} - {From} {Beautiful} {Theory} to {Useful} {Practice}: {Essays} {Dedicated} to {Alan} {Mycroft} on the {Occasion} of {His} {Retirement}},
	publisher = {Springer Nature Switzerland},
	author = {Liepelt, Vilem-Benjamin and Marshall, Danielle and Orchard, Dominic and Rajani, Vineet and Vollmer, Michael},
	editor = {Orchard, Dominic and Petricek, Tomas and Singer, Jeremy},
	year = {2026},
	pages = {114--148},
}

@inproceedings{giannini_coeffects_2024,
	address = {New York, NY, USA},
	series = {{FTfJP} 2024},
	title = {Coeffects for {MiniJava}: {Cf}-{Mj}},
	isbn = {979-8-4007-1111-4},
	shorttitle = {Coeffects for {MiniJava}},
	url = {https://dl.acm.org/doi/10.1145/3678721.3686232},
	doi = {10.1145/3678721.3686232},
	urldate = {2026-02-26},
	booktitle = {Proceedings of the 26th {ACM} {International} {Workshop} on {Formal} {Techniques} for {Java}-like {Programs}},
	publisher = {Association for Computing Machinery},
	author = {Giannini, Paola and Duso, Giulio},
	month = sep,
	year = {2024},
	pages = {30--36},
}

@article{Lambek:1958,
	author = {Joachim Lambek},
	journal = {The American Mathematical Monthly},
	number = {3},
	pages = {154-170},
	title = {The Mathematics of Sentence Structure},
	volume = {65},
	year = {1958},
	doi = {10.2307/2310058}}

@book{Lang_2002,
  title={Algebra},
  url={http://dx.doi.org/10.1007/978-1-4613-0041-0},
  DOI={10.1007/978-1-4613-0041-0},
  journal={Graduate Texts in Mathematics},
  publisher={Springer New York},
  author={Lang, Serge},
  year={2002},
  language={en}}
\newpage

\appendix

\section{Specification of the single mode system}
\label{asec:single-mode-spec}
  This section contains the specification for the single mode specialization of \gyaru{}.
The syntax for types and terms, and the corresponding typing rules are
\[
  \setlength{\arraycolsep}{.19em}
  \begin{array}{rrl}
    \textit{(Types)}
    &
      A, B, C, T
      ::=
    &
       \mathbf{I} 
      \mid  \gyaruSnt{A}  \otimes  \gyaruSnt{B} 
      \mid  \gyaruSnt{A}  \oplus  \gyaruSnt{B} 
      \mid  \gyaruSnt{A} ^{ \gyaruSnt{q} } \multimap  \gyaruSnt{B} 
      \mid  \square _{ \gyaruSnt{q} }  \gyaruSnt{A} 
    \\
    \textit{(Terms)} &
                       e, f, t
                       ::=
    &
      \gyaruSmv{x}
      \mid  \star 
      \mid  \mathbin{\mathsf{let} } _{@  \gyaruSnt{q} }   \star   =  \gyaruSnt{t}   \mathbin{\mathsf{in} }   \gyaruSnt{e} 
      \mid \gyaruSsym{(}  \gyaruSnt{e_{{\mathrm{1}}}}  \gyaruSsym{,}  \gyaruSnt{e_{{\mathrm{2}}}}  \gyaruSsym{)}
      \mid  \mathbin{\mathsf{let} } _{@  \gyaruSnt{q} }  \gyaruSsym{(}  \gyaruSmv{x}  \gyaruSsym{,}  \gyaruSmv{y}  \gyaruSsym{)}  =  \gyaruSnt{t}   \mathbin{\mathsf{in} }   \gyaruSnt{e} 
      \mid \operatorname{\mathsf{inl} } \, \gyaruSnt{t}
      \mid \operatorname{\mathsf{inr} } \, \gyaruSnt{t}
    \\
    &
    &
      \mid  \mathsf{case}_ \gyaruSnt{q} ( \gyaruSnt{t} ; \gyaruSmv{x_{{\mathrm{1}}}} . \gyaruSnt{e_{{\mathrm{1}}}} ; \gyaruSmv{x_{{\mathrm{2}}}} . \gyaruSnt{e_{{\mathrm{2}}}} ) 
      \mid \lambda  \gyaruSmv{x}  \gyaruSsym{.}  \gyaruSnt{t}
      \mid \gyaruSnt{f} \, \gyaruSnt{t}
      \mid  \square _{ \gyaruSnt{q} }  \gyaruSnt{t} 
      \mid  \mathbin{\mathsf{let} } _{@  \gyaruSnt{q} }  \square \, \gyaruSmv{x}  =  \gyaruSnt{t}   \mathbin{\mathsf{in} }   \gyaruSnt{e} 
  \end{array}
\]
\hrule
\begin{mathpar}
  \namedrules{gyaruS}{term}{var,weak,cont,sub,unitI,unitE,pairI,pairE,sumIL,sumIR,sumE,funI,funE,boxI,boxE}
\end{mathpar}

\newpage

\section{Syntactic properties of \gyaru{}}
\label{asec:multi-mode-syntax-proofs}

\subsection{Substitution theorem}
\label{asec:multi-mode-subst}
  \begin{theorem}
  If $  \rho  \gyarusym{,}  \gyarunt{r}  \mid  \mathsf{M}  \gyarusym{,}  \mathsf{m}   \odot   \Gamma  \gyarusym{,}  \gyarumv{x}  \gyarusym{:}  \gyarunt{A}  \vdash_{ \mathsf{l} }  \gyarunt{t}  \colon  \gyarunt{B}  $
  and $  \sigma  \mid  \mathsf{N}   \odot   \Delta  \vdash_{ \mathsf{m} }  \gyarunt{e}  \colon  \gyarunt{A}  $,
  then $  \rho  \gyarusym{,}   \gyarunt{r}   \sigma   \mid  \mathsf{M}  \gyarusym{,}  \mathsf{N}   \odot   \Gamma  \gyarusym{,}  \Delta  \vdash_{ \mathsf{l} }  \gyarusym{[}  \gyarunt{e}  \gyarusym{/}  \gyarumv{x}  \gyarusym{]}  \gyarunt{t}  \colon  \gyarunt{B}  $.
\end{theorem}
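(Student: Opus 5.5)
We argue by induction on the derivation of $\rho, r \mid \mathsf{M}, \mathsf{m} \odot \Gamma, x : A \vdash_{\mathsf{l}} t \colon B$, keeping the derivation of $\sigma \mid \mathsf{N} \odot \Delta \vdash_{\mathsf{m}} e \colon A$ fixed. By independence we have $\mathsf{m} \ge \mathsf{l}$ and $\mathsf{N} \ge \mathsf{m}$, so $\mathsf{N} \ge \mathsf{l}$. The scalar product $r\sigma$ is therefore well defined by Definition~\ref{defn:algebra-multiplication}, and the conclusion satisfies independence. Implicit exchange lets us always place $x$ last in the context.

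Before the induction we record a few algebraic facts about heterogeneous scalar multiplication, each following from the fact that the $\phi_{\mathsf{m},\mathsf{n}}$ form a coherent system of grade algebra morphisms.
\begin{enumerate}
\item $q(r\sigma) = (qr)\sigma$ whenever the modes are comparable.
\item $(r_1 + r_2)\sigma = r_1\sigma + r_2\sigma$, taken pointwise.
\item $0\sigma = 0$ and $1\sigma = \sigma$.
\item $r \le r'$ implies $r\sigma \le r'\sigma$.
\end{enumerate}

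We now treat the rules of Figure~\ref{fig:full-typing-rules} in turn.

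\textbf{Variable rule.} Here $t = x$, $\rho$ is empty and $r = 1$. The goal $1\sigma \mid \mathsf{N} \odot \Delta \vdash e : A$ is the given judgment, by fact 3.

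\textbf{Subsumption.} From $r' \le r$ we get $r'\sigma \le r\sigma$ by fact 4, and we reapply subsumption after using the induction hypothesis.

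\textbf{Weakening.} If the weakened variable is $x$, then $r = 0$ and $x$ does not occur in $t$. We must derive $\rho, 0\sigma \mid \mathsf{M}, \mathsf{N} \odot \Gamma, \Delta \vdash t : B$, which means weakening in every variable of $\Delta$. This is the first genuine obstacle. The hypothesis only gives $\mathsf{Weak}(\mathsf{m})$, while the variables of $\Delta$ live at modes $\mathsf{n}_i \ge \mathsf{m}$. We use the mode morphisms $\phi_{\mathsf{m},\mathsf{n}_i}$, which satisfy $\mathsf{Weak}(\mathsf{m}) \to \mathsf{Weak}(\mathsf{n}_i)$. So weakening is permitted at each $\mathsf{n}_i$, and since $\phi(0) = 0$ the grades come out correctly. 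We apply the weakening rule once per variable of $\Delta$, checking its side condition $\mathsf{l} \le \mathsf{n}_i$, which holds because $\mathsf{n}_i \ge \mathsf{m} \ge \mathsf{l}$. If instead the weakened variable is a different one, we simply commute.

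\textbf{Contraction.} If the contracted variable is some other one, we commute. If $x$ arises by contracting $x_1, x_2$ with $r = r_1 + r_2$ and $r_1, r_2 \in \mathsf{Cont}(\mathsf{m})$, we proceed as follows.
\begin{enumerate}
\item Choose fresh copies $\Delta_1, \Delta_2$ of $\Delta$ with renamed $e_1, e_2$.
\item Substitute $e_1$ for $x_1$ and then $e_2$ for $x_2$, using the induction hypothesis twice.
\item Contract each pair of corresponding variables of $\Delta_1, \Delta_2$, which have grades $r_1 s_j$ and $r_2 s_j$, at their common mode $\mathsf{n}_j$.
\end{enumerate}
The side condition for step 3 is $r_i s_j = \phi(r_i)\, s_j \in \mathsf{Cont}(\mathsf{n}_j)$. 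It holds because $\phi(r_i) \in \mathsf{Cont}(\mathsf{n}_j)$ and $\mathsf{Cont}(\mathsf{n}_j)$ is an ideal. This is exactly the ideal-based argument of \S\ref{sec:control}. Fact 2 identifies the resulting grades with $(r_1 + r_2)\sigma$. We also need a renaming lemma, proved by a routine separate induction, to show that the term obtained is $[e/x]t$.

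The second step applies the induction hypothesis to a derivation that is not a subderivation, because it has already been substituted once. To keep the induction well founded we induct on derivation size and note that renaming and substitution of $e_1$ do not increase the size relevant for $x_2$. Alternatively, we could prove a simultaneous multi-substitution lemma.

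\textbf{Logical rules.} For the introduction and elimination rules (unit, pair, function, sum, $\operatorname{\downarrow}$, $\operatorname{\uparrow}$), $x$ lies in the context of exactly one premise. In the sum case it lies in the shared $\Gamma$ of both branches, and we apply the induction hypothesis to each branch. If that premise has its grades scaled by some $q$ in the conclusion, as in the argument of application, the scrutinee of a let, or the body of $\operatorname{\downarrow}$-introduction, then $x$ receives $q r$. The result of the induction hypothesis is scaled to $q(r\sigma)$, and fact 1 shows this equals $(qr)\sigma$. The mode side conditions, such as $\mathsf{m} \le \mathsf{n} \le \mathsf{M}$ in $\operatorname{\uparrow}$-introduction, are preserved because $\mathsf{N} \ge \mathsf{m}$.

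The main obstacle is the contraction case. It is where the ideal condition and the mode morphisms are essential, and it needs the extra bookkeeping of the duplicated substitution together with a suitable induction measure.
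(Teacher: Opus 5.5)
Your case analysis is sound, and in places more careful than the paper's: facts 1--4, and the observation that $r_i s_j \in \mathsf{Cont}(\mathsf{n}_j)$ because mode morphisms send $\mathsf{Cont}(\mathsf{m})$ into $\mathsf{Cont}(\mathsf{n}_j)$. However, the induction as you set it up does not justify the contraction case.

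Your second application of the induction hypothesis is to the derivation $D'$ of $[e_1/x_1]t$ produced by the first application. $D'$ is not a subderivation, and it is not smaller in size. It contains a full copy of the derivation of $e_1$ at every use of $x_1$, including uses of variables contracted into $x_1$ higher up in the derivation. It also contains new weakening and contraction steps on the variables of $\Delta_1$. So $D'$ can be larger than the whole original derivation, and ``induction on derivation size'' fails.

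Your ``size relevant for $x_2$'' can be made to work, but only if you do two things:
\begin{enumerate}
\item Define the measure. For example, count the nodes whose context contains $x_2$ or a variable contracted into it, without counting weakenings and contractions on other variables. Pair this lexicographically with height, so that commuting past those structural steps still decreases the measure.
\item Strengthen the statement being proved, so that it asserts the constructed derivation does not increase this measure for the remaining context variables. Without this, nothing is known about the shape of $D'$.
\end{enumerate}
Note also that you cannot keep the derivation of $e$ fixed, as you announce at the start, since this case substitutes renamed copies $e_1, e_2$.

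The paper avoids all of this by taking the route you mention only in passing. It proves a simultaneous substitution theorem, replacing every context variable $x_i$ by a term $e_i$ at once, by ordinary structural induction on the derivation.
\begin{itemize}
\item In the contraction case, the hypothesis is applied once, to the premise, with $e_{k+1} = e_k$. The two copies of $\Delta_k$ are then exchanged into place and contracted pairwise using the ideal property.
\item Binders ($\lambda$, the let forms, the case branches) are handled by substituting each bound variable by itself via the variable rule.
\item Single substitution follows by substituting every variable other than $x$ by itself.
\end{itemize}
With that strengthening, the rest of your case analysis carries over essentially unchanged.
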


Due to the presence of an exlicit contraction rule,
we cannot prove this theorem directly.
Instead we must strengthen the inductive hypothesis to make the structural
induction work out.

\begin{theorem}[Simultaneous Subsitution]
  \label{athm:multi-subst}
  Suppose that $  \rho  \mid  \mathsf{M}   \odot   \Gamma  \vdash_{ \mathsf{m} }  \gyarunt{t}  \colon  \gyarunt{T}  $
  with $ \rho = \gyarunt{r_{{\mathrm{1}}}}  \gyarusym{,}  \mathellipsis  \gyarusym{,}  \gyarunt{r_{\gyarumv{k}}} $,
  $ \mathsf{M} = \mathsf{m}_{{\mathrm{1}}}  \gyarusym{,}  \mathellipsis  \gyarusym{,}  \mathsf{m}_{\gyarumv{k}} $ and
  $ \Gamma = \gyarumv{x_{{\mathrm{1}}}}  \gyarusym{:}  \gyarunt{A_{{\mathrm{1}}}}  \gyarusym{,}  \mathellipsis  \gyarusym{,}  \gyarumv{x_{\gyarumv{k}}}  \gyarusym{:}  \gyarunt{A_{\gyarumv{k}}} $.
  Suppose that for each $ i \in \{1 , \mathellipsis, k \} $ we have a term
  $  \sigma_{\gyarumv{i}}  \mid  \mathsf{N}_{\gyarumv{i}}   \odot   \Delta_{\gyarumv{i}}  \vdash_{ \mathsf{m}_{\gyarumv{i}} }  \gyarunt{e_{\gyarumv{i}}}  \colon  \gyarunt{A_{\gyarumv{i}}}  $.
  Then
  \[
	  \gyarunt{r_{{\mathrm{1}}}}   \sigma_{{\mathrm{1}}}   \gyarusym{,}  \mathellipsis  \gyarusym{,}   \gyarunt{r_{\gyarumv{k}}}   \sigma_{\gyarumv{k}}   \mid  \mathsf{N}_{{\mathrm{1}}}  \gyarusym{,}  \mathellipsis  \gyarusym{,}  \mathsf{N}_{\gyarumv{k}}   \odot   \Delta_{{\mathrm{1}}}  \gyarusym{,}  \mathellipsis  \gyarusym{,}  \Delta_{\gyarumv{k}}  \vdash_{ \mathsf{m} }  \gyarusym{[}  \gyarunt{e_{{\mathrm{1}}}}  \gyarusym{/}  \gyarumv{x_{{\mathrm{1}}}}  \gyarusym{,} \, .. \, \gyarusym{,}  \gyarunt{e_{\gyarumv{k}}}  \gyarusym{/}  \gyarumv{x_{\gyarumv{k}}}  \gyarusym{]}  \gyarunt{t}  \colon  \gyarunt{T} 
  \]
  We may abbreviate this term as $ \gyarusym{[}  \gyarunt{e_{\gyarumv{i}}}  \gyarusym{/}  \gyarumv{x_{\gyarumv{i}}}  \gyarusym{]}  \gyarunt{t} $.
\end{theorem}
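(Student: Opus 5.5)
We prove Theorem~\ref{athm:multi-subst} by induction on the derivation of $ \rho \mid \mathsf{M} \odot \Gamma \vdash_{\mathsf{m}} t \colon T $, with the substituted terms $ e_i $ and their derivations held fixed but the context split arbitrary. The simultaneous form is what makes contraction tractable: in the \textsc{Cont} case two variables $ x, y $ of mode $ \mathsf{n} $ with grades $ r_1, r_2 \in \Cont(\mathsf{n}) $ are merged into $ z $ at grade $ r_1 + r_2 $, and we must substitute one term $ e $ for $ z $. We apply the induction hypothesis to the premise, substituting \emph{two fresh copies} $ e $ and $ e' $ of the same derivation (with disjoint renamed contexts $ \Delta $, $ \Delta' $) for $ x $ and $ y $. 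This yields a judgment with the context segments $ r_1\sigma \odot \Delta $ and $ r_2 \sigma \odot \Delta' $. We then contract each pair of corresponding variables of $ \Delta $ and $ \Delta' $, one at a time, to get back $ (r_1 + r_2)\sigma \odot \Delta $ by distributivity.

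The key check is that each such contraction is permitted. A variable of $ \Delta $ has mode $ \mathsf{n}_j \ge \mathsf{n} $ and grade $ s_j $. The two grades being contracted are $ \phi_{\mathsf{n},\mathsf{n}_j}(r_1) s_j $ and $ \phi_{\mathsf{n},\mathsf{n}_j}(r_2) s_j $. Since $ r_1, r_2 \in \Cont(\mathsf{n}) $ and $ \phi_{\mathsf{n},\mathsf{n}_j} $ is a mode morphism, their images lie in $ \Cont(\mathsf{n}_j) $. The ideal property (closure under right multiplication by $ s_j $) then puts both products in $ \Cont(\mathsf{n}_j) $. Finally, additivity of $ \phi $ and right distributivity give $ \phi(r_1)s_j + \phi(r_2)s_j = \phi(r_1+r_2)s_j $. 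The resulting term is $ [e/z]t $ up to renaming the contracted variables back, which matches the term in the statement. This is exactly the motivation for ideals given in \S\ref{sec:control}, and it is the step I expect to be the main obstacle, mostly through bookkeeping of exchange and fresh names.

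The remaining cases are routine.
\begin{itemize}
\item \textbf{\textsc{Var}.} The result is $ e_1 $ itself at grade $ 1\sigma_1 = \sigma_1 $, using $ \phi(1) = 1 $.
\item \textbf{\textsc{Weak}.} A weakened variable $ y $ of mode $ \mathsf{n} $ with $ \Weak(\mathsf{n}) $ has grade $ 0 $ and is absent from $ t $, so its substituted term disappears. We must produce $ 0\sigma \odot \Delta $. Every entry is $ \phi(0)s_j = 0 $, and $ \Weak(\mathsf{n}_j) $ holds because $ \Weak(\mathsf{n}) \to \Weak(\mathsf{n}_j) $ along the mode morphism. So we weaken each variable of $ \Delta $ individually; the weakening side condition $ \mathsf{n}_j \ge \mathsf{m} $ holds by independence.
\item \textbf{\textsc{Sub}.} Use monotonicity of $ \phi $ and of multiplication to get $ r_i\sigma_i \le q_i \sigma_i $ entrywise, then apply \textsc{Sub}.
\end{itemize}

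For the multiplicative rules (\textsc{Pair}, \textsc{Fun.e}, \textsc{Unit.e}, \textsc{Drop}, \textsc{Sum}), the context is split between premises, and some premises have their grades scaled by $ q $. We partition the family $ (e_i) $ along the split and apply the induction hypothesis to each premise. The rule is then reapplied, using:
\begin{itemize}
\item associativity $ q(r_i\sigma_i) = (q r_i)\sigma_i $, which follows from $ \phi $ preserving multiplication and from functoriality $ \phi_{\mathsf{m},\mathsf{n}_j} = \phi_{\mathsf{n},\mathsf{n}_j} \circ \phi_{\mathsf{m},\mathsf{n}} $;
\item the independence invariant, which guarantees all products are defined.
\end{itemize}
Binders (\textsc{Fun.i}, pattern lets, case) extend the substitution with the identity substitution $ y/y $ at grade $ 1 $; its well-typedness is the \textsc{Var} rule. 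In \textsc{Sum.e} both branches share the context, so the same substituted family is used in each.

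\textsc{Raise.i} needs the context modes to satisfy $ \mathsf{N}_i \ge \mathsf{n} $. This follows from $ \mathsf{N}_i \ge \mathsf{m}_i \ge \mathsf{n} $.

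Binary substitution is the special case where every $ e_i $ other than the one for $ x $ is the variable itself at grade $ 1 $.
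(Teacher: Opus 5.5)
Your proof is correct and is essentially the paper's: induction on the derivation of $t$, with \textsc{Cont} handled by substituting the same term for both contracted variables and then re-contracting via the ideal property, and with binders handled by extending the family with variables at grade $1$. Your renamed copy $e'$, your explicit use of $\phi_{\mathsf n,\mathsf n_j}(\Cont(\mathsf n))\subseteq\Cont(\mathsf n_j)$, and your appeal to additivity of $\phi$ and distributivity are more careful than the paper, which substitutes $e_k$ twice into a context with $\Delta_k$ duplicated and only mentions $\Cont(\mathsf m)$; the one fix is that the family $(e_i)$ cannot be ``held fixed'', because your \textsc{Cont} and binder cases apply the induction hypothesis to different families, so it must quantify over all of them.
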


\begin{proof}
  By induction on the derivation of $  \rho  \mid  \mathsf{M}   \odot   \Gamma  \vdash_{ \mathsf{m} }  \gyarunt{t}  \colon  \gyarunt{T}  $.

  \begin{case}[\normalfont\rulename{gyaru}{term}{var}\bf]\rm
    In this case $ k = 1 $ and $ T = A_1 $ and the derivation is
    \[
      \inferrule*
      {
      }
      {
          1   \mid  \mathsf{m}_{{\mathrm{1}}}   \odot   \gyarumv{x_{{\mathrm{1}}}}  \gyarusym{:}  \gyarunt{A_{{\mathrm{1}}}}  \vdash_{ \mathsf{m} }  \gyarumv{x_{{\mathrm{1}}}}  \colon  \gyarunt{A_{{\mathrm{1}}}} 
      }.
    \]
    We must show that $  \sigma_{{\mathrm{1}}}  \mid  \mathsf{N}_{{\mathrm{1}}}   \odot   \Delta_{{\mathrm{1}}}  \vdash_{ \mathsf{m}_{{\mathrm{1}}} }  \gyarunt{e_{{\mathrm{1}}}}  \colon  \gyarunt{A_{{\mathrm{1}}}}  $ which is an
    assumption.

  \end{case}
  \begin{case}[\normalfont\rulename{gyaru}{term}{weak}\bf]\rm
    In this case the derivation is
    \[
      \inferrule*
      {
        \Weak (\mode m_{k+1})
        \and
        \mode m \le \mode m_{k+1}
        \and
        A_{k+1} \in \Type (\mode m_{k+1})
        \and
         \rho  \mid  \mathsf{M}   \odot   \Gamma  \vdash_{ \mathsf{m} }  \gyarunt{t}  \colon  \gyarunt{T} 
      }
      {
        \rho, 0 \mid \mode M, \mode m_{k+1} \at \Gamma, x_{k+1} : A_{k+1}
        \proves_{\mode m} t : T
      }
    \]
    The inductive hypothesis  is
    \[
      r_1 \cdot \sigma_1, \mathellipsis, r_k \cdot \sigma_k
      \mid \mode N_1 , \mathellipsis, \mode N_k
      \at  \Delta_1, \mathellipsis, \Delta_k
      \proves \gyarusym{[}  \gyarunt{e_{{\mathrm{1}}}}  \gyarusym{/}  \gyarumv{x_{{\mathrm{1}}}}  \gyarusym{,} \, .. \, \gyarusym{,}  \gyarunt{e_{\gyarumv{k}}}  \gyarusym{/}  \gyarumv{x_{\gyarumv{k}}}  \gyarusym{]}  \gyarunt{t} : T
    \]
    and we need to show that
    \[
      r_1 \cdot \sigma_1, \mathellipsis, r_k \cdot \sigma_k, 0 \cdot \sigma_{k+1}
      \mid \mode N_1 , \mathellipsis, \mode N_k, \mode N_{k+1}
      \at  \Delta_1, \mathellipsis, \Delta_k, \Delta_{k+1}
      \proves [e_1/x_1, .., e_k/x_k,e_{k+1}/x_{k+1}]t : T.
    \]
    From
    $ \rho_{k+1} \mid \mode N_{k+1} \at \Delta_{k+1}
    \proves_{\mode m_{k+1}} e_{k+1} : A_{k+1} $
    it follows that $ \mode N \ge \mode m_{k+1} $.
    Together with $ \Weak \mode m_{k+1} $,
    this implies that $ \Weak (\mode n) $ holds for each $ \mode n $ in $ \mode N_{k+1} $.
    Hence we may apply a sequence of weakenings to to weaken by each type
    occuring in $ \Delta_{k+1} $ to obtain
    \begin{mathpar}
      r_1 \cdot \sigma_1, \mathellipsis, r_k \cdot \sigma_k, 0 \cdot \sigma_{k+1}
      \mid \mode N_1 , \mathellipsis, \mode N_k, \mode N_{k+1}
      \at  \Delta_1, \mathellipsis, \Delta_k, \Delta_{k+1}
      \proves [e_1/x_1, .., e_k/x_k]t : T.
    \end{mathpar}
    which is the desired judgment since $ x_{k+1} $ does not occur freely in $ t $.

  \end{case}
  \begin{case}[\normalfont\rulename{gyaru}{term}{sub}\bf]\rm
    In this case we have $ \rho' = r_1', \mathellipsis, r_k' $
    and $ r_i' \le r_i $ for each $ i $ and the derivation is
    \begin{mathpar}
      \inferrule*
      {
        \rho' \mid \mode M \at \Gamma \proves_{\mode m} t : T
      }
      {
        \rho \mid \mode M \at \Gamma \proves_{\mode m} t : T
      }
    \end{mathpar}
    By monotonicity of multiplication, we have
    $   \gyarunt{r'_{\gyarumv{i}}}   \sigma_{\gyarumv{i}}   \le   \gyarunt{r_{\gyarumv{i}}}   \sigma_{\gyarumv{i}}   $ for each $ i $
    and hence $   \gyarunt{r'_{{\mathrm{1}}}}   \sigma_{{\mathrm{1}}}   \gyarusym{,}  \mathellipsis  \gyarusym{,}   \gyarunt{r'_{\gyarumv{k}}}   \sigma_{\gyarumv{k}}   \le   \gyarunt{r_{{\mathrm{1}}}}   \sigma_{{\mathrm{1}}}   \gyarusym{,}  \mathellipsis  \gyarusym{,}   \gyarunt{r_{\gyarumv{k}}}   \sigma_{\gyarumv{k}}   $.
    The following application of the \rulename{gyaru}{term}{sub} rule
    \begin{mathpar}
      \inferrule*
      {
          \gyarunt{r'_{{\mathrm{1}}}}   \sigma_{{\mathrm{1}}}   \gyarusym{,}  \mathellipsis  \gyarusym{,}   \gyarunt{r'_{\gyarumv{k}}}   \sigma_{\gyarumv{k}}   \mid  \mathsf{N}_{{\mathrm{1}}}  \gyarusym{,}  \mathellipsis  \gyarusym{,}  \mathsf{N}_{\gyarumv{k}}   \odot   \Delta_{{\mathrm{1}}}  \gyarusym{,}  \mathellipsis  \gyarusym{,}  \Delta_{\gyarumv{k}}  \vdash_{ \mathsf{m} }  \gyarusym{[}  \gyarunt{e_{{\mathrm{1}}}}  \gyarusym{/}  \gyarumv{x_{{\mathrm{1}}}}  \gyarusym{,} \, .. \, \gyarusym{,}  \gyarunt{e_{\gyarumv{k}}}  \gyarusym{/}  \gyarumv{x_{\gyarumv{k}}}  \gyarusym{]}  \gyarunt{t}  \colon  \gyarunt{T} 
        \and
          \gyarunt{r'_{{\mathrm{1}}}}   \sigma_{{\mathrm{1}}}   \gyarusym{,}  \mathellipsis  \gyarusym{,}   \gyarunt{r'_{\gyarumv{k}}}   \sigma_{\gyarumv{k}}   \le   \gyarunt{r_{{\mathrm{1}}}}   \sigma_{{\mathrm{1}}}   \gyarusym{,}  \mathellipsis  \gyarusym{,}   \gyarunt{r_{\gyarumv{k}}}   \sigma_{\gyarumv{k}}  
      }
      {
          \gyarunt{r_{{\mathrm{1}}}}   \sigma_{{\mathrm{1}}}   \gyarusym{,}  \mathellipsis  \gyarusym{,}   \gyarunt{r_{\gyarumv{k}}}   \sigma_{\gyarumv{k}}   \mid  \mathsf{N}_{{\mathrm{1}}}  \gyarusym{,}  \mathellipsis  \gyarusym{,}  \mathsf{N}_{\gyarumv{k}}   \odot   \Delta_{{\mathrm{1}}}  \gyarusym{,}  \mathellipsis  \gyarusym{,}  \Delta_{\gyarumv{k}}  \vdash_{ \mathsf{m} }  \gyarusym{[}  \gyarunt{e_{{\mathrm{1}}}}  \gyarusym{/}  \gyarumv{x_{{\mathrm{1}}}}  \gyarusym{,} \, .. \, \gyarusym{,}  \gyarunt{e_{\gyarumv{k}}}  \gyarusym{/}  \gyarumv{x_{\gyarumv{k}}}  \gyarusym{]}  \gyarunt{t}  \colon  \gyarunt{T} 
      }
    \end{mathpar}
    yields the desired result.

  \end{case}
  \begin{case}[\normalfont\rulename{gyaru}{term}{cont}\bf]\rm
    The derivation has the shape
    \begin{mathpar}
      \inferrule*
      {
        \rho, r_{k}, r_{k+1}
        \mid \mode M, \mode m_{k}, \mode m_{k}
        \at \Gamma, x_k : A_{k}, x_{k+1} : A_{k}
        \proves_{\mode m} t: T
        \and
        r_{k}, r_{k+1} \in \Cont \mode m
      }
      {
        \rho, r_{k} + r_{k+1}
        \mid \mode M, \mode m_{k}
        \at \Gamma, z : A_{k}
        \proves_{\mode m} [z/x_k, z/x_{k+1}] t : T
      }
    \end{mathpar}
    We apply the inductive hypothesis with $ e_{k+1} = e_k $ to obtain
    \[
      r_1 \cdot \sigma_1, \mathellipsis, r_{k} \cdot \sigma_k , r_{k+1} \cdot \sigma_{k}
      \mid \mode N_1, \mathellipsis, \mode N_k, \mode N_k
      \at \Delta_1, \mathellipsis, \Delta_k, \Delta_k
      \proves_{\mode m} [e_1/x_1, \mathellipsis, e_k/x_k, e_k/x_{k+1}] t : T
    \]
    Since $ \Cont \mode m $ is an ideal, each entry in $ r_k \cdot \sigma_k $
    and $ r_{k+1} \cdot \sigma_k $is in $ \Cont \mode m $.
    A sequence of exchanges followed by a sequence of contractions yields
    \[
      r_1 \cdot \sigma_1, \mathellipsis, (r_{k} + r_{k+1}) \cdot \sigma_k
      \mid \mode N_1, \mathellipsis, \mode N_k
      \at \Delta_1, \mathellipsis, \Delta_k
      \proves_{\mode m} [e_1/x_1, \mathellipsis, e_k/x_k, e_k/x_{k+1}] t : T.
    \]
    The term in the desired result
    $ [e_1 / x_1, \mathellipsis, e_{k-1}/x_{k-1}, e_k/z][z/x_k, z/x_{k+1}]t $
    which is equal to the one above.

  \end{case}
  \begin{case}[\normalfont\rulename{gyaru}{term}{unitI}\bf]\rm
    There is nothing to prove in this case.

  \end{case}
  \begin{case}[\normalfont\rulename{gyaru}{term}{unitE}\bf]\rm
    In this case the derivation is
    \begin{mathpar}
      \inferrule*{
         \rho  \mid  \mathsf{M}   \odot   \Gamma  \vdash_{ \mathsf{m} }  \gyarunt{t}  \colon  \gyarunt{T} 
        \and
         \rho'  \mid  \mathsf{M}'   \odot   \Gamma'  \vdash_{ \mathsf{m} }  \gyarunt{t'}  \colon   \mathbf{I}_ \mathsf{m}  
      }
      {
         \rho  \gyarusym{,}   \gyarunt{q}   \rho'   \mid  \mathsf{M}  \gyarusym{,}  \mathsf{M}'   \odot   \Gamma  \gyarusym{,}  \Gamma'  \vdash_{ \mathsf{m} }   \mathbin{\mathsf{let} } _{@  \gyarunt{q} }   \star_ \mathsf{m}   =  \gyarunt{t'}   \mathbin{\mathsf{in} }   \gyarunt{t}   \colon  \gyarunt{T} 
      }
    \end{mathpar}
    Write $ \rho = r_1, \mathellipsis, r_k $ and
    $ \rho' = r_1', \mathellipsis, r_{\ell}' $;
    Write $ \Gamma = x_1 : A_1, \mathellipsis, x_k : A_k $
    and $ \Gamma' = x_1' : A_1', \mathellipsis, x_\ell' : A_\ell' $;
    Write $ \mode M = \mode m_1, \mathellipsis, \mode m_k $
    and $ \mode M' = \mode m_1', \mathellipsis, \mode m_\ell' $.
    Given terms
    $  \sigma_{\gyarumv{i}}  \mid  \mathsf{N}_{\gyarumv{i}}   \odot   \Delta_{\gyarumv{i}}  \vdash_{ \mathsf{m}_{\gyarumv{i}} }  \gyarunt{e_{\gyarumv{i}}}  \colon  \gyarunt{A_{\gyarumv{i}}}  $ ($ i \in \{1, \mathellipsis, k\} $)
    and
    $  \sigma'_{\gyarumv{i}}  \mid  \mathsf{N}_{\gyarumv{i}}   \odot   \Delta'_{\gyarumv{i}}  \vdash_{ \mathsf{m}'_{\gyarumv{i}} }  \gyarunt{e'_{\gyarumv{i}}}  \colon  \gyarunt{A'_{\gyarumv{i}}}  $
    ($ i \in \{1, \mathellipsis, \ell \} $),
    we must show that
    \begin{align*}
        r_1 \sigma_1, \mathellipsis, r_k \sigma_k ,
        q r_1' \cdot \sigma_1', \mathellipsis, q r_k \cdot \sigma_\ell'
        \mid
        \mode N_1, \mathellipsis, \mode N_k,
        \mode N_1', \mathellipsis, \mode N_\ell'
      &
        \at
        \Delta_1, \mathellipsis, \Delta_k,
        \Delta_1', \mathellipsis, \Delta_\ell'
      \\
      &
        \proves_{\mode m}
         \mathbin{\mathsf{let} } _{@  \gyarunt{q} }   \star_ \mathsf{m}   =  \gyarusym{[}  \gyarunt{e'_{\gyarumv{i}}}  \gyarusym{/}  \gyarumv{x'_{\gyarumv{i}}}  \gyarusym{]}  \gyarunt{t'}   \mathbin{\mathsf{in} }   \gyarusym{[}  \gyarunt{e_{\gyarumv{i}}}  \gyarusym{/}  \gyarumv{x_{\gyarumv{i}}}  \gyarusym{]}  \gyarunt{t}  : T
    \end{align*}
    By induction we have
    \begin{mathpar}
        \gyarunt{r_{{\mathrm{1}}}}   \sigma_{{\mathrm{1}}}   \gyarusym{,}  \mathellipsis  \gyarusym{,}   \gyarunt{r_{\gyarumv{k}}}   \sigma_{\gyarumv{k}}   \mid  \mathsf{N}_{{\mathrm{1}}}  \gyarusym{,}  \mathellipsis  \gyarusym{,}  \mathsf{N}_{\gyarumv{k}}   \odot   \Delta_{{\mathrm{1}}}  \gyarusym{,}  \mathellipsis  \gyarusym{,}  \Delta_{\gyarumv{k}}  \vdash_{ \mathsf{m} }  \gyarusym{[}  \gyarunt{e_{\gyarumv{i}}}  \gyarusym{/}  \gyarumv{x_{\gyarumv{i}}}  \gyarusym{]}  \gyarunt{t}  \colon  \gyarunt{T} 
      \and
        \gyarunt{r'_{{\mathrm{1}}}}   \sigma'_{{\mathrm{1}}}   \gyarusym{,}  \mathellipsis  \gyarusym{,}   \gyarunt{r'_{\gyarumv{\ell}}}   \sigma'_{\gyarumv{\ell}}   \mid  \mathsf{N}'_{{\mathrm{1}}}  \gyarusym{,}  \mathellipsis  \gyarusym{,}  \mathsf{N}'_{\gyarumv{\ell}}   \odot   \Delta'_{{\mathrm{1}}}  \gyarusym{,}  \mathellipsis  \gyarusym{,}  \Delta'_{\gyarumv{\ell}}  \vdash_{ \mathsf{m} }  \gyarusym{[}  \gyarunt{e'_{\gyarumv{i}}}  \gyarusym{/}  \gyarumv{x'_{\gyarumv{i}}}  \gyarusym{]}  \gyarunt{t'}  \colon   \mathbf{I}_ \mathsf{m}  
    \end{mathpar}
    applying the rule \rulename{gyaru}{term}{unitE} to these judgments,
    we obtain the desired result, since
    $  \gyarunt{q}   \gyarusym{(}   \gyarunt{r'_{{\mathrm{1}}}}   \sigma'_{{\mathrm{1}}}   \gyarusym{,}  \mathellipsis  \gyarusym{,}   \gyarunt{r'_{\gyarumv{\ell}}}   \sigma'_{\gyarumv{\ell}}   \gyarusym{)}  =   \gyarunt{q}   \gyarunt{r'_{{\mathrm{1}}}}    \sigma'_{{\mathrm{1}}}   \gyarusym{,}  \mathellipsis  \gyarusym{,}    \gyarunt{q}   \gyarunt{r'_{\gyarumv{\ell}}}    \sigma'_{\gyarumv{\ell}}  $.

  \end{case}
  \begin{case}[\normalfont\rulename{gyaru}{term}{arrowI}\bf]\rm
    The derivation is
    \[
      \inferrule*
      {
         \rho  \gyarusym{,}  \gyarunt{q}  \mid  \mathsf{M}  \gyarusym{,}  \mathsf{n}   \odot   \Gamma  \gyarusym{,}  \gyarumv{y}  \gyarusym{:}  \gyarunt{B}  \vdash_{ \mathsf{m} }  \gyarunt{t}  \colon  \gyarunt{T} 
      }
      {
         \rho  \mid  \mathsf{M}   \odot   \Gamma  \vdash_{ \mathsf{m} }  \lambda  \gyarumv{y}  \gyarusym{.}  \gyarunt{t}  \colon   \gyarunt{B} ^{ \gyarunt{q} : \mathsf{n} } \multimap  \gyarunt{T}  
      }
    \]
    With $ \rho = r_1, \mathellipsis, r_k  $
    $ \mode M = \mathsf{m}_{{\mathrm{1}}}  \gyarusym{,}  \mathellipsis  \gyarusym{,}  \mathsf{m}_{\gyarumv{k}} $
    and $ \Gamma = \gyarumv{x_{{\mathrm{1}}}}  \gyarusym{:}  \gyarunt{A_{{\mathrm{1}}}}  \gyarusym{,}  \mathellipsis  \gyarusym{,}  \gyarumv{x_{\gyarumv{k}}}  \gyarusym{:}  \gyarunt{A_{\gyarumv{k}}} $.
    $  \sigma_{\gyarumv{i}}  \mid  \mathsf{N}_{\gyarumv{i}}   \odot   \Delta_{\gyarumv{i}}  \vdash_{ \mathsf{m}_{\gyarumv{i}} }  \gyarunt{e_{\gyarumv{i}}}  \colon  \gyarunt{A_{\gyarumv{i}}}  $ ($ i \in \{1, \mathellipsis, k\} $)
    and we must prove
    \[
        \gyarunt{r_{{\mathrm{1}}}}   \sigma_{{\mathrm{1}}}   \gyarusym{,}  \mathellipsis  \gyarusym{,}   \gyarunt{r_{\gyarumv{k}}}   \sigma_{\gyarumv{k}}   \mid  \mathsf{N}_{{\mathrm{1}}}  \gyarusym{,}  \mathellipsis  \gyarusym{,}  \mathsf{N}_{\gyarumv{k}}   \odot   \Delta_{{\mathrm{1}}}  \gyarusym{,}  \mathellipsis  \gyarusym{,}  \Delta_{\gyarumv{k}}  \vdash_{ \mathsf{m} }  \lambda  \gyarumv{y}  \gyarusym{.}  \gyarusym{[}  \gyarunt{e_{\gyarumv{i}}}  \gyarusym{/}  \gyarumv{x_{\gyarumv{i}}}  \gyarusym{]}  \gyarunt{t}  \colon   \gyarunt{B} ^{ \gyarunt{q} : \mathsf{n} } \multimap  \gyarunt{T}  
    \]
    We apply the inductive hypothesis with $ A_{k+1} = B $
    and $ e_{k+1} = (  1   \mid  \mathsf{n}   \odot   \gyarumv{y}  \gyarusym{:}  \gyarunt{B}  \vdash_{ \mathsf{n} }  \gyarumv{y}  \colon  \gyarunt{B} ) $
    to obtain
    \[
        \gyarunt{r_{{\mathrm{1}}}}   \sigma_{{\mathrm{1}}}   \gyarusym{,}  \mathellipsis  \gyarusym{,}   \gyarunt{r_{\gyarumv{k}}}   \sigma_{\gyarumv{k}}   \gyarusym{,}  \gyarunt{q}  \mid  \mathsf{N}_{{\mathrm{1}}}  \gyarusym{,}  \mathellipsis  \gyarusym{,}  \mathsf{N}_{\gyarumv{k}}  \gyarusym{,}  \mathsf{n}   \odot   \Delta_{{\mathrm{1}}}  \gyarusym{,}  \mathellipsis  \gyarusym{,}  \Delta_{\gyarumv{k}}  \gyarusym{,}  \gyarumv{y}  \gyarusym{:}  \gyarunt{B}  \vdash_{ \mathsf{m} }  \gyarusym{[}  \gyarunt{e_{\gyarumv{i}}}  \gyarusym{/}  \gyarumv{x_{\gyarumv{i}}}  \gyarusym{]}  \gyarunt{t}  \colon  \gyarunt{T} .
    \]
    Applying rule \rulename{gyaru}{term}{arrowI} yields the desired result.

  \end{case}
  \begin{case}[\normalfont\rulename{gyaru}{term}{arrowE}\bf]\rm
    In this case the derivation has the shape
    \begin{mathpar}
      \inferrule*
      {
         \rho  \mid  \mathsf{M}   \odot   \Gamma  \vdash_{ \mathsf{m} }  \gyarunt{t}  \colon   \gyarunt{B} ^{ \gyarunt{q} : \mathsf{m}' } \multimap  \gyarunt{T}  
        \and
         \rho'  \mid  \mathsf{M}'   \odot   \Gamma'  \vdash_{ \mathsf{m}' }  \gyarunt{t'}  \colon  \gyarunt{B} 
      }
      {
         \rho  \gyarusym{,}   \gyarunt{q}   \rho'   \mid  \mathsf{M}  \gyarusym{,}  \mathsf{M}'   \odot   \Gamma  \gyarusym{,}  \Gamma'  \vdash_{ \mathsf{m} }  \gyarunt{t} \, \gyarunt{t'}  \colon  \gyarunt{T} 
      }
    \end{mathpar}
    Write $ \rho = r_1, \mathellipsis, r_k $ and
    $ \rho' = r_1', \mathellipsis, r_{\ell}' $;
    Write $ \Gamma = x_1 : A_1, \mathellipsis, x_k : A_k $
    and $ \Gamma' = x_1' : A_1', \mathellipsis, x_\ell' : A_\ell' $;
    Write $ \mode M = \mode m_1, \mathellipsis, \mode m_k $
    and $ \mode M' = \mode m_1', \mathellipsis, \mode m_\ell' $.
    Given terms
    $  \sigma_{\gyarumv{i}}  \mid  \mathsf{N}_{\gyarumv{i}}   \odot   \Delta_{\gyarumv{i}}  \vdash_{ \mathsf{m}_{\gyarumv{i}} }  \gyarunt{e_{\gyarumv{i}}}  \colon  \gyarunt{A_{\gyarumv{i}}}  $ ($ i \in \{1, \mathellipsis, k\} $)
    and
    $  \sigma'_{\gyarumv{i}}  \mid  \mathsf{N}_{\gyarumv{i}}   \odot   \Delta'_{\gyarumv{i}}  \vdash_{ \mathsf{m}'_{\gyarumv{i}} }  \gyarunt{e'_{\gyarumv{i}}}  \colon  \gyarunt{A'_{\gyarumv{i}}}  $
    ($ i \in \{1, \mathellipsis, \ell \} $),
    we must show that
    \[
      r_1 \sigma_1, \mathellipsis, r_k \sigma_k ,
      q r_1' \cdot \sigma_1', \mathellipsis, q r_k \cdot \sigma_\ell'
      \mid
      \mode N_1, \mathellipsis, \mode N_k,
      \mode N_1', \mathellipsis, \mode N_\ell'
      \at
      \Delta_1, \mathellipsis, \Delta_k,
      \Delta_1', \mathellipsis, \Delta_\ell'
      \proves_{\mode m}
      \gyarusym{[}  \gyarunt{e_{\gyarumv{i}}}  \gyarusym{/}  \gyarumv{x_{\gyarumv{i}}}  \gyarusym{]}  \gyarunt{t} \, \gyarusym{[}  \gyarunt{e'_{\gyarumv{i}}}  \gyarusym{/}  \gyarumv{x'_{\gyarumv{i}}}  \gyarusym{]}  \gyarunt{t'} : T
    \]
    The inductive hypotheses are
    \begin{mathpar}
      r_1 \sigma_1, \mathellipsis, r_k \sigma_k ,
      \mid
      \mode N_1, \mathellipsis, \mode N_k,
      \at
      \Delta_1, \mathellipsis, \Delta_k,
      \proves_{\mode m}
      [e_i/x_i]t :  \gyarunt{B} ^{ \gyarunt{q} : \mathsf{m}' } \multimap  \gyarunt{T} 
      \and
      r_1' \cdot \sigma_1', \mathellipsis,  r_k \cdot \sigma_\ell'
      \mid
      \mode N_1', \mathellipsis, \mode N_\ell'
      \at
      \Delta_1', \mathellipsis, \Delta_\ell'
      \proves_{\mode m}
      [e_i'/x_i']t' : B
    \end{mathpar}
    Applying the rule \rulename{gyaru}{term}{arrowE} yields the desired result.

  \end{case}
  \begin{case}[\normalfont\rulename{gyaru}{term}{pairI}\bf]\rm
    In this case the derivation has the shape
    \begin{mathpar}
      \inferrule*
      {
         \rho  \mid  \mathsf{M}   \odot   \Gamma  \vdash_{ \mathsf{m} }  \gyarunt{t}  \colon  \gyarunt{T} 
        \and
         \rho'  \mid  \mathsf{M}'   \odot   \Gamma'  \vdash_{ \mathsf{m} }  \gyarunt{t'}  \colon  \gyarunt{T'} 
      }
      {
         \rho  \gyarusym{,}  \rho'  \mid  \mathsf{M}  \gyarusym{,}  \mathsf{M}'   \odot   \Gamma  \gyarusym{,}  \Gamma'  \vdash_{ \mathsf{m} }  \gyarusym{(}  \gyarunt{t}  \gyarusym{,}  \gyarunt{t'}  \gyarusym{)}  \colon   \gyarunt{T}  \otimes  \gyarunt{T'}  
      }
    \end{mathpar}
    Write $ \rho = r_1, \mathellipsis, r_k $ and
    $ \rho' = r_1', \mathellipsis, r_{\ell}' $;
    Write $ \Gamma = x_1 : A_1, \mathellipsis, x_k : A_k $
    and $ \Gamma' = x_1' : A_1', \mathellipsis, x_\ell' : A_\ell' $;
    Write $ \mode M = \mode m_1, \mathellipsis, \mode m_k $
    and $ \mode M' = \mode m_1', \mathellipsis, \mode m_\ell' $.
    Given terms
    $  \sigma_{\gyarumv{i}}  \mid  \mathsf{N}_{\gyarumv{i}}   \odot   \Delta_{\gyarumv{i}}  \vdash_{ \mathsf{m}_{\gyarumv{i}} }  \gyarunt{e_{\gyarumv{i}}}  \colon  \gyarunt{A_{\gyarumv{i}}}  $ ($ i \in \{1, \mathellipsis, k\} $)
    and
    $  \sigma'_{\gyarumv{i}}  \mid  \mathsf{N}_{\gyarumv{i}}   \odot   \Delta'_{\gyarumv{i}}  \vdash_{ \mathsf{m}'_{\gyarumv{i}} }  \gyarunt{e'_{\gyarumv{i}}}  \colon  \gyarunt{A'_{\gyarumv{i}}}  $
    ($ i \in \{1, \mathellipsis, \ell \} $),
    we must show that
    \[
      r_1 \sigma_1, \mathellipsis, r_k \sigma_k ,
      r_1' \cdot \sigma_1', \mathellipsis, r_k \cdot \sigma_\ell'
      \mid
      \mode N_1, \mathellipsis, \mode N_k,
      \mode N_1', \mathellipsis, \mode N_\ell'
      \at
      \Delta_1, \mathellipsis, \Delta_k,
      \Delta_1', \mathellipsis, \Delta_\ell'
      \proves_{\mode m}
      \gyarusym{(}  \gyarusym{[}  \gyarunt{e_{\gyarumv{i}}}  \gyarusym{/}  \gyarumv{x_{\gyarumv{i}}}  \gyarusym{]}  \gyarunt{t}  \gyarusym{,}  \gyarusym{[}  \gyarunt{e'_{\gyarumv{i}}}  \gyarusym{/}  \gyarumv{x'_{\gyarumv{i}}}  \gyarusym{]}  \gyarunt{t'}  \gyarusym{)} :  \gyarunt{T}  \otimes  \gyarunt{T'} 
    \]
    By induction we have
    \begin{mathpar}
      r_1 \sigma_1, \mathellipsis, r_k \sigma_k ,
      \mid
      \mode N_1, \mathellipsis, \mode N_k,
      \at
      \Delta_1, \mathellipsis, \Delta_k,
      \proves_{\mode m}
      \gyarusym{[}  \gyarunt{e_{\gyarumv{i}}}  \gyarusym{/}  \gyarumv{x_{\gyarumv{i}}}  \gyarusym{]}  \gyarunt{t} : T
      \and
      r_1' \cdot \sigma_1', \mathellipsis, r_k \cdot \sigma_\ell'
      \mid
      \mode N_1', \mathellipsis, \mode N_\ell'
      \at
      \Delta_1', \mathellipsis, \Delta_\ell'
      \proves_{\mode m}
      \gyarusym{[}  \gyarunt{e'_{\gyarumv{i}}}  \gyarusym{/}  \gyarumv{x'_{\gyarumv{i}}}  \gyarusym{]}  \gyarunt{t'} : T'
    \end{mathpar}
    Applying the rule \rulename{gyaru}{term}{pairI} yields the desired result.

  \end{case}
  \begin{case}[\normalfont\rulename{gyaru}{term}{pairE}\bf]\rm
    In this case the derivation is
    \begin{mathpar}
      \inferrule*{
         \rho  \gyarusym{,}  \gyarunt{q}  \gyarusym{,}  \gyarunt{q}  \mid  \mathsf{M}  \gyarusym{,}  \mathsf{n}  \gyarusym{,}  \mathsf{n}   \odot   \Gamma  \gyarusym{,}  \gyarumv{y_{{\mathrm{1}}}}  \gyarusym{:}  \gyarunt{T_{{\mathrm{1}}}}  \gyarusym{,}  \gyarumv{y_{{\mathrm{2}}}}  \gyarusym{:}  \gyarunt{T_{{\mathrm{2}}}}  \vdash_{ \mathsf{m} }  \gyarunt{t}  \colon  \gyarunt{T} 
        \and
         \rho'  \mid  \mathsf{M}'   \odot   \Gamma'  \vdash_{ \mathsf{n} }  \gyarunt{t'}  \colon   \gyarunt{T_{{\mathrm{1}}}}  \otimes  \gyarunt{T_{{\mathrm{2}}}}  
      }
      {
         \rho  \gyarusym{,}   \gyarunt{q}   \rho'   \mid  \mathsf{M}  \gyarusym{,}  \mathsf{M}'   \odot   \Gamma  \gyarusym{,}  \Gamma'  \vdash_{ \mathsf{m} }   \mathbin{\mathsf{let} } _{@  \gyarunt{q} }  \gyarusym{(}  \gyarumv{y_{{\mathrm{1}}}}  \gyarusym{,}  \gyarumv{y_{{\mathrm{2}}}}  \gyarusym{)}  =  \gyarunt{t'}   \mathbin{\mathsf{in} }   \gyarunt{t}   \colon  \gyarunt{T} 
      }
    \end{mathpar}
    Write $ \rho = r_1, \mathellipsis, r_k $ and
    $ \rho' = r_1', \mathellipsis, r_{\ell}' $;
    Write $ \Gamma = x_1 : A_1, \mathellipsis, x_k : A_k $
    and $ \Gamma' = x_1' : A_1', \mathellipsis, x_\ell' : A_\ell' $;
    Write $ \mode M = \mode m_1, \mathellipsis, \mode m_k $
    and $ \mode M' = \mode m_1', \mathellipsis, \mode m_\ell' $.
    Given terms
    $  \sigma_{\gyarumv{i}}  \mid  \mathsf{N}_{\gyarumv{i}}   \odot   \Delta_{\gyarumv{i}}  \vdash_{ \mathsf{m}_{\gyarumv{i}} }  \gyarunt{e_{\gyarumv{i}}}  \colon  \gyarunt{A_{\gyarumv{i}}}  $ ($ i \in \{1, \mathellipsis, k\} $)
    and
    $  \sigma'_{\gyarumv{i}}  \mid  \mathsf{N}_{\gyarumv{i}}   \odot   \Delta'_{\gyarumv{i}}  \vdash_{ \mathsf{m}'_{\gyarumv{i}} }  \gyarunt{e'_{\gyarumv{i}}}  \colon  \gyarunt{A'_{\gyarumv{i}}}  $
    ($ i \in \{1, \mathellipsis, \ell \} $),
    we must show that
    \begin{align*}
      r_1 \sigma_1, \mathellipsis, r_k \sigma_k ,
      q r_1' \cdot \sigma_1', \mathellipsis, q r_k \cdot \sigma_\ell'
      \mid
      \mode N_1, \mathellipsis, \mode N_k,
      \mode N_1', \mathellipsis, \mode N_\ell'
      &
        \at
        \Delta_1, \mathellipsis, \Delta_k,
        \Delta_1', \mathellipsis, \Delta_\ell'
      \\
      &
        \proves_{\mode m}
         \mathbin{\mathsf{let} } _{@  \gyarunt{q} }  \gyarusym{(}  \gyarumv{y_{{\mathrm{1}}}}  \gyarusym{,}  \gyarumv{y_{{\mathrm{2}}}}  \gyarusym{)}  =  \gyarusym{[}  \gyarunt{e'_{\gyarumv{i}}}  \gyarusym{/}  \gyarumv{x'_{\gyarumv{i}}}  \gyarusym{]}  \gyarunt{t'}   \mathbin{\mathsf{in} }   \gyarusym{[}  \gyarunt{e_{\gyarumv{i}}}  \gyarusym{/}  \gyarumv{x_{\gyarumv{i}}}  \gyarusym{]}  \gyarunt{t}  : T
    \end{align*}
    Applying the inductive hypothesis to $  \rho'  \mid  \mathsf{M}'   \odot   \Gamma'  \vdash_{ \mathsf{n} }  \gyarunt{t'}  \colon   \gyarunt{T_{{\mathrm{1}}}}  \otimes  \gyarunt{T_{{\mathrm{2}}}}   $,
    we obtain
    \begin{mathpar}
      r_1' \cdot \sigma_1', \mathellipsis, r_k \cdot \sigma_\ell'
      \mid
      \mode N_1', \mathellipsis, \mode N_\ell'
      \at
      \Delta_1', \mathellipsis, \Delta_\ell'
      \proves_{\mode m}
      \gyarusym{[}  \gyarunt{e'_{\gyarumv{i}}}  \gyarusym{/}  \gyarumv{x'_{\gyarumv{i}}}  \gyarusym{]}  \gyarunt{t'} :  \gyarunt{T_{{\mathrm{1}}}}  \otimes  \gyarunt{T_{{\mathrm{2}}}} 
    \end{mathpar}
    Furthermore, applying the inductive hypothesis to the judgment
    $  \rho  \gyarusym{,}  \gyarunt{q}  \gyarusym{,}  \gyarunt{q}  \mid  \mathsf{M}  \gyarusym{,}  \mathsf{n}  \gyarusym{,}  \mathsf{n}   \odot   \Gamma  \gyarusym{,}  \gyarumv{y_{{\mathrm{1}}}}  \gyarusym{:}  \gyarunt{T_{{\mathrm{1}}}}  \gyarusym{,}  \gyarumv{y_{{\mathrm{2}}}}  \gyarusym{:}  \gyarunt{T_{{\mathrm{2}}}}  \vdash_{ \mathsf{m} }  \gyarunt{t}  \colon  \gyarunt{T}  $
    with the terms $ e_1, \mathellipsis, e_k $ and
    $   1   \mid  \mathsf{n}   \odot   \gyarumv{y_{\gyarumv{i}}}  \gyarusym{:}  \gyarunt{T_{\gyarumv{i}}}  \vdash_{ \mathsf{n} }  \gyarumv{y_{\gyarumv{i}}}  \colon  \gyarunt{T_{\gyarumv{i}}}  $ ($ i \in \{1, 2\} $)
    yields the judgment
    \[
      r_1 \sigma_1, \mathellipsis, r_k \sigma_k, q, q
      \mid
      \mode N_1, \mathellipsis, \mode N_k, \mode n, \mode n
      \at
      \Delta_1, \mathellipsis, \Delta_k, y_1 : T_1, y_2 : T_2
      \proves_{\mode m}
      \gyarusym{[}  \gyarunt{e_{\gyarumv{i}}}  \gyarusym{/}  \gyarumv{x_{\gyarumv{i}}}  \gyarusym{]}  \gyarunt{t} : T
    \]
    Now, applying the rule \rulename{gyaru}{term}{pairE} to the two inductive
    hypotheses yields the desired result.

  \end{case}
  \begin{case}[\normalfont\rulename{gyaru}{term}{sumIL}\bf]\rm
    The derivation has the form
    \[
      \inferrule*
      {
         \gyarunt{T_{{\mathrm{2}}}}  \in \mathsf{Type}( \mathsf{m} ) 
        \and
         \rho  \mid  \mathsf{M}   \odot   \Gamma  \vdash_{ \mathsf{m} }  \gyarunt{t}  \colon  \gyarunt{T_{{\mathrm{1}}}} 
      }
      {
         \rho  \mid  \mathsf{M}   \odot   \Gamma  \vdash_{ \mathsf{m} }  \operatorname{\mathsf{inl} } \, \gyarunt{t}  \colon   \gyarunt{T_{{\mathrm{1}}}}  \oplus  \gyarunt{T_{{\mathrm{2}}}}  
      }
    \]
    The inductive hypothesis is
    \[
        \gyarunt{r_{{\mathrm{1}}}}   \sigma_{{\mathrm{1}}}   \gyarusym{,}  \mathellipsis  \gyarusym{,}   \gyarunt{r_{\gyarumv{k}}}   \sigma_{\gyarumv{k}}   \mid  \mathsf{N}_{{\mathrm{1}}}  \gyarusym{,}  \mathellipsis  \gyarusym{,}  \mathsf{N}_{\gyarumv{k}}   \odot   \Delta_{{\mathrm{1}}}  \gyarusym{,}  \mathellipsis  \gyarusym{,}  \Delta_{\gyarumv{k}}  \vdash_{ \mathsf{m} }  \gyarusym{[}  \gyarunt{e_{\gyarumv{i}}}  \gyarusym{/}  \gyarumv{x_{\gyarumv{i}}}  \gyarusym{]}  \gyarunt{t}  \colon  \gyarunt{T_{{\mathrm{1}}}} 
    \]
    applying the rule \rulename{gyaru}{term}{sumIL} yields the desired result.

  \end{case}
  \begin{case}[\normalfont\rulename{gyaru}{term}{sumIR}\bf]\rm
    Analogous to \rulename{gyaru}{term}{sumIL}.

  \end{case}
  \begin{case}[\normalfont\rulename{gyaru}{term}{sumE}\bf]\rm
    The derivation has the following shape
    \begin{mathpar}
      \inferrule*
      {
         \gyarunt{q}  \ge   1  
        \and
         \rho  \gyarusym{,}  \gyarunt{q}  \mid  \mathsf{M}  \gyarusym{,}  \mathsf{n}   \odot   \Gamma  \gyarusym{,}  \gyarumv{y_{{\mathrm{1}}}}  \gyarusym{:}  \gyarunt{T_{{\mathrm{1}}}}  \vdash_{ \mathsf{m} }  \gyarunt{t_{{\mathrm{1}}}}  \colon  \gyarunt{T} 
        \and
         \rho  \gyarusym{,}  \gyarunt{q}  \mid  \mathsf{M}  \gyarusym{,}  \mathsf{n}   \odot   \Gamma  \gyarusym{,}  \gyarumv{y_{{\mathrm{2}}}}  \gyarusym{:}  \gyarunt{T_{{\mathrm{2}}}}  \vdash_{ \mathsf{m} }  \gyarunt{t_{{\mathrm{2}}}}  \colon  \gyarunt{T} 
        \and
         \rho'  \mid  \mathsf{M}'   \odot   \Gamma'  \vdash_{ \mathsf{n} }  \gyarunt{t}  \colon   \gyarunt{T_{{\mathrm{1}}}}  \oplus  \gyarunt{T_{{\mathrm{2}}}}  
      }
      {
         \rho  \gyarusym{,}   \gyarunt{q}   \sigma   \mid  \mathsf{M}  \gyarusym{,}  \mathsf{M}'   \odot   \Gamma  \gyarusym{,}  \Gamma'  \vdash_{ \mathsf{m} }   \mathsf{case}_ \gyarunt{q} ( \gyarunt{t} ; \gyarumv{y_{{\mathrm{1}}}} . \gyarunt{t_{{\mathrm{1}}}} ; \gyarumv{y_{{\mathrm{2}}}} . \gyarunt{t_{{\mathrm{2}}}} )   \colon  \gyarunt{T} 
      }
    \end{mathpar}
    Write $ \rho = r_1, \mathellipsis, r_k $ and
    $ \rho' = r_1', \mathellipsis, r_{\ell}' $;
    Write $ \Gamma = x_1 : A_1, \mathellipsis, x_k : A_k $
    and $ \Gamma' = x_1' : A_1', \mathellipsis, x_\ell' : A_\ell' $;
    Write $ \mode M = \mode m_1, \mathellipsis, \mode m_k $
    and $ \mode M' = \mode m_1', \mathellipsis, \mode m_\ell' $.
    Given terms
    $  \sigma_{\gyarumv{i}}  \mid  \mathsf{N}_{\gyarumv{i}}   \odot   \Delta_{\gyarumv{i}}  \vdash_{ \mathsf{m}_{\gyarumv{i}} }  \gyarunt{e_{\gyarumv{i}}}  \colon  \gyarunt{A_{\gyarumv{i}}}  $ ($ i \in \{1, \mathellipsis, k\} $)
    and
    $  \sigma'_{\gyarumv{i}}  \mid  \mathsf{N}_{\gyarumv{i}}   \odot   \Delta'_{\gyarumv{i}}  \vdash_{ \mathsf{m}'_{\gyarumv{i}} }  \gyarunt{e'_{\gyarumv{i}}}  \colon  \gyarunt{A'_{\gyarumv{i}}}  $
    ($ i \in \{1, \mathellipsis, \ell \} $),
    we must show that
    \begin{align*}
      r_1 \sigma_1, \mathellipsis, r_k \sigma_k ,
      q r_1' \cdot \sigma_1', \mathellipsis, q r_k \cdot \sigma_\ell'
      \mid
      \mode N_1, \mathellipsis, \mode N_k,
      \mode N_1', \mathellipsis, \mode N_\ell'
      &
        \at
        \Delta_1, \mathellipsis, \Delta_k,
        \Delta_1', \mathellipsis, \Delta_\ell'
      \\
      &
        \proves_{\mode m}
         \mathsf{case}_ \gyarunt{q} ( \gyarusym{[}  \gyarunt{e'_{\gyarumv{i}}}  \gyarusym{/}  \gyarumv{x'_{\gyarumv{i}}}  \gyarusym{]}  \gyarunt{t} ; \gyarumv{y_{{\mathrm{1}}}} . \gyarusym{[}  \gyarunt{e_{\gyarumv{i}}}  \gyarusym{/}  \gyarumv{x_{\gyarumv{i}}}  \gyarusym{]}  \gyarunt{t_{{\mathrm{1}}}} ; \gyarumv{y_{{\mathrm{2}}}} . \gyarusym{[}  \gyarunt{e_{\gyarumv{i}}}  \gyarusym{/}  \gyarumv{x_{\gyarumv{i}}}  \gyarusym{]}  \gyarunt{t_{{\mathrm{2}}}} )  : T
    \end{align*}
    Applying the inductive hypothesis to $ t' $ yields the judgment
    \[
      r_1' \cdot \sigma_1', \mathellipsis, r_k \cdot \sigma_\ell'
      \mid
      \mode N_1', \mathellipsis, \mode N_\ell'
      \at
      \Delta_1', \mathellipsis, \Delta_\ell'
      \proves_{\mode m}
      \gyarusym{[}  \gyarunt{e'_{\gyarumv{i}}}  \gyarusym{/}  \gyarumv{x'_{\gyarumv{i}}}  \gyarusym{]}  \gyarunt{t} :  \gyarunt{T_{{\mathrm{1}}}}  \oplus  \gyarunt{T_{{\mathrm{2}}}} 
    \]
    Applying the inductive hypothesis to $ t_i $ ($ i \in \{1, 2\} $)
    with $ e_1, \mathellipsis, e_k $ and
    $ e_{k+1} =   1   \mid  \mathsf{n}   \odot   \gyarumv{y_{\gyarumv{i}}}  \gyarusym{:}  \gyarunt{T_{\gyarumv{i}}}  \vdash_{ \mathsf{n} }  \gyarumv{y_{\gyarumv{i}}}  \colon  \gyarunt{T_{\gyarumv{i}}}  $
    yields the judgments
    \begin{mathpar}
      r_1 \sigma_1, \mathellipsis, r_k \sigma_k, q
      \mid
      \mode N_1, \mathellipsis, \mode N_k, \mode n
      \at
      \Delta_1, \mathellipsis, \Delta_k, y_1 : T_1
      \proves_{\mode n}
      \gyarusym{[}  \gyarunt{e_{\gyarumv{i}}}  \gyarusym{/}  \gyarumv{x_{\gyarumv{i}}}  \gyarusym{]}  \gyarunt{t_{{\mathrm{1}}}} : T
      \and
      r_1 \sigma_1, \mathellipsis, r_k \sigma_k, q
      \mid
      \mode N_1, \mathellipsis, \mode N_k, \mode n
      \at
      \Delta_1, \mathellipsis, \Delta_k, y_2 : T_2
      \proves_{\mode n}
      \gyarusym{[}  \gyarunt{e_{\gyarumv{i}}}  \gyarusym{/}  \gyarumv{x_{\gyarumv{i}}}  \gyarusym{]}  \gyarunt{t_{{\mathrm{2}}}} : T.
    \end{mathpar}
    Now applying the rule \rulename{gyaru}{term}{sumE} yields the desired result.

  \end{case}
  \begin{case}[\normalfont\rulename{gyaru}{term}{raiseI}\bf]\rm
    The derivation has the shape
    \[
      \inferrule*
      {
        \mathsf{m}  \le  \mathsf{n}
        \and
        \mathsf{n}  \le  \mathsf{M}
        \and
         \rho  \mid  \mathsf{M}   \odot   \Gamma  \vdash_{ \mathsf{m} }  \gyarunt{t}  \colon  \gyarunt{T} 
      }
      {
         \rho  \mid  \mathsf{M}   \odot   \Gamma  \vdash_{ \mathsf{n} }   \operatorname{\uparrow} _{ \mathsf{m}  \le  \mathsf{n} }  \gyarunt{t}   \colon   \operatorname{\uparrow} _{ \mathsf{m}  \le  \mathsf{n} }  \gyarunt{T}  
      }
    \]
    The inductive hypothesis is
    \[
        \gyarunt{r_{{\mathrm{1}}}}   \sigma_{{\mathrm{1}}}   \gyarusym{,}  \mathellipsis  \gyarusym{,}   \gyarunt{r_{\gyarumv{k}}}   \sigma_{\gyarumv{k}}   \mid  \mathsf{N}_{{\mathrm{1}}}  \gyarusym{,}  \mathellipsis  \gyarusym{,}  \mathsf{N}_{\gyarumv{k}}   \odot   \Delta_{{\mathrm{1}}}  \gyarusym{,}  \mathellipsis  \gyarusym{,}  \Delta_{\gyarumv{k}}  \vdash_{ \mathsf{m} }  \gyarusym{[}  \gyarunt{e_{\gyarumv{i}}}  \gyarusym{/}  \gyarumv{x_{\gyarumv{i}}}  \gyarusym{]}  \gyarunt{t}  \colon  \gyarunt{T} .
    \]
    If we can apply the rule \rulename{gyaru}{term}{raiseI},
    we will obtain the desired result.
    But first we must verify that $ \mathsf{n}  \le  \mathsf{N}_{{\mathrm{1}}}  \gyarusym{,}  \mathellipsis  \gyarusym{,}  \mathsf{N}_{\gyarumv{k}} $.
    To do so, it is sufficient to verify $ \mathsf{n}  \le  \mathsf{N}_{\gyarumv{i}} $ for each $ i $.
    But this follows from transitiviy of $ \le $ and the fact that
    $ \mode n < \mode m_i < \mode N_i $.
    Where the second inequality is a consequence of the assumption
    $  \sigma_{\gyarumv{i}}  \mid  \mathsf{N}_{\gyarumv{i}}   \odot   \Delta_{\gyarumv{i}}  \vdash_{ \mathsf{m}_{\gyarumv{i}} }  \gyarunt{e_{\gyarumv{i}}}  \colon  \gyarunt{A_{\gyarumv{i}}}  $.

  \end{case}
  \begin{case}[\normalfont\rulename{gyaru}{term}{raiseE}\bf]\rm
    The derivation has the form
    \[
      \inferrule*
      {
         \rho  \mid  \mathsf{M}   \odot   \Gamma  \vdash_{ \mathsf{n} }  \gyarunt{t}  \colon   \operatorname{\uparrow} _{ \mathsf{m}  \le  \mathsf{n} }  \gyarunt{T}  
      }
      {
         \rho  \mid  \mathsf{M}   \odot   \Gamma  \vdash_{ \mathsf{m} }   \operatorname{\uparrow}^{-1} _{ \mathsf{m}  \le  \mathsf{n} }  \gyarunt{t}   \colon  \gyarunt{T} 
      }
    \]
    The inductive hypothesis is
    \[
        \gyarunt{r_{{\mathrm{1}}}}   \sigma_{{\mathrm{1}}}   \gyarusym{,}  \mathellipsis  \gyarusym{,}   \gyarunt{r_{\gyarumv{k}}}   \sigma_{\gyarumv{k}}   \mid  \mathsf{N}_{{\mathrm{1}}}  \gyarusym{,}  \mathellipsis  \gyarusym{,}  \mathsf{N}_{\gyarumv{k}}   \odot   \Delta_{{\mathrm{1}}}  \gyarusym{,}  \mathellipsis  \gyarusym{,}  \Delta_{\gyarumv{k}}  \vdash_{ \mathsf{n} }  \gyarusym{[}  \gyarunt{e_{\gyarumv{i}}}  \gyarusym{/}  \gyarumv{x_{\gyarumv{i}}}  \gyarusym{]}  \gyarunt{t}  \colon   \operatorname{\uparrow} _{ \mathsf{m}  \le  \mathsf{n} }  \gyarunt{T}  
    \]
    and applying the rule \rulename{gyaru}{term}{raiseE} yields  the desired result.

  \end{case}
  \begin{case}[\normalfont\rulename{gyaru}{term}{dropI}\bf]\rm
    The derivation has the shape
    \[
      \inferrule*
      {
        \mathsf{n}  \le  \mathsf{m}
        \and
         \gyarunt{q}  \in R_{ \mathsf{m} } 
        \and
         \rho  \mid  \mathsf{M}   \odot   \Gamma  \vdash_{ \mathsf{m} }  \gyarunt{t}  \colon  \gyarunt{T} 
      }
      {
          \gyarunt{q}   \rho   \mid  \mathsf{M}   \odot   \Gamma  \vdash_{ \mathsf{n} }   \operatorname{\downarrow} ^{ \gyarunt{q} }_{ \mathsf{n}  \le  \mathsf{m} }  \gyarunt{t}   \colon   \operatorname{\downarrow} ^{ \gyarunt{q} }_{ \mathsf{n}  \le  \mathsf{m} }  \gyarunt{T}  
      }
    \]
    The inductive hypothesis is
    \[
        \gyarunt{r_{{\mathrm{1}}}}   \sigma_{{\mathrm{1}}}   \gyarusym{,}  \mathellipsis  \gyarusym{,}   \gyarunt{r_{\gyarumv{k}}}   \sigma_{\gyarumv{k}}   \mid  \mathsf{N}_{{\mathrm{1}}}  \gyarusym{,}  \mathellipsis  \gyarusym{,}  \mathsf{N}_{\gyarumv{k}}   \odot   \Delta_{{\mathrm{1}}}  \gyarusym{,}  \mathellipsis  \gyarusym{,}  \Delta_{\gyarumv{k}}  \vdash_{ \mathsf{m} }  \gyarusym{[}  \gyarunt{e_{\gyarumv{i}}}  \gyarusym{/}  \gyarumv{x_{\gyarumv{i}}}  \gyarusym{]}  \gyarunt{t}  \colon  \gyarunt{T} 
    \]
    and applying the rule \rulename{gyaru}{term}{dropI} yields the desired result.

  \end{case}
  \begin{case}[\normalfont\rulename{gyaru}{term}{dropE}\bf]\rm
    The derivation has the shape
    \[
      \inferrule*
      {
        \mathsf{l}  \le  \mathsf{n}
        \and
        \mathsf{n}  \le  \mathsf{m}
        \and
         \rho  \gyarusym{,}  \gyarunt{q}  \mid  \mathsf{M}  \gyarusym{,}  \mathsf{m}   \odot   \Gamma  \gyarusym{,}  \gyarumv{y}  \gyarusym{:}  \gyarunt{T'}  \vdash_{ \mathsf{l} }  \gyarunt{t}  \colon  \gyarunt{T} 
        \and
         \rho'  \mid  \mathsf{M}'   \odot   \Gamma'  \vdash_{ \mathsf{n} }  \gyarunt{t'}  \colon   \operatorname{\downarrow} ^{ \gyarunt{q} }_{ \mathsf{n}  \le  \mathsf{m} }  \gyarunt{T'}  
      }
      {
         \rho  \gyarusym{,}  \rho'  \mid  \mathsf{M}  \gyarusym{,}  \mathsf{M}'   \odot   \Gamma  \gyarusym{,}  \Gamma'  \vdash_{ \mathsf{l} }   \mathbin{\mathsf{let} } _{@  \gyarunt{q} }   \operatorname{\downarrow} _{ \mathsf{n}  \le  \mathsf{m} }  \gyarumv{y}   =  \gyarunt{t'}   \mathbin{\mathsf{in} }   \gyarunt{t}   \colon  \gyarunt{T} 
      }
    \]
    Write $ \rho = r_1, \mathellipsis, r_k $ and
    $ \rho' = r_1', \mathellipsis, r_{\ell}' $;
    Write $ \Gamma = x_1 : A_1, \mathellipsis, x_k : A_k $
    and $ \Gamma' = x_1' : A_1', \mathellipsis, x_\ell' : A_\ell' $;
    Write $ \mode M = \mode m_1, \mathellipsis, \mode m_k $
    and $ \mode M' = \mode m_1', \mathellipsis, \mode m_\ell' $.
    Given terms
    $  \sigma_{\gyarumv{i}}  \mid  \mathsf{N}_{\gyarumv{i}}   \odot   \Delta_{\gyarumv{i}}  \vdash_{ \mathsf{m}_{\gyarumv{i}} }  \gyarunt{e_{\gyarumv{i}}}  \colon  \gyarunt{A_{\gyarumv{i}}}  $ ($ i \in \{1, \mathellipsis, k\} $)
    and
    $  \sigma'_{\gyarumv{i}}  \mid  \mathsf{N}_{\gyarumv{i}}   \odot   \Delta'_{\gyarumv{i}}  \vdash_{ \mathsf{m}'_{\gyarumv{i}} }  \gyarunt{e'_{\gyarumv{i}}}  \colon  \gyarunt{A'_{\gyarumv{i}}}  $
    ($ i \in \{1, \mathellipsis, \ell \} $),
    we must show that
    \begin{align*}
       \gyarunt{r_{{\mathrm{1}}}}   \sigma_{{\mathrm{1}}}   \gyarusym{,}  \mathellipsis  \gyarusym{,}   \gyarunt{r_{\gyarumv{k}}}   \sigma_{\gyarumv{k}}   \gyarusym{,}   \gyarunt{r'_{{\mathrm{1}}}}   \sigma'_{{\mathrm{1}}}   \gyarusym{,}  \mathellipsis  \gyarusym{,}   \gyarunt{r_{\gyarumv{k}}}   \sigma'_{\gyarumv{\ell}} 
      \mid
      \mathsf{N}_{{\mathrm{1}}}  \gyarusym{,}  \mathellipsis  \gyarusym{,}  \mathsf{N}_{\gyarumv{k}}  \gyarusym{,}  \mathsf{N}'_{{\mathrm{1}}}  \gyarusym{,}  \mathellipsis  \gyarusym{,}  \mathsf{N}'_{\gyarumv{\ell}}
      &
      \at
      \Delta_{{\mathrm{1}}}  \gyarusym{,}  \mathellipsis  \gyarusym{,}  \Delta_{\gyarumv{k}}  \gyarusym{,}  \Delta'_{{\mathrm{1}}}  \gyarusym{,}  \mathellipsis  \gyarusym{,}  \Delta'_{\gyarumv{\ell}}
      \\
      &
      \proves_{\mode m}
       \mathbin{\mathsf{let} } _{@  \gyarunt{q} }   \operatorname{\downarrow} _{ \mathsf{n}  \le  \mathsf{m} }  \gyarumv{y}   =  \gyarusym{[}  \gyarunt{e'_{\gyarumv{i}}}  \gyarusym{/}  \gyarumv{x'_{\gyarumv{i}}}  \gyarusym{]}  \gyarunt{t'}   \mathbin{\mathsf{in} }   \gyarusym{[}  \gyarunt{e_{\gyarumv{i}}}  \gyarusym{/}  \gyarumv{x_{\gyarumv{i}}}  \gyarusym{]}  \gyarunt{t}  : T
    \end{align*}
    Applying the inductive hypothesis to $ t' $, we obtain
    \[
        \gyarunt{r'_{{\mathrm{1}}}}   \sigma'_{{\mathrm{1}}}   \gyarusym{,}  \mathellipsis  \gyarusym{,}   \gyarunt{r_{\gyarumv{k}}}   \sigma'_{\gyarumv{\ell}}   \mid  \mathsf{N}'_{{\mathrm{1}}}  \gyarusym{,}  \mathellipsis  \gyarusym{,}  \mathsf{N}'_{\gyarumv{\ell}}   \odot   \Delta'_{{\mathrm{1}}}  \gyarusym{,}  \mathellipsis  \gyarusym{,}  \Delta'_{\gyarumv{\ell}}  \vdash_{ \mathsf{m} }  \gyarusym{[}  \gyarunt{e'_{\gyarumv{i}}}  \gyarusym{/}  \gyarumv{x'_{\gyarumv{i}}}  \gyarusym{]}  \gyarunt{t'}  \colon  \gyarunt{T'} .
    \]
    Furthermore, applying the inductive hypothesis to $ t $
    with $ e_1, \mathellipsis, e_k $ and
    $ e_{k+1} =   1   \mid  \mathsf{m}   \odot   \gyarumv{y}  \gyarusym{:}  \gyarunt{T'}  \vdash_{ \mathsf{m} }  \gyarumv{y}  \colon  \gyarunt{T'}  $
    yields
    \[
        \gyarunt{r_{{\mathrm{1}}}}   \sigma_{{\mathrm{1}}}   \gyarusym{,}  \mathellipsis  \gyarusym{,}   \gyarunt{r_{\gyarumv{k}}}   \sigma_{\gyarumv{k}}   \gyarusym{,}  \gyarunt{q}  \mid  \mathsf{N}_{{\mathrm{1}}}  \gyarusym{,}  \mathellipsis  \gyarusym{,}  \mathsf{N}_{\gyarumv{k}}  \gyarusym{,}  \mathsf{m}   \odot   \Delta_{{\mathrm{1}}}  \gyarusym{,}  \mathellipsis  \gyarusym{,}  \Delta_{\gyarumv{k}}  \gyarusym{,}  \gyarumv{y}  \gyarusym{:}  \gyarunt{T'}  \vdash_{ \mathsf{m} }  \gyarusym{[}  \gyarunt{e_{\gyarumv{i}}}  \gyarusym{/}  \gyarumv{x_{\gyarumv{i}}}  \gyarusym{]}  \gyarunt{t}  \colon  \gyarunt{T} .
    \]
    Applying the rule \rulename{gyaru}{term}{dropE} to the two inductive hypotheses
    yields the desired result.
  \end{case}
\end{proof}

\newpage

\subsection{Syntactic soundness of beta-conversions}
\label{asec:multi-mode-beta}
  \begin{definition}
  $ \beta $-conversions are defined by
  \begin{mathpar}
    \small
    \namedrules{gyaru}{beta}{unit,pair,arrow,sumL,sumR,raise,drop}
  \end{mathpar}
\end{definition}

\begin{theorem}
  \label{athm:beta-syntactic-soundness}
  $ \beta $-conversion preserves types and grades, i.e.\@
  whenever
  $  \rho  \mid  \mathsf{M}  \odot  \Gamma  \vdash_{ \mathsf{m} }  \gyarunt{t_{{\mathrm{1}}}}  \equiv_\beta  \gyarunt{t_{{\mathrm{2}}}}  \colon  \gyarunt{A}  $
  then
  $  \rho  \mid  \mathsf{M}   \odot   \Gamma  \vdash_{ \mathsf{m} }  \gyarunt{t_{{\mathrm{1}}}}  \colon  \gyarunt{A}  $
  and
  $  \rho  \mid  \mathsf{M}   \odot   \Gamma  \vdash_{ \mathsf{m} }  \gyarunt{t_{{\mathrm{2}}}}  \colon  \gyarunt{A}  $
\end{theorem}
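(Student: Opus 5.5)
The proof goes by case analysis on the $\beta$-rule used to derive $\rho \mid \mathsf{M} \odot \Gamma \vdash_{\mathsf{m}} t_1 \equiv_\beta t_2 \colon A$. Each rule in Figure~\ref{fig:gyaru-reductions} takes as premises typing derivations for the constituents of the redex, and I will use them as follows:
\begin{enumerate}
\item For $t_1$, the redex, typability follows by reassembling those premises with the matching introduction and elimination rules. For example, \rulename{gyaru}{term}{arrowI} followed by \rulename{gyaru}{term}{arrowE} gives the judgment with grades $\rho, q\sigma$.
\item For $t_2$, the contractum, the work is to show it is typable with the same grade vector.
\end{enumerate}
The main tool for the second part is the Substitution theorem, in its simultaneous form Theorem~\ref{athm:multi-subst}.

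The cases go as follows.
\begin{itemize}
\item \rulename{gyaru}{beta}{arrow}: the body satisfies $\rho, q \mid \mathsf{M},\mathsf{m}' \odot \Gamma, x:B \vdash t : T$ and the argument satisfies $\sigma \mid \mathsf{N}\odot\Delta \vdash e : B$. Substitution gives $\rho, q\sigma \mid \mathsf{M},\mathsf{N} \odot \Gamma,\Delta \vdash [e/x]t : T$, which is exactly the grading of the application.
\item \rulename{gyaru}{beta}{pair}: the scrutinee $(e_1,e_2)$ is graded $\sigma_1,\sigma_2$. The body has grades $\rho,q,q$. Simultaneous substitution of $e_1,e_2$, with identity substitutions for the variables of $\Gamma$, yields $\rho, q\sigma_1, q\sigma_2$. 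This equals $\rho, q(\sigma_1,\sigma_2)$ by the definition of scalar multiplication on grade vectors.
\item \rulename{gyaru}{beta}{sumL} and \rulename{gyaru}{beta}{sumR}: these are the same argument with a single substituted variable.
\item \rulename{gyaru}{beta}{drop}: the scrutinee $\downarrow^q_{\mathsf{n}\le\mathsf{m}} e$ was typed by \rulename{gyaru}{term}{dropI}, so it carries grades $q\sigma$ at mode $\mathsf{n}$. The body has the variable at grade $q$. Substituting $e$, typed at mode $\mathsf{m}$ with grades $\sigma$, produces $q\sigma$ as required. Here I need to check the identity $\phi_{\mathsf{m},\mathsf{m}_i}(q)\cdot s$ on both sides. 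Both sides are computed with the same morphisms $\phi$, so it should hold on the nose.
\item \rulename{gyaru}{beta}{raise}: $\uparrow^{-1}(\uparrow e)$ reduces to $e$. Inverting \rulename{gyaru}{term}{raiseI} returns the original judgment for $e$ with the same $\rho$, so this case is immediate.
\item \rulename{gyaru}{beta}{unit}: the contractum is $t$ with grades $\rho$, but the redex has grades $\rho, q\cdot\emptyset$. Since $\star_\mathsf{m}$ is typed with the empty context, $q\cdot\emptyset = \emptyset$, and the two judgments coincide.
\end{itemize}

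I expect the main obstacle to be inversion. A derivation of the redex need not end directly in the intro/elim pair, because \rulename{gyaru}{term}{sub}, \rulename{gyaru}{term}{weak}, \rulename{gyaru}{term}{cont} and exchange may be interleaved. I will handle this by reading the $\beta$-rules of Figure~\ref{fig:gyaru-reductions} as already carrying the premises of the redex's constituents. Then $t_1$ is built from those premises by the rules, and $t_2$ is built from them by substitution, so no inversion lemma is needed. If the rules are instead stated on arbitrary derivations, I would first prove a generation lemma: any derivation of an intro/elim redex factors as a derivation in canonical form followed by structural rules. Structural rules commute with substitution, as the \rulename{gyaru}{term}{sub}, \rulename{gyaru}{term}{weak} and \rulename{gyaru}{term}{cont} cases in the proof of Theorem~\ref{athm:multi-subst} show, where the ideal property of $\Cont(\mathsf{m})$ and monotonicity of multiplication are used. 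Given that, the same trailing structural steps can be reapplied to the contractum.
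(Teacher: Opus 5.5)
Your proof goes through only via its fallback, and that fallback is essentially the paper's proof. The reading you label primary is not available. In the paper, each $\beta$-rule takes as premise a typing $\sigma \mid \mathsf{N} \odot \Delta \vdash \ldots$ of the \emph{whole} introduction form ($\star_\mathsf{m}$, $(t_1,t_2)$, $\mathsf{inl}\,t$, $\downarrow^{q}_{\mathsf{n}\le\mathsf{m}} t$, $\lambda x.e$, $\uparrow_{\mathsf{m}\le\mathsf{n}} t$), derived arbitrarily. The redex $t_1$ is then obtained from this premise by a single elimination rule, so you cannot assume the derivation has canonical shape. Your Unit case fails as written. If $\mathsf{Weak}(\mathsf{n})$ and $\mathsf{n}\ge\mathsf{m}$, then $0 \mid \mathsf{n} \odot y{:}B \vdash_{\mathsf{m}} \star_\mathsf{m} : \mathbf{I}_\mathsf{m}$ is derivable by \textsc{Weak}. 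The redex then lives over $\Gamma, y{:}B$ with grades $\rho, q\cdot 0 = \rho, 0$, and the contractum $e$ reaches that judgment only after it too is weakened by $y$. \textsc{Sub} and \textsc{Cont} produce analogous non-canonical premises in every case.

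The paper does not isolate a generation lemma; it inducts directly on the derivation of the introduction form.
\begin{itemize}
\item In the introduction-rule subcase it applies the substitution theorem; nothing is needed for unit and raise.
\item In the \textsc{Weak}, \textsc{Sub} and \textsc{Cont} subcases it applies the induction hypothesis and replays that structural step on the contractum.
\end{itemize}
Except in the function case, the replayed step belongs to the derivation of the eliminated term, not of the host of the substitution. So the commutation you need is not the one in the structural cases of the simultaneous substitution theorem that you cite. Moreover, in the unit, pair and sum cases the eliminated term's grades arrive scaled by $q$. The facts actually used are:
\begin{itemize}
\item for \textsc{Weak}: $q\cdot 0 = 0$ and the mode chain (e.g.\ $\mathsf{n}'\ge\mathsf{m}\ge\mathsf{n}$ in the pair case);
\item for \textsc{Sub}: monotonicity of multiplication;
\item for \textsc{Cont}: $q r_1, q r_2 \in \mathsf{Cont}(\mathsf{n}')$, because an ideal is closed under multiplication on the left by arbitrary elements; $q(r_1+r_2) = q r_1 + q r_2$; and the contracted variables are not free in the body, so the renaming commutes with the substitution.
\end{itemize}
With these checks, your generation-lemma route is the same argument as the paper's.
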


\begin{proof}
  In each rule, $ t_1 $ is obtained from the assumptions of that rule
  by applying the corresponding elimination rule.
  It remains to show that $  \rho  \mid  \mathsf{M}   \odot   \Gamma  \vdash_{ \mathsf{m} }  \gyarunt{t_{{\mathrm{2}}}}  \colon  \gyarunt{A}  $.
  We proceed by case analysis.

  \setcounter{case}{0}
  \newcommand*{\subcase}[1]
  {\par\vspace{1ex}\noindent{\textsc{Subcase} (\rulename{gyaru}{term}{#1}).}}

  \begin{case}[\normalfont\rulename{gyaru}{beta}{unit}\bf]\rm
    We proceed by induction on the derivation of
    $  \sigma  \mid  \mathsf{N}   \odot   \Delta  \vdash_{ \mathsf{m} }   \star_ \mathsf{m}   \colon   \mathbf{I}_ \mathsf{m}   $.
    We must consider the cases \rulename{gyaru}{term}{unitI} and
    \rulenames{gyaru}{term}{weak,sub,cont}.
    All other subcases do not apply, since their derivations do not produce a
    term of the form $  \star_ \mathsf{m}  $.
      \subcase{unitI}
      In this case the assumptions are
      $  \emptyset  \mid  \emptyset   \odot   \emptyset  \vdash_{ \mathsf{m} }   \star_ \mathsf{m}   \colon   \mathbf{I}_ \mathsf{m}   $ and
      $  \rho  \mid  \mathsf{M}   \odot   \Gamma  \vdash_{ \mathsf{m} }  \gyarunt{e}  \colon  \gyarunt{A}  $
      and we must prove $  \rho  \mid  \mathsf{M}   \odot   \Gamma  \vdash_{ \mathsf{m} }  \gyarunt{e}  \colon  \gyarunt{A}  $ which is an assumption.

      \subcase{weak}
      The derivation has the form
      \[
        \inferrule*{
           \mathsf{Weak} (  \mathsf{n}  ) 
          \and
          \mathsf{n}  \ge  \mathsf{m}
          \and
           \gyarunt{B}  \in \mathsf{Type}( \mathsf{n} ) 
          \and
           \sigma  \mid  \mathsf{N}   \odot   \Delta  \vdash_{ \mathsf{m} }   \star_ \mathsf{m}   \colon   \mathbf{I}_ \mathsf{m}  
        }
        {
           \sigma  \gyarusym{,}   0   \mid  \mathsf{N}  \gyarusym{,}  \mathsf{n}   \odot   \Delta  \gyarusym{,}  \gyarumv{x}  \gyarusym{:}  \gyarunt{B}  \vdash_{ \mathsf{m} }   \star_ \mathsf{m}   \colon   \mathbf{I}_ \mathsf{m}  
        }.
      \]
      The inductive hypothesis is
      \[
         \rho  \gyarusym{,}   \gyarunt{q}   \sigma   \mid  \mathsf{M}  \gyarusym{,}  \mathsf{N}   \odot   \Gamma  \gyarusym{,}  \Delta  \vdash_{ \mathsf{m} }  \gyarunt{e}  \colon  \gyarunt{A} 
      \]
      The assumptions $  \mathsf{Weak} (  \mathsf{n}  )  $, $ \mathsf{n}  \ge  \mathsf{m} $ and $  \gyarunt{B}  \in \mathsf{Type}( \mathsf{n} )  $
      guarantee that we can apply weakening to the IH to obtain the desired result,
      noticing that $  \gyarunt{q}   \gyarusym{(}  \sigma  \gyarusym{,}   0   \gyarusym{)}  = \gyarusym{(}   \gyarunt{q}   \sigma   \gyarusym{)}  \gyarusym{,}   0  $.

      \subcase{sub}
      The derivation is
      \[
        \inferrule*
        {
           \sigma  \le  \sigma' 
          \and
           \sigma  \mid  \mathsf{N}   \odot   \Delta  \vdash_{ \mathsf{m} }   \star_ \mathsf{m}   \colon   \mathbf{I}_ \mathsf{m}  
        }
        {
           \sigma'  \mid  \mathsf{N}   \odot   \Delta  \vdash_{ \mathsf{m} }   \star_ \mathsf{m}   \colon   \mathbf{I}_ \mathsf{m}  
        }
      \]
      with IH $  \rho  \gyarusym{,}   \gyarunt{q}   \sigma   \mid  \mathsf{M}  \gyarusym{,}  \mathsf{N}   \odot   \Gamma  \gyarusym{,}  \Delta  \vdash_{ \mathsf{m} }   \star_ \mathsf{m}   \colon   \mathbf{I}_ \mathsf{m}   $.
      From $  \sigma  \le  \sigma'  $ we obtain $  \rho  \gyarusym{,}   \gyarunt{q}   \sigma   \le  \rho  \gyarusym{,}   \gyarunt{q}   \sigma'   $,
      so applying the rule \rulename{gyaru}{term}{sub} to the IH yields the
      desired result.

      \subcase{cont}
      In this case the derivation has the form
      \[
        \inferrule*
        {
           \gyarunt{r_{{\mathrm{1}}}}  \gyarusym{,}  \gyarunt{r_{{\mathrm{2}}}}  \in \mathsf{Cont}( \mathsf{n} ) 
          \and
           \sigma  \gyarusym{,}  \gyarunt{r_{{\mathrm{1}}}}  \gyarusym{,}  \gyarunt{r_{{\mathrm{2}}}}  \mid  \mathsf{N}  \gyarusym{,}  \mathsf{n}  \gyarusym{,}  \mathsf{n}   \odot   \Delta  \gyarusym{,}  \gyarumv{y_{{\mathrm{1}}}}  \gyarusym{:}  \gyarunt{B}  \gyarusym{,}  \gyarumv{y_{{\mathrm{2}}}}  \gyarusym{:}  \gyarunt{B}  \vdash_{ \mathsf{m} }   \star_ \mathsf{m}   \colon   \mathbf{I}_ \mathsf{m}  
        }
        {
           \sigma  \gyarusym{,}  \gyarunt{r_{{\mathrm{1}}}}  \gyarusym{+}  \gyarunt{r_{{\mathrm{2}}}}  \mid  \mathsf{N}  \gyarusym{,}  \mathsf{n}   \odot   \Delta  \gyarusym{,}  \gyarumv{z}  \gyarusym{:}  \gyarunt{B}  \vdash_{ \mathsf{m} }   \star_ \mathsf{m}   \colon   \mathbf{I}_ \mathsf{m}  
        }
      \]
      The IH is
      \[
         \rho  \gyarusym{,}   \gyarunt{q}   \sigma   \gyarusym{,}   \gyarunt{q}   \gyarunt{r_{{\mathrm{1}}}}   \gyarusym{,}   \gyarunt{q}   \gyarunt{r_{{\mathrm{2}}}}   \mid  \mathsf{M}  \gyarusym{,}  \mathsf{N}  \gyarusym{,}  \mathsf{n}  \gyarusym{,}  \mathsf{n}   \odot   \Gamma  \gyarusym{,}  \Delta  \gyarusym{,}  \gyarumv{y_{{\mathrm{1}}}}  \gyarusym{:}  \gyarunt{B}  \gyarusym{,}  \gyarumv{y_{{\mathrm{2}}}}  \gyarusym{:}  \gyarunt{B}  \vdash_{ \mathsf{m} }  \gyarunt{e}  \colon  \gyarunt{A} 
      \]
      The variables $ y_1, y_2 $ do not occur freely in $ e $,
      because we have the derivation $  \rho  \mid  \mathsf{M}   \odot   \Gamma  \vdash_{ \mathsf{m} }  \gyarunt{e}  \colon  \gyarunt{A}  $.
      Therefore, it suffices to show that
      \[
         \rho  \gyarusym{,}   \gyarunt{q}   \sigma   \gyarusym{,}   \gyarunt{q}   \gyarusym{(}  \gyarunt{r_{{\mathrm{1}}}}  \gyarusym{+}  \gyarunt{r_{{\mathrm{2}}}}  \gyarusym{)}   \mid  \mathsf{M}  \gyarusym{,}  \mathsf{N}  \gyarusym{,}  \mathsf{n}   \odot   \Gamma  \gyarusym{,}  \Delta  \gyarusym{,}  \gyarumv{z}  \gyarusym{:}  \gyarunt{B}  \vdash_{ \mathsf{m} }  \gyarunt{e}  \colon  \gyarunt{A} 
      \]
      Since $  \mathsf{Cont}( \mathsf{n} )  $ is an ideal, we have $   \gyarunt{q}   \gyarunt{r_{{\mathrm{1}}}}   \gyarusym{,}   \gyarunt{q}   \gyarunt{r_{{\mathrm{2}}}}   \in \mathsf{Cont}( \mathsf{n} )  $,
      and hence may apply the rule \rulename{gyaru}{term}{cont}
      to obtain the desired result.

  \end{case}
  \begin{case}[\normalfont\rulename{gyaru}{beta}{pair}\bf]\rm
    Given
    \begin{mathpar}
       \rho  \gyarusym{,}  \gyarunt{q}  \gyarusym{,}  \gyarunt{q}  \mid  \mathsf{M}  \gyarusym{,}  \mathsf{m}  \gyarusym{,}  \mathsf{m}   \odot   \Gamma  \gyarusym{,}  \gyarumv{x_{{\mathrm{1}}}}  \gyarusym{:}  \gyarunt{A_{{\mathrm{1}}}}  \gyarusym{,}  \gyarumv{x_{{\mathrm{2}}}}  \gyarusym{:}  \gyarunt{A_{{\mathrm{2}}}}  \vdash_{ \mathsf{n} }  \gyarunt{e}  \colon  \gyarunt{A} 
      \and
       \sigma  \mid  \mathsf{N}   \odot   \Delta  \vdash_{ \mathsf{m} }  \gyarusym{(}  \gyarunt{t_{{\mathrm{1}}}}  \gyarusym{,}  \gyarunt{t_{{\mathrm{2}}}}  \gyarusym{)}  \colon   \gyarunt{A_{{\mathrm{1}}}}  \otimes  \gyarunt{A_{{\mathrm{2}}}}   ,
    \end{mathpar}
    we must show that
    \[
       \rho  \gyarusym{,}   \gyarunt{q}   \sigma   \mid  \mathsf{M}  \gyarusym{,}  \mathsf{N}   \odot   \Gamma  \gyarusym{,}  \Delta  \vdash_{ \mathsf{n} }  \gyarusym{[}  \gyarunt{t_{{\mathrm{1}}}}  \gyarusym{/}  \gyarumv{x_{{\mathrm{1}}}}  \gyarusym{,}  \gyarunt{t_{{\mathrm{2}}}}  \gyarusym{/}  \gyarumv{x_{{\mathrm{2}}}}  \gyarusym{]}  \gyarunt{e}  \colon  \gyarunt{A}  .
    \]
    Proceed by induction on the derivation of
    $  \sigma  \mid  \mathsf{N}   \odot   \Delta  \vdash_{ \mathsf{m} }  \gyarusym{(}  \gyarunt{t_{{\mathrm{1}}}}  \gyarusym{,}  \gyarunt{t_{{\mathrm{2}}}}  \gyarusym{)}  \colon   \gyarunt{A_{{\mathrm{1}}}}  \otimes  \gyarunt{A_{{\mathrm{2}}}}   $.
    We must consider the cases \rulename{gyaru}{term}{pairI} and
    \rulenames{gyaru}{term}{weak,sub,cont}.
    All other subcases do not apply since they don't produce a term of the form
    $ \gyarusym{(}  \gyarunt{t_{{\mathrm{1}}}}  \gyarusym{,}  \gyarunt{t_{{\mathrm{2}}}}  \gyarusym{)} $.
      \subcase{pairI}
      In this case the derivation is
      \[
        \inferrule*{
           \sigma_{{\mathrm{1}}}  \mid  \mathsf{N}_{{\mathrm{1}}}   \odot   \Delta_{{\mathrm{1}}}  \vdash_{ \mathsf{m} }  \gyarunt{t_{{\mathrm{1}}}}  \colon  \gyarunt{A_{{\mathrm{1}}}} 
          \and
           \sigma_{{\mathrm{2}}}  \mid  \mathsf{N}_{{\mathrm{2}}}   \odot   \Delta_{{\mathrm{2}}}  \vdash_{ \mathsf{m} }  \gyarunt{t_{{\mathrm{2}}}}  \colon  \gyarunt{A_{{\mathrm{2}}}} 
        }
        {
           \sigma_{{\mathrm{1}}}  \gyarusym{,}  \sigma_{{\mathrm{2}}}  \mid  \mathsf{N}_{{\mathrm{1}}}  \gyarusym{,}  \mathsf{N}_{{\mathrm{2}}}   \odot   \Delta_{{\mathrm{1}}}  \gyarusym{,}  \Delta_{{\mathrm{2}}}  \vdash_{ \mathsf{m} }  \gyarusym{(}  \gyarunt{t_{{\mathrm{1}}}}  \gyarusym{,}  \gyarunt{t_{{\mathrm{2}}}}  \gyarusym{)}  \colon   \gyarunt{A_{{\mathrm{1}}}}  \otimes  \gyarunt{A_{{\mathrm{2}}}}  
        }
      \]
      Applying the substitution theorem with $ t_1 $ and $ t_2 $,
      we obtain
      \[
         \rho  \gyarusym{,}   \gyarunt{q}   \sigma_{{\mathrm{1}}}   \gyarusym{,}   \gyarunt{q}   \sigma_{{\mathrm{2}}}   \mid  \mathsf{M}  \gyarusym{,}  \mathsf{N}_{{\mathrm{1}}}  \gyarusym{,}  \mathsf{N}_{{\mathrm{2}}}   \odot   \Gamma  \gyarusym{,}  \Delta_{{\mathrm{1}}}  \gyarusym{,}  \Delta_{{\mathrm{2}}}  \vdash_{ \mathsf{n} }  \gyarusym{[}  \gyarunt{t_{{\mathrm{1}}}}  \gyarusym{/}  \gyarumv{x_{{\mathrm{1}}}}  \gyarusym{,}  \gyarunt{t_{{\mathrm{2}}}}  \gyarusym{/}  \gyarumv{x_{{\mathrm{2}}}}  \gyarusym{]}  \gyarunt{e}  \colon  \gyarunt{A} 
      \]
      as desired.

      \subcase{weak}
      In this case the derivation is
      \[
        \inferrule*
        {
          \mathsf{n}'  \ge  \mathsf{m}
          \and
           \mathsf{Weak} (  \mathsf{n}'  ) 
          \and
           \gyarunt{B}  \in \mathsf{Type}( \mathsf{n}' ) 
          \and
           \sigma  \mid  \mathsf{N}   \odot   \Delta  \vdash_{ \mathsf{m} }  \gyarusym{(}  \gyarunt{t_{{\mathrm{1}}}}  \gyarusym{,}  \gyarunt{t_{{\mathrm{2}}}}  \gyarusym{)}  \colon   \gyarunt{A_{{\mathrm{1}}}}  \otimes  \gyarunt{A_{{\mathrm{2}}}}  
        }
        {
           \sigma  \gyarusym{,}   0   \mid  \mathsf{N}  \gyarusym{,}  \mathsf{n}'   \odot   \Delta  \gyarusym{,}  \gyarumv{z}  \gyarusym{:}  \gyarunt{B}  \vdash_{ \mathsf{m} }  \gyarusym{(}  \gyarunt{t_{{\mathrm{1}}}}  \gyarusym{,}  \gyarunt{t_{{\mathrm{2}}}}  \gyarusym{)}  \colon   \gyarunt{A_{{\mathrm{1}}}}  \otimes  \gyarunt{A_{{\mathrm{2}}}}  
        }
      \]
      By induction we have
      \[
         \rho  \gyarusym{,}   \gyarunt{q}   \sigma   \mid  \mathsf{M}  \gyarusym{,}  \mathsf{N}   \odot   \Gamma  \gyarusym{,}  \Delta  \vdash_{ \mathsf{n} }  \gyarusym{[}  \gyarunt{t_{{\mathrm{1}}}}  \gyarusym{/}  \gyarumv{x_{{\mathrm{1}}}}  \gyarusym{,}  \gyarunt{t_{{\mathrm{2}}}}  \gyarusym{/}  \gyarumv{x_{{\mathrm{2}}}}  \gyarusym{]}  \gyarunt{e}  \colon  \gyarunt{A}  .
      \]
      We have $ \mathsf{n}  \le  \mathsf{m}  \le  \mathsf{n}' $ and hence may apply the rule
      \rulename{gyaru}{term}{weak} to the IH to get
      \[
         \rho  \gyarusym{,}   \gyarunt{q}   \sigma   \gyarusym{,}   0   \mid  \mathsf{M}  \gyarusym{,}  \mathsf{N}  \gyarusym{,}  \mathsf{n}'   \odot   \Gamma  \gyarusym{,}  \Delta  \gyarusym{,}  \gyarumv{z}  \gyarusym{:}  \gyarunt{B}  \vdash_{ \mathsf{n} }  \gyarusym{[}  \gyarunt{t_{{\mathrm{1}}}}  \gyarusym{/}  \gyarumv{x_{{\mathrm{1}}}}  \gyarusym{,}  \gyarunt{t_{{\mathrm{2}}}}  \gyarusym{/}  \gyarumv{x_{{\mathrm{2}}}}  \gyarusym{]}  \gyarunt{e}  \colon  \gyarunt{A} 
      \]
      as desired.

      \subcase{sub}
      The derivation is
      \[
        \inferrule*
        {
           \sigma'  \le  \sigma 
          \and
           \sigma'  \mid  \mathsf{N}   \odot   \Delta  \vdash_{ \mathsf{m} }  \gyarusym{(}  \gyarunt{t_{{\mathrm{1}}}}  \gyarusym{,}  \gyarunt{t_{{\mathrm{2}}}}  \gyarusym{)}  \colon   \gyarunt{A_{{\mathrm{1}}}}  \otimes  \gyarunt{A_{{\mathrm{2}}}}  
        }
        {
           \sigma  \mid  \mathsf{N}   \odot   \Delta  \vdash_{ \mathsf{m} }  \gyarusym{(}  \gyarunt{t_{{\mathrm{1}}}}  \gyarusym{,}  \gyarunt{t_{{\mathrm{2}}}}  \gyarusym{)}  \colon   \gyarunt{A_{{\mathrm{1}}}}  \otimes  \gyarunt{A_{{\mathrm{2}}}}  
        }
      \]
      and the inductive hypothesis is
      \[
         \rho  \gyarusym{,}   \gyarunt{q}   \sigma'   \mid  \mathsf{M}  \gyarusym{,}  \mathsf{N}   \odot   \Gamma  \gyarusym{,}  \Delta  \vdash_{ \mathsf{n} }  \gyarusym{[}  \gyarunt{t_{{\mathrm{1}}}}  \gyarusym{/}  \gyarumv{x_{{\mathrm{1}}}}  \gyarusym{,}  \gyarunt{t_{{\mathrm{2}}}}  \gyarusym{/}  \gyarumv{x_{{\mathrm{2}}}}  \gyarusym{]}  \gyarunt{e}  \colon  \gyarunt{A}  .
      \]
      The desired result follows by applying the rule \rulename{gyaru}{term}{sub}
      using the fact that multiplcation is monotone.

      \subcase{cont}
      The derivation is
      \[
        \inferrule*
        {
           \gyarunt{r_{{\mathrm{1}}}}  \gyarusym{,}  \gyarunt{r_{{\mathrm{2}}}}  \in \mathsf{Cont}( \mathsf{n}' ) 
          \and
           \sigma  \gyarusym{,}  \gyarunt{r_{{\mathrm{1}}}}  \gyarusym{,}  \gyarunt{r_{{\mathrm{2}}}}  \mid  \mathsf{N}  \gyarusym{,}  \mathsf{n}'  \gyarusym{,}  \mathsf{n}'   \odot   \Delta  \gyarusym{,}  \gyarumv{y_{{\mathrm{1}}}}  \gyarusym{:}  \gyarunt{B}  \gyarusym{,}  \gyarumv{y_{{\mathrm{2}}}}  \gyarusym{:}  \gyarunt{B}  \vdash_{ \mathsf{m} }  \gyarusym{(}  \gyarunt{t_{{\mathrm{1}}}}  \gyarusym{,}  \gyarunt{t_{{\mathrm{2}}}}  \gyarusym{)}  \colon   \gyarunt{A_{{\mathrm{1}}}}  \otimes  \gyarunt{A_{{\mathrm{2}}}}  
        }
        {
           \sigma  \gyarusym{,}  \gyarunt{r_{{\mathrm{1}}}}  \gyarusym{+}  \gyarunt{r_{{\mathrm{2}}}}  \mid  \mathsf{N}  \gyarusym{,}  \mathsf{n}'  \gyarusym{,}  \mathsf{n}'   \odot   \Delta  \gyarusym{,}  \gyarumv{z}  \gyarusym{:}  \gyarunt{B}  \vdash_{ \mathsf{m} }  \gyarusym{(}  \gyarusym{[}  \gyarumv{z}  \gyarusym{/}  \gyarumv{y_{{\mathrm{1}}}}  \gyarusym{,}  \gyarumv{z}  \gyarusym{/}  \gyarumv{y_{{\mathrm{2}}}}  \gyarusym{]}  \gyarunt{t_{{\mathrm{1}}}}  \gyarusym{,}  \gyarusym{[}  \gyarumv{z}  \gyarusym{/}  \gyarumv{y_{{\mathrm{1}}}}  \gyarusym{,}  \gyarumv{z}  \gyarusym{/}  \gyarumv{y_{{\mathrm{2}}}}  \gyarusym{]}  \gyarunt{t_{{\mathrm{2}}}}  \gyarusym{)}  \colon   \gyarunt{A_{{\mathrm{1}}}}  \otimes  \gyarunt{A_{{\mathrm{2}}}}  
        }
      \]
      The IH is
      \[
         \rho  \gyarusym{,}   \gyarunt{q}   \sigma   \gyarusym{,}   \gyarunt{q}   \gyarunt{r_{{\mathrm{1}}}}   \gyarusym{,}   \gyarunt{q}   \gyarunt{r_{{\mathrm{2}}}}   \mid  \mathsf{M}  \gyarusym{,}  \mathsf{N}  \gyarusym{,}  \mathsf{n}'  \gyarusym{,}  \mathsf{n}'   \odot   \Gamma  \gyarusym{,}  \Delta  \gyarusym{,}  \gyarumv{y_{{\mathrm{1}}}}  \gyarusym{:}  \gyarunt{B}  \gyarusym{,}  \gyarumv{y_{{\mathrm{2}}}}  \gyarusym{:}  \gyarunt{B}  \vdash_{ \mathsf{n} }  \gyarusym{[}  \gyarunt{t_{{\mathrm{1}}}}  \gyarusym{/}  \gyarumv{x_{{\mathrm{1}}}}  \gyarusym{,}  \gyarunt{t_{{\mathrm{2}}}}  \gyarusym{/}  \gyarumv{x_{{\mathrm{2}}}}  \gyarusym{]}  \gyarunt{e}  \colon  \gyarunt{A} 
      \]
      Since $  \mathsf{Cont}( \mathsf{n}' )  $ is an ideal, we have $   \gyarunt{q}   \gyarunt{r_{{\mathrm{1}}}}   \gyarusym{,}   \gyarunt{q}   \gyarunt{r_{{\mathrm{2}}}}   \in \mathsf{Cont}( \mathsf{n}' )  $.
      We thus apply the rule \rulename{gyaru}{term}{cont}
      to obtain
      \[
         \rho  \gyarusym{,}   \gyarunt{q}   \sigma   \gyarusym{,}    \gyarunt{q}   \gyarunt{r_{{\mathrm{1}}}}   \gyarusym{+}  \gyarunt{q}   \gyarunt{r_{{\mathrm{2}}}}   \mid  \mathsf{M}  \gyarusym{,}  \mathsf{N}  \gyarusym{,}  \mathsf{n}'   \odot   \Gamma  \gyarusym{,}  \Delta  \gyarusym{,}  \gyarumv{z}  \gyarusym{:}  \gyarunt{B}  \vdash_{ \mathsf{n} }  \gyarusym{[}  \gyarumv{z}  \gyarusym{/}  \gyarumv{y_{{\mathrm{1}}}}  \gyarusym{,}  \gyarumv{z}  \gyarusym{/}  \gyarumv{y_{{\mathrm{2}}}}  \gyarusym{]}  \gyarusym{[}  \gyarunt{t_{{\mathrm{1}}}}  \gyarusym{/}  \gyarumv{x_{{\mathrm{1}}}}  \gyarusym{,}  \gyarunt{t_{{\mathrm{2}}}}  \gyarusym{/}  \gyarumv{x_{{\mathrm{2}}}}  \gyarusym{]}  \gyarunt{e}  \colon  \gyarunt{A} 
      \]
      Since the variables $ y_1, y_2 $ only occur freely in $ t_1 $ and $ t_2 $,
      the resulting term is equal to
      \[
        \gyarusym{[}  \gyarusym{[}  \gyarumv{z}  \gyarusym{/}  \gyarumv{y_{{\mathrm{1}}}}  \gyarusym{,}  \gyarumv{z}  \gyarusym{/}  \gyarumv{y_{{\mathrm{2}}}}  \gyarusym{]}  \gyarunt{t_{{\mathrm{1}}}}  \gyarusym{/}  \gyarumv{x_{{\mathrm{1}}}}  \gyarusym{,}  \gyarusym{[}  \gyarumv{z}  \gyarusym{/}  \gyarumv{y_{{\mathrm{1}}}}  \gyarusym{,}  \gyarumv{z}  \gyarusym{/}  \gyarumv{y_{{\mathrm{2}}}}  \gyarusym{]}  \gyarunt{t_{{\mathrm{2}}}}  \gyarusym{/}  \gyarumv{x_{{\mathrm{2}}}}  \gyarusym{]}  \gyarunt{e}
      \]
      as desired.

  \end{case}
  \begin{case}[\normalfont\rulename{gyaru}{beta}{arrow}\bf]\rm
    Given
    \begin{mathpar}
       \rho  \mid  \mathsf{M}   \odot   \Gamma  \vdash_{ \mathsf{n} }  \lambda  \gyarumv{x}  \gyarusym{.}  \gyarunt{e}  \colon   \gyarunt{B} ^{ \gyarunt{q} : \mathsf{m} } \multimap  \gyarunt{A}  
      \and
       \sigma  \mid  \mathsf{N}   \odot   \Delta  \vdash_{ \mathsf{m} }  \gyarunt{t}  \colon  \gyarunt{B} 
    \end{mathpar}
    we must prove that
    \[
       \rho  \gyarusym{,}   \gyarunt{q}   \sigma   \mid  \mathsf{M}  \gyarusym{,}  \mathsf{N}   \odot   \Gamma  \gyarusym{,}  \Delta  \vdash_{ \mathsf{n} }  \gyarusym{[}  \gyarunt{t}  \gyarusym{/}  \gyarumv{x}  \gyarusym{]}  \gyarunt{e}  \colon  \gyarunt{A} 
    \]
    We proceed by induction on the derivation of
    \[
       \rho  \mid  \mathsf{M}   \odot   \Gamma  \vdash_{ \mathsf{n} }  \lambda  \gyarumv{x}  \gyarusym{.}  \gyarunt{e}  \colon   \gyarunt{B} ^{ \gyarunt{q} : \mathsf{m} } \multimap  \gyarunt{A}  .
    \]
    We must analyse the cases \rulename{gyaru}{term}{arrowI} and
    \rulenames{gyaru}{term}{weak,sub,cont}.
    The other cases do not apply, since they do not produce a term of the form
    $ \lambda  \gyarumv{x}  \gyarusym{.}  \gyarunt{e} $.
      \subcase{arrowI}
      In this case we have
      \[
        \inferrule*
        {
           \rho  \gyarusym{,}  \gyarunt{q}  \mid  \mathsf{M}  \gyarusym{,}  \mathsf{n}   \odot   \Gamma  \gyarusym{,}  \gyarumv{x}  \gyarusym{:}  \gyarunt{B}  \vdash_{ \mathsf{n} }  \gyarunt{e}  \colon  \gyarunt{A} 
        }
        {
           \rho  \mid  \mathsf{M}   \odot   \Gamma  \vdash_{ \mathsf{n} }  \lambda  \gyarumv{x}  \gyarusym{.}  \gyarunt{e}  \colon   \gyarunt{B} ^{ \gyarunt{q} : \mathsf{m} } \multimap  \gyarunt{A}  
        }
      \]
      The claim follows by applying substitution to the assumption
      $  \rho  \gyarusym{,}  \gyarunt{q}  \mid  \mathsf{M}  \gyarusym{,}  \mathsf{n}   \odot   \Gamma  \gyarusym{,}  \gyarumv{x}  \gyarusym{:}  \gyarunt{B}  \vdash_{ \mathsf{n} }  \gyarunt{e}  \colon  \gyarunt{A}  $

      \subcase{weak}
      In this case the derivation is
      \[
        \inferrule*
        {
           \mathsf{Weak} (  \mathsf{m}'  ) 
          \and
          \mathsf{m}'  \ge  \mathsf{n}
          \and
           \gyarunt{T}  \in \mathsf{Type}( \mathsf{m}' ) 
          \and
           \rho  \mid  \mathsf{M}   \odot   \Gamma  \vdash_{ \mathsf{n} }  \lambda  \gyarumv{x}  \gyarusym{.}  \gyarunt{e}  \colon   \gyarunt{B} ^{ \gyarunt{q} : \mathsf{m} } \multimap  \gyarunt{A}  
        }
        {
           \rho  \gyarusym{,}   0   \mid  \mathsf{M}  \gyarusym{,}  \mathsf{m}'   \odot   \Gamma  \gyarusym{,}  \gyarumv{y}  \gyarusym{:}  \gyarunt{T}  \vdash_{ \mathsf{n} }  \lambda  \gyarumv{x}  \gyarusym{.}  \gyarunt{e}  \colon   \gyarunt{B} ^{ \gyarunt{q} : \mathsf{m} } \multimap  \gyarunt{A}  
        }
      \]
      By induction we have
      \[
         \rho  \gyarusym{,}   \gyarunt{q}   \sigma   \mid  \mathsf{M}  \gyarusym{,}  \mathsf{N}   \odot   \Gamma  \gyarusym{,}  \Delta  \vdash_{ \mathsf{n} }  \gyarusym{[}  \gyarunt{t}  \gyarusym{/}  \gyarumv{x}  \gyarusym{]}  \gyarunt{e}  \colon  \gyarunt{A} 
      \]
      and we may apply the rule \rulename{gyaru}{term}{weak}
      to obtain
      \[
         \rho  \gyarusym{,}   \gyarunt{q}   \sigma   \gyarusym{,}   0   \mid  \mathsf{M}  \gyarusym{,}  \mathsf{N}  \gyarusym{,}  \mathsf{m}'   \odot   \Gamma  \gyarusym{,}  \Delta  \gyarusym{,}  \gyarumv{y}  \gyarusym{:}  \gyarunt{T}  \vdash_{ \mathsf{n} }  \gyarusym{[}  \gyarunt{t}  \gyarusym{/}  \gyarumv{x}  \gyarusym{]}  \gyarunt{e}  \colon  \gyarunt{A} 
      \]
      Since we treat contexts as unoredered, this is equivalent to the desired
      judgment
      \[
         \rho  \gyarusym{,}   0   \gyarusym{,}   \gyarunt{q}   \sigma   \mid  \mathsf{M}  \gyarusym{,}  \mathsf{m}'  \gyarusym{,}  \mathsf{N}   \odot   \Gamma  \gyarusym{,}  \gyarumv{y}  \gyarusym{:}  \gyarunt{T}  \gyarusym{,}  \Delta  \vdash_{ \mathsf{n} }  \gyarusym{[}  \gyarunt{t}  \gyarusym{/}  \gyarumv{x}  \gyarusym{]}  \gyarunt{e}  \colon  \gyarunt{A} 
      \]

      \subcase{cont}
      In this case the derivation is
      \[
        \inferrule*
        {
           \gyarunt{r_{{\mathrm{1}}}}  \gyarusym{,}  \gyarunt{r_{{\mathrm{2}}}}  \in \mathsf{Cont}( \mathsf{m}' ) 
          \and
           \rho  \gyarusym{,}  \gyarunt{r_{{\mathrm{1}}}}  \gyarusym{,}  \gyarunt{r_{{\mathrm{2}}}}  \mid  \mathsf{M}  \gyarusym{,}  \mathsf{m}'  \gyarusym{,}  \mathsf{m}'   \odot   \Gamma  \gyarusym{,}  \gyarumv{y_{{\mathrm{1}}}}  \gyarusym{:}  \gyarunt{T}  \gyarusym{,}  \gyarumv{y_{{\mathrm{2}}}}  \gyarusym{:}  \gyarunt{T}  \vdash_{ \mathsf{n} }  \lambda  \gyarumv{x}  \gyarusym{.}  \gyarunt{e}  \colon   \gyarunt{B} ^{ \gyarunt{q} : \mathsf{m} } \multimap  \gyarunt{A}  
        }
        {
           \rho  \gyarusym{,}  \gyarunt{r_{{\mathrm{1}}}}  \gyarusym{+}  \gyarunt{r_{{\mathrm{2}}}}  \mid  \mathsf{M}  \gyarusym{,}  \mathsf{m}'  \gyarusym{,}  \mathsf{m}'   \odot   \Gamma  \gyarusym{,}  \gyarumv{z}  \gyarusym{:}  \gyarunt{T}  \vdash_{ \mathsf{n} }  \gyarusym{(}  \lambda  \gyarumv{x}  \gyarusym{.}  \gyarusym{[}  \gyarumv{z}  \gyarusym{/}  \gyarumv{y_{{\mathrm{1}}}}  \gyarusym{,}  \gyarumv{z}  \gyarusym{/}  \gyarumv{y_{{\mathrm{2}}}}  \gyarusym{]}  \gyarunt{e}  \gyarusym{)}  \colon   \gyarunt{B} ^{ \gyarunt{q} : \mathsf{m} } \multimap  \gyarunt{A}  
        }
      \]
      By induction we have
      \[
         \rho  \gyarusym{,}  \gyarunt{r_{{\mathrm{1}}}}  \gyarusym{,}  \gyarunt{r_{{\mathrm{2}}}}  \gyarusym{,}   \gyarunt{q}   \sigma   \mid  \mathsf{M}  \gyarusym{,}  \mathsf{m}'  \gyarusym{,}  \mathsf{m}'  \gyarusym{,}  \mathsf{N}   \odot   \Gamma  \gyarusym{,}  \gyarumv{y_{{\mathrm{1}}}}  \gyarusym{:}  \gyarunt{T}  \gyarusym{,}  \gyarumv{y_{{\mathrm{2}}}}  \gyarusym{:}  \gyarunt{T}  \gyarusym{,}  \Delta  \vdash_{ \mathsf{n} }  \gyarusym{[}  \gyarunt{t}  \gyarusym{/}  \gyarumv{x}  \gyarusym{]}  \gyarunt{e}  \colon  \gyarunt{A} 
      \]
      Since contexts are unordered, we may apply the rule
      \rulename{gyaru}{term}{cont} to obtain
      \[
         \rho  \gyarusym{,}  \gyarunt{r_{{\mathrm{1}}}}  \gyarusym{+}  \gyarunt{r_{{\mathrm{2}}}}  \gyarusym{,}   \gyarunt{q}   \sigma   \mid  \mathsf{M}  \gyarusym{,}  \mathsf{m}'  \gyarusym{,}  \mathsf{N}   \odot   \Gamma  \gyarusym{,}  \gyarumv{z}  \gyarusym{:}  \gyarunt{T}  \gyarusym{,}  \Delta  \vdash_{ \mathsf{n} }  \gyarusym{[}  \gyarumv{z}  \gyarusym{/}  \gyarumv{y_{{\mathrm{1}}}}  \gyarusym{,}  \gyarumv{z}  \gyarusym{/}  \gyarumv{y_{{\mathrm{2}}}}  \gyarusym{]}  \gyarusym{[}  \gyarunt{t}  \gyarusym{/}  \gyarumv{x}  \gyarusym{]}  \gyarunt{e}  \colon  \gyarunt{A} 
      \]
      Since the variables $ y_1, y_2 $ are not free in $ t $ this term is equal to
      $ \gyarusym{[}  \gyarunt{t}  \gyarusym{/}  \gyarumv{x}  \gyarusym{]}  \gyarusym{[}  \gyarumv{z}  \gyarusym{/}  \gyarumv{y_{{\mathrm{1}}}}  \gyarusym{,}  \gyarumv{z}  \gyarusym{/}  \gyarumv{y_{{\mathrm{2}}}}  \gyarusym{]}  \gyarunt{e} $ which is the desired term.

      \subcase{sub}
      In this case we have the derivation
      \[
        \inferrule*
        {
           \rho'  \le  \rho 
          \and
           \rho'  \mid  \mathsf{M}   \odot   \Gamma  \vdash_{ \mathsf{n} }  \lambda  \gyarumv{x}  \gyarusym{.}  \gyarunt{e}  \colon   \gyarunt{B} ^{ \gyarunt{q} : \mathsf{m} } \multimap  \gyarunt{A}  
        }
        {
           \rho  \mid  \mathsf{M}   \odot   \Gamma  \vdash_{ \mathsf{n} }  \lambda  \gyarumv{x}  \gyarusym{.}  \gyarunt{e}  \colon   \gyarunt{B} ^{ \gyarunt{q} : \mathsf{m} } \multimap  \gyarunt{A}  
        }
      \]
      The inductive hypothesis is
      \[
         \rho'  \gyarusym{,}   \gyarunt{q}   \sigma   \mid  \mathsf{M}  \gyarusym{,}  \mathsf{N}   \odot   \Gamma  \gyarusym{,}  \Delta  \vdash_{ \mathsf{n} }  \gyarusym{[}  \gyarunt{t}  \gyarusym{/}  \gyarumv{x}  \gyarusym{]}  \gyarunt{e}  \colon  \gyarunt{A} 
      \]
      and since $  \rho'  \gyarusym{,}   \gyarunt{q}   \sigma   \le  \rho  \gyarusym{,}   \gyarunt{q}   \sigma   $ we may apply
      the rule \rulename{gyaru}{term}{sub} to obtain the desired result.

  \end{case}
  \begin{case}[\normalfont\rulename{gyaru}{beta}{raise}\bf]\rm
    For modes $ \mathsf{m}  \le  \mathsf{n} $,
    given $  \rho  \mid  \mathsf{M}   \odot   \Gamma  \vdash_{ \mathsf{n} }   \operatorname{\uparrow} _{ \mathsf{m}  \le  \mathsf{n} }  \gyarunt{t}   \colon   \operatorname{\uparrow} _{ \mathsf{m}  \le  \mathsf{n} }  \gyarunt{A}   $,
    we must prove that $  \rho  \mid  \mathsf{M}   \odot   \Gamma  \vdash_{ \mathsf{m} }  \gyarunt{t}  \colon  \gyarunt{A}  $.
    We proceed by induction on the derivation of
    $  \rho  \mid  \mathsf{M}   \odot   \Gamma  \vdash_{ \mathsf{n} }   \operatorname{\uparrow} _{ \mathsf{m}  \le  \mathsf{n} }  \gyarunt{t}   \colon   \operatorname{\uparrow} _{ \mathsf{m}  \le  \mathsf{n} }  \gyarunt{A}   $.
    We must analyse the cases \rulename{gyaru}{term}{raiseI} and
    \rulenames{gyaru}{term}{weak,sub,cont}.
      \subcase{raiseI}
      In this case the derivation is
      \[
        \inferrule*
        {
          \mathsf{m}  \le  \mathsf{n}
          \and
           \rho  \mid  \mathsf{M}   \odot   \Gamma  \vdash_{ \mathsf{m} }  \gyarunt{t}  \colon  \gyarunt{A} 
        }
        {
           \rho  \mid  \mathsf{M}   \odot   \Gamma  \vdash_{ \mathsf{n} }   \operatorname{\uparrow} _{ \mathsf{m}  \le  \mathsf{n} }  \gyarunt{t}   \colon   \operatorname{\uparrow} _{ \mathsf{m}  \le  \mathsf{n} }  \gyarunt{A}  
        }.
      \]
      We are done, since our goal is an assumption.

      \subcase{weak}
      In this case the derivation is
      \[
        \inferrule*
        {
           \mathsf{Weak} (  \mathsf{m}'  ) 
          \and
          \mathsf{n}  \le  \mathsf{m}'
          \and
           \gyarunt{B}  \in \mathsf{Type}( \mathsf{m}' ) 
          \and
           \rho  \mid  \mathsf{M}   \odot   \Gamma  \vdash_{ \mathsf{n} }   \operatorname{\uparrow} _{ \mathsf{m}  \le  \mathsf{n} }  \gyarunt{t}   \colon   \operatorname{\uparrow} _{ \mathsf{m}  \le  \mathsf{n} }  \gyarunt{A}  
        }
        {
           \rho  \gyarusym{,}   0   \mid  \mathsf{M}  \gyarusym{,}  \mathsf{m}'   \odot   \Gamma  \gyarusym{,}  \gyarumv{x}  \gyarusym{:}  \gyarunt{B}  \vdash_{ \mathsf{n} }   \operatorname{\uparrow} _{ \mathsf{m}  \le  \mathsf{n} }  \gyarunt{t}   \colon   \operatorname{\uparrow} _{ \mathsf{m}  \le  \mathsf{n} }  \gyarunt{A}  
        }
      \]
      The IH is $  \rho  \mid  \mathsf{M}   \odot   \Gamma  \vdash_{ \mathsf{m} }  \gyarunt{t}  \colon  \gyarunt{A}  $.
      Because of $ \mathsf{m}  \le  \mathsf{n}  \le  \mathsf{m}' $ we may apply \rulename{gyaru}{term}{weak}
      to the IH to obtain
      $  \rho  \gyarusym{,}   0   \mid  \mathsf{M}  \gyarusym{,}  \mathsf{m}'   \odot   \Gamma  \gyarusym{,}  \gyarumv{x}  \gyarusym{:}  \gyarunt{B}  \vdash_{ \mathsf{m} }  \gyarunt{t}  \colon  \gyarunt{A}  $ as desired.

      \subcase{cont}
      In this case the derivation is
      \[
        \inferrule*
        {
           \gyarunt{r_{{\mathrm{1}}}}  \gyarusym{,}  \gyarunt{r_{{\mathrm{2}}}}  \in \mathsf{Cont}( \mathsf{m}' ) 
          \and
           \rho  \gyarusym{,}  \gyarunt{r_{{\mathrm{1}}}}  \gyarusym{,}  \gyarunt{r_{{\mathrm{2}}}}  \mid  \mathsf{M}  \gyarusym{,}  \mathsf{m}'  \gyarusym{,}  \mathsf{m}'   \odot   \Gamma  \gyarusym{,}  \gyarumv{x_{{\mathrm{1}}}}  \gyarusym{:}  \gyarunt{B}  \gyarusym{,}  \gyarumv{x_{{\mathrm{2}}}}  \gyarusym{:}  \gyarunt{B}  \vdash_{ \mathsf{n} }   \operatorname{\uparrow} _{ \mathsf{m}  \le  \mathsf{n} }  \gyarunt{t}   \colon   \operatorname{\uparrow} _{ \mathsf{m}  \le  \mathsf{n} }  \gyarunt{A}  
        }
        {
           \rho  \gyarusym{,}  \gyarunt{r_{{\mathrm{1}}}}  \gyarusym{+}  \gyarunt{r_{{\mathrm{2}}}}  \mid  \mathsf{M}  \gyarusym{,}  \mathsf{m}'   \odot   \Gamma  \gyarusym{,}  \gyarumv{z}  \gyarusym{:}  \gyarunt{B}  \vdash_{ \mathsf{n} }   \operatorname{\uparrow} _{ \mathsf{m}  \le  \mathsf{n} }  \gyarusym{[}  \gyarumv{z}  \gyarusym{/}  \gyarumv{x_{{\mathrm{1}}}}  \gyarusym{,}  \gyarumv{z}  \gyarusym{/}  \gyarumv{x_{{\mathrm{2}}}}  \gyarusym{]}  \gyarunt{t}   \colon   \operatorname{\uparrow} _{ \mathsf{m}  \le  \mathsf{n} }  \gyarunt{A}  
        }.
      \]
      The IH is
      $  \rho  \gyarusym{,}  \gyarunt{r_{{\mathrm{1}}}}  \gyarusym{,}  \gyarunt{r_{{\mathrm{2}}}}  \mid  \mathsf{M}  \gyarusym{,}  \mathsf{m}'  \gyarusym{,}  \mathsf{m}'   \odot   \Gamma  \gyarusym{,}  \gyarumv{x_{{\mathrm{1}}}}  \gyarusym{:}  \gyarunt{B}  \gyarusym{,}  \gyarumv{x_{{\mathrm{2}}}}  \gyarusym{:}  \gyarunt{B}  \vdash_{ \mathsf{m} }  \gyarunt{t}  \colon  \gyarunt{A}  $
      and we still have $  \gyarunt{r_{{\mathrm{1}}}}  \gyarusym{,}  \gyarunt{r_{{\mathrm{2}}}}  \in \mathsf{Cont}( \mathsf{m}' )  $
      and thus may apply the rule \rulename{gyaru}{term}{cont} to obtain
      $  \rho  \gyarusym{,}  \gyarunt{r_{{\mathrm{1}}}}  \gyarusym{+}  \gyarunt{r_{{\mathrm{2}}}}  \mid  \mathsf{M}  \gyarusym{,}  \mathsf{m}'   \odot   \Gamma  \gyarusym{,}  \gyarumv{z}  \gyarusym{:}  \gyarunt{B}  \vdash_{ \mathsf{n} }  \gyarusym{[}  \gyarumv{z}  \gyarusym{/}  \gyarumv{x_{{\mathrm{1}}}}  \gyarusym{,}  \gyarumv{z}  \gyarusym{/}  \gyarumv{x_{{\mathrm{2}}}}  \gyarusym{]}  \gyarunt{t}  \colon  \gyarunt{A}  $
      which is the desired term.

      \subcase{sub}
      The derivation has the form
      \[
        \inferrule*
        {
           \rho'  \le  \rho 
          \and
           \rho'  \mid  \mathsf{M}   \odot   \Gamma  \vdash_{ \mathsf{n} }   \operatorname{\uparrow} _{ \mathsf{m}  \le  \mathsf{n} }  \gyarunt{t}   \colon   \operatorname{\uparrow} _{ \mathsf{m}  \le  \mathsf{n} }  \gyarunt{A}  
        }
        {
           \rho  \mid  \mathsf{M}   \odot   \Gamma  \vdash_{ \mathsf{n} }   \operatorname{\uparrow} _{ \mathsf{m}  \le  \mathsf{n} }  \gyarunt{t}   \colon   \operatorname{\uparrow} _{ \mathsf{m}  \le  \mathsf{n} }  \gyarunt{A}  
        }
      \]
      The IH is
      $  \rho'  \mid  \mathsf{M}   \odot   \Gamma  \vdash_{ \mathsf{n} }  \gyarunt{t}  \colon  \gyarunt{A}  $ and applying rule \rulename{gyaru}{term}{sub}
      with $  \rho'  \le  \rho  $ produces the desired term.

  \end{case}
  \begin{case}[\normalfont\rulename{gyaru}{beta}{drop}\bf]\rm
    Given $ \mathsf{l}  \le  \mathsf{n}  \le  \mathsf{m} $
    and
    \begin{mathpar}
       \sigma  \mid  \mathsf{N}   \odot   \Delta  \vdash_{ \mathsf{n} }   \operatorname{\downarrow} ^{ \gyarunt{q} }_{ \mathsf{n}  \le  \mathsf{m} }  \gyarunt{t}   \colon   \operatorname{\downarrow} ^{ \gyarunt{q} }_{ \mathsf{n}  \le  \mathsf{m} }  \gyarunt{A}  
      \and
       \rho  \gyarusym{,}  \gyarunt{q}  \mid  \mathsf{M}  \gyarusym{,}  \mathsf{m}   \odot   \Gamma  \gyarusym{,}  \gyarumv{x}  \gyarusym{:}  \gyarunt{A}  \vdash_{ \mathsf{l} }  \gyarunt{e}  \colon  \gyarunt{B} ,
    \end{mathpar}
    we must show that
    \[
       \rho  \gyarusym{,}  \sigma  \mid  \mathsf{M}  \gyarusym{,}  \mathsf{N}   \odot   \Gamma  \gyarusym{,}  \Delta  \vdash_{ \mathsf{l} }  \gyarusym{[}  \gyarunt{t}  \gyarusym{/}  \gyarumv{x}  \gyarusym{]}  \gyarunt{e}  \colon  \gyarunt{B} .
    \]
    We proceed by induction on the derivation of
    $  \sigma  \mid  \mathsf{N}   \odot   \Delta  \vdash_{ \mathsf{n} }   \operatorname{\downarrow} ^{ \gyarunt{q} }_{ \mathsf{n}  \le  \mathsf{m} }  \gyarunt{t}   \colon   \operatorname{\downarrow} ^{ \gyarunt{q} }_{ \mathsf{n}  \le  \mathsf{m} }  \gyarunt{A}   $.
    We must analyse the cases \rulename{gyaru}{term}{dropI} and
    \rulenames{gyaru}{term}{weak,sub,cont}.
      \subcase{dropI}
      In this case there exists a grade vector $ \sigma' $ with
      $  \gyarunt{q}   \sigma'  = \sigma $ such that the dervation is
      \[
        \inferrule*
        {
           \sigma'  \mid  \mathsf{N}   \odot   \Delta  \vdash_{ \mathsf{m} }  \gyarunt{t}  \colon  \gyarunt{A} 
        }
        {
            \gyarunt{q}   \sigma'   \mid  \mathsf{N}   \odot   \Delta  \vdash_{ \mathsf{n} }   \operatorname{\downarrow} ^{ \gyarunt{q} }_{ \mathsf{n}  \le  \mathsf{m} }  \gyarunt{t}   \colon   \operatorname{\downarrow} ^{ \gyarunt{q} }_{ \mathsf{n}  \le  \mathsf{m} }  \gyarunt{A}  
        }
      \]
      Apply substitution with the judgment $  \sigma'  \mid  \mathsf{N}   \odot   \Delta  \vdash_{ \mathsf{m} }  \gyarunt{t}  \colon  \gyarunt{A}  $
      to obtain the desired term.

      \subcase{weak}
      In this case the derivation is
      \[
        \inferrule*
        {
           \mathsf{Weak} (  \mathsf{n}'  ) 
          \and
          \mathsf{n}'  \ge  \mathsf{n}
          \and
           \gyarunt{T}  \in \mathsf{Type}( \mathsf{n}' ) 
          \and
           \sigma  \mid  \mathsf{N}   \odot   \Delta  \vdash_{ \mathsf{n} }   \operatorname{\downarrow} ^{ \gyarunt{q} }_{ \mathsf{n}  \le  \mathsf{m} }  \gyarunt{t}   \colon   \operatorname{\downarrow} ^{ \gyarunt{q} }_{ \mathsf{n}  \le  \mathsf{m} }  \gyarunt{A}  
        }
        {
           \sigma  \gyarusym{,}   0   \mid  \mathsf{N}  \gyarusym{,}  \mathsf{n}'   \odot   \Delta  \gyarusym{,}  \gyarumv{y}  \gyarusym{:}  \gyarunt{T}  \vdash_{ \mathsf{n} }   \operatorname{\downarrow} ^{ \gyarunt{q} }_{ \mathsf{n}  \le  \mathsf{m} }  \gyarunt{t}   \colon   \operatorname{\downarrow} ^{ \gyarunt{q} }_{ \mathsf{n}  \le  \mathsf{m} }  \gyarunt{A}  
        }
      \]
      The inductive hypothesis is
      \[
         \rho  \gyarusym{,}  \sigma  \mid  \mathsf{M}  \gyarusym{,}  \mathsf{N}   \odot   \Gamma  \gyarusym{,}  \Delta  \vdash_{ \mathsf{l} }  \gyarusym{[}  \gyarunt{t}  \gyarusym{/}  \gyarumv{x}  \gyarusym{]}  \gyarunt{e}  \colon  \gyarunt{B} .
      \]
      and since we have $ \mathsf{n}'  \ge  \mathsf{n}  \ge  \mathsf{l}  $ we may apply weakening to obtain
      \[
         \rho  \gyarusym{,}  \sigma  \gyarusym{,}   0   \mid  \mathsf{M}  \gyarusym{,}  \mathsf{N}  \gyarusym{,}  \mathsf{n}'   \odot   \Gamma  \gyarusym{,}  \Delta  \gyarusym{,}  \gyarumv{y}  \gyarusym{:}  \gyarunt{T}  \vdash_{ \mathsf{l} }  \gyarusym{[}  \gyarunt{t}  \gyarusym{/}  \gyarumv{x}  \gyarusym{]}  \gyarunt{e}  \colon  \gyarunt{B} .
      \]

      \subcase{sub}
      In this case the derivation is
      \[
        \inferrule*
        {
           \sigma'  \le  \sigma 
          \and
           \sigma'  \mid  \mathsf{N}   \odot   \Delta  \vdash_{ \mathsf{n} }   \operatorname{\downarrow} ^{ \gyarunt{q} }_{ \mathsf{n}  \le  \mathsf{m} }  \gyarunt{t}   \colon   \operatorname{\downarrow} ^{ \gyarunt{q} }_{ \mathsf{n}  \le  \mathsf{m} }  \gyarunt{A}  
        }
        {
           \sigma  \mid  \mathsf{N}   \odot   \Delta  \vdash_{ \mathsf{n} }   \operatorname{\downarrow} ^{ \gyarunt{q} }_{ \mathsf{n}  \le  \mathsf{m} }  \gyarunt{t}   \colon   \operatorname{\downarrow} ^{ \gyarunt{q} }_{ \mathsf{n}  \le  \mathsf{m} }  \gyarunt{A}  
        }
      \]
      From $  \sigma'  \le  \sigma  $ we obtain $  \rho  \gyarusym{,}  \sigma'  \le  \rho  \gyarusym{,}  \sigma  $
      and hence we may apply the rule \rulename{gyaru}{term}{sub}
      to the inductive hypothesis
      \[
         \rho  \gyarusym{,}  \sigma'  \mid  \mathsf{M}  \gyarusym{,}  \mathsf{N}   \odot   \Gamma  \gyarusym{,}  \Delta  \vdash_{ \mathsf{l} }  \gyarusym{[}  \gyarunt{t}  \gyarusym{/}  \gyarumv{x}  \gyarusym{]}  \gyarunt{e}  \colon  \gyarunt{B} .
      \]
      to obtain the desired term.

      \subcase{cont}
      In this case the derivation is
      \[
        \inferrule*
        {
           \gyarunt{r_{{\mathrm{1}}}}  \gyarusym{,}  \gyarunt{r_{{\mathrm{2}}}}  \in \mathsf{Cont}( \mathsf{n}' ) 
          \and
           \sigma  \gyarusym{,}  \gyarunt{r_{{\mathrm{1}}}}  \gyarusym{,}  \gyarunt{r_{{\mathrm{2}}}}  \mid  \mathsf{N}  \gyarusym{,}  \mathsf{n}'  \gyarusym{,}  \mathsf{n}'   \odot   \Delta  \gyarusym{,}  \gyarumv{y_{{\mathrm{1}}}}  \gyarusym{:}  \gyarunt{T}  \gyarusym{,}  \gyarumv{y_{{\mathrm{2}}}}  \gyarusym{:}  \gyarunt{T}  \vdash_{ \mathsf{n} }   \operatorname{\downarrow} ^{ \gyarunt{q} }_{ \mathsf{n}  \le  \mathsf{m} }  \gyarunt{t}   \colon   \operatorname{\downarrow} ^{ \gyarunt{q} }_{ \mathsf{n}  \le  \mathsf{m} }  \gyarunt{A}  
        }
        {
           \sigma  \gyarusym{,}  \gyarunt{r_{{\mathrm{1}}}}  \gyarusym{+}  \gyarunt{r_{{\mathrm{2}}}}  \mid  \mathsf{N}  \gyarusym{,}  \mathsf{n}'   \odot   \Delta  \gyarusym{,}  \gyarumv{z}  \gyarusym{:}  \gyarunt{T}  \vdash_{ \mathsf{n} }   \operatorname{\downarrow} ^{ \gyarunt{q} }_{ \mathsf{n}  \le  \mathsf{m} }  \gyarusym{[}  \gyarumv{z}  \gyarusym{/}  \gyarumv{y_{{\mathrm{1}}}}  \gyarusym{,}  \gyarumv{z}  \gyarusym{/}  \gyarumv{y_{{\mathrm{2}}}}  \gyarusym{]}  \gyarunt{t}   \colon   \operatorname{\downarrow} ^{ \gyarunt{q} }_{ \mathsf{n}  \le  \mathsf{m} }  \gyarunt{A}  
        }
      \]
      The inductive hypothesis is
      \[
         \rho  \gyarusym{,}  \sigma  \gyarusym{,}  \gyarunt{r_{{\mathrm{1}}}}  \gyarusym{,}  \gyarunt{r_{{\mathrm{2}}}}  \mid  \mathsf{M}  \gyarusym{,}  \mathsf{N}  \gyarusym{,}  \mathsf{n}'  \gyarusym{,}  \mathsf{n}'   \odot   \Gamma  \gyarusym{,}  \Delta  \gyarusym{,}  \gyarumv{y_{{\mathrm{1}}}}  \gyarusym{:}  \gyarunt{T}  \gyarusym{,}  \gyarumv{y_{{\mathrm{2}}}}  \gyarusym{:}  \gyarunt{T}  \vdash_{ \mathsf{l} }  \gyarusym{[}  \gyarunt{t}  \gyarusym{/}  \gyarumv{x}  \gyarusym{]}  \gyarunt{e}  \colon  \gyarunt{B} .
      \]
      Applying \rulename{gyaru}{term}{cont} we obtain
      \[
         \rho  \gyarusym{,}  \sigma  \gyarusym{,}  \gyarunt{r_{{\mathrm{1}}}}  \gyarusym{+}  \gyarunt{r_{{\mathrm{2}}}}  \mid  \mathsf{M}  \gyarusym{,}  \mathsf{N}  \gyarusym{,}  \mathsf{n}'   \odot   \Gamma  \gyarusym{,}  \Delta  \gyarusym{,}  \gyarumv{z}  \gyarusym{:}  \gyarunt{T}  \vdash_{ \mathsf{l} }  \gyarusym{[}  \gyarumv{z}  \gyarusym{/}  \gyarumv{y_{{\mathrm{1}}}}  \gyarusym{,}  \gyarumv{z}  \gyarusym{/}  \gyarumv{y_{{\mathrm{2}}}}  \gyarusym{]}  \gyarusym{[}  \gyarunt{t}  \gyarusym{/}  \gyarumv{x}  \gyarusym{]}  \gyarunt{e}  \colon  \gyarunt{B} .
      \]
      Since the variables $ y_1, y_2 $ are not free in $ e $,
      this is equal to the term $ \gyarusym{[}  \gyarusym{[}  \gyarumv{z}  \gyarusym{/}  \gyarumv{y_{{\mathrm{1}}}}  \gyarusym{,}  \gyarumv{z}  \gyarusym{/}  \gyarumv{y_{{\mathrm{2}}}}  \gyarusym{]}  \gyarunt{t}  \gyarusym{/}  \gyarumv{x}  \gyarusym{]}  \gyarunt{e} $,
      which is the desired term.

  \end{case}
  \begin{case}[\normalfont\rulename{gyaru}{beta}{sumL}\bf]\rm
    Given
    \begin{mathpar}
       \gyarunt{q}  \ge   1  
      \and
       \rho  \gyarusym{,}  \gyarunt{q}  \mid  \mathsf{M}  \gyarusym{,}  \mathsf{n}   \odot   \Gamma  \gyarusym{,}  \gyarumv{x_{{\mathrm{1}}}}  \gyarusym{:}  \gyarunt{A_{{\mathrm{1}}}}  \vdash_{ \mathsf{m} }  \gyarunt{e_{{\mathrm{1}}}}  \colon  \gyarunt{A} 
      \and
       \sigma  \mid  \mathsf{N}   \odot   \Delta  \vdash_{ \mathsf{n} }  \operatorname{\mathsf{inl} } \, \gyarunt{t}  \colon   \gyarunt{A_{{\mathrm{1}}}}  \oplus  \gyarunt{A_{{\mathrm{2}}}}  
    \end{mathpar}
    we must show that
    \[
       \rho  \gyarusym{,}   \gyarunt{q}   \sigma   \mid  \mathsf{M}  \gyarusym{,}  \mathsf{N}   \odot   \Gamma  \gyarusym{,}  \Delta  \vdash_{ \mathsf{m} }  \gyarusym{[}  \gyarunt{t}  \gyarusym{/}  \gyarumv{x_{{\mathrm{1}}}}  \gyarusym{]}  \gyarunt{e_{{\mathrm{1}}}}  \colon  \gyarunt{A} 
    \]
    We proceed by induction on the derivation of
    $  \sigma  \mid  \mathsf{N}   \odot   \Delta  \vdash_{ \mathsf{n} }  \operatorname{\mathsf{inl} } \, \gyarunt{t}  \colon   \gyarunt{A_{{\mathrm{1}}}}  \oplus  \gyarunt{A_{{\mathrm{2}}}}   $.
    We must analyse the cases \rulename{gyaru}{term}{sumIL} and
    \rulenames{gyaru}{term}{weak,sub,cont}.

      \subcase{sumIL}
      In this case the derivation is
      \[
        \inferrule*
        {
           \sigma  \mid  \mathsf{N}   \odot   \Delta  \vdash_{ \mathsf{n} }  \gyarunt{t}  \colon  \gyarunt{A_{{\mathrm{1}}}} 
        }
        {
           \sigma  \mid  \mathsf{N}   \odot   \Delta  \vdash_{ \mathsf{n} }  \operatorname{\mathsf{inl} } \, \gyarunt{t}  \colon   \gyarunt{A_{{\mathrm{1}}}}  \oplus  \gyarunt{A_{{\mathrm{2}}}}  
        }
      \]
      Apply substitution to obtain the desired result.

      \subcase{weak}
      In this cae the derivation is
      \[
        \inferrule*
        {
           \mathsf{Weak} (  \mathsf{n}'  ) 
          \and
          \mathsf{n}'  \ge  \mathsf{n}
          \and
           \gyarunt{T}  \in \mathsf{Type}( \mathsf{n}' ) 
          \and
           \sigma  \mid  \mathsf{N}   \odot   \Delta  \vdash_{ \mathsf{n} }  \operatorname{\mathsf{inl} } \, \gyarunt{t}  \colon   \gyarunt{A_{{\mathrm{1}}}}  \oplus  \gyarunt{A_{{\mathrm{2}}}}  
        }
        {
           \sigma  \gyarusym{,}   0   \mid  \mathsf{N}  \gyarusym{,}  \mathsf{n}'   \odot   \Delta  \gyarusym{,}  \gyarumv{y}  \gyarusym{:}  \gyarunt{T}  \vdash_{ \mathsf{n} }  \operatorname{\mathsf{inl} } \, \gyarunt{t}  \colon   \gyarunt{A_{{\mathrm{1}}}}  \oplus  \gyarunt{A_{{\mathrm{2}}}}  
        }
      \]
      We have $ \mathsf{n}'  \ge  \mathsf{n}  \ge  \mathsf{m} $ and hence may apply the rule
      \rulename{gyaru}{term}{weak} to the inductive hypothesis,
      obtaining
      \[
        \inferrule*
        {
           \rho  \gyarusym{,}   \gyarunt{q}   \sigma   \mid  \mathsf{M}  \gyarusym{,}  \mathsf{N}   \odot   \Gamma  \gyarusym{,}  \Delta  \vdash_{ \mathsf{m} }  \gyarusym{[}  \gyarunt{t}  \gyarusym{/}  \gyarumv{x_{{\mathrm{1}}}}  \gyarusym{]}  \gyarunt{e_{{\mathrm{1}}}}  \colon  \gyarunt{A} 
        }
        {
           \rho  \gyarusym{,}   \gyarunt{q}   \sigma   \gyarusym{,}   0   \mid  \mathsf{M}  \gyarusym{,}  \mathsf{N}  \gyarusym{,}  \mathsf{n}'   \odot   \Gamma  \gyarusym{,}  \Delta  \gyarusym{,}  \gyarumv{y}  \gyarusym{:}  \gyarunt{T}  \vdash_{ \mathsf{m} }  \gyarusym{[}  \gyarunt{t}  \gyarusym{/}  \gyarumv{x_{{\mathrm{1}}}}  \gyarusym{]}  \gyarunt{e_{{\mathrm{1}}}}  \colon  \gyarunt{A} 
        }
      \]
      as desired.

      \subcase{sub}
      In this case the derivation is
      \[
        \inferrule*
        {
           \sigma'  \le  \sigma 
          \and
           \sigma'  \mid  \mathsf{N}   \odot   \Delta  \vdash_{ \mathsf{n} }  \operatorname{\mathsf{inl} } \, \gyarunt{t}  \colon   \gyarunt{A_{{\mathrm{1}}}}  \oplus  \gyarunt{A_{{\mathrm{2}}}}  
        }
        {
           \sigma  \mid  \mathsf{N}   \odot   \Delta  \vdash_{ \mathsf{n} }  \operatorname{\mathsf{inl} } \, \gyarunt{t}  \colon   \gyarunt{A_{{\mathrm{1}}}}  \oplus  \gyarunt{A_{{\mathrm{2}}}}  
        }
      \]
      By monotonicity of multiplication we have
      $  \rho  \gyarusym{,}   \gyarunt{q}   \sigma'   \le  \rho  \gyarusym{,}   \gyarunt{q}   \sigma   $ and hence
      we may apply the rule \rulename{gyaru}{term}{sub}
      to the inductive hypothesis
      \[
         \rho  \gyarusym{,}   \gyarunt{q}   \sigma'   \mid  \mathsf{M}  \gyarusym{,}  \mathsf{N}   \odot   \Gamma  \gyarusym{,}  \Delta  \vdash_{ \mathsf{m} }  \gyarusym{[}  \gyarunt{t}  \gyarusym{/}  \gyarumv{x_{{\mathrm{1}}}}  \gyarusym{]}  \gyarunt{e_{{\mathrm{1}}}}  \colon  \gyarunt{A} 
      \]
      obtaining the desired result.

      \subcase{cont}
      In this case the derivation is
      \[
        \inferrule*
        {
           \gyarunt{r_{{\mathrm{1}}}}  \gyarusym{,}  \gyarunt{r_{{\mathrm{2}}}}  \in \mathsf{Cont}( \mathsf{n}' ) 
          \and
           \sigma  \gyarusym{,}  \gyarunt{r_{{\mathrm{1}}}}  \gyarusym{,}  \gyarunt{r_{{\mathrm{2}}}}  \mid  \mathsf{N}  \gyarusym{,}  \mathsf{n}'  \gyarusym{,}  \mathsf{n}'   \odot   \Delta  \gyarusym{,}  \gyarumv{y_{{\mathrm{1}}}}  \gyarusym{:}  \gyarunt{T}  \gyarusym{,}  \gyarumv{y_{{\mathrm{2}}}}  \gyarusym{:}  \gyarunt{T}  \vdash_{ \mathsf{n} }  \operatorname{\mathsf{inl} } \, \gyarunt{t}  \colon   \gyarunt{A_{{\mathrm{1}}}}  \oplus  \gyarunt{A_{{\mathrm{2}}}}  
        }
        {
           \sigma  \gyarusym{,}  \gyarunt{r_{{\mathrm{1}}}}  \gyarusym{+}  \gyarunt{r_{{\mathrm{2}}}}  \mid  \mathsf{N}  \gyarusym{,}  \mathsf{n}'   \odot   \Delta  \gyarusym{,}  \gyarumv{z}  \gyarusym{:}  \gyarunt{T}  \vdash_{ \mathsf{n} }  \operatorname{\mathsf{inl} } \, \gyarusym{[}  \gyarumv{z}  \gyarusym{/}  \gyarumv{y_{{\mathrm{1}}}}  \gyarusym{,}  \gyarumv{z}  \gyarusym{/}  \gyarumv{y_{{\mathrm{2}}}}  \gyarusym{]}  \gyarunt{t}  \colon   \gyarunt{A_{{\mathrm{1}}}}  \oplus  \gyarunt{A_{{\mathrm{2}}}}  
        }
      \]
      The inductive hypothesis is
      \[
         \rho  \gyarusym{,}   \gyarunt{q}   \gyarusym{(}  \sigma  \gyarusym{,}  \gyarunt{r_{{\mathrm{1}}}}  \gyarusym{,}  \gyarunt{r_{{\mathrm{2}}}}  \gyarusym{)}   \mid  \mathsf{M}  \gyarusym{,}  \mathsf{N}  \gyarusym{,}  \mathsf{n}'  \gyarusym{,}  \mathsf{n}'   \odot   \Gamma  \gyarusym{,}  \Delta  \gyarusym{,}  \gyarumv{y_{{\mathrm{1}}}}  \gyarusym{:}  \gyarunt{T}  \gyarusym{,}  \gyarumv{y_{{\mathrm{2}}}}  \gyarusym{:}  \gyarunt{T}  \vdash_{ \mathsf{m} }  \gyarusym{[}  \gyarunt{t}  \gyarusym{/}  \gyarumv{x_{{\mathrm{1}}}}  \gyarusym{]}  \gyarunt{e_{{\mathrm{1}}}}  \colon  \gyarunt{A} .
      \]
      Since $  \mathsf{Cont}( \mathsf{n}' )  $ is an ideal,
      we have $   \gyarunt{q}   \gyarunt{r_{{\mathrm{1}}}}   \gyarusym{,}   \gyarunt{q}   \gyarunt{r_{{\mathrm{2}}}}   \in \mathsf{Cont}( \mathsf{n}' )  $
      and thus may apply the rule \rulename{gyaru}{term}{cont}
      to obtain
      \[
         \rho  \gyarusym{,}   \gyarunt{q}   \gyarusym{(}  \sigma  \gyarusym{,}  \gyarunt{r_{{\mathrm{1}}}}  \gyarusym{+}  \gyarunt{r_{{\mathrm{2}}}}  \gyarusym{)}   \mid  \mathsf{M}  \gyarusym{,}  \mathsf{N}  \gyarusym{,}  \mathsf{n}'   \odot   \Gamma  \gyarusym{,}  \Delta  \gyarusym{,}  \gyarumv{z}  \gyarusym{:}  \gyarunt{T}  \vdash_{ \mathsf{m} }  \gyarusym{[}  \gyarumv{z}  \gyarusym{/}  \gyarumv{y_{{\mathrm{1}}}}  \gyarusym{,}  \gyarumv{z}  \gyarusym{/}  \gyarumv{y_{{\mathrm{2}}}}  \gyarusym{]}  \gyarusym{[}  \gyarunt{t}  \gyarusym{/}  \gyarumv{x_{{\mathrm{1}}}}  \gyarusym{]}  \gyarunt{e_{{\mathrm{1}}}}  \colon  \gyarunt{A} .
      \]
      Since the variables $ y_1, y_2 $ are not free in $ e_1 $,
      this term is equal to $ \gyarusym{[}  \gyarusym{[}  \gyarumv{z}  \gyarusym{/}  \gyarumv{y_{{\mathrm{1}}}}  \gyarusym{,}  \gyarumv{z}  \gyarusym{/}  \gyarumv{y_{{\mathrm{2}}}}  \gyarusym{]}  \gyarunt{t}  \gyarusym{/}  \gyarumv{x_{{\mathrm{1}}}}  \gyarusym{]}  \gyarunt{e_{{\mathrm{1}}}} $,
      concluding the proof.

  \end{case}

  \begin{case}[\normalfont\rulename{gyaru}{beta}{sumR}\bf]\rm
    Analogous to the case \rulename{gyaru}{beta}{sumL}.
  \end{case}
\end{proof}

\newpage

\subsection{Syntactic soundness eta-conversions}
\label{asec:multi-mode-eta}

We define $ \eta $-conversion as a set of typed equalities as follows:

\begin{mathpar}
  \namedrules{gyaru}{eta}{unit,pair,arrow,sum,raise,drop}
\end{mathpar}

\begin{theorem}
  \label{athm:eta-syntactic-soundness}
  $ \eta $-conversion preserves types and grades:
  Whenever $  \rho  \mid  \mathsf{M}  \odot  \Gamma  \vdash_{ \mathsf{m} }  \gyarunt{t_{{\mathrm{1}}}}  \equiv_\eta  \gyarunt{t_{{\mathrm{2}}}}  \colon  \gyarunt{A}  $
  then $  \rho  \mid  \mathsf{M}   \odot   \Gamma  \vdash_{ \mathsf{m} }  \gyarunt{t_{{\mathrm{1}}}}  \colon  \gyarunt{A}  $ and
  $  \rho  \mid  \mathsf{M}   \odot   \Gamma  \vdash_{ \mathsf{m} }  \gyarunt{t_{{\mathrm{2}}}}  \colon  \gyarunt{A}  $.
\end{theorem}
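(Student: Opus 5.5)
The plan is a case analysis on the $\eta$-rule used, in the same style as the proof of Theorem~\ref{athm:beta-syntactic-soundness}. In each $\eta$-rule one side is the bare term $t$, typed by the premise of the rule: $\rho \mid \mathsf M \odot \Gamma \vdash_{\mathsf m} t : A$. So only the expanded side needs work. The key difference from the $\beta$ case is that here no induction on the derivation of $t$ is needed. The expansion is built directly on top of the given derivation of $t$, using the elimination rule followed by the introduction rule, or the reverse for the negative types. No inversion is required.

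\textbf{Positive types} ($\mathbf I_{\mathsf m}$, $\otimes$, $\oplus$, $\operatorname{\downarrow}^q_{\mathsf n\le\mathsf m}$). For $\otimes$, first derive $1,1 \mid \mathsf m,\mathsf m \odot x_1{:}A_1, x_2{:}A_2 \vdash_{\mathsf m} (x_1,x_2) : A_1\otimes A_2$ using \textsc{Var} twice and \textsc{Pair.i}. Then apply \textsc{Pair.e} at grade $q=1$ with scrutinee $t$. This gives grade vector $1\cdot\rho=\rho$, using $\phi(1)=1$ and $1\cdot r=r$. The unit, sum and drop cases are identical in shape:
\begin{itemize}
\item for the unit, apply \textsc{Unit.e} at grade $1$ to the body $\star_{\mathsf m}$;
\item for sums, apply \textsc{Sum.e} at grade $1$ with branches $\mathsf{inl}\,x_1$ and $\mathsf{inr}\,x_2$; the side condition $q\ge 1$ holds by reflexivity;
\item for drop, apply \textsc{Drop.i} at grade $q$ to the variable typed at grade $1$, which yields grade $q\cdot 1 = q$ on $y$, matching the premise of \textsc{Drop.e}.
\end{itemize}
I must check that the grade annotations in the paper's $\eta$-rules are exactly these (grade $1$ on the let). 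If an $\eta$-rule instead carries an arbitrary grade $q$, I would need a subsumption $q\rho \le \rho$, which does not hold in general. In that case I would re-read the rule and match its stated context.

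\textbf{Negative types} ($\multimap$, $\operatorname{\uparrow}$). For $\multimap$, start from $\rho\mid\mathsf M\odot\Gamma\vdash_{\mathsf n} t : A^{q:\mathsf m}\multimap B$ and the variable judgment $1\mid\mathsf m\odot x{:}A\vdash_{\mathsf m} x:A$. Applying \textsc{Fun.e} gives context grade $\rho, q\cdot 1 = \rho, q$ at modes $\mathsf M,\mathsf m$. \textsc{Fun.i} then yields $\lambda x.\,t\,x$ at the original context. Independence $\mathsf m\ge\mathsf n$ holds because the arrow type is well formed. For $\operatorname{\uparrow}$, apply \textsc{Raise.e} and then \textsc{Raise.i}. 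The side condition $\mathsf n\le\mathsf M$ of \textsc{Raise.i} follows from independence of the original judgment at mode $\mathsf n$.

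The main obstacle is bookkeeping rather than a conceptual difficulty. Three things need checking: that the identities $\phi_{\mathsf m,\mathsf n}(1)=1$ and $1\cdot r = r\cdot 1 = r$ make the scalar multiplications of Definition~\ref{defn:algebra-multiplication} collapse correctly; that the mode side conditions of \textsc{Drop.e} ($\mathsf l\le\mathsf n\le\mathsf m$) and \textsc{Raise.i} are met; and that freshness of the bound variables holds so that context concatenation is defined. The drop case is the most delicate, because grades live in two different algebras there. There I would verify explicitly that the body's grade on $y$ is exactly $q$.
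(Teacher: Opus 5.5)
Your proposal is correct and is essentially the paper's proof. The unexpanded side is the premise of the $\eta$-rule, and the expanded side is built directly on the given derivation of $t$, with no induction or inversion. For the positive types, the introduction form goes in the body of the eliminator, at grade $1$ for $\mathbf I$, $\otimes$, $\oplus$ and at grade $q$ for $\downarrow$; for $\multimap$ and $\uparrow$, you apply the eliminator and then the introduction rule. Your worry about an arbitrary grade on the let does not arise: the paper's unit, pair and sum $\eta$-rules use grade $1$, and \textsc{Drop.e} does not rescale the scrutinee's grades, so no subsumption is needed.
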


\begin{proof}
  The assertion that $  \rho  \mid  \mathsf{M}   \odot   \Gamma  \vdash_{ \mathsf{m} }  \gyarunt{t_{{\mathrm{1}}}}  \colon  \gyarunt{A}  $
  is an assumption in each rule for $ \eta $-expansion.

  For the assertion that $  \rho  \mid  \mathsf{M}   \odot   \Gamma  \vdash_{ \mathsf{m} }  \gyarunt{t_{{\mathrm{2}}}}  \colon  \gyarunt{A}  $ we have the following
  derivations:
  \begin{mathpar}
    \inferrule*
    [Left = \rulename{gyaru}{term}{unitE}]
    {
      \inferrule*
      [Left=\rulename{gyaru}{term}{unitI}]
      {
      }
      {
         \emptyset  \mid  \emptyset   \odot   \emptyset  \vdash_{ \mathsf{m} }   \star_ \mathsf{m}   \colon   \mathbf{I}_ \mathsf{m}  
      }
      \and
       \rho  \mid  \mathsf{M}   \odot   \Gamma  \vdash_{ \mathsf{m} }  \gyarunt{t}  \colon   \mathbf{I}_ \mathsf{m}  
    }
    {
       \rho  \mid  \mathsf{M}   \odot   \Gamma  \vdash_{ \mathsf{m} }   \mathbin{\mathsf{let} } _{@   1  }   \star_ \mathsf{m}   =  \gyarunt{t}   \mathbin{\mathsf{in} }    \star_ \mathsf{m}    \colon   \mathbf{I}_ \mathsf{m}  
    }
    \\
    \inferrule*
    [Left = \rulename{gyaru}{term}{pairE}]
    {
      \inferrule*
      [Left = \rulename{gyaru}{term}{pairI}]
      {
        \inferrule*
        [Left = \rulename{gyaru}{term}{var}]
        {
        }
        {
            1   \mid  \mathsf{m}   \odot   \gyarumv{x_{{\mathrm{1}}}}  \gyarusym{:}  \gyarunt{A_{{\mathrm{1}}}}  \vdash_{ \mathsf{m} }  \gyarumv{x_{{\mathrm{1}}}}  \colon  \gyarunt{A_{{\mathrm{1}}}} 
        }
        \and
        \inferrule*
        [Left = \rulename{gyaru}{term}{var}]
        {
        }
        {
            1   \mid  \mathsf{m}   \odot   \gyarumv{x_{{\mathrm{2}}}}  \gyarusym{:}  \gyarunt{A_{{\mathrm{2}}}}  \vdash_{ \mathsf{m} }  \gyarumv{x_{{\mathrm{2}}}}  \colon  \gyarunt{A_{{\mathrm{2}}}} 
        }
      }
      {
          1   \gyarusym{,}   1   \mid  \mathsf{m}  \gyarusym{,}  \mathsf{m}   \odot   \gyarumv{x_{{\mathrm{1}}}}  \gyarusym{:}  \gyarunt{A_{{\mathrm{1}}}}  \gyarusym{,}  \gyarumv{x_{{\mathrm{2}}}}  \gyarusym{:}  \gyarunt{A_{{\mathrm{2}}}}  \vdash_{ \mathsf{m} }  \gyarusym{(}  \gyarumv{x_{{\mathrm{1}}}}  \gyarusym{,}  \gyarumv{x_{{\mathrm{2}}}}  \gyarusym{)}  \colon   \gyarunt{A_{{\mathrm{1}}}}  \otimes  \gyarunt{A_{{\mathrm{2}}}}  
      }
      \and
       \rho  \mid  \mathsf{M}   \odot   \Gamma  \vdash_{ \mathsf{m} }  \gyarunt{t}  \colon   \gyarunt{A_{{\mathrm{1}}}}  \otimes  \gyarunt{A_{{\mathrm{2}}}}  
    }
    {
       \rho  \mid  \mathsf{M}   \odot   \Gamma  \vdash_{ \mathsf{m} }   \mathbin{\mathsf{let} } _{@   1  }  \gyarusym{(}  \gyarumv{x_{{\mathrm{1}}}}  \gyarusym{,}  \gyarumv{x_{{\mathrm{2}}}}  \gyarusym{)}  =  \gyarunt{t}   \mathbin{\mathsf{in} }   \gyarusym{(}  \gyarumv{x_{{\mathrm{1}}}}  \gyarusym{,}  \gyarumv{x_{{\mathrm{2}}}}  \gyarusym{)}   \colon   \gyarunt{A_{{\mathrm{1}}}}  \otimes  \gyarunt{A_{{\mathrm{2}}}}  
    }
    \\
    \inferrule*
    [Left = \rulename{gyaru}{term}{arrowI}]
    {
      \inferrule*
      [Left = \rulename{gyaru}{term}{arrowE}]
      {
         \rho  \mid  \mathsf{M}   \odot   \Gamma  \vdash_{ \mathsf{n} }  \gyarunt{t}  \colon   \gyarunt{A} ^{ \gyarunt{q} : \mathsf{m} } \multimap  \gyarunt{B}  
        \and
        \inferrule*
        [Left = \rulename{gyaru}{term}{var}]
        {
        }
        {
            1   \mid  \mathsf{m}   \odot   \gyarumv{x}  \gyarusym{:}  \gyarunt{A}  \vdash_{ \mathsf{m} }  \gyarumv{x}  \colon  \gyarunt{A} 
        }
      }
      {
         \rho  \gyarusym{,}  \gyarunt{q}  \mid  \mathsf{M}  \gyarusym{,}  \mathsf{m}   \odot   \Gamma  \gyarusym{,}  \gyarumv{x}  \gyarusym{:}  \gyarunt{A}  \vdash_{ \mathsf{n} }  \gyarunt{t} \, \gyarumv{x}  \colon  \gyarunt{B} 
      }
    }
    {
       \rho  \mid  \mathsf{M}   \odot   \Gamma  \vdash_{ \mathsf{n} }  \lambda  \gyarumv{x}  \gyarusym{.}  \gyarusym{(}  \gyarunt{t} \, \gyarumv{x}  \gyarusym{)}  \colon   \gyarunt{A} ^{ \gyarunt{q} : \mathsf{m} } \multimap  \gyarunt{B}  
    }
    \\
    \inferrule*
    [Left = \rulename{gyaru}{term}{raiseI}]
    {
      \inferrule*
      [Left = \rulename{gyaru}{term}{raiseE}]
      {
         \rho  \mid  \mathsf{M}   \odot   \Gamma  \vdash_{ \mathsf{m} }  \gyarunt{t}  \colon   \operatorname{\uparrow} _{ \mathsf{n}  \le  \mathsf{m} }  \gyarunt{A}  
      }
      {
         \rho  \mid  \mathsf{M}   \odot   \Gamma  \vdash_{ \mathsf{n} }   \operatorname{\uparrow}^{-1} _{ \mathsf{n}  \le  \mathsf{m} }  \gyarunt{t}   \colon  \gyarunt{A} 
      }
    }
    {
       \rho  \mid  \mathsf{M}   \odot   \Gamma  \vdash_{ \mathsf{m} }   \operatorname{\uparrow} _{ \mathsf{n}  \le  \mathsf{m} }  \gyarusym{(}   \operatorname{\uparrow}^{-1} _{ \mathsf{n}  \le  \mathsf{m} }  \gyarunt{t}   \gyarusym{)}   \colon   \operatorname{\uparrow} _{ \mathsf{n}  \le  \mathsf{m} }  \gyarunt{A}  
    }
    \\
    \inferrule*
    [Left = \rulename{gyaru}{term}{dropE}]
    {
      \inferrule*
      [Left = \rulename{gyaru}{term}{dropI}]
      {
        \inferrule*
        [Left = \rulename{gyaru}{term}{var}]
        {
        }
        {
            1   \mid  \mathsf{m}   \odot   \gyarumv{z}  \gyarusym{:}  \gyarunt{A}  \vdash_{ \mathsf{m} }  \gyarumv{z}  \colon  \gyarunt{A} 
        }
      }
      {
         \gyarunt{q}  \mid  \mathsf{m}   \odot   \gyarumv{z}  \gyarusym{:}  \gyarunt{A}  \vdash_{ \mathsf{n} }   \operatorname{\downarrow} ^{ \gyarunt{q} }_{ \mathsf{n}  \le  \mathsf{m} }  \gyarumv{z}   \colon   \operatorname{\downarrow} ^{ \gyarunt{q} }_{ \mathsf{n}  \le  \mathsf{m} }  \gyarunt{A}  
      }
      \and
       \rho  \mid  \mathsf{M}   \odot   \Gamma  \vdash_{ \mathsf{n} }  \gyarunt{t}  \colon   \operatorname{\downarrow} ^{ \gyarunt{q} }_{ \mathsf{n}  \le  \mathsf{m} }  \gyarunt{A}  
    }
    {
       \rho  \mid  \mathsf{M}   \odot   \Gamma  \vdash_{ \mathsf{n} }   \mathbin{\mathsf{let} } _{@  \gyarunt{q} }   \operatorname{\downarrow} _{ \mathsf{n}  \le  \mathsf{m} }  \gyarumv{z}   =  \gyarunt{t}   \mathbin{\mathsf{in} }    \operatorname{\downarrow} ^{ \gyarunt{q} }_{ \mathsf{n}  \le  \mathsf{m} }  \gyarumv{z}    \colon   \operatorname{\downarrow} ^{ \gyarunt{q} }_{ \mathsf{n}  \le  \mathsf{m} }  \gyarunt{A}  
    }
    \\
    \inferrule*
    [Left = \rulename{gyaru}{term}{sumE}]
    {
      \inferrule*[Left=\rulename{gyaru}{term}{sumIL}]
      {
        \inferrule*[Left=\rulename{gyaru}{term}{var}]
        {
        }
        {  1   \mid  \mathsf{m}   \odot   \gyarumv{x_{{\mathrm{1}}}}  \gyarusym{:}  \gyarunt{A_{{\mathrm{1}}}}  \vdash_{ \mathsf{m} }  \gyarumv{x_{{\mathrm{1}}}}  \colon  \gyarunt{A_{{\mathrm{1}}}} }
      }
      {
          1   \mid  \mathsf{m}   \odot   \gyarumv{x_{{\mathrm{1}}}}  \gyarusym{:}  \gyarunt{A_{{\mathrm{1}}}}  \vdash_{ \mathsf{m} }  \operatorname{\mathsf{inl} } \, \gyarumv{x_{{\mathrm{1}}}}  \colon   \gyarunt{A_{{\mathrm{1}}}}  \oplus  \gyarunt{A_{{\mathrm{2}}}}  
      }
      \
      \inferrule*[left=\rulename{gyaru}{term}{sumIR}]
      {
        \inferrule*[Left=\rulename{gyaru}{term}{var}]
        {
        }
        {  1   \mid  \mathsf{m}   \odot   \gyarumv{x_{{\mathrm{2}}}}  \gyarusym{:}  \gyarunt{A_{{\mathrm{2}}}}  \vdash_{ \mathsf{m} }  \gyarumv{x_{{\mathrm{2}}}}  \colon  \gyarunt{A_{{\mathrm{2}}}} }
      }
      {
          1   \mid  \mathsf{m}   \odot   \gyarumv{x_{{\mathrm{2}}}}  \gyarusym{:}  \gyarunt{A_{{\mathrm{2}}}}  \vdash_{ \mathsf{m} }  \operatorname{\mathsf{inr} } \, \gyarumv{x_{{\mathrm{1}}}}  \colon   \gyarunt{A_{{\mathrm{1}}}}  \oplus  \gyarunt{A_{{\mathrm{2}}}}  
      }
      \and
       \rho  \mid  \mathsf{M}   \odot   \Gamma  \vdash_{ \mathsf{m} }  \gyarunt{t}  \colon   \gyarunt{A_{{\mathrm{1}}}}  \oplus  \gyarunt{A_{{\mathrm{2}}}}  
    }
    {
       \rho  \mid  \mathsf{M}  \odot  \Gamma  \vdash_{ \mathsf{m} }  \gyarunt{t}  \equiv_\eta   \mathsf{case}_  1  ( \gyarunt{t} ; \gyarumv{x_{{\mathrm{1}}}} . \operatorname{\mathsf{inl} } \, \gyarumv{x_{{\mathrm{1}}}} ; \gyarumv{x_{{\mathrm{2}}}} . \operatorname{\mathsf{inr} } \, \gyarumv{x_{{\mathrm{2}}}} )   \colon   \gyarunt{A_{{\mathrm{1}}}}  \oplus  \gyarunt{A_{{\mathrm{2}}}}  
    }
  \end{mathpar}
\end{proof}

\newpage
\section{Semantic soundness of the equational theory}
\label{asec:semantic-soundness}

\subsection{Substitution is composition}
\label{asec:subst-comp}
  \begin{lemma}
  \label{alem:subst-comp}
  In the situation of Theorem \ref{athm:multi-subst},
  suppose the terms $ t $ and $ e_1, \mathellipsis, e_k $ have categorical
  semantics
  \begin{mathpar}
	\interpCtx{\rho}{\mode M}{\Gamma}{\mode m}
    = \bigotimes_i F^{\mode m_i}_{\mode m}(r_i \at A_i)
    \xrightarrow{t} B
    \and
    \interpCtx{\sigma_i}{\mode N_i}{\Delta_i}{\mode m_i}
    = \bigotimes_j F^{\mode n_{ij}}_{\mode m_i} (q_{ij} \at A_{ij})
    \xrightarrow{e_i} A_i\
    (i \in \{1, \mathellipsis, k\})
  \end{mathpar}
  The interpretation of the substition
  $   \gyarunt{r_{{\mathrm{1}}}}   \sigma_{{\mathrm{1}}}   \gyarusym{,}  \mathellipsis  \gyarusym{,}   \gyarunt{r_{\gyarumv{k}}}   \sigma_{\gyarumv{k}}   \mid  \mathsf{N}_{{\mathrm{1}}}  \gyarusym{,}  \mathellipsis  \gyarusym{,}  \mathsf{N}_{\gyarumv{k}}   \odot   \Delta_{{\mathrm{1}}}  \gyarusym{,}  \mathellipsis  \gyarusym{,}  \Delta_{\gyarumv{k}}  \vdash_{ \mathsf{l} }  \gyarusym{[}  \gyarunt{e_{\gyarumv{i}}}  \gyarusym{/}  \gyarumv{x_{\gyarumv{i}}}  \gyarusym{]}  \gyarunt{t}  \colon  \gyarunt{B}  $
  is given by the morphism
  \begin{mathpar}
    \bigotimes_i
    \interp{   \gyarunt{r_{\gyarumv{i}}}   \sigma_{\gyarumv{i}}   \mid  \mathsf{N}_{\gyarumv{i}}  \odot  \Delta_{\gyarumv{i}}  }_{\mathsf{m}}
    \iso
    \bigotimes_i
    F^{\mode m_i}_{\mode m}
    \interp{   \gyarunt{r_{\gyarumv{i}}}   \sigma_{\gyarumv{i}}   \mid  \mathsf{N}_{\gyarumv{i}}  \odot  \Delta_{\gyarumv{i}}  }_{\mathsf{m}_{\gyarumv{i}}}
    \xrightarrow{\bigotimes_i F^{\mode m_i}_{\mode m}(r_i \cdot e_i)}
    \interp{  \rho  \mid  \mathsf{M}  \odot  \Gamma }_{\mathsf{m}}
    \xrightarrow{t}
    B
  \end{mathpar}
  or, equivalently:
  \begin{mathpar}
    \bigotimes_i
    \bigotimes_j
    F^{\mode n_{ij}}_{\mode m}(r_i q_{ij} \at A_{ij})
    \iso
    \bigotimes_i
    F^{\mode m_i}_{\mode m}
    \bigotimes_j
    F^{\mode n_{ij}}_{\mode m_i}(r_i q_{ij} \at A_{ij})
    \xrightarrow{\bigotimes_i F^{\mode m_i}_{\mode m}(r_i \cdot e_i)}
    \bigotimes_i
    F^{\mode m_i}_{\mode m} (r_i \at A_i)
    \xrightarrow{t}
    B
  \end{mathpar}
\end{lemma}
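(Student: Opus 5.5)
The plan is to prove the statement by induction on the derivation of $ \rho \mid \mode M \at \Gamma \proves_{\mode m} t : B $. The induction follows the case split in the proof of Theorem~\ref{athm:multi-subst}, because that proof constructs the derivation of $ [e_i/x_i]t $ whose interpretation we must compute. In each case the interpretation of the substituted derivation is read off Figure~\ref{fig:gyaru-interp-model}. We then rewrite it, using the inductive hypothesis on the premises, into the claimed form $ t \circ \bigotimes_i F^{\mode m_i}_{\mode m}(r_i \cdot e_i) $, modulo the canonical isomorphisms $ F^{\mode m_i}_{\mode m}F^{\mode n_{ij}}_{\mode m_i} \iso F^{\mode n_{ij}}_{\mode m} $ and the strong monoidal structure $ m, u $. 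The two displayed forms in the statement differ only by these isomorphisms, so it suffices to prove one of them.

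Before the induction I would isolate two auxiliary lemmas about the operation $ q \cdot (-) $ from its definition.

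(a) \emph{Functoriality:} $ q \cdot (t \circ h) = (q \at t) \circ (q \cdot h) $, and more generally $ q \cdot (t \circ \bigotimes_i h_i) $ factors through $ \tau $ and $ \bigotimes_i (q\cdot h_i) $. This should follow from naturality of $ \tau $, $ \delta $ and $ \mu $.

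(b) \emph{Associativity:} $ q \cdot (r \cdot e) $ equals $ (qr) \cdot e $ followed by $ \delta $. That is, $ (q \at (r\cdot e)) \circ (\text{canonical map}) = \delta \circ ((qr)\cdot e) $ up to the lifts through $ F $. This uses coassociativity of $ \delta $ (the colax monoidal structure), the third diagram of Figure~\ref{fig:coherence-mu} (available by Lemma~\ref{lem:mu-coherence}), and the fact that $ \phi $ preserves multiplication.

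Given these lemmas, the structural cases of the induction are routine.
\begin{itemize}
\item The \textsc{Var} case uses the counit law: $ 1 \cdot e = \epsilon^{-1}$-compatible, via the first diagram of Figure~\ref{fig:coherence-mu} together with $ \epsilon $.
\item The \textsc{Sub} case uses naturality of $ \mu $, $ \delta $ and $ \tau $ in the grade argument.
\item The introduction rules \textsc{Pair.i}, \textsc{Sum}, \textsc{Fun.i} and \textsc{Raise.e} are immediate, by functoriality of $ \ox $, $ \lambda $ and $ F $.
\item The rules that multiply grades, namely \textsc{Unit.e}, \textsc{Fun.e}, \textsc{Pair.e} and \textsc{Drop.i}, are handled by lemma (b).
\item The case \textsc{Raise.i} additionally uses naturality of the unit of $ F \dashv G $.
\item In the \textsc{Weak} case the substituted derivation weakens by the whole of $ \Delta_{k+1} $, so we must show that $ w $ applied to $ F(0 \cdot \sigma_{k+1}\text{-context}) $ agrees with $ w \circ (0 \cdot e_{k+1}) $. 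This is the fourth coherence diagram of Figure~\ref{fig:graded-comonad-coherence} (the one involving $ \iota_r $ and $ r \at w $), combined with the second diagram of Figure~\ref{fig:coherence-mu}.
\end{itemize}

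I expect the main obstacle to be the \textsc{Cont} case. There the substituted derivation contracts each pair of variables of $ r_k \sigma_k $ and $ r_{k+1} \sigma_k $ one at a time, after exchanges. Its interpretation is therefore a tensor of maps $ m^{-1} \circ F(c_{r_k q_j, r_{k+1} q_j}) $ followed by the inductive composite. We must show that this equals $ F(c_{r_k, r_{k+1}}) $ applied after $ (r_k + r_{k+1}) \cdot e_k $. My first attempt is to push $ c $ through $ \tau $ and $ \mu $ using the last diagram of Figure~\ref{fig:coherence-mu}. Then, for each factor, I would apply the second diagram of Figure~\ref{fig:graded-comonad-coherence} (the one with $ (r_1+r_2)\cdot q $), which is exactly the compatibility of $ c $ with right multiplication. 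The delicate point is that $ c $ must also commute past the $ k $-fold $ \tau $. For that I would prove, by induction on $ k $, that $ c_{r,s} $ is a monoidal transformation $ D(r+s) \to D(r) \hox D(s) $. This holds because the paper requires $ c $ to be a morphism in $ \SMC(\mc C,\mc C) $. Symmetry isomorphisms introduced by the exchanges are absorbed by coherence for symmetric monoidal categories.
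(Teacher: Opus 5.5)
Your plan follows the paper's proof closely. You do induction on the derivation of $t$ along the cases of Theorem~\ref{athm:multi-subst}. Your lemmas (a) and (b) together amount to the paper's Lemma~\ref{alem:scalar-mult-comp}, which is what dispatches \textsc{Drop.i}, \textsc{Fun.e} and the other grade-multiplying rules. Your \textsc{Cont} argument uses exactly the paper's ingredients: the right-multiplication compatibility of $c$ with $\delta$, the $c$-diagram of Figure~\ref{fig:coherence-mu}, and monoidality and naturality of $c$. For the ``lifts through $F$'' in (b) you also need two further facts, which the paper uses in the top square of its diagram for Lemma~\ref{alem:scalar-mult-comp}: $\mu$ is a monoidal transformation, and it is compatible with the isomorphisms $F\circ F\iso F$.

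The one step that fails as written is \textsc{Weak}, where you cite the wrong coherence axiom. In this case the scalar multiplied in is the \emph{outer} grade $0$, acting on the grades $s_j$ of $\sigma_{k+1}$. So $0\cdot e_{k+1}$ is built from $\mu\circ F^{\mode n_j}_{\mode n}(\delta_{0,s_j,Y_j})$ and $\tau$, and the identity you need is $w_{s_j\at Y_j}\circ\delta_{0,s_j,Y_j}=w_{Y_j}$. That is the \emph{third} diagram of Figure~\ref{fig:graded-comonad-coherence}. The fourth diagram, with $\delta_{r,0}$, $r\at w$ and $\iota_r$, governs a zero \emph{inner} grade under an arbitrary outer grade. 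It does not yield the required identity.

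The correct chain, as in the paper, has four steps:
\begin{enumerate}
\item Use naturality of $w$ to absorb $0\at e_{k+1}$.
\item Use monoidality of $w$ to split $w\circ\tau$ into $\bigotimes_j w$; when $\Delta_{k+1}$ is empty, use $w\circ\iota_0=\id$ instead.
\item Use the $w$-diagram of Figure~\ref{fig:coherence-mu} to rewrite $w\circ\mu$ as $u^{-1}\circ F(w)$.
\item Use the third diagram to eliminate $\delta_{0,s_j}$.
\end{enumerate}
With this correction your argument coincides with the paper's.
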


Before the proof, we begin with a lemma.

\begin{lemma}
  \label{alem:scalar-mult-comp}
  In the situation of the above lemma,
  suppose that $ s \in R_{\mode m} $ and
  write $ f $ for the composite
  \begin{mathpar}
    \bigotimes_{ij}
    F^{\mode n_{ij}}_{\mode m}(r_i q_{ij} \at A_{ij})
    \iso
    \bigotimes_i
    F^{\mode m_i}_{\mode m}
    \bigotimes_j
    F^{\mode n_{ij}}_{\mode m_i}(r_i q_{ij} \at A_{ij})
    \xrightarrow{\bigotimes_i F^{\mode m_i}_{\mode m}(r_i \cdot e_i)}
    \bigotimes_i
    F^{\mode m_i}_{\mode m} (r_i \at A_i)
    \xrightarrow{t}
    B.
  \end{mathpar}
  The morphism $ s \cdot f $ is equal to the composite
  \begin{mathpar}
    \bigotimes_{ij}
    F^{\mode n_{ij}}_{\mode m}(s r_i q_{ij} \at A_{ij})
    \iso
    \bigotimes_i
    F^{\mode m_i}_{\mode m}
    \bigotimes_j
    F^{\mode n_{ij}}_{\mode m_i}(sr_i q_{ij} \at A_{ij})
    \xrightarrow{
      \bigotimes_i F^{\mode m_i}_{\mode m}(sr_i \cdot e_i)
    }
    \bigotimes_i F^{\mode m_i}_{\mode m} (sr_i \at A_i)
    \xrightarrow{(s \cdot t)}
    (s \at B)
  \end{mathpar}
\end{lemma}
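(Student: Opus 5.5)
We unfold the definition of $ s \cdot f $ and push the scalar-multiplication structure through the composite in three stages.

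By definition, $ s \cdot f = (s \at f) \circ \tau \circ \bigotimes_{ij} (\mu \circ F\delta) $. Here the components $ \mu \circ F\delta $ are indexed by the context entries $ (i,j) $:
\[
F^{\mode n_{ij}}_{\mode m}(s r_i q_{ij} \at A_{ij}) \to s \at F^{\mode n_{ij}}_{\mode m}(r_i q_{ij} \at A_{ij}).
\]
Functoriality of $ s \at (-) $ splits $ s \at f $ into three factors, applied in order: first $ s \at (\text{canonical iso}) $, then $ s \at \bigotimes_i F^{\mode m_i}_{\mode m}(r_i \cdot e_i) $, then $ s \at t $. The target composite ends with $ s \cdot t = (s \at t) \circ \tau \circ \bigotimes_i(\mu \circ F\delta) $, whose last factor is the same $ s \at t $. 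It therefore suffices to prove that the two morphisms
\[
\bigotimes_{ij} F^{\mode n_{ij}}_{\mode m}(s r_i q_{ij} \at A_{ij}) \to s \at \bigotimes_i F^{\mode m_i}_{\mode m}(r_i \at A_i)
\]
agree. The first is $ \tau $ and the $ \mu\circ F\delta $'s followed by $ s \at(\text{iso}) $ and $ s \at \bigotimes_i F(r_i\cdot e_i) $. The second is the iso, then $ \bigotimes_i F(s r_i \cdot e_i) $, then $ \tau \circ \bigotimes_i (\mu \circ F\delta) $.

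\emph{Step one.} We move the canonical isomorphism across. The canonical isomorphism is built from the strong monoidal structure $ m $ of the $ F $'s and the pseudofunctoriality isos $ F^{\mode m_i}_{\mode m} F^{\mode n_{ij}}_{\mode m_i} \iso F^{\mode n_{ij}}_{\mode m} $. We use three facts: naturality of $ \tau $; monoidality of $ \mu $ (it is the transpose of the monoidal transformation $ \ell $, and $ \tau $ is the strong monoidal structure of $ s \at - $); and compatibility of the composition isos with lineators, hence with $ \mu $ by Lemma~\ref{lem:ell-mu-induced-coherence}. Together these let us rewrite the block indexed by $ i $ of the first morphism as $ F^{\mode m_i}_{\mode m} $ applied to the analogous map at mode $ \mode m_i $, followed by one $ \mu \circ F\delta $ for the grade $ s $ via $ \phi_{\mode m,\mode m_i} $. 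This reduces the statement to a single $ i $.

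\emph{Step two.} For fixed $ i $ we need $ \mu \circ F(\delta_{s, r_i}) \circ F(s r_i \cdot e_i) = (s \at F(r_i \cdot e_i)) \circ \mu \circ F\delta $ together with the $ \tau $ bookkeeping inside $ \mode m_i $. This is essentially a "scalar multiplication is associative" statement: $ (s r_i)\cdot e_i $ followed by $ \delta_{s,r_i} $ equals $ s \cdot (r_i \cdot e_i) $, post-composed appropriately. We prove it by naturality of $ \delta $ in its object argument, the third coherence diagram for $ \mu $ in Figure~\ref{fig:coherence-mu} (compatibility of $ \mu $ with $ \delta $), and the compatibility of $ \delta $ with $ \tau $, which holds because $ \delta $ is a \emph{monoidal} natural transformation $ D(rq) \to D(r)\circ D(q) $. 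For the empty context $ k=0 $ or empty $ \Delta_i $ the same argument runs with $ \iota $ in place of $ \tau $, using monoidality of $ \delta $ on the unit.

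The main obstacle is step one: correctly interleaving the two layers of $ F $'s, the composition isomorphisms, and the nested $ \tau $'s, so that the $ \mu $ and $ \delta $ coherences line up. We would handle it by first proving the single-mode case ($ \mode m_i = \mode m $, with all $ F $'s identities) as a clean diagram chase. We would then transport that case along $ F^{\mode m_i}_{\mode m} $ using monoidality and $ \delta $-coherence of $ \mu $.
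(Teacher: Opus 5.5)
Your proposal is correct and is essentially the paper's argument: the paper commutes a three-layer diagram whose top square is your step one (naturality and monoidality of $\mu$ to move the canonical isomorphism across), whose middle square is the $\delta$/$\mu$/$\tau$ coherence for iterated scalars that you isolate in step two, and whose bottom square is the naturality in the $e_i$ that you also fold into step two. You are slightly more explicit than the paper, since you flag that the top square needs the composition isomorphisms to be compatible with $\mu$; note, though, that step two also uses coassociativity of $\delta$ (the colax monoidal law of the action), which is missing from your list.
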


\begin{proof}
  Consider the following diagram
  in which unlabelled arrows are assembled from $ \mu $, $ \delta $ and $ \tau $,
  to ``move scalars out of the tensor product,''
  and the vertical arrows labeled ``$ \sim $'' are induced by the isomorphisms
  $ F^{\mode n_{ij}}_{\mode m}
  \cong F^{\mode m_i}_{\mode m} \circ F^{\mode n_{ij}}_{\mode m_i} $
  and the strong monoidal structures on the functors $ F^{\mode m_i}_{\mode m} $.
  \[
	\begin{tikzcd}
      \bigotimes_{ij}F^{\mode n_{ij}}_{\mode m}
      (s r_i q_{ij} \at A_{ij})
      \arrow[d, "\sim"']
      \arrow[r]
      &
      s \at \bigotimes_{ij}F^{\mode n_{ij}}_{\mode m}
      (r_i q_{ij} \at A_{ij})
      \arrow[d, "\sim"]
      \\
      \bigotimes_i F^{\mode m_i}_{\mode m}
      \bigotimes_j F^{\mode n_{ij}}_{\mode m_i}
      (s r_i q_{ij} \at A_{ij})
      \arrow[r]
      \arrow[d]
      &
      s \at
      \bigotimes_i F^{\mode m_i}_{\mode m}
      \bigotimes_j F^{\mode n_{ij}}_{\mode m_i}
      (r_i q_{ij} \at A_{ij})
      \arrow[d]
      \\
      \bigotimes_i F^{\mode m_i}_{\mode m}
      \big(
      s r_i \at
      \bigotimes_j F^{\mode n_{ij}}_{\mode m_i}
      (q_{ij} \at A_{ij})
      \big)
      \arrow[r]
      \arrow[d, "\bigotimes_i F^{\mode m_i}_{\mode m}(sr_i \at e_i)"']
      &
      s \at
      \bigotimes_i F^{\mode m_i}_{\mode m}
      \big(
      r_i \at
      \bigotimes_j F^{\mode n_{ij}}_{\mode m_i}
      (q_{ij} \at A_{ij})
      \big)
      \arrow[d, "s \at \bigotimes_i F^{\mode m_i}_{\mode m}(r_i \at e_i)"]
      \\
      \bigotimes_i F^{\mode m_i}_{\mode m} (s r_i \at A_i)
      \arrow[r]
      &
      s \at \bigotimes_i F^{\mode m_i}_{\mode m}(r_i \at A_i)
      \arrow[d, "s \at t"]
      \\
      &
      s \at B
    \end{tikzcd}
  \]
  The composite of the top and right edge is $ s \cdot f $,
  while the composite of the left and bottom edge is the morphism from the claim.
  The bottom square commutes by naturality of $ \mu $, $ \delta $ and $ \tau $.
  The middle square commutes by the coherence conditions on $ \mu $;
  in fact those conditions imply that all morphisms assembled from $ \delta $,
  $ \mu $ and $ \tau $ beginning at the top left of that square
  at the bottom right of that square are equal.
  The top square commutes by the naturality of the involved transformations,
  using that $ \mu $ is a monoidal natural transformation.
  Thus the diagram commutes, concluding the proof.
\end{proof}

\begin{proof}[Of Lemma \ref{alem:subst-comp}]
  \newcommand*{\rname}[1]{\normalfont\rulename{gyaru}{term}{#1}\bf}
  \setcounter{case}{0}
  The proof of Theorem \ref{athm:multi-subst}
  gives a constructive transformation of the derivations of $ t $ and the $ e_i $
  into a derivation of $ \gyarusym{[}  \gyarunt{e_{\gyarumv{i}}}  \gyarusym{/}  \gyarumv{x_{\gyarumv{i}}}  \gyarusym{]}  \gyarunt{t} $.
  We proceed by induction on the derivation of $ t $,
  inspecting the proof of Theorem \ref{athm:multi-subst}.

  \begin{case}[\rname{var}]\rm
    In this case we must verify that any morphism
    \[
      \interp{  \rho  \mid  \mathsf{M}  \odot  \Gamma }_{\mode m} \xrightarrow{t} B
    \]
    is equal to
    \[
      F^{\mode m}_{\mode m} \interp{    1    \rho   \mid  \mathsf{M}  \odot  \Gamma }_{\mode m}
      \xrightarrow{F^{\mode m}_{\mode m}(1 \cdot t)}
      F^{\mode m}_{\mode m}(1 \at B)
      \iso
      1 \at B
      \xrightarrow{\epsilon}
      B
    \]
    which follows from the following commutative diagram
    \[
      \begin{tikzcd}
        \bigotimes_i
        F^{\mode m_i}_{\mode m} (1 \cdot r_i \at A_i)
        \arrow[d, equals]
        \arrow[r, "\bigotimes_i F^{\mode m_i}_{\mode m}\delta"]
        &
        \bigotimes_i
        F^{\mode m_i}_{\mode m} (1 \at r_i \at A_i)
        \arrow[ld, "\bigotimes_i F^{\mode m_i}_{\mode m} \epsilon" description]
        \arrow[r, "\bigotimes_i \mu"]
        &
        \bigotimes_i
        1 \at F^{\mode m_i}_{\mode m} (r_i \at A_i)
        \arrow[ld, "\bigotimes_i \epsilon" description]
        \arrow[r, "\tau"]
        &
        1 \at \bigotimes_i
        F^{\mode m_i}_{\mode m} (r_i \at A_i)
        \arrow[ld, "\epsilon" description]
        \arrow[d, "1 \at e"]
        \\
        \bigotimes_i
        F^{\mode m_i}_{\mode m} (r_i \at A_i)
        \arrow[r, equals]
        &
        \bigotimes_i
        F^{\mode m_i}_{\mode m} (r_i \at A_i)
        \arrow[r, equals]
        &
        \bigotimes_i
        F^{\mode m_i}_{\mode m} (r_i \at A_i)
        \arrow[d, "e"']
        &
        1 \at B
        \arrow[ld, "\epsilon"]
        \\
        &&
        B
      \end{tikzcd}
    \]
  \end{case}

  \begin{case}[\rname{weak}]\rm
    Consider a term
    $  \sigma  \mid  \mathsf{N}   \odot   \Delta  \vdash_{ \mathsf{n} }  \gyarunt{e}  \colon  \gyarunt{T}  $
    with categorical interpretation
    \[
      \interpCtx{\sigma}{\mode N}{\Delta}{\mode n} =
      \bigotimes_j F^{\mode n_j}_{\mode n} (s_j \at Y_j)
      \xrightarrow{e}
      T
    \]
    Consider the following morphism:
    \begin{mathpar}
      \bigotimes_i
      F^{\mode m_i}_{\mode m}
      \interpCtx{\sigma_i}{\mode N_i}{\Delta_i}{\mode m_i}
      \ox
      F^{\mode n}_{\mode m}
      \Big(\bigotimes_j F^{\mode n_j}_{\mode n}(0s_j \at Y_j) \Big)
      \\
      \xrightarrow{
        \bigotimes_i F^{\mode m_i}_{\mode m}(r_i\cdot e_i)
        \ox
        F^{\mode n}_{\mode m}(0 \cdot e)
      }
      \interpCtx{\rho}{\mode M}{\Gamma}{\mode m} \ox F^{\mode n}_{\mode m}(0 \at T)
      \xrightarrow{t \ox F^{\mode n}_{\mode m} w_T} B \ox F^{\mode n}_{\mode m}(I) \iso B
    \end{mathpar}
    This morphism is the morphism we claim realizes weakening by a variable of type
    $ T $ and then substituting the term $ e $ for that variable.
    On the other hand, the proof of Theorem \ref{athm:multi-subst},
    constructs this term as a sequence of weakenings,
    which has the following interpretation:
    \begin{mathpar}
      \bigotimes_i
      F^{\mode m_i}_{\mode m}
      \interpCtx{\sigma_i}{\mode N_i}{\Delta_i}{\mode m_i}
      \ox
      \bigotimes_j
      F^{\mode n_j}_{\mode m}(0\cdot s_j \at Y_j)
      \\
      \xrightarrow{
        \bigotimes_i F^{\mode m_i}_{\mode m}(r_i\cdot e_i)
        \ox
        \bigotimes_j F^{\mode n_j}_{\mode m}(w_{Y_j})
      }
      \interpCtx{\rho}{\mode M}{\Gamma}{\mode m} \ox \bigotimes_j F^{\mode n_j}_{\mode m}I
      \xrightarrow{t \ox \bigotimes_j u^{-1}}
      B \ox \bigotimes_j I
      \iso B
    \end{mathpar}
    Our task is therefore to prove that the two composites above are equal.
    This boils down to showing the commutation of the diagram below,
    in which the composition across the left edge is the definition of $ 0 \cdot e $.
    The top square commutes by the coherence conditions on exponential actions,
    the second square due to the coherence conditions on $ \mu $,
    the third square due to the fact that $ w $ is a monoidal natural transformation
    and the last square by naturality of $ w $.

    \begin{mathpar}
      \begin{tikzcd}[column sep = 3cm]
        \bigotimes_j F^{\mode n_j}_{\mode n}(0s_j \at Y_j)
        \arrow[d, "\bigotimes_j F^{\mode n_j}_{\mode n}(\delta)"']
        \arrow[r, "\bigotimes_j F^{\mode n_j}_{\mode n}(w)"]
        &
        \bigotimes_j I
        \arrow[d, equals]
        \\
        \bigotimes_j F^{\mode n_j}_{\mode n}(0 \at s_j \at Y_j)
        \arrow[d, "\bigotimes_{j} \mu"']
        \arrow[r, "\bigotimes_j F^{\mode n_j}_{\mode n}(w)"]
        &
        \bigotimes_j I
        \arrow[d, equals]
        \\
        \bigotimes_j 0 \at F^{\mode n_j}_{\mode n}(s_j \at Y_j)
        \arrow[d, "\tau"']
        \arrow[r, "\bigotimes_j w"]
        &
        \bigotimes_j I
        \arrow[d, "\sim"]
        \\
        0 \at \bigotimes_j F^{\mode n_j}_{\mode n}(s_j \at Y_j)
        \arrow[d, "0 \at e"']
        \arrow[r, "w"]
        &
        I
        \arrow[d, equals]
        \\
        0 \at T
        \arrow[r, "w"]
        &
        I
      \end{tikzcd}
    \end{mathpar}
  \end{case}

  \begin{case}[\rname{sub}]\rm
    Suppose we have
    $ \rho' = r_1', \mathellipsis, r_k' \ge r_1, \mathellipsis, r_k = \rho $.
    Consider the diagram
    \begin{mathpar}
      \begin{tikzcd}
        \bigotimes_{i, j} F^{\mode n_{ij}}_{\mode m}(r_i q_{ij} \at A_{ij})
        \arrow[d, "\sim"']
        &
        \bigotimes_{i, j} F^{\mode n_{ij}}_{\mode m}(r_i' q_{ij} \at A_{ij})
        \arrow[d, "\sim"]
        \arrow[l]
        \\
        \bigotimes_i
        F^{\mode m_i}_{\mode m}
        \bigotimes_j
        F^{\mode n_{ij}}_{\mode m_i}(r_i q_{ij} \at A_{ij})
        \arrow[d,
        "\bigotimes_i F^{\mode m_i}_{\mode m}
        (\tau \circ \bigotimes_j (\mu \circ F^{\mode n_{ij}}_{\mode m_i} (\delta)))
        "']
        &
        \bigotimes_i
        F^{\mode m_i}_{\mode m}
        \bigotimes_j
        F^{\mode n_{ij}}_{\mode m_i}(r_i' q_{ij} \at A_{ij})
        \arrow[d,
        "\bigotimes_i F^{\mode m_i}_{\mode m}
        (\tau \circ \bigotimes_j (\mu \circ F^{\mode n_{ij}}_{\mode m_i} (\delta)))
        "]
        \arrow[l]
        \\
        \bigotimes_i
        F^{\mode m_i}_{\mode m}
        \big(
        r_i \at
        \bigotimes_j
        F^{\mode n_{ij}}_{\mode m_i}(q_{ij} \at A_{ij})
        \big)
        \arrow[d, "\bigotimes_i F^{\mode m_i}_{\mode m}(r_i \at e_i)"']
        &
        \bigotimes_i
        F^{\mode m_i}_{\mode m}
        \big(
        r_i' \at
        \bigotimes_j
        F^{\mode n_{ij}}_{\mode m_i}(q_{ij} \at A_{ij})
        \big)
        \arrow[l]
        \arrow[d, "\bigotimes_i F^{\mode m_i}_{\mode m}(r_i' \at e_i)"]
        \\
        \bigotimes_i F^{\mode m_i}_{\mode m}(r_i \at A_i)
        \arrow[d, "t"']
        &
        \bigotimes_i F^{\mode m_i}_{\mode m}(r_i' \at A_i)
        \arrow[l]
        \arrow[dl, "t"]
        \\
        B
      \end{tikzcd}
    \end{mathpar}
    Where all horizontal arrows are induced by $  \rho'  \ge  \rho  $.
    By the inductive hypothesis,
    the composite across the left edge is the interpretation
    of the term $ \gyarusym{[}  \gyarunt{e_{\gyarumv{i}}}  \gyarusym{/}  \gyarumv{x_{\gyarumv{i}}}  \gyarusym{]}  \gyarunt{t} $.
    Since the interpretation of the \rulename{gyaru}{term}{sub} rule is given
    by precomposing with the action of $  \rho'  \ge  \rho  $,
    the composite of the left edge with the topmost horizontal morphism is the
    interpetation of applying \rulename{gyaru}{term}{sub} to the the interpretation
    of $ \gyarusym{[}  \gyarunt{e_{\gyarumv{i}}}  \gyarusym{/}  \gyarumv{x_{\gyarumv{i}}}  \gyarusym{]}  \gyarunt{t} $.
    The composite across the right edge of the diagram is the morphism from our claim,
    so we must show that this diagram commutes.
    The top two squares commute by the naturality of the involved vertical morphisms,
    the third square commutes by the functoriality of the $ F^{\mode m_i}_{\mode m} $.
    The bottom triangle commutes by definition of the interpretation of the
    \rulename{gyaru}{term}{sub} rule.
  \end{case}

  \begin{case}[\rname{cont}]\rm
    Suppose the derivation is
    \[
      \inferrule*
      {
         \rho  \gyarusym{,}  \gyarunt{q_{{\mathrm{1}}}}  \gyarusym{,}  \gyarunt{q_{{\mathrm{2}}}}  \mid  \mathsf{M}  \gyarusym{,}  \mathsf{n}  \gyarusym{,}  \mathsf{n}   \odot   \Gamma  \gyarusym{,}  \gyarumv{y_{{\mathrm{1}}}}  \gyarusym{:}  \gyarunt{T}  \gyarusym{,}  \gyarumv{y_{{\mathrm{2}}}}  \gyarusym{:}  \gyarunt{T}  \vdash_{ \mathsf{m} }  \gyarunt{t}  \colon  \gyarunt{B} 
      }
      {
         \rho  \gyarusym{,}  \gyarunt{q_{{\mathrm{1}}}}  \gyarusym{+}  \gyarunt{q_{{\mathrm{2}}}}  \mid  \mathsf{M}  \gyarusym{,}  \mathsf{n}  \gyarusym{,}  \mathsf{n}   \odot   \Gamma  \gyarusym{,}  \gyarumv{z}  \gyarusym{:}  \gyarunt{T}  \vdash_{ \mathsf{m} }  \gyarusym{[}  \gyarumv{z}  \gyarusym{/}  \gyarumv{y_{{\mathrm{1}}}}  \gyarusym{,}  \gyarumv{z}  \gyarusym{/}  \gyarumv{y_{{\mathrm{2}}}}  \gyarusym{]}  \gyarunt{t}  \colon  \gyarunt{B} 
      }
    \]
    and we are given a term $  \sigma  \mid  \mathsf{N}   \odot   \Delta  \vdash_{ \mathsf{n} }  \gyarunt{e}  \colon  \gyarunt{T}  $.
    Consider the diagram
    \begin{center}
      \begin{tikzcd}[every cell/.append style={align=center}]
        \mlnode{
          $ \bigotimes_i F^{\mode m_i}_{\mode m}
          \interpCtx{r_i \sigma_i}{\mode N_i}{\Delta_i}{\mode m_i}
          $
          \\
          $
          \ox
          F^{\mode n}_{\mode m} \interpCtx{q_1 \sigma}{\mode N}{\Delta}{\mode n}
          \ox
          F^{\mode n}_{\mode m} \interpCtx{q_2 \sigma}{\mode N}{\Delta}{\mode n}
          $
        }
        \arrow[dd, "\bigotimes_i F^{\mode m_i}_{\mode m}(r_i \cdot e_i)
        \ox F^{\mode n}_{\mode m} (q_1 \cdot e)
        \ox F^{\mode n}_{\mode m} (q_2 \cdot e)"']
        &
        $ \bigotimes_i F^{\mode m_i}_{\mode m}
        \interpCtx{r_i \sigma_i}{\mode N_i}{\Delta_i}{\mode m_i}
        \ox
        F^{\mode n}_{\mode m} \interpCtx{(q_1 + q_2) \sigma}{\mode N}{\Delta}{\mode n}
        $
        \arrow[d, "\bigotimes_{i} F^{\mode m_i}_{\mode m}(r_i \cdot e_i) \ox F^{\mode n}_{\mode m}((q_1 + q_2) \cdot e)"]
        \arrow[l, dashed]
        \\
        &
        $ \bigotimes_i F^{\mode m_i}_{\mode m} (r_i \at A_i)
        \ox F^{\mode n}_{\mode m}((q_1 + q_2) \at T) $
        \arrow[d, "\id \ox F^{\mode n}_{\mode m}(c)"]
        \\
        $
        \bigotimes_{i} F^{\mode m_i}_{\mode m}(r_i \at A_i)
        \ox F^{\mode n}_{\mode m} (q_1 \at T) \ox F^{\mode n}_{\mode m} (q_2 \at T)
        $
        \arrow[d, "t"']
        &
        $ \bigotimes_i F^{\mode m_i}_{\mode m} (r_i \at A_i)
        \ox F^{\mode n}_{\mode m}(q_1 \at T \ox q_2 \at T) $
        \arrow[l, "\id \ox m^{-1}"']
        \arrow[ld, "t \circ (\id \ox m^{-1})"]
        \\
        $ B $
      \end{tikzcd}
    \end{center}
    By the inductive hypothesis, the composite across the left edge is the
    interpretation of the term $ \gyarusym{[}  \gyarunt{e_{\gyarumv{i}}}  \gyarusym{/}  \gyarumv{x_{\gyarumv{i}}}  \gyarusym{,}  \gyarunt{e}  \gyarusym{/}  \gyarumv{y_{{\mathrm{1}}}}  \gyarusym{,}  \gyarunt{e}  \gyarusym{/}  \gyarumv{y_{{\mathrm{2}}}}  \gyarusym{]}  \gyarunt{t} $,
    while the composition across the right edge is the morphism we claim
    to be the interpretation of the term $ \gyarusym{[}  \gyarunt{e_{\gyarumv{i}}}  \gyarusym{/}  \gyarumv{x_{\gyarumv{i}}}  \gyarusym{,}  \gyarunt{e}  \gyarusym{/}  \gyarumv{z}  \gyarusym{]}  \gyarusym{[}  \gyarumv{z}  \gyarusym{/}  \gyarumv{y_{{\mathrm{1}}}}  \gyarusym{,}  \gyarumv{z}  \gyarusym{/}  \gyarumv{y_{{\mathrm{2}}}}  \gyarusym{]}  \gyarunt{t} $.
    The proof of Theorem \ref{athm:multi-subst} gives the interpretation
    of this term as the composite of the left edge with the dashed arrow,
    where the dashed arrow is a sequence of exchanges
    followed by a sequence of contractions.
    So, we need to show that the above diagram commutes.
    It suffices to show that the following diagram commutes:
    \[
      \begin{tikzcd}
        \interpCtx{q_1 \sigma}{\mode N}{\Delta}{\mode n}
        \ox
        \interpCtx{q_2 \sigma}{\mode N}{\Delta}{\mode n}
        \arrow[d, "q_1 \cdot e \ox q_2 \cdot e"]
        &
        \interpCtx{(q_1 + q_2) \sigma}{\mode N}{\Delta}{\mode n}
        \arrow[l, dashed]
        \arrow[d, "(q_1 + q_2) \cdot e"]
        \\
        (q_1 \at T) \ox (q_2 \at T)
        &
        \arrow[l, "c"]
        (q_1 + q_2) \at T
      \end{tikzcd}
    \]
    Suppose that $ \interp{  \gyarusym{(}  \gyarunt{q_{{\mathrm{1}}}}  \gyarusym{+}  \gyarunt{q_{{\mathrm{2}}}}  \gyarusym{)}   \sigma   \mid  \mathsf{N}  \odot  \Delta } _{\mode n} $
    is the object $ \bigotimes_i F^{\mode n_i}_{\mode n}((q_1 + q_2)s_i \at X_i) $
    and consider the diagram in Figure \ref{afig:subst-comp-cont}.
    Its left vertical edge is the dashed arrow, and the bottom edge is
    $ q_1 \cdot e \ox q_2 \cdot e $.
    The composite across the top/right edges is $ c \circ (q_1 + q_2) $.
    On that diagram's left side, the top square commutes by the coherence conditions
    on graded actions,
    the other two squares commute by the naturality of the vertical morphisms,
    and the triangle commutes by definition.
    On the right side, the top area commutes by Lemma \ref{lem:mu-coherence}.
    The middle area commutes since $ c $ is a monoidal natural transformation,
    and the bottom area commutes because $ c $ is natural.

    \begin{figure}
      \[
        \begin{tikzcd}[column sep = 0em]
          &
          \bigotimes_i F^{\mode n_i}_{\mode n}((q_1 + q_2) \at s_i \at X_i)
          \arrow[dd, "\bigotimes_i F^{\mode n_i}_{\mode n} c"']
          \arrow[dddddr, "\bigotimes_{i} \mu",
          bend left]
          \\
          \bigotimes_i F^{\mode n_i}_{\mode n}((q_1 + q_2)s_i \at X_i)
          \arrow[ur, "\bigotimes_i F^{\mode n_i}_{\mode n} \delta"]
          \arrow[dd, "\bigotimes_i F^{\mode n_i}_{\mode n} c"']
          \\
          &
          \bigotimes_i F^{\mode n_i}_{\mode n}(q_1 \at s_i \at X_i \ox q_2 \at s_i \at X_i)
          \arrow[dd, "\bigotimes_i m^{-1}"']
          \\
          \bigotimes_i F^{\mode n_i}_{\mode n}(q_1s_i \at X_i \ox q_2s_i \at X_i)
          \arrow[ur, "\bigotimes_i F^{\mode n_i}_{\mode n}(\delta \ox \delta)"]
          \arrow[dd, "\bigotimes_i m^{-1}"']
          \\
          &
          \bigotimes_i
          F^{\mode n_i}_{\mode n}(q_1 \at s_i \at X_i) \ox
          F^{\mode n_i}_{\mode n}(q_2 \at s_i \at X_i)
          \arrow[dd, "\bigotimes_i \mu \ox \mu"']
          \\
          \bigotimes_i
          F^{\mode n_i}_{\mode n} (q_1s_i \at X_i) \ox
          F^{\mode n_i}_{\mode n}(q_2s_i \at X_i)
          \arrow[ur,
          "\bigotimes_i F^{\mode n_i}_{\mode n} \delta
          \ox F^{\mode n_i}_{\mode n} \delta
          "]
          \arrow[dd, "\sim"']
          \arrow[dr,
          "\bigotimes_i (\mu \circ F^{\mode n_i}_{\mode n} \delta)
          \ox (\mu \circ F^{\mode n_i}_{\mode n} \delta)
          "']
          &
          &
          \bigotimes_i (q_1 + q_2) \at F^{\mode n_i}_{\mode n}(s_i \at X_i)
          \arrow[dl,
          "\bigotimes_i c"',
          start anchor=west,
          end anchor={[xshift=-4em]north east}
          ]
          \arrow[ddd, "\tau"]
          \\
          &
          \bigotimes_i
          q_1 \at F^{\mode n_i}_{\mode n}(s_i \at X_i) \ox
          q_2 \at F^{\mode n_i}_{\mode n}(s_i \at X_i)
          \arrow[dd, "\sim"']
          \\
          \bigotimes_{j = 1, 2}
          \bigotimes_i
          F^{\mode n_i}_{\mode n} (q_js_i \at X_i)
          \arrow[dr, "\bigotimes_{j = 1, 2} \bigotimes_i \mu \circ F^{\mode n_i}_{\mode n}\delta"']
          \\
          &
          \bigotimes_{j = 1, 2} \bigotimes_i q_j \at F^{\mode n_i}_{\mode n} (s_i \at X_i)
          \arrow[dd, "\bigotimes_{j = 1, 2} \tau"]
          &
          (q_1 + q_2) \at \bigotimes_i F^{\mode n_i}(s_i \at X_i)
          \arrow[ddl, "c",
          start anchor={[xshift=2em]south west},
          end anchor={[xshift=-2em]north east}
          ]
          \arrow[d, "(q_1 + q_2) \at e"]
          \\
          &&
          (q_1 + q_2) \at T
          \arrow[d, "c"]
          \\
          &
          \bigotimes_{j = 1, 2} q_j \at \bigotimes_i F^{\mode n_i}_{\mode n}(s_i \at X_i)
          \arrow[r, "q_1 \at e \ox q_2 \at e"]
          &
          q_1 \at T \ox q_2 \at T
        \end{tikzcd}
      \]
      \caption{Diagram for case \textsc{Cont}.
        The arrows labeled ``$ \sim $'' are isomorphisms
        permuting the tensor product.
      }
      \label{afig:subst-comp-cont}
    \end{figure}
  \end{case}

  \begin{case}[\rname{dropI}]\rm
    Suppose we have $ \mathsf{n}  \le  \mathsf{m} $ and $ s \in R_{\mode m} $.
    Starting at the morphism
    \[
      \bigotimes_{ij}
      F^{\mode n_{ij}}_{\mode m}(r_i q_{ij} \at A_{ij})
      \iso
      \bigotimes_i
      F^{\mode m_i}_{\mode m}
      \bigotimes_j
      F^{\mode n_{ij}}_{\mode m_i}(r_i q_{ij} \at A_{ij})
      \xrightarrow{\bigotimes_i F^{\mode m_i}_{\mode m}(r_i \cdot e_i)}
      \bigotimes_i
      F^{\mode m_i}_{\mode m} (r_i \at A_i)
      \xrightarrow{t}
      B
    \]
    which we will denote by $ f $,
    and applying \rulename{gyaru}{term}{dropI} to it, we obtain the morphism
    \[
      F^{\mode m}_{\mode n}\bigotimes_{ij}
      F^{\mode n_{ij}}_{\mode m}(s r_i q_{ij} \at A_{ij})
      \xrightarrow{F^{\mode m}_{\mode n} (s \cdot f)}
      F^{\mode m}_{\mode n}(s \at B)
    \]
    We claim that this morphism is equal to
    \begin{mathpar}
      F^{\mode m}_{\mode n}\bigotimes_{ij}
      F^{\mode n_{ij}}_{\mode m}(s r_i q_{ij} \at A_{ij})
      \iso
      F^{\mode m}_{\mode n}
      \bigotimes_i
      F^{\mode m_i}_{\mode m}
      \bigotimes_j
      F^{\mode n_{ij}}_{\mode m_i}(sr_i q_{ij} \at A_{ij})
      \and
      \xrightarrow{
        F^{\mode m}_{\mode n}\bigotimes_i F^{\mode m_i}_{\mode m}(sr_i \cdot e_i)
      }
      F^{\mode m}_{\mode n} \bigotimes_i F^{\mode m_i}_{\mode m} (sr_i \at A_i)
      \xrightarrow{F^{\mode m}_{\mode n}(s \cdot t)}
      F^{\mode m}_{\mode n} (s \at B)
    \end{mathpar}
    which is immediate from Lemma \ref{alem:scalar-mult-comp}.
  \end{case}

  \begin{case}[\rname{dropE}]\rm
    Let $ \mathsf{l}  \le  \mathsf{n}  \le  \mathsf{m} $.
    By the inductive hypothesis we start with morphisms
    \begin{mathpar}
      \bigotimes_i \interp{  \gyarunt{r_{\gyarumv{i}}}   \sigma_{\gyarumv{i}}   \mid  \mathsf{N}_{\gyarumv{i}}  \odot  \Delta_{\gyarumv{i}} }_{\mathsf{n}}
      \iso
      \bigotimes_{i} F^{\mode m_i}_{\mode n}
      \interp{  \gyarunt{r_{\gyarumv{i}}}   \sigma_{\gyarumv{i}}   \mid  \mathsf{N}_{\gyarumv{i}}  \odot  \Delta_{\gyarumv{i}} }_{\mathsf{m}_{\gyarumv{i}}}
      \xrightarrow{\bigotimes_i F^{\mode m_i}_{\mode n}(r_i \cdot e_i)}
      \bigotimes_i F^{\mode m_i}_{\mode n}(r_i \at A_i)
      \xrightarrow{t}
      F^{\mode m}_{\mode n} (q \at A)
    \end{mathpar}
    and
    \begin{mathpar}
      \bigotimes_i \interp{  \gyarunt{r'_{\gyarumv{i}}}   \sigma'_{\gyarumv{i}}   \mid  \mathsf{N}'_{\gyarumv{i}}  \odot  \Delta'_{\gyarumv{i}} }_{\mathsf{l}}
      \ox F^{\mode m}_{\mode l} (q \at A)
      \iso
      \bigotimes_{i} F^{\mode m'_i}_{\mode l}
      \interp{  \gyarunt{r'_{\gyarumv{i}}}   \sigma'_{\gyarumv{i}}   \mid  \mathsf{N}'_{\gyarumv{i}}  \odot  \Delta'_{\gyarumv{i}} }_{\mathsf{m}'_{\gyarumv{i}}}
      \ox F^{\mode m}_{\mode l} (q \at A)
      \and
      \xrightarrow{\bigotimes_i F^{\mode m'_i}_{\mode l}(r'_i \cdot e'_i) \ox \id}
      \bigotimes_i F^{\mode m'_i}_{\mode l}(r'_i \at A'_i)
      \ox F^{\mode m}_{\mode l} (q \at A)
      \xrightarrow{t'}
      B
    \end{mathpar}
    Applying the rule \rulename{gyaru}{term}{dropE} to these two morphisms produces
    the morphism
    \begin{align*}
      &\
        \bigotimes_{i} F^{\mode m'_i}_{\mode l}
        \interp{  \gyarunt{r'_{\gyarumv{i}}}   \sigma'_{\gyarumv{i}}   \mid  \mathsf{N}'_{\gyarumv{i}}  \odot  \Delta'_{\gyarumv{i}} }_{\mathsf{m}'_{\gyarumv{i}}}
        \ox
        \bigotimes_{i}
        F^{\mode m_i}_{\mode l}
        \interp{  \gyarunt{r_{\gyarumv{i}}}   \sigma_{\gyarumv{i}}   \mid  \mathsf{N}_{\gyarumv{i}}  \odot  \Delta_{\gyarumv{i}} }_{\mathsf{m}_{\gyarumv{i}}}
      \\
      \iso
      &\
        \bigotimes_{i} F^{\mode m'_i}_{\mode l}
        \interp{  \gyarunt{r'_{\gyarumv{i}}}   \sigma'_{\gyarumv{i}}   \mid  \mathsf{N}'_{\gyarumv{i}}  \odot  \Delta'_{\gyarumv{i}} }_{\mathsf{m}'_{\gyarumv{i}}}
        \ox
        F^{\mode n}_{\mode l}
        \bigotimes_{i}
        F^{\mode m_i}_{\mode n}
        \interp{  \gyarunt{r_{\gyarumv{i}}}   \sigma_{\gyarumv{i}}   \mid  \mathsf{N}_{\gyarumv{i}}  \odot  \Delta_{\gyarumv{i}} }_{\mathsf{m}_{\gyarumv{i}}}
      \\
      \xrightarrow{\id \ox
      F^{\mode n}_{\mode l}(
      t \circ
      \bigotimes_i
      F^{\mode m_i}_{\mode n}
      (r_i \cdot e_i)
      )}
      &\
        \bigotimes_{i} F^{\mode m'_i}_{\mode l}
        \interp{  \gyarunt{r'_{\gyarumv{i}}}   \sigma'_{\gyarumv{i}}   \mid  \mathsf{N}'_{\gyarumv{i}}  \odot  \Delta'_{\gyarumv{i}} }_{\mathsf{m}'_{\gyarumv{i}}}
        \ox F^{\mode n}_{\mode l}F^{\mode m}_{\mode n}(q \at A)
      \\
      \xrightarrow{\bigotimes_i F^{\mode m'_i}_{\mode l}(r'_i \cdot e'_i) \ox \id}
      &\
        \bigotimes_i F^{\mode m'_i}_{\mode l}(r'_i \at A'_i)
        \ox F^{\mode n}_{\mode l} F^{\mode m}_{\mode n} (q \at A)
      \\
      \iso
      &\
        \bigotimes_i F^{\mode m'_i}_{\mode l}(r'_i \at A'_i)
        \ox F^{\mode m}_{\mode l} (q \at A)
        \xrightarrow{t'} B
    \end{align*}
    and we claim that this morphism is equal to
    \begin{align*}
      &\
        \bigotimes_{i} F^{\mode m'_i}_{\mode l}
        \interp{  \gyarunt{r'_{\gyarumv{i}}}   \sigma'_{\gyarumv{i}}   \mid  \mathsf{N}'_{\gyarumv{i}}  \odot  \Delta'_{\gyarumv{i}} }_{\mathsf{m}'_{\gyarumv{i}}}
        \ox
        \bigotimes_{i}
        F^{\mode m_i}_{\mode l}
        \interp{  \gyarunt{r_{\gyarumv{i}}}   \sigma_{\gyarumv{i}}   \mid  \mathsf{N}_{\gyarumv{i}}  \odot  \Delta_{\gyarumv{i}} }_{\mathsf{m}_{\gyarumv{i}}}
      \\
      \xrightarrow{
      \bigotimes_i F^{\mode m'_i}_{\mode l}(r'_i \cdot e'_i)
      \ox
      \bigotimes_i F^{\mode m_i}_{\mode l}(r_i \cdot e_i)
      }
      &\
        \bigotimes_i F^{\mode m'_i}_{\mode l}(r'_i \at A'_i)
        \ox
        \bigotimes_i F^{\mode m_i}_{\mode l}(r_i \at A_i)
      \\
      \iso
      &\
        \bigotimes_i F^{\mode m'_i}_{\mode l}(r'_i \at A'_i)
        \ox
        F^{\mode n}_{\mode l}
        \bigotimes_i F^{\mode m_i}_{\mode n}(r_i \at A_i)
      \\
      \xrightarrow{\id \ox F^{\mode n}_{\mode l}(t)}
      &\
        \bigotimes_i F^{\mode m'_i}_{\mode l}(r'_i \at A'_i)
        \ox
        F^{\mode n}_{\mode l} F^{\mode m}_{\mode n} (q \at A)
      \\
      \iso
      &\
        \bigotimes_i F^{\mode m'_i}_{\mode l}(r'_i \at A'_i)
        \ox F^{\mode m}_{\mode l} (q \at A)
        \xrightarrow{t'} B
    \end{align*}
    These morphisms are readily verified to be equal.
  \end{case}

  \begin{case}[\rname{raiseI}]\rm
    Let $ \mathsf{m}  \le  \mathsf{n} $.
    By the inductive hypothesis, we begin with a morphism
    \[
      \bigotimes_i F^{\mode m_i}_{\mode m}
      \interp{   \gyarunt{r_{\gyarumv{i}}}   \sigma_{\gyarumv{i}}   \mid  \mathsf{N}_{\gyarumv{i}}  \odot  \Delta_{\gyarumv{i}} }_{\mathsf{m}_{\gyarumv{i}}}
      \xrightarrow{\bigotimes_i F^{\mode m_i}_{\mode m}(r_i \cdot e_i)}
      \bigotimes_i F^{\mode m_i}_{\mode m}(r_i \at A_i)
      \xrightarrow{t}
      B
    \]
    applying the rule \rulename{gyaru}{term}{raiseI} to it yields the morphism
    \begin{align*}
      & \
        \bigotimes_{i}
        F^{\mode m_i}_{\mode n} \interp{   \gyarunt{r_{\gyarumv{i}}}   \sigma_{\gyarumv{i}}   \mid  \mathsf{N}_{\gyarumv{i}}  \odot  \Delta_{\gyarumv{i}}  }_{\mode m_i}
      \\
      \xrightarrow{\text{unit}}
      & \
        G^{\mode n}_{\mode m} F^{\mode n}_{\mode m}
        \bigotimes_{i}
        F^{\mode m_i}_{\mode n} \interp{   \gyarunt{r_{\gyarumv{i}}}   \sigma_{\gyarumv{i}}   \mid  \mathsf{N}_{\gyarumv{i}}  \odot  \Delta_{\gyarumv{i}}  }_{\mode m_i}
      \\
      \iso
      & \
        G^{\mode n}_{\mode m}
        \bigotimes_{i}
        F^{\mode n}_{\mode m}
        F^{\mode m_i}_{\mode n} \interp{   \gyarunt{r_{\gyarumv{i}}}   \sigma_{\gyarumv{i}}   \mid  \mathsf{N}_{\gyarumv{i}}  \odot  \Delta_{\gyarumv{i}}  }_{\mode m_i}
      \\
      \iso
      & \
        G^{\mode n}_{\mode m}
        \bigotimes_{i}
        F^{\mode m_i}_{\mode m}
        \interp{   \gyarunt{r_{\gyarumv{i}}}   \sigma_{\gyarumv{i}}   \mid  \mathsf{N}_{\gyarumv{i}}  \odot  \Delta_{\gyarumv{i}}  }_{\mode m_i}
      \\
      \xrightarrow{
      G^{\mode m}_{\mode n} \bigotimes_i F^{\mode m_i}_{\mode m}(r_i \cdot e_i)
      }
      & \
        G^{\mode n}_{\mode m}\bigotimes_i F^{\mode m_i}_{\mode m}(r_i \at A_i)
        \xrightarrow{G^{\mode m}_{\mode n} t}
        G^{\mode m}_{\mode n} B
    \end{align*}
    We claim that this morphism is equal to
    \begin{align*}
      & \
        \bigotimes_i
        F^{\mode m_i}_{\mode n} \interp{   \gyarunt{r_{\gyarumv{i}}}   \sigma_{\gyarumv{i}}   \mid  \mathsf{N}_{\gyarumv{i}}  \odot  \Delta_{\gyarumv{i}}  }_{\mode m_i}
      \\
      \xrightarrow{\bigotimes_i F^{\mode m_i}_{\mode n}(r_i \cdot e_i)}
      &\
        \bigotimes_i F^{\mode m_i}_{\mode n} (r_i \at A_i)
      \\
      \xrightarrow{\text{unit}}
      &\
        G^{\mode n}_{\mode m} F^{\mode m}_{\mode n}
        \bigotimes_i F^{\mode m_i}_{\mode n} (r_i \at A_i)
      \\
      \iso
      &\
        G^{\mode n}_{\mode m}
        \bigotimes_i F^{\mode m}_{\mode n} F^{\mode m_i}_{\mode n} (r_i \at A_i)
      \\
      \iso
      &\
        G^{\mode n}_{\mode m}
        \bigotimes_i F^{\mode m_i}_{\mode m} (r_i \at A_i)
        \xrightarrow{G^{\mode n}_{\mode m} t} G^{\mode n}_{\mode m} B
    \end{align*}
    The following diagram commutes, since the horizontal morphisms are natural
    transformations, concluding this case.
    \[
      \begin{tikzcd}
        \bigotimes_i
        F^{\mode m_i}_{\mode n} \interp{   \gyarunt{r_{\gyarumv{i}}}   \sigma_{\gyarumv{i}}   \mid  \mathsf{N}_{\gyarumv{i}}  \odot  \Delta_{\gyarumv{i}}  }_{\mode m_i}
        \arrow[r, "\text{unit}"]
        \arrow[d, "\bigotimes_i F^{\mode m_i}_{\mode n}(r_i \cdot e_i)"]
        &
        G^{\mode n}_{\mode m} F^{\mode n}_{\mode m}
        \bigotimes_{i}
        F^{\mode m_i}_{\mode n} \interp{   \gyarunt{r_{\gyarumv{i}}}   \sigma_{\gyarumv{i}}   \mid  \mathsf{N}_{\gyarumv{i}}  \odot  \Delta_{\gyarumv{i}}  }_{\mode m_i}
        \arrow[r, "\sim"]
        &
        G^{\mode n}_{\mode m}
        \bigotimes_{i}
        F^{\mode m_i}_{\mode m} \interp{   \gyarunt{r_{\gyarumv{i}}}   \sigma_{\gyarumv{i}}   \mid  \mathsf{N}_{\gyarumv{i}}  \odot  \Delta_{\gyarumv{i}}  }_{\mode m_i}
        \arrow[d, "G^{\mode n}_{\mode m}
        \bigotimes_i F^{\mode m_i}_{\mode m}(r_i \cdot e_i)"]
        \\
        \bigotimes_i F^{\mode m_i}_{\mode n} (r_i \at A_i)
        \arrow[r, "\text{unit}"']
        &
        G^{\mode n}_{\mode m} F^{\mode m}_{\mode n}
        \bigotimes_i F^{\mode m_i}_{\mode n} (r_i \at A_i)
        \arrow[r, "\sim"']
        &
        G^{\mode n}_{\mode m}
        \bigotimes_i F^{\mode m_i}_{\mode m} (r_i \at A_i)
      \end{tikzcd}
    \]
  \end{case}

  \begin{case}[\rname{raiseE}]\rm
    Let $ \mathsf{n}  \le  \mathsf{m} $.
    By the inductive hypothesis we begin with a morphism
    \[
      \bigotimes_i
      \interp{   \gyarunt{r_{\gyarumv{i}}}   \sigma_{\gyarumv{i}}   \mid  \mathsf{N}_{\gyarumv{i}}  \odot  \Delta_{\gyarumv{i}} }_{\mode m}
      \iso
      \bigotimes_i F^{\mode m_i}_{\mode m}
      \interp{   \gyarunt{r_{\gyarumv{i}}}   \sigma_{\gyarumv{i}}   \mid  \mathsf{N}_{\gyarumv{i}}  \odot  \Delta_{\gyarumv{i}} }_{\mode m_i}
      \xrightarrow{\bigotimes_i F^{\mode m_i}_{\mode m} (r_i \cdot e_i)}
      \bigotimes_i
      F^{\mode m_i}_{\mode m} (r_i \at A_i)
      \xrightarrow{t}
      G^{\mode m}_{\mode n} B
    \]
    Applying the rule \rulename{gyaru}{term}{dropE} to this morphism yields
    \begin{align*}
      \bigotimes_i
      \interp{   \gyarunt{r_{\gyarumv{i}}}   \sigma_{\gyarumv{i}}   \mid  \mathsf{N}_{\gyarumv{i}}  \odot  \Delta_{\gyarumv{i}} }_{\mode n}
      &
        \iso
        F^{\mode m}_{\mode n}
        \bigotimes_i
        \interp{   \gyarunt{r_{\gyarumv{i}}}   \sigma_{\gyarumv{i}}   \mid  \mathsf{N}_{\gyarumv{i}}  \odot  \Delta_{\gyarumv{i}} }_{\mode m}
        \iso
        F^{\mode m}_{\mode n}
        \bigotimes_i
        F^{\mode m_i}_{\mode m}
        \interp{   \gyarunt{r_{\gyarumv{i}}}   \sigma_{\gyarumv{i}}   \mid  \mathsf{N}_{\gyarumv{i}}  \odot  \Delta_{\gyarumv{i}} }_{\mode m_i}
      \\
      &
        \xrightarrow{
        F^{\mode m}_{\mode n} \bigotimes_i F^{\mode m_i}_{\mode m} (r_i \cdot e_i)
        }
        F^{\mode m}_{\mode n}
        \bigotimes_i
        F^{\mode m_i}_{\mode m}
        (r_i \at A_i)
        \xrightarrow{F^{\mode m}_{\mode n}t}
        F^{\mode m}_{\mode n}G^{\mode m}_{\mode n} B
        \xrightarrow{\text{counit}} B
    \end{align*}
    We claim that this is equal to the morphism
    \begin{align*}
      \bigotimes_i
      \interp{   \gyarunt{r_{\gyarumv{i}}}   \sigma_{\gyarumv{i}}   \mid  \mathsf{N}_{\gyarumv{i}}  \odot  \Delta_{\gyarumv{i}} }_{\mode n}
      &
        \iso
        \bigotimes_i
        F^{\mode m_i}_{\mode n}
        \interp{   \gyarunt{r_{\gyarumv{i}}}   \sigma_{\gyarumv{i}}   \mid  \mathsf{N}_{\gyarumv{i}}  \odot  \Delta_{\gyarumv{i}} }_{\mode m_i}
      \\
      &
        \xrightarrow{\bigotimes_i F^{\mode m_i}_{\mode n}(r_i \cdot e_i)}
        \bigotimes_i F^{\mode m_i}_{\mode n} (r_i \at A_i)
        \iso
        F^{\mode m}_{\mode n}
        \bigotimes_i
        F^{\mode m_i}_{\mode m} (r_i \at A_i)
        \xrightarrow{F^{\mode m}_{\mode n}(t)}
        F^{\mode m}_{\mode n} G^{\mode m}_{\mode n} B
        \xrightarrow{\text{counit}} B
    \end{align*}
    So it is sufficient to see that the following diagram commutes
    \[
      \begin{tikzcd}
        \bigotimes_i
        \interp{   \gyarunt{r_{\gyarumv{i}}}   \sigma_{\gyarumv{i}}   \mid  \mathsf{N}_{\gyarumv{i}}  \odot  \Delta_{\gyarumv{i}} }_{\mode n}
        \arrow[r, "\sim"]
        \arrow[d, "\sim"]
        &
        \bigotimes_i
        F^{\mode m}_{\mode n}
        \interp{   \gyarunt{r_{\gyarumv{i}}}   \sigma_{\gyarumv{i}}   \mid  \mathsf{N}_{\gyarumv{i}}  \odot  \Delta_{\gyarumv{i}} }_{\mode m}
        \arrow[r, "\sim"]
        \arrow[d, "\sim"]
        &
        F^{\mode m}_{\mode n}
        \bigotimes_i
        \interp{   \gyarunt{r_{\gyarumv{i}}}   \sigma_{\gyarumv{i}}   \mid  \mathsf{N}_{\gyarumv{i}}  \odot  \Delta_{\gyarumv{i}} }_{\mode m}
        \arrow[d, "\sim"]
        \\
        \bigotimes_i
        F^{\mode m_i}_{\mode n}
        \interp{   \gyarunt{r_{\gyarumv{i}}}   \sigma_{\gyarumv{i}}   \mid  \mathsf{N}_{\gyarumv{i}}  \odot  \Delta_{\gyarumv{i}} }_{\mode m_i}
        \arrow[r, "\sim"]
        \arrow[d, "\bigotimes_i F^{\mode m_i}_{\mode n}(r_i \cdot e_i)"]
        &
        \bigotimes_i
        F^{\mode m}_{\mode n}
        F^{\mode m_i}_{\mode m}
        \interp{   \gyarunt{r_{\gyarumv{i}}}   \sigma_{\gyarumv{i}}   \mid  \mathsf{N}_{\gyarumv{i}}  \odot  \Delta_{\gyarumv{i}} }_{\mode m_i}
        \arrow[r, "\sim"]
        &
        F^{\mode m}_{\mode n}
        \bigotimes_i
        F^{\mode m_i}_{\mode m}
        \interp{   \gyarunt{r_{\gyarumv{i}}}   \sigma_{\gyarumv{i}}   \mid  \mathsf{N}_{\gyarumv{i}}  \odot  \Delta_{\gyarumv{i}} }_{\mode m_i}
        \arrow[d, "F^{\mode m}_{\mode n}\bigotimes_i F^{\mode m_i}_{\mode m}(r_i \cdot e_i)"]
        \\
        \bigotimes_i F^{\mode m_i}_{\mode n} (r_i \at A_i)
        \arrow[r, "\sim"]
        &
        \bigotimes_i F^{\mode m}_{\mode n} F^{\mode m_i}_{\mode m} (r_i \at A_i)
        \arrow[r, "\sim"]
        &
        F^{\mode m}_{\mode n} \bigotimes_i F^{\mode m_i}_{\mode m} (r_i \at A_i)
      \end{tikzcd}
    \]
    The bottom rectangle commutes by the naturality of the horizontal morphisms,
    while the top rectangles commute since the system of natural isomorphisms
    $ F^{\mode l_2}_{\mode l_3} \circ F^{\mode l_1}_{\mode l_2}
    \iso F^{\mode l_1}_{\mode l_3} $ is coherent.
  \end{case}

  The remaining cases are straightforward using \ref{alem:scalar-mult-comp}.
  We demonstrate this using the case for function application as an example.
  \begin{case}[\rname{arrowE}]\rm
    We begin with morphisms
    \begin{mathpar}
      t_1 \from
      \bigotimes_i \interp{   \gyarunt{r_{\gyarumv{i}}}   \sigma_{\gyarumv{i}}   \mid  \mathsf{N}_{\gyarumv{i}}  \odot  \Delta_{\gyarumv{i}} }_{\mode m}
      \iso
      \bigotimes_i F^{\mode m_i}_{\mode m}
      \interp{   \gyarunt{r_{\gyarumv{i}}}   \sigma_{\gyarumv{i}}   \mid  \mathsf{N}_{\gyarumv{i}}  \odot  \Delta_{\gyarumv{i}} }_{\mode m_i}
      \xrightarrow{\bigotimes_i F^{\mode m_i}_{\mode m} (r_i \cdot e_i)}
      \bigotimes_i F^{\mode m_i}_{\mode m}(r_i \at A_i)
      \xrightarrow{t}
      B
      \and
      t_2 \from
      \bigotimes_i
      \interp{   \gyarunt{r'_{\gyarumv{i}}}   \sigma'_{\gyarumv{i}}   \mid  \mathsf{N}'_{\gyarumv{i}}  \odot  \Delta'_{\gyarumv{i}} }_{\mode n}
      \iso
      \bigotimes_i F^{\mode m'_i}_{\mode n}
      \interp{   \gyarunt{r'_{\gyarumv{i}}}   \sigma'_{\gyarumv{i}}   \mid  \mathsf{N}'_{\gyarumv{i}}  \odot  \Delta'_{\gyarumv{i}} }_{\mode m'_i}
      \xrightarrow{\bigotimes_i F^{\mode m'_i}_{\mode n} (r'_i \cdot e'_i)}
      \bigotimes_i F^{\mode m'_i}_{\mode n}(r'_i \at A'_i)
      \xrightarrow{t'}
      F^{\mode m}_{\mode n} (q \at B) \lto B'
    \end{mathpar}
    applying \rulename{gyaru}{term}{arrowE} to them yields the morphism
    \begin{align*}
      & \
        \bigotimes_i
        \interp{   \gyarunt{r'_{\gyarumv{i}}}   \sigma'_{\gyarumv{i}}   \mid  \mathsf{N}'_{\gyarumv{i}}  \odot  \Delta'_{\gyarumv{i}} }_{\mode n}
        \ox
        \bigotimes_i \interp{    \gyarunt{q}   \gyarunt{r_{\gyarumv{i}}}    \sigma_{\gyarumv{i}}   \mid  \mathsf{N}_{\gyarumv{i}}  \odot  \Delta_{\gyarumv{i}} }_{\mode n}
      \\
      \iso
      & \
        \bigotimes_i F^{\mode m'_i}_{\mode n}
        \interp{   \gyarunt{r'_{\gyarumv{i}}}   \sigma'_{\gyarumv{i}}   \mid  \mathsf{N}'_{\gyarumv{i}}  \odot  \Delta'_{\gyarumv{i}} }_{\mode m'_i}
        \ox
        F^{\mode m}_{\mode n} \bigotimes_i F^{\mode m_i}_{\mode m}
        \interp{    \gyarunt{q}   \gyarunt{r_{\gyarumv{i}}}    \sigma_{\gyarumv{i}}   \mid  \mathsf{N}_{\gyarumv{i}}  \odot  \Delta_{\gyarumv{i}} }_{\mode m_i}
      \\
      \xrightarrow{t_2 \ox F^{\mode m}_{\mode n}(q \cdot t_1)}
      & \
        ((q \at B) \lto B') \ox (q \at B)
        \xrightarrow{\textit{ev}}
        B'
    \end{align*}
    We claim that this morphism is equal to
    \begin{align*}
      & \
        \bigotimes_i
        \interp{   \gyarunt{r'_{\gyarumv{i}}}   \sigma'_{\gyarumv{i}}   \mid  \mathsf{N}'_{\gyarumv{i}}  \odot  \Delta'_{\gyarumv{i}} }_{\mode m}
        \ox
        \bigotimes_i \interp{    \gyarunt{q}   \gyarunt{r_{\gyarumv{i}}}    \sigma_{\gyarumv{i}}   \mid  \mathsf{N}_{\gyarumv{i}}  \odot  \Delta_{\gyarumv{i}} }_{\mode m}
      \\
      \iso
      & \
        \bigotimes_i F^{\mode m'_i}_{\mode m}
        \interp{   \gyarunt{r'_{\gyarumv{i}}}   \sigma'_{\gyarumv{i}}   \mid  \mathsf{N}'_{\gyarumv{i}}  \odot  \Delta'_{\gyarumv{i}} }_{\mode m'_i}
        \ox
        \bigotimes_i F^{\mode m_i}_{\mode m}
        \interp{    \gyarunt{q}   \gyarunt{r_{\gyarumv{i}}}    \sigma_{\gyarumv{i}}   \mid  \mathsf{N}_{\gyarumv{i}}  \odot  \Delta_{\gyarumv{i}} }_{\mode m_i}
      \\
      \xrightarrow{
      (\bigotimes_i F^{\mode m'_i}_{\mode m} (r'_i \cdot e'_i))
      \ox
      (\bigotimes_i F^{\mode m_i}_{\mode m} ((qr_i) \cdot e_i))
      }
      & \
        \bigotimes_i F^{\mode m'_i}_{\mode m}(r'_i \at A'_i)
        \ox
        \bigotimes_i F^{\mode m_i}_{\mode m}((qr_i) \at A_i)
      \\
      \xrightarrow{t' \ox (q \cdot t)}
      &
        ((q \at B) \lto B') \ox (q \at B)
        \xrightarrow{\textit{ev}} B'
    \end{align*}
    But this follows immediately from Lemma \ref{alem:scalar-mult-comp}.
  \end{case}
\end{proof}

\newpage
\subsection{Semantic soundness of beta-conversion}
\label{asec:beta-semantic-soundness}

\begin{theorem}
  $ \beta $-conversion is sound with respect to the categorical semantics:
  If $  \rho  \mid  \mathsf{M}  \odot  \Gamma  \vdash_{ \mathsf{m} }  \gyarunt{t_{{\mathrm{1}}}}  \equiv_\beta  \gyarunt{t_{{\mathrm{2}}}}  \colon  \gyarunt{A}  $,
  then the interpretations of
  $  \rho  \mid  \mathsf{M}   \odot   \Gamma  \vdash_{ \mathsf{m} }  \gyarunt{t_{{\mathrm{1}}}}  \colon  \gyarunt{A}  $ and
  $  \rho  \mid  \mathsf{M}   \odot   \Gamma  \vdash_{ \mathsf{m} }  \gyarunt{t_{{\mathrm{2}}}}  \colon  \gyarunt{A}  $ are equal in any categorical model.
\end{theorem}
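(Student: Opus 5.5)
The plan is a case analysis on the $\beta$-rule used, reducing each case to Lemma~\ref{alem:subst-comp} (substitution is composition) together with one adjunction or coherence identity. As in the proof of Theorem~\ref{athm:beta-syntactic-soundness}, each reduct is typed from a derivation of the redex. That derivation ends in the introduction rule, possibly followed by instances of \textsc{Weak}, \textsc{Sub} and \textsc{Cont}. I will therefore argue by induction on the derivation of the scrutinee, mirroring the syntactic proof. In the base subcase the introduction rule is last. In the structural subcases, the interpretation of \textsc{Weak}, \textsc{Sub} or \textsc{Cont} is precomposition with a map on the context: respectively $u^{-1}\circ F(w)$, the action of $r\le q$, or $m^{-1}\circ F(c)$. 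I must check that this map commutes past $q\cdot(-)$ and the outer eliminator. For \textsc{Sub} this is naturality of $\delta,\mu,\tau$ in the grade. For \textsc{Weak} it is exactly the diagram already established in the \textsc{Weak} case of Lemma~\ref{alem:subst-comp}. For \textsc{Cont} it is the diagram of Figure~\ref{afig:subst-comp-cont}. So the structural subcases can be discharged by citing those computations.

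The base subcases go as follows.
\begin{itemize}
\item \textsc{Unit}: $q\cdot\id_I=\iota$, so $\iota^{-1}\circ(q\cdot \id)=\id$, and the interpretation collapses to $t$ via the unitor.
\item \textsc{Pair}: $q\cdot(a_1\ox a_2)$ equals $\tau\circ(q\cdot a_1\ox q\cdot a_2)$ up to the reassociation of contexts, by monoidality of $\mu$ and $\tau$. Hence $m^{-1}\circ F(\tau^{-1}\circ q\cdot(a_1\ox a_2))=F(q\cdot a_1)\ox F(q\cdot a_2)$ up to $m$. Lemma~\ref{alem:subst-comp} with the two simultaneous substitutions then identifies the result with $\interp{[t_1/x_1,t_2/x_2]e}$.
\item \textsc{Fun}: $\textit{ev}\circ(\lambda e\ox F(q\cdot t))=e\circ(\id\ox F(q\cdot t))$, which is Lemma~\ref{alem:subst-comp} for a single substitution with the remaining variables substituted by themselves. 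The \textsc{Var} case of that lemma shows that the identity substitution contributes identities.
\item \textsc{Drop}: the eliminator applied to $F^{\mode m}_{\mode n}(q\cdot t)$, composed with the isomorphism $F^{\mode n}_{\mode l}F^{\mode m}_{\mode n}\iso F^{\mode m}_{\mode l}$, is again exactly the composite in Lemma~\ref{alem:subst-comp}. The only extra ingredient is coherence of these isomorphisms.
\item \textsc{Raise}: $\operatorname{counit}\circ F(G e\circ\cong\circ\operatorname{unit})$ reduces to $e$ up to the canonical context isomorphism, by the triangle identity plus naturality of the counit. Here one must check that the isomorphism $F\interp{\cdot}_{\mode n}\iso\interp{\cdot}_{\mode m}$ is the one used in the raise-introduction interpretation, which again follows from the coherence of the system $F\circ F\iso F$.
\end{itemize}

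The main obstacle will be the \textsc{Pair} case, and more generally any step where $q\cdot(-)$ must be distributed over a tensor of contexts coming from different modes. There I will reuse Lemma~\ref{alem:scalar-mult-comp}, whose middle-square argument (all composites built from $\delta,\mu,\tau$ agree by the coherences of Lemma~\ref{lem:mu-coherence}) applies verbatim. A second delicate point is the \textsc{Cont} subcase, where the ideal property guarantees $qr_1,qr_2\in\Cont$. Semantically I need $c$ to commute with $\delta$, which is the first diagram of Figure~\ref{fig:graded-comonad-coherence} together with the $\mu$-coherence for $c$.
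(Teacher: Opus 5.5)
Your case split and your intro-against-elim computations (Unit, Pair, Fun, Drop, Raise) are essentially the paper's. Your Unit step $q\cdot\id_I=\iota$ is in fact the correct form of what the paper writes. The paper waves the \textsc{Weak}/\textsc{Sub}/\textsc{Cont} subcases through as straightforward. You go further and discharge them by citing earlier computations, and that is where the argument breaks: those computations prove different equations.

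The \textsc{Weak} case of Lemma~\ref{alem:subst-comp} and Figure~\ref{afig:subst-comp-cont} handle a structural rule acting on the \emph{outer} grade. They compute $w\circ(0\cdot e)$ and $c_{q_1,q_2}\circ((q_1+q_2)\cdot e)$, using the third and second diagrams of Figure~\ref{fig:graded-comonad-coherence} (i.e.\ $\delta_{0,r}$ and $\delta_{r_1+r_2,q}$) and the $\mu$-coherences for $w$ and $c$. In your subcases the outer grade is a fixed $q$, and the weakening or contraction sits \emph{inside} the scrutinee's context. So you need that $q\cdot(-)$ carries the weakening or contraction map at grades $0$, resp.\ $r_1,r_2$, to the one at grades $q0=0$, resp.\ $qr_1,qr_2$. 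That uses the mirror-image axioms instead:
\begin{itemize}
\item $(q\at w)\circ\delta_{q,0}=\iota_q\circ w$ (fourth diagram);
\item $(q\at c)\circ\delta_{q,r_1+r_2}=\tau\circ(\delta\ox\delta)\circ c_{qr_1,qr_2}$ (first diagram, which your closing remark does name, contradicting your earlier citation);
\item naturality of $\mu$ in its object argument and monoidality of $\mu$, whose unit part turns $(q\at u^{-1})\circ\mu_I\circ F(\iota)$ into $\iota_q\circ u^{-1}$.
\end{itemize}
The $\mu$-coherence for $c$, which concerns $\mu$ at a sum of grades, plays no role.

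Similarly, the fact that substituting variables for themselves contributes identities is not the \textsc{Var} case of Lemma~\ref{alem:subst-comp}. That case is $\epsilon\circ(1\cdot t)=t$, via $\delta_{1,r}$. What you need is the observation in the $\eta$-soundness proof that $q\cdot v_A$ is the identity up to $F^{\mode m}_{\mode m}\iso\id$. It uses the other counit law $(r\at\epsilon)\circ\delta_{r,1}=\id$ and Lemma~\ref{lem:ell-mu-induced-coherence}.

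All of these identities follow from the axioms, so your plan is repairable. As written, however, these steps rest on computations that establish the wrong identities.
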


\begin{proof}
  \setcounter{case}{0}
  We proceed by induction,
  inspecting the proof of Theorem \ref{athm:beta-syntactic-soundness},
  and verifying that the derivations provided there have the same interpretations.
  We will only consider the cases where the $ \beta $-conversion
  happens as a result of an introduction rule being used against an elimination rule,
  since the other cases are straightforward.

  \begin{case}[Unit]\rm
    In this case we have a morphism
    $ \interp{  \rho  \mid  \mathsf{M}  \odot  \Gamma }_{\mode m} \xrightarrow{e} I_{\mode m} $
    and we must show that this morphism is equal to
    \[
      \interp{  \rho  \mid  \mathsf{M}  \odot  \Gamma }_{\mode m}
      \iso
      \interp{  \rho  \mid  \mathsf{M}  \odot  \Gamma }_{\mode m} \ox \interp{  \emptyset  \mid  \emptyset  \odot  \emptyset }_{\mode m}
      \xrightarrow{e \ox (\iota \circ (q \cdot \id))}
      I_{\mode m} \ox I_{\mode m}
      \iso
      I_{\mode m}.
    \]
    But $ q \cdot \id = \iota^{-1} : I_{\mode m} \to q \at I_{\mode m} $,
    from which the claim follows.
  \end{case}

  \begin{case}[Pair]\rm
    In this case we are given morphisms
    $ \interp{  \sigma_{\gyarumv{i}}  \mid  \mathsf{N}_{\gyarumv{i}}  \odot  \Delta_{\gyarumv{i}} }_{\mode m} \xrightarrow{e_i} A_i $
    (for $ i \in \{1, 2\} $)
    and
    \[
      \interp{  \rho  \mid  \mathsf{M}  \odot  \Gamma }_{\mode n}
      \ox F^{\mode m}_{\mode n} (q \at A_1)
      \ox F^{\mode m}_{\mode n} (q \at A_2)
      \xrightarrow{t} B
    \]
    The interpretation of $  \mathbin{\mathsf{let} } _{@  \gyarunt{q} }  \gyarusym{(}  \gyarumv{x_{{\mathrm{1}}}}  \gyarusym{,}  \gyarumv{x_{{\mathrm{2}}}}  \gyarusym{)}  =  \gyarusym{(}  \gyarunt{e_{{\mathrm{1}}}}  \gyarusym{,}  \gyarunt{e_{{\mathrm{2}}}}  \gyarusym{)}   \mathbin{\mathsf{in} }   \gyarunt{t}  $ is
    \begin{align*}
      &\
        \interp{  \rho  \mid  \mathsf{M}  \odot  \Gamma }_{\mode n}
        \ox F^{\mode m}_{\mode n}
        \interp{   \gyarunt{q}   \gyarusym{(}  \sigma_{{\mathrm{1}}}  \gyarusym{,}  \sigma_{{\mathrm{2}}}  \gyarusym{)}   \mid  \mathsf{N}_{{\mathrm{1}}}  \gyarusym{,}  \mathsf{N}_{{\mathrm{2}}}  \odot  \Delta_{{\mathrm{1}}}  \gyarusym{,}  \Delta_{{\mathrm{2}}} }_{\mode m}
      \\
      \to
      &\
        \interp{  \rho  \mid  \mathsf{M}  \odot  \Gamma }_{\mode n}
        \ox F^{\mode m}_{\mode n}
        q \at \interp{  \gyarusym{(}  \sigma_{{\mathrm{1}}}  \gyarusym{,}  \sigma_{{\mathrm{2}}}  \gyarusym{)}  \mid  \mathsf{N}_{{\mathrm{1}}}  \gyarusym{,}  \mathsf{N}_{{\mathrm{2}}}  \odot  \Delta_{{\mathrm{1}}}  \gyarusym{,}  \Delta_{{\mathrm{2}}} }_{\mode m}
      \\
      \iso
      &\
        \interp{  \rho  \mid  \mathsf{M}  \odot  \Gamma }_{\mode n}
        \ox F^{\mode m}_{\mode n}
        q \at (
        \interp{  \sigma_{{\mathrm{1}}}  \mid  \mathsf{N}_{{\mathrm{1}}}  \odot  \Delta_{{\mathrm{1}}} }_{\mode m}
        \ox
        \interp{  \sigma_{{\mathrm{2}}}  \mid  \mathsf{N}_{{\mathrm{2}}}  \odot  \Delta_{{\mathrm{2}}} }_{\mode m}
        )
      \\
      \xrightarrow{\id \ox F^{\mode m}_{\mode n} q \at (e_1 \ox e_2)}
      & \
        \interp{  \rho  \mid  \mathsf{M}  \odot  \Gamma }_{\mode n}
        \ox F^{\mode m}_{\mode n}
        q \at (A_1 \ox A_2)
      \\
      \iso
      &\
        \interp{  \rho  \mid  \mathsf{M}  \odot  \Gamma }_{\mode n}
        \ox F^{\mode m}_{\mode n} (q \at A_1)
        \ox F^{\mode m}_{\mode n} (q \at A_2)
      \\
      \xrightarrow{t} &\ B
    \end{align*}
    where the unlabeled morphism is constructed from $ \mu $ and $ \tau $.
    On the other hand the interpretation of $ [e_1/x_1, e_2/x_2] t $
    is given by Lemma \ref{alem:subst-comp} as:
    \begin{align*}
      &\
        \interp{  \rho  \mid  \mathsf{M}  \odot  \Gamma }_{\mode n}
        \ox F^{\mode m}_{\mode n}
        \interp{   \gyarunt{q}   \gyarusym{(}  \sigma_{{\mathrm{1}}}  \gyarusym{,}  \sigma_{{\mathrm{2}}}  \gyarusym{)}   \mid  \mathsf{N}_{{\mathrm{1}}}  \gyarusym{,}  \mathsf{N}_{{\mathrm{2}}}  \odot  \Delta_{{\mathrm{1}}}  \gyarusym{,}  \Delta_{{\mathrm{2}}} }_{\mode m}
      \\
      \iso
      &\
        \interp{  \rho  \mid  \mathsf{M}  \odot  \Gamma }_{\mode n}
        \ox F^{\mode m}_{\mode n}
        \interp{   \gyarunt{q}   \sigma_{{\mathrm{1}}}   \mid  \mathsf{N}_{{\mathrm{1}}}  \odot  \Delta_{{\mathrm{1}}} }_{\mode m}
        \ox F^{\mode m}_{\mode n}
        \interp{   \gyarunt{q}   \sigma_{{\mathrm{2}}}   \mid  \mathsf{N}_{{\mathrm{2}}}  \odot  \Delta_{{\mathrm{2}}} }_{\mode m}
      \\
      \to
      &\
        \interp{  \rho  \mid  \mathsf{M}  \odot  \Gamma }_{\mode n}
        \ox F^{\mode m}_{\mode n} q\at
        \interp{  \sigma_{{\mathrm{1}}}  \mid  \mathsf{N}_{{\mathrm{1}}}  \odot  \Delta_{{\mathrm{1}}} }_{\mode m}
        \ox F^{\mode m}_{\mode n} q\at
        \interp{  \sigma_{{\mathrm{2}}}  \mid  \mathsf{N}_{{\mathrm{2}}}  \odot  \Delta_{{\mathrm{2}}} }_{\mode m}
      \\
      \xrightarrow{\id
      \ox F^{\mode m}_{\mode n}(q \at e_1)
      \ox F^{\mode m}_{\mode n}(q \at e_2)}
      &\
        \interp{  \rho  \mid  \mathsf{M}  \odot  \Gamma }_{\mode n}
        \ox F^{\mode m}_{\mode n} (q \at A_1)
        \ox F^{\mode m}_{\mode n} (q \at A_2)
      \\
      \xrightarrow{t} &\ B
    \end{align*}
    where the unlabeled arrow is constructed from $ \mu $ and $ \tau $.
    These morphisms are easily verified to be equal.
  \end{case}
  \begin{case}[Function]\rm
    We are given morphisms
    \begin{mathpar}
      \interp{  \rho  \mid  \mathsf{M}  \odot  \Gamma }_{\mode n} \ox F^{\mode m}_{\mode n}(q \at A)
      \xrightarrow{t}
      B
      \and
      \interp{  \sigma  \mid  \mathsf{N}  \odot  \Delta }_{\mode m} \xrightarrow{e} A
    \end{mathpar}
    By the formula for transposing morphism along adjunctions,
    the intepretation $ \gyarusym{(}  \lambda  \gyarumv{x}  \gyarusym{.}  \gyarunt{t}  \gyarusym{)} $ is
    \[
      \interp{  \rho  \mid  \mathsf{M}  \odot  \Gamma }_{\mode n}
      \xrightarrow{\text{unit}}
      F^{\mode m}_{\mode n}(q \at A) \lto
      (\interp{  \rho  \mid  \mathsf{M}  \odot  \Gamma }_{\mode n} \ox F^{\mode m}_{\mode n}(q \at A))
      \xrightarrow{F^{\mode m}_{\mode n}(q \at A) \lto t}
      F^{\mode m}_{\mode n}(q \at A) \lto B
    \]
    and thus the interpretation $ \gyarusym{(}  \lambda  \gyarumv{x}  \gyarusym{.}  \gyarunt{t}  \gyarusym{)} \, \gyarunt{e} $ is
    \[
      \interp{  \rho  \mid  \mathsf{M}  \odot  \Gamma }_{\mode n}
      \ox
      F^{\mode m}_{\mode n}\interp{   \gyarunt{q}   \sigma   \mid  \mathsf{N}  \odot  \Delta }_{\mode m}
      \xrightarrow{((F^{\mode m}_{\mode n}(q \at A) \lto t) \circ \text{unit})
        \ox F^{\mode m}_{\mode n} (q \cdot e)}
      (F^{\mode m}_{\mode n}(q \at A) \lto B) \ox F^{\mode m}_{\mode n}(q \at A)
      \xrightarrow{\textit{ev}} B
    \]
    We claim that this morphism is equal to
    \[
      \interp{  \rho  \mid  \mathsf{M}  \odot  \Gamma }_{\mode n}
      \ox
      F^{\mode m}_{\mode n}\interp{   \gyarunt{q}   \sigma   \mid  \mathsf{N}  \odot  \Delta }_{\mode m}
      \xrightarrow{\id \ox F^{\mode m}_{\mode n}(q \cdot e)}
      \interp{  \rho  \mid  \mathsf{M}  \odot  \Gamma }_{\mode n}
      \ox
      F^{\mode m}_{\mode n}(q \at A)
      \xrightarrow{t}
      B
    \]
    To see that these morphisms are equal, consider the following diagram:
    \[
      \begin{tikzcd}[column sep = -4em]
        \interp{  \rho  \mid  \mathsf{M}  \odot  \Gamma }_{\mode n}
        \ox
        F^{\mode m}_{\mode n}\interp{   \gyarunt{q}   \sigma   \mid  \mathsf{N}  \odot  \Delta }_{\mode m}
        \arrow[dd, "\text{unit} \ox \id"']
        \arrow[rd, "\id \ox F^{\mode m}_{\mode n}(q \cdot e)"]
        \\
        &
        \interp{  \rho  \mid  \mathsf{M}  \odot  \Gamma }_{\mode n}
        \ox
        F^{\mode m}_{\mode n}(q \at A)
        \arrow[dd, "\text{unit} \ox \id"]
        \arrow[rddd, "\id", bend left=35]
        \\
        \mlnode{
          $ F^{\mode m}_{\mode n}(q \at A) \lto
          (\interp{  \rho  \mid  \mathsf{M}  \odot  \Gamma }_{\mode n} \ox F^{\mode m}_{\mode n}(q \at A)) $
          \\
          $ \ox F^{\mode m}_{\mode n}\interp{   \gyarunt{q}   \sigma   \mid  \mathsf{N}  \odot  \Delta }_{\mode m} $
        }
        \arrow[rd, "\id \ox F^{\mode m}_{\mode n}(q \cdot e)"]
        \arrow[dd, "(F^{\mode m}_{\mode n}(q \at A) \lto t) \ox \id"']
        \\
        &
        \mlnode{
          $ F^{\mode m}_{\mode n}(q \at A) \lto
          (\interp{  \rho  \mid  \mathsf{M}  \odot  \Gamma }_{\mode n} \ox F^{\mode m}_{\mode n}(q \at A)) $
          \\
          $ \ox F^{\mode m}_{\mode n}(q \at A) $
        }
        \arrow[dd, "(F^{\mode m}_{\mode n}(q \at A) \lto t) \ox \id"']
        \arrow[rd, "\textit{ev}"]
        \\
        (F^{\mode m}_{\mode n}(q \at A) \lto B)
        \ox
        F^{\mode m}_{\mode n}\interp{   \gyarunt{q}   \sigma   \mid  \mathsf{N}  \odot  \Delta }_{\mode m}
        \arrow[rd, "\id \ox F^{\mode m}_{\mode n}(q \cdot e)"']
        &&
        (\interp{  \rho  \mid  \mathsf{M}  \odot  \Gamma }_{\mode n} \ox F^{\mode m}_{\mode n}(q \at A))
        \arrow[dd, "t"]
        \\
        &
        (F^{\mode m}_{\mode n}(q \at A) \lto B) \ox F^{\mode m}_{\mode n}(q \at A)
        \arrow[rd, "\textit{ev}"']
        \\
        &&
        B
      \end{tikzcd}
    \]
    The left/bottom edge is the former morphism,
    and the top/right edge is the latter morphism.
    The top left and bottom right squares commute by naturality of of the unit and counit,
    while the bottom left square commutes by functoriality of $ - \ox - $.
    Finally the trianlge commutes by one of triangle identities for adjunctions.
  \end{case}

  \begin{case}[Drop]\rm
    We are given $ \mathsf{l}  \le  \mathsf{n}  \le  \mathsf{m} $, and morphisms
    \begin{mathpar}
      \interp{  \rho  \mid  \mathsf{M}  \odot  \Gamma }_{\mode m} \xrightarrow{e} A
      \and
      \interp{  \sigma  \mid  \mathsf{N}  \odot  \Delta }_{\mode l}
      \ox F^{\mode m}_{\mode l} (q \at A)
      \xrightarrow{t}
      B
    \end{mathpar}
    The categorical interpretation
    $  \mathbin{\mathsf{let} } _{@  \gyarunt{q} }   \operatorname{\downarrow} _{ \mathsf{n}  \le  \mathsf{m} }  \gyarumv{x}   =   \operatorname{\downarrow} ^{ \gyarunt{q} }_{ \mathsf{n}  \le  \mathsf{m} }  \gyarunt{e}    \mathbin{\mathsf{in} }   \gyarunt{t}  $ is
    \begin{align*}
      &\
        \interp{  \sigma  \mid  \mathsf{N}  \odot  \Delta }_{\mode l}
        \ox
        \interp{   \gyarunt{q}   \rho   \mid  \mathsf{M}  \odot  \Gamma }_{\mode l}
      \\
      \iso
      &\
        \interp{  \sigma  \mid  \mathsf{N}  \odot  \Delta }_{\mode l}
        \ox
        F^{\mode n}_{\mode l}F^{\mode m}_{\mode n}
        \interp{   \gyarunt{q}   \rho   \mid  \mathsf{M}  \odot  \Gamma }_{\mode m}
      \\
      \xrightarrow{F^{\mode n}_{\mode l} F^{\mode m}_{\mode n} (q \cdot e)}
      &\
        \interp{  \sigma  \mid  \mathsf{N}  \odot  \Delta }_{\mode l}
        \ox F^{\mode n}_{\mode l}F^{\mode m}_{\mode n}(q \at A)
      \\
      \iso
      &\
        \interp{  \sigma  \mid  \mathsf{N}  \odot  \Delta }_{\mode l}
        \ox F^{\mode m}_{\mode l} (q \at A)
      \\
      \xrightarrow{t}
      &\
        B
    \end{align*}
    On the other hand, by Lemma \ref{alem:subst-comp},
    the interpretation of $ \gyarusym{[}  \gyarunt{e}  \gyarusym{/}  \gyarumv{x}  \gyarusym{]}  \gyarunt{t} $ is
    \begin{align*}
      &\
        \interp{  \sigma  \mid  \mathsf{N}  \odot  \Delta }_{\mode l}
        \ox
        \interp{   \gyarunt{q}   \rho   \mid  \mathsf{M}  \odot  \Gamma }_{\mode l}
      \\
      \iso
      &\
        \interp{  \sigma  \mid  \mathsf{N}  \odot  \Delta }_{\mode l}
        \ox
        F^{\mode m}_{\mode l}
        \interp{   \gyarunt{q}   \rho   \mid  \mathsf{M}  \odot  \Gamma }_{\mode m}
      \\
      \xrightarrow{F^{\mode m}_{\mode l}(q \cdot e)}
      &\
        \interp{  \sigma  \mid  \mathsf{N}  \odot  \Delta }_{\mode l}
        \ox F^{\mode m}_{\mode l} (q \at A)
      \\
      \xrightarrow{t}
      &\
        B
    \end{align*}
    And it is straightforward verify that these morphisms are equal.
  \end{case}
  \begin{case}[Raise]\rm
    We are given $ \mathsf{m}  \le  \mathsf{n}  \le  \mathsf{M} $ and a morphism
    \[
      \interp{  \rho  \mid  \mathsf{M}  \odot  \Gamma }_{\mode m} \xrightarrow{t} A
    \]
    We have a commutative diagram
    \[
      \begin{tikzcd}
        \interp{  \rho  \mid  \mathsf{M}  \odot  \Gamma }_{\mode m}
        \arrow[r, "\sim"]
        &
        F^{\mode n}_{\mode m} \interp{  \rho  \mid  \mathsf{M}  \odot  \Gamma }_{\mode n}
        \arrow[r, "F^{\mode n}_{\mode m}(\text{unit})"]
        \arrow[dr, equals]
        &
        F^{\mode n}_{\mode m}
        G^{\mode n}_{\mode m}
        F^{\mode n}_{\mode m}
        \interp{  \rho  \mid  \mathsf{M}  \odot  \Gamma }_{\mode n}
        \arrow[r, "\sim"]
        \arrow[d]
        &
        F^{\mode n}_{\mode m}
        G^{\mode n}_{\mode m}
        \interp{  \rho  \mid  \mathsf{M}  \odot  \Gamma }_{\mode m}
        \arrow[r, "F^{\mode n}_{\mode m} G^{\mode m}_{\mode n}t"]
        \arrow[d]
        &
        F^{\mode n}_{\mode m} G^{\mode n}_{\mode m} A
        \arrow[d]
        \\
        & &
        F^{\mode n}_{\mode m}
        \interp{  \rho  \mid  \mathsf{M}  \odot  \Gamma }_{\mode n}
        \arrow[r, "\sim"]
        &
        \interp{  \rho  \mid  \mathsf{M}  \odot  \Gamma }_{\mode m}
        \arrow[r, "t"]
        &
        A
      \end{tikzcd}
    \]
    where unlabeled vertical arrows are the counit of
    $ F^{\mode m}_{\mode n} \dashv G^{\mode m}_{\mode n} $.
    The composition across the top and right edges is the interpretation
    of $  \operatorname{\uparrow}^{-1} _{ \mathsf{m}  \le  \mathsf{n} }  \gyarusym{(}   \operatorname{\uparrow} _{ \mathsf{m}  \le  \mathsf{n} }  \gyarunt{t}   \gyarusym{)}  $,
    and this diagram shows that that morphism is equal to $ t $,
    concluding this case.
  \end{case}
\end{proof}

\newpage
\subsection{Semantic soundness of eta-conversion}
\label{asec:eta-semantic-soundness}

\begin{theorem}
  $ \eta $-conversion is sound with respect to the categorical semantics:
  If $  \rho  \mid  \mathsf{M}  \odot  \Gamma  \vdash_{ \mathsf{m} }  \gyarunt{t_{{\mathrm{1}}}}  \equiv_\eta  \gyarunt{t_{{\mathrm{2}}}}  \colon  \gyarunt{A}  $,
  then the interpretations of
  $  \rho  \mid  \mathsf{M}   \odot   \Gamma  \vdash_{ \mathsf{m} }  \gyarunt{t_{{\mathrm{1}}}}  \colon  \gyarunt{A}  $ and
  $  \rho  \mid  \mathsf{M}   \odot   \Gamma  \vdash_{ \mathsf{m} }  \gyarunt{t_{{\mathrm{2}}}}  \colon  \gyarunt{A}  $ are equal in any categorical model.
\end{theorem}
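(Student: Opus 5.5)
The proof goes case by case over the $\eta$-rules, mirroring the derivations given in the proof of Theorem \ref{athm:eta-syntactic-soundness}. For each rule I compute the interpretation of the expanded term $t_2$ from Figure \ref{fig:gyaru-interp-model}, using the fixed derivation of $t_2$ exhibited there. I then show it equals $\interp{t_1} = t$. In every case the eliminator is used at grade $1$, or at the grade $q$ that already annotates the type. So the key auxiliary fact is a description of scalar multiplication by the variable rule. For the \textsc{Var} derivation $1 \mid \mathsf{m} \odot x : A \vdash_{\mathsf m} x : A$, the morphism $q \cdot \interp{x}$ is $F^{\mathsf m}_{\mathsf m}(q\at A) \iso q \at F^{\mathsf m}_{\mathsf m}(1\at A) \to q\at A$. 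By the unit coherence of Figure \ref{fig:coherence-mu} and the counit law $ (q\at\epsilon)\circ\delta_{q,1} = \id$ of the graded comonad, this reduces to the canonical isomorphism. In particular $1\cdot\interp{x}$ is $\epsilon$ composed with the isomorphism $F^{\mathsf m}_{\mathsf m} \iso \id$.

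\emph{Unit.} The interpretation of $\mathsf{let}_{@1}\,\star = t \;\mathsf{in}\; \star$ is $\iota^{-1}\circ(1\cdot t)$ followed by unitors. Unfolding $1\cdot t = (1\at t)\circ \tau\circ\bigotimes(\mu\circ F\delta)$, naturality of $\epsilon$ gives $\epsilon\circ(1\at t) = t\circ\epsilon$. The same diagram as in the \textsc{Var} case of Lemma \ref{alem:subst-comp} then gives $\epsilon \circ \tau \circ \bigotimes(\mu\circ F\delta) = \id$, so $1\cdot t = \epsilon^{-1}$-free, i.e.\ $\epsilon\circ(1\cdot t)=t$. It remains to check $\iota_1^{-1} = \epsilon_I$ on $1\at I$, which follows from $\epsilon$ being monoidal.

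\emph{Pair.} The same computation shows $m^{-1}\circ F(\tau^{-1}\circ(1\cdot t))$, composed with the interpretation of $(x_1,x_2)$, which is $\epsilon\ox\epsilon$ up to isomorphism, equals $t$. The needed fact is that $\epsilon$ is a monoidal transformation, i.e.\ it is compatible with $\tau$.

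\emph{Function.} The expansion $\lambda x.\,t\,x$ is interpreted as $\lambda(\textit{ev}\circ(t\ox F(q\cdot\interp x)))$. By the auxiliary fact, $F(q\cdot\interp x)$ is an identity up to the coherence isomorphism $F^{\mathsf m}_{\mathsf n}F^{\mathsf m}_{\mathsf m}\iso F^{\mathsf m}_{\mathsf n}$. The $\eta$-law of the closed structure, $\lambda(\textit{ev}\circ(t\ox\id)) = t$, then finishes the case.

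\emph{Raise.} The term is interpreted as $G(\text{counit}\circ F t)\circ\text{unit}$ modulo the isomorphism $F^{\mathsf n}_{\mathsf m}\interp{\Gamma}_{\mathsf n}\iso\interp{\Gamma}_{\mathsf m}$. This is the transpose of the transpose of $t$, hence $t$, by the triangle identities.

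\emph{Drop.} The body is $F^{\mathsf m}_{\mathsf n}(q\cdot\interp z)$, an isomorphism by the auxiliary fact, so the composite collapses to $t$ after applying the pseudofunctor coherence for $F^{\mathsf n}_{\mathsf n}F^{\mathsf m}_{\mathsf n}\iso F^{\mathsf m}_{\mathsf n}$.

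Sums are omitted, as the semantics excludes them.

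The main obstacle is the auxiliary fact about $q\cdot\interp{x}$ in the drop and function cases, where $q$ is arbitrary. There one must combine the $\mu$ unit coherence with the graded counit law $(q\at\epsilon)\circ\delta_{q,1}=\id$. This law is implicit in the colax monoidal structure of $D$ with respect to composition. I would first derive it explicitly from the colax unit axiom, and then handle the identity-indexed $F^{\mathsf m}_{\mathsf m}$ and $\phi_{\mathsf m,\mathsf m}=\id$, noting that $\mu$ there is the canonical isomorphism by pseudofunctoriality.
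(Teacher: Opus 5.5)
Your proposal is correct and follows the paper's proof essentially step for step. Unit and Pair reduce to $\epsilon\circ(1\cdot t)=t$ plus monoidality of $\epsilon$, Raise is the triangle identities, and Drop and Function rest on the same observation the paper isolates: $q\cdot v_A$ is the canonical isomorphism. That observation follows from the colax counit law together with canonicity of $\mu$ at $\mathsf m\le\mathsf m$; the paper obtains the latter from Lemma \ref{lem:ell-mu-induced-coherence} applied to $G^{\mathsf m}_{\mathsf m}\iso\id$, which matches your appeal to pseudofunctoriality.

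Two minor slips do not affect the argument. First, it is $\interp{x}$, not $1\cdot\interp{x}$, that equals $\epsilon$ composed with $F^{\mathsf m}_{\mathsf m}\iso\id$. Second, for general $q$ the unit coherence of Figure \ref{fig:coherence-mu} alone does not suffice; canonicity of $\mu$ is the essential ingredient.
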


\begin{proof}
  \setcounter{case}{0}%
  We have five cases to consider.
  \begin{case}[Pair]\rm
    In this case we are given a morphism
    $
    \bigotimes_i F^{\mode m_i}_{\mode m}(q_i \at B_i)
    = \interp{  \sigma  \mid  \mathsf{N}  \odot  \Delta  }_{\mode m}
    \xrightarrow{e} A_1 \ox A_2
    $
    and we must show that it is equal to the composite
    \[
      \interp{  \sigma  \mid  \mathsf{N}  \odot  \Delta  }_{\mode m}
      \iso
      F^{\mode m}_{\mode m}
      \interp{  \sigma  \mid  \mathsf{N}  \odot  \Delta  }_{\mode m}
      \xrightarrow{m^{-1}\circ F^{\mode m}_{\mode m}(\tau^{-1} \circ (1 \cdot e))}
      F^{\mode m}_{\mode m} (1 \at A_1) \ox F^{\mode m}_{\mode m} (1 \at A_2)
      \iso (1 \at A_1) \ox (1 \at A_2)
      \xrightarrow{\epsilon \ox \epsilon}
      A_1 \ox A_2
    \]
    with the morphism
    $
      F^{\mode m}_{\mode m} (1 \at A_1) \ox F^{\mode m}_{\mode m} (1 \at A_2)
      \iso (1 \at A_1) \ox (1 \at A_2)
      \xrightarrow{\epsilon \ox \epsilon}
      A_1 \ox A_2
    $
    being the interpretation of
    $   1   \gyarusym{,}   1   \mid  \mathsf{m}  \gyarusym{,}  \mathsf{m}   \odot   \gyarumv{x_{{\mathrm{1}}}}  \gyarusym{:}  \gyarunt{A_{{\mathrm{1}}}}  \gyarusym{,}  \gyarumv{x_{{\mathrm{2}}}}  \gyarusym{:}  \gyarunt{A_{{\mathrm{2}}}}  \vdash_{ \mathsf{m} }  \gyarusym{(}  \gyarumv{x_{{\mathrm{1}}}}  \gyarusym{,}  \gyarumv{x_{{\mathrm{2}}}}  \gyarusym{)}  \colon   \gyarunt{A_{{\mathrm{1}}}}  \otimes  \gyarunt{A_{{\mathrm{2}}}}   $.
    This follows from the following commutative diagram
    \[
      \begin{tikzcd}
        \bigotimes_i F^{\mode m_i}_{\mode m}(q_i \at B_i)
        \arrow[dddd, "e"']
        \arrow[r, "\sim"]
        &
        F^{\mode m}_{\mode m}
        \bigotimes_i F^{\mode m_i}_{\mode m}(1 \cdot q_i \at B_i)
        \arrow[d, equals]
        \arrow[r, "F^{\mode m}_{\mode m} \bigotimes_i \mu \circ F^{\mode m_i}_{\mode m}\delta"]
        &[3em]
        F^{\mode m}_{\mode m}
        \bigotimes_i 1 \at F^{\mode m_i}_{\mode m}(q_i \at B_i)
        \arrow[d, "F^{\mode m}_{\mode m}\tau"]
        \arrow[ld, "F^{\mode m}_{\mode m}\bigotimes_i \epsilon" description]
        \\
        &
        F^{\mode m}_{\mode m} \bigotimes_i F^{\mode m_i}_{\mode m}(q_i \at B_i)
        \arrow[d, "F^{\mode m}_{\mode m}e"']
        &
        F^{\mode m}_{\mode m} 1 \at \bigotimes_i F^{\mode m_i}_{\mode m}(q_i \at B_i)
        \arrow[l, "F^{\mode m}_{\mode m}\epsilon" description]
        \arrow[d, "F^{\mode m}_{\mode m} (1 \at e)"]
        \\
        &
        F^{\mode m}_{\mode m}(A_1 \ox A_2)
        \arrow[d, "m^{-1}"]
        \arrow[ldd, "\sim"']
        &
        F^{\mode m}_{\mode m}(1 \at (A_1 \ox A_2))
        \arrow[d, "F^{\mode m}_{\mode m}(\tau^{-1})"]
        \arrow[l, "F^{\mode m}_{\mode m}\epsilon" description]
        \\
        &
        F^{\mode m}_{\mode m}(A_1) \ox F^{\mode m}_{\mode m}(A_2)
        \arrow[ld, "\sim"]
        &
        F^{\mode m}_{\mode m}((1 \at A_1) \ox (1 \at A_2))
        \arrow[ul, "F^{\mode m}_{\mode m}(\epsilon \ox \epsilon)" description]
        \arrow[d, "m^{-1}"]
        \\
        A_1 \ox A_2
        &
        (1 \at A_1) \ox (1 \at A_2)
        \arrow[l, "\epsilon \ox \epsilon"]
        &
        F^{\mode m}_{\mode m}(1 \at A_1) \ox F^{\mode m}_{\mode m}(1 \at A_2)
        \arrow[ul,
        "F^{\mode m}_{\mode m}(\epsilon) \ox F^{\mode m}_{\mode m}(\epsilon)" description]
        \arrow[l, "\sim"]
      \end{tikzcd}
    \]
  \end{case}

  \begin{case}[Unit]\rm
    Similar to the Pair case, using the fact that
    $ \iota \from I \to 1 \at I $
    and
    $ \epsilon \from 1 \at I \to I $ are inverse morphisms.
  \end{case}

  \begin{case}[Raise]\rm
    Given a morphism $ e \from \interp{ \rho  \mid  \mathsf{M}  \odot  \Gamma }_{\mode n} \to G^{\mode n}_{\mode m} A $,
    applying \rulenames{gyaru}{term}{raiseE} and \rulenames{gyaru}{term}{raiseI}
    yields the morphism
    \[
      \interp{ \rho  \mid  \mathsf{M}  \odot  \Gamma }_{\mode n}
      \xrightarrow{\text{unit}}
      G^{\mode n}_{\mode m}F^{\mode n}_{\mode m} \interp{ \rho  \mid  \mathsf{M}  \odot  \Gamma }_{\mode n}
      \xrightarrow{G^{\mode n}_{\mode m}F^{\mode n}_{\mode m} e}
      G^{\mode n}_{\mode m}F^{\mode n}_{\mode m} G^{\mode n}_{\mode m} A
      \xrightarrow{G^{\mode n}_{\mode m}\text{counit}}
      G^{\mode n}_{\mode m}A
    \]
    and we must show that this morphism is equal to $ e $, which is staraightforward.
  \end{case}

  For the final two cases we make the following observation:
  Write $ v_A : F^{\mode m}_{\mode m}(1 \at A) \iso 1 \at A \xrightarrow{\epsilon} A $.
  Then for any $ q $, the composite
  $ q \at A \iso F^{\mode m}_{\mode m}(q \at A) \xrightarrow{q \cdot v_A} q \at A $
  is the identity.
  This follows from the diagram
  \[
	\begin{tikzcd}
      q \at A
      \arrow[r, "\sim"]
      \arrow[rd, equals]
      &
      F^{\mode m}_{\mode m}(q \at A)
      \arrow[d, "\sim"]
      \arrow[r, "F^{\mode m}_{\mode m}\delta"]
      &
      F^{\mode m}_{\mode m}(q \at 1 \at A)
      \arrow[r, "\mu"]
      \arrow[d, "\sim"]
      &
      q \at F^{\mode m}_{\mode m}(1 \at A)
      \arrow[r, "q \at v_A"]
      \arrow[d, "\sim"]
      &
      q \at 1 \at A
      \\
      &
      q \at A
      \arrow[r, "\delta"']
      &
      q \at 1 \at A
      \arrow[r, equals]
      &
      q \at 1 \at A
      \arrow[ur, "q \at \epsilon"']
    \end{tikzcd}
  \]
  The composite across the bottom edge is the identity
  by the laws for colax monoidal functors.
  The right square commutes by Lemma \ref{lem:ell-mu-induced-coherence} applied
  to the isomorphism $ G^{\mode m}_{\mode m} \iso \id $.

  \begin{case}[Drop]\rm
    In this case we are given a morphism
    $ e \from \interp{  \rho  \mid  \mathsf{M}  \odot  \Gamma }_{\mode n} \to F^{\mode m}_{\mode n}(q \at A) $.
    We must show that it is equal to the composite
    \[
      \interp{  \rho  \mid  \mathsf{M}  \odot  \Gamma }_{\mode n}
      \iso
      F^{\mode n}_{\mode n}
      \interp{  \rho  \mid  \mathsf{M}  \odot  \Gamma }_{\mode n}
      \xrightarrow{F^{\mode n}_{\mode n} e}
      F^{\mode n}_{\mode n} F^{\mode m}_{\mode n} (q \at A)
      \iso
      F^{\mode m}_{\mode n} (q \at A)
      \iso
      F^{\mode m}_{\mode n} F^{\mode m}_{\mode m} (q \at A)
      \xrightarrow{F^{\mode m}_{\mode n}(q \cdot v_A)}
      F^{\mode m}_{\mode n} (q \at A)
    \]
  but this is straightforward by the above observation.
  \end{case}

  \begin{case}[Functions]\rm
    In this case we are given a morphism
    $ e \from \interp{  \rho  \mid  \mathsf{M}  \odot  \Gamma }_{\mode n} \to F^{\mode m}_{\mode n}(q \at A) \lto B $.
    Applying the rule \rulename{gyaru}{term}{funE}
    to this morphism and $ v_A $ produces the morphism
    \[
      \interp{  \rho  \mid  \mathsf{M}  \odot  \Gamma }_{\mode n}
      \ox F^{\mode m}_{\mode n}(q \at A)
      \iso
      \interp{  \rho  \mid  \mathsf{M}  \odot  \Gamma }_{\mode n}
      \ox F^{\mode m}_{\mode n}F^{\mode m}_{\mode m}(q \at A)
      \xrightarrow{e \ox F^{\mode m}_{\mode n}(q \cdot v_A)}
      (F^{\mode m}_{\mode n}(q \at A) \lto B) \ox F^{\mode m}_{\mode n}(q \at A)
      \xrightarrow{\textit{ev}}
      B
    \]
    By the above observation this morphism is equal to
    $ \textit{ev} \circ (e \ox F^{\mode m}_{\mode n}(q \at A)) $,
    that is, the transpose of $ e $ along the adjunction
    $ (-) \ox F^{\mode m}_{\mode n}(q \at A) \dashv F^{\mode m}_{\mode n}(q \at A) \lto (-) $.
    Applying \rulename{gyaru}{term}{funI} to this morphism therefore yields $ e $,
    which is what we needed to show.
  \end{case}
\end{proof}

\end{document}